\newtheorem{theorem}{Theorem}
\numberwithin{theorem}{section}
\newtheorem{lemma}[theorem]{Lemma}
\newtheorem{corollary}[theorem]{Corollary}
\newcommand{\stronger}{\succeq}
\newcommand{\tab}{\hspace{5mm}}
\newcommand{\FD}{D}
\newcommand{\set}[1]{\left\{#1\right\}}
\DeclareMathOperator*{\argmin}{arg\,min}
\title{Consensus using Asynchronous Failure Detectors}
\author{Nancy Lynch \\
CSAIL, MIT
\and Srikanth Sastry\footnote{The author is currently affiliated with Google Inc.}\\
CSAIL, MIT}
\date{}
\begin{document}

\maketitle

\begin{abstract}
The FLP result shows that crash-tolerant consensus is impossible to solve in asynchronous systems, and several solutions have been proposed for crash-tolerant consensus under alternative (stronger) models. One popular approach is to augment the asynchronous system with appropriate failure detectors, which provide (potentially unreliable) information about process crashes in the system, to circumvent the FLP impossibility.  

In this paper, we demonstrate the exact mechanism by which (sufficiently powerful) asynchronous failure detectors enable solving crash-tolerant consensus. Our approach, which borrows arguments from the FLP impossibility proof and the famous result from \cite{chan:twfdf}, which shows that $\Omega$ is a weakest failure detector to solve consensus, also yields a natural proof to $\Omega$ as a weakest asynchronous failure detector to solve consensus. The use of I/O automata theory in our approach enables us to model execution in a more detailed fashion than \cite{chan:twfdf} and also addresses the latent assumptions and assertions in the original result in \cite{chan:twfdf}.
\end{abstract}

\section{Introduction}

%
%
%

In~\cite{cornejoetalAFD,cornejoetalAFD-TR}
we introduced a new formulation of failure detectors.  Unlike the
traditional failure detectors of~\cite{chan:ufdfr, chan:twfdf}, ours are modeled as
asynchronous automata, and defined in terms of the general I/O
automata framework for asynchronous concurrent systems.
To distinguish our failure detectors from the traditional ones, we
called ours ``Asynchronous Failure Detectors (AFDs)''.

In terms of our model, we presented many of the standard results of
the field and some new results.
Our model narrowed the scope of failure detectors sufficiently so that
AFDs satisfy several desirable properties, which are not true of the
general class of traditional failure detector. For example,
(1) AFDs are self-implementable; (2) if an AFD $D'$ is strictly stronger
than another AFD $D$, then $D'$ is sufficient to solve a strict superset
of the problems solvable by $D$. See \cite{cornejoetalAFD-TR} for details. 
Working entirely within an asynchronous framework allowed us to take
advantage of the general results about I/O automata and to prove
our results rigorously without too much difficulty.

In this paper, we investigate the role of asynchronous failure detectors in circumventing the impossibility of crash-tolerant consensus in asynchronous systems (FLP) \cite{FLP}. Specifically, we demonstrate exactly how sufficiently strong AFDs circumvent
the FLP impossibility.  We borrow ideas from the important related result by Chandra, Hadzilacos,
and Toueg~\cite{chan:twfdf} that says that the failure detector $\Omega$ is a
``Weakest Failure Detector'' that solves the consensus
problem. Incidentally, the proof in \cite{chan:twfdf} make certain implicit assumptions
and assertions which are entirely reasonable and true, respectively. However, for
the purpose of rigor, it is desirable that these assumptions be explicit
and these assertions be proved. Our demonstration of how sufficiently strong AFDs circumvent FLP dovetails effortlessly with an analogous proof of  ``weakest AFD'' for consensus.

While our proof generally follows the proof in~\cite{chan:twfdf}, we
state the (implicit) assumptions and assertions from \cite{chan:twfdf} explicitly.
Since our framework is entirely asynchronous and all our definitions are based
on an established concurrency theory foundation, we are able to provide
rigorous proofs for the (unproven) assertions from \cite{chan:twfdf}. In order to prove the main result of this paper, we modified certain definitions from \cite{cornejoetalAFD-TR}. However, these modifications do not invalidate any of the results from \cite{cornejoetalAFD,cornejoetalAFD-TR}.

The rest of this paper is organized as follows. Section \ref{sec:approachAndContributions} outlines the approach that we use in this paper and its major contributions. In section \ref{sec:comarisonWithCHT}, we compare our proof with the original CHT proof in \cite{chan:twfdf}. Sections \ref{sec:ioautomata} through~\ref{sec:solvingProblems} introduce I/O automata and the definitions of a problem, of an asynchronous system, and of AFDs; much of the material is summarized
from~\cite{cornejoetalAFD,cornejoetalAFD-TR}. Section \ref{sec:observations} introduces the notion of \emph{observations} of AFD behavior, which are a key part of showing that $\Omega$ is a weakest AFD to solve consensus; this section proves several useful properties of observations which are central to the understanding of the proof and are a contribution of our work. In Section \ref{subsec:treeOfExec}, we introduce \emph{execution trees} for any asynchronous system that uses an AFD; we construct such trees from observations introduced in Section \ref{sec:observations}. We also prove several properties of such execution trees, which may be of independent interest and useful in analysis of executions in any AFD-based system. In Section \ref{sec:consensusAndAFD}, we formally define the consensus problem and use the notions of observations and execution trees to demonstrate how sufficiently strong AFDs enable asynchronous systems to circumvent the impossibility of fault tolerant consensus in asynchronous systems \cite{FLP}; Section \ref{sec:consensusAndAFD} defines and uses \emph{decision gadgets} in an execution tree to demonstrate this; it also shows that the set of such decision gadgets is countable, and therefore, any such execution tree contains a ``first'' decision gadget. Furthermore, Section \ref{sec:consensusAndAFD} also shows that each decision gadget is associated with a location that is live and never crashes; we call it the \emph{critical location} of the decision gadget. In Section \ref{sec: wfd}, we show that $\Omega$ is a weakest AFD to solve consensus by presenting a distributed algorithm that simulates the output of $\Omega$. The algorithm constructs observations and execution trees, and it eventually identifies the ``first'' decision gadget and its corresponding critical location; the algorithm outputs this critical location as the output of the simulated $\Omega$ AFD, thus showing that $\Omega$ is a weakest AFD for consensus.

\section{Approach and contributions}\label{sec:approachAndContributions}
To demonstrate our results, we start with a complete definition of asynchronous systems and AFDs. Here, we modified the definitions of AFD from \cite{cornejoetalAFD,cornejoetalAFD-TR}, but we did so without invalidating earlier results. We argue that the resulting definition of AFDs is more natural and models a richer class of behaviors in crash-prone asynchronous systems. Next, we introduce the notion of \emph{observations} of AFD behavior (Section \ref{sec:observations}), which are DAGs that model a partial ordering AFD outputs are different processes; importantly, the knowledge of this partial order can be gained by any process through asynchronous message passing alone. \emph{Observations as a tool for modeling AFD behavior is of independent interest}, and we prove several important properties of observations that are used in our later results.

From such observations, we construct trees of executions of arbitrary AFD-based systems; again, \emph{such trees are of independent interest}, and we prove several important properties of such trees that are used later.

Next, we define the consensus problem and the notion \emph{valence}. Roughly speaking, a finite execution of a system is univalent if all its fair extensions result in the same decision value and the execution is bivalent if some fair extension results in a decision value $1$ and another fair extension results in a decision value $0$. We present our first important result using observations and execution trees; \emph{we show that a sufficiently powerful AFD guarantees that in the execution tree constructed from any viable\footnote{Informally, an observation is viable if it can be constructed from an AFD trace.} observation of AFD outputs, the events responsible for the transition from a bivalent execution to a univalent execution must occur at location that does not crash}. Such transitions to univalent executions correspond to so-called ``decision gadgets'', and the live location corresponding to such transitions is called the ``critical location'' of the decision gadgets.

Next, we use the aforementioned result to show that $\Omega$ is a weakest AFD to solve consensus. In order to do so, we first define a metric function that orders all the decision gadgets. This metric function satisfies an important stability property which guarantees the following. Given the decision gadget with the smallest metric value in a given infinite execution tree, for any sufficiently large, but finite, subtree, the same decision gadget will have the smallest metric value within that subtree. Note that the original proof in \cite{chan:twfdf} did not provide such a metric function, and we contend that this is an essential compoenent for completing this proof. We then construct an emulation algorithm (similar to the one in \cite{chan:twfdf}) that uses an AFD sufficiently powerful to solve consensus and simulates the output of $\Omega$. In this algorithm processes exchange AFD outputs and construct finite observations and corresponding finite execution trees. The aforementioned stability property ensures that eventually forever, each process that does not crash identifies the same decision gadget as the one with the smallest metric value. Recall that the critical location of any decision gadget is guaranteed to not crash. Therefore, eventually forever, each process that does not crash identifies the same correct process and outputs that correct process as the output of the simulated $\Omega$ AFD.

\section{Comparisons with the original CHT proof}\label{sec:comarisonWithCHT}
Our proof has elements that are very similar to the the original CHT proof from~\cite{chan:twfdf}. However, despite the similarity in our arguments, our proof deviates from the CHT proof in some subtle, but significant ways.

\subsection{Observations}
In \cite{chan:twfdf}, the authors introduce DAGs with special properties that model the outputs of a failure detector at different processes and establishes partial ordering of these outputs. In our proof, the analogous structure is an observation (See Section \ref{sec:observations}). However, our notion of an observation is much more general than the DAG introduced in \cite{chan:twfdf}.

First, the DAG in \cite{chan:twfdf} is an infinite graph and cannot model failure detector outputs in finite executions. In contrast, observations may be finite or infinite. Second, we also introduce the notion of a sequence of finite observations that can be constructed from progressively longer finite executions that enable us to model the evolution of observations and execution trees as failure detector outputs become available. Such detailed modeling and analysis does not appear in \cite{chan:twfdf}.

\subsection{Execution trees}
In \cite{chan:twfdf}, each possible input to consensus gives rise to a unique execution tree from the DAG. Thus, for $n$ processes, there are $2^n$ possible trees that constitute a forest a trees. In contrast, our proof constructs exactly one tree that models the executions of all possible inputs to consensus. This change is not merely cosmetic. It simplifies analysis and makes the proof technique more general in the following sense.

The original proof in \cite{chan:twfdf} cannot be extended to understanding long-lived problems such as iterative consensus or mutual exclusion. The simple reason for this is that the number of possible inputs for such problems can be uncountably infinite, and so the number of trees generated by the proof technique in \cite{chan:twfdf} is also uncountably infinite. This introduces significant challenges in extracting any structures within these trees by a distributed algorithm. In contrast, in our approach, the execution tree will remain virtually the same; only the rules for determining the action tag values at various edges change.

\subsection{Determining the ``first'' decision gadget}
In \cite{chan:twfdf} and in our proof, a significant result is that there are infinite, but countable number of decision gadgets, and therefore there exists a unique enumeration of the decision gadgets such that one of them is the ``first'' one. This result is then used in \cite{chan:twfdf}  to claim that all the emulation algorithms converge to the same decision gadget. However, \cite{chan:twfdf} does not provide any proof of this claim. Furthermore, we show that this proving this claim in non-trivial.

The significant gap in the original proof in \cite{chan:twfdf} is the following. During the emulation, each process constructs only finite DAGs, that are subgraphs of some infinite DAG with the required special properties. However, since the DAGs are finite, the trees of executions constructed from this DAG could incorrectly detect certain parts of the trees as being decision gadgets, when in the execution tree of the infinite DAG, these are not decision gadgets. Each such pseudo decision gadget, is eventually deemed to not be a decision gadget, as the emulation progresses. However, there can be infinitely many such pseudo gadgets. Thus, given any arbitrary enumeration of decision gadgets, it is possible that such pseudo decision gadgets appears infinitely often, and are enumerated ahead of the ``first'' decision gadget. Consequently, the emulation never stabilizes to the first decision gadget.

In our proof, we address is gap by carefully defining metric functions for nodes and decision gadgets so that eventually, all the pseudo decision gadgets are ordered after the eventual ``first'' decision gadget.

\section{I/O Automata}\label{sec:ioautomata}

We use the I/O Automata framework~\cite{lynch:hapf,lynch:aiti,lync:da}
%
for specifying the system model and failure detectors. 
Briefly, an I/O automaton models a component of a distributed system
as a (possibly infinite) state machine that interacts with other state
machines through discrete actions. 
This section summarizes the I/O-Automata-related definitions that we
use in this paper. 
See \cite[Chapter 8]{lync:da} for a thorough description of I/O
Automata.

\subsection{Automata Definitions}
\label{sec: IOA: automata-defs}

An \emph{I/O automaton}, which we will usually refer to as simply an
``automaton'', consists of five components:
a signature, a set of states, a set of initial states, a
state-transition relation, and a set of tasks. 
We describe these components next.

\paragraph{Actions, Signature, and Tasks.} 
The state transitions of an automaton are associated with named
\emph{actions}; we denote the set of actions of an automaton $A$ by
$act(A)$.
Actions are classified as \emph{input}, \emph{output}, or
\emph{internal}, and this classification constitutes the
\emph{signature} of the automaton. 
We denote the sets of input, output, and internal actions of an
automaton $A$ by $input(A)$, $output(A)$, and $internal(A)$,
respectively. 
Input and output actions are collectively called the \emph{external}
actions, denoted $external(A)$, and output and internal actions are
collectively called the \emph{locally controlled} actions. 
The locally controlled actions of an automaton are partitioned into
\emph{tasks}. 
Tasks are used in defining fairness conditions on executions of the
automaton, as we describe in Section~\ref{sec: IOA: fairness}.

Internal actions of an automaton are local to the automaton itself
whereas external (input and output) actions are available for
interaction with other automata.
Locally controlled actions are initiated by the automaton itself,
whereas input actions simply arrive at the automaton from the outside, 
without any control by the automaton.

\paragraph{States.} 
The states of an automaton $A$ are denoted by $states(A)$; some
(non-empty) subset $init(A) \subseteq states(A)$ is designated as the
set of \emph{initial states}. 

\paragraph{Transition Relation.}
The state transitions of an automaton $A$ are defined by a
\emph{state-transition relation} $trans(A)$, which is a set
of tuples of the form $(s,a,s')$ where $s,s'\in states(A)$ and $a \in
act(A)$. 
Each such tuple $(s,a,s')$ is a \emph{transition}, or a \emph{step},
of $A$.
Informally speaking, each step $(s,a,s')$ denotes the following
behavior:  automaton $A$, in state $s$, performs action $a$ and
changes its state to $s'$. 

For a given state $s$ and action $a$, if $trans(A)$ contains some step
of the form $(s,a,s')$, then $a$ is said to be \emph{enabled} in
$s$. 
We assume that every input action in $A$ is enabled in every state of
$A$; that is, for every input action $a$ and every state $s$,
$trans(A)$ contains a step of the form $(s,a,s')$.
A task $C$, which is a set of locally controlled actions, is said to
be \emph{enabled} in a state $s$ iff some action in $C$ is enabled in
$s$.

\paragraph{Deterministic Automata.}
The general definition of an I/O automaton permits multiple locally
controlled actions to be enabled in any given state. 
It also allows the resulting state after performing a given action to
be chosen nondeterministically. 
For our purposes, it is convenient to consider a class of I/O automata
whose behavior is more restricted. 

We define an action $a$ (of an automaton $A$) to be
\emph{deterministic} provided that, for every state $s$, $trans(A)$
contains at most one transition of the form $(s,a,s')$.
We define an automaton $A$ to be \emph{task deterministic} iff (1) for
every task $C$ and every state $s$ of $A$, at most one action in $C$
is enabled in $s$, and (2) all the actions in $A$ are deterministic. 
An automaton is said to be \emph{deterministic} iff it
is task deterministic, has exactly one task, and has a unique start
state.

\subsection{Executions, Traces, and Schedules}

Now we define how an automaton executes. 
An \emph{execution fragment} of an automaton $A$ is a finite sequence
$s_0,a_1,s_1,a_2,\ldots,s_{k-1},a_k,s_{k}$, or an infinite sequence
$s_0,a_1,s_1,a_2,\ldots,s_{k-1},a_k,s_{k},\ldots$, of alternating
states and actions of $A$ such that for every $k \geq 0$, $(s_k,a_{k+1},s_{k+1})$ is in $trans(A)$.
%
%
A sequence consisting of just a state is a special case of an
execution fragment and is called a \emph{null execution fragment}. 
Each occurrence of an action in an execution fragment is called an 
\emph{event}. 

An execution fragment that starts with an initial state (that is, $s_0
\in init(A)$) is called an \emph{execution}. 
A null execution fragment consisting of an initial state is called a
\emph{null execution}.
A state $s$ is said to be \emph{reachable} if there exists a finite
execution that ends with $s$. 
By definition, any initial state is reachable.

We define concatenation of execution fragments.
Let $\alpha_1$ and $\alpha_2$ be two execution fragments of an I/O
automaton such that $\alpha_1$ is finite and the final state of
$\alpha_1$ is also the starting state of $\alpha_2$, and let
$\alpha_2'$ denote the sequence obtained by deleting the first state
in $\alpha_2$. 
Then the expression $\alpha_1 \cdot \alpha_2$ denotes the execution
fragment formed by appending $\alpha_2'$ after $\alpha_1$.

It is sometimes useful to consider just the sequence of events that
occur in an execution, ignoring the states.
Thus, given an execution $\alpha$, the \emph{schedule} of $\alpha$ is
the subsequence of $\alpha$ that consists of all the events in
$\alpha$, both internal and external.
The \emph{trace} of an execution includes only the externally
observable behavior;
formally, the trace $t$ of an execution $\alpha$ is the subsequence of
$\alpha$ consisting of all the external actions. 

More generally, we define the \emph{projection} of any sequence on a
set of actions as follows.
Given a sequence $t$ (which may be an execution fragment, schedule, or
trace) and a set $B$ of actions, the projection of $t$ on $B$, denoted
by $t|_B$, is the subsequence of $t$ consisting of all the events from
$B$.

We define concatenation of schedules and traces. 
Let $t_1$ and $t_2$ be two sequences of actions of some I/O automaton
where $t_1$ is finite; then $t_1 \cdot t_2$ denotes the sequence
formed by appending $t_2$ after $t_1$.

To designate specific events in a schedule or trace, we use the
following notation:
if a sequence $t$ (which may be a schedule or a trace) contains at
least $x$ events, then $t[x]$ denotes the $x^{th}$ event in the
sequence $t$, and otherwise, $t[x] = \bot$. 
Here, $\bot$ is a special symbol that we assume is different from the
names of all actions. 

\subsection{Operations on I/O Automata}

\paragraph{Composition.} 
A collection of I/O automata may be composed by matching output
actions of some automata with the same-named input actions of
others.\footnote{Not all collections of I/O automata may be
  composed. For instance, in order to compose a collection of I/O
  automata, we require that no two automata have a common output
  action.  See \cite[chapter 8]{lync:da} for details.}
Each output of an automaton may be matched with inputs of any number
of other automata. 
Upon composition, all the actions with the same name are performed
together.

Let $\alpha = s_0,a_1,s_1,a_2,\ldots$ be an execution of the
composition of automata $A_1, \dots, A_N$. 
The \emph{projection} of $\alpha$ on automaton $A_i$, where $i \in
[1,N]$, is denoted by $\alpha|A_i$ and is defined to be the
subsequence of $\alpha$ obtained by deleting each pair $a_k,s_k$
for which $a_k$ is not an action of $A_i$ and replacing each remaining
state $s_k$ by automaton $A_i$'s part of $s_k$. 
Theorem~8.1 in~\cite{lync:da} states that if $\alpha$ is an execution
of the composition $A_1, \dots, A_N$, then for each $i \in [1,N]$,
$\alpha|A_i$ is an execution of $A_i$. 
Similarly, if $t$ is a trace of of $A_1,\dots, A_N$, then for
each $i$, $t|A_i$ is an trace of $A_i$.


\paragraph{Hiding.} 
In an automaton $A$, an output action may be ``hidden'' by
reclassifying it as an internal action. 
A hidden action no longer appears in the traces of the automaton.

\subsection{Fairness}
\label{sec: IOA: fairness}

When considering executions of an I/O automaton, we will often be
interested in those executions in which every task of the automaton
gets infinitely many turns to take steps; we call such executions
``fair''.
When the automaton represents a distributed systems, the notion of
fairness can be used to express the idea that all system components
continue to get turns to perform their activities.

Formally, an execution fragment $\alpha$ of an automaton $A$ is said
to be \emph{fair} iff the following two conditions hold for
every task $C$ in $A$. (1) If $\alpha$ is finite, then no action in
$C$ is enabled in the final state of $\alpha$. (2) If $\alpha$ is
infinite, then either (a) $\alpha$ contains infinitely many events
from $C$, or (b) $\alpha$ contains infinitely many occurrences of
states in which $C$ is not enabled.

A schedule $\sigma$ of $A$ is said to be \emph{fair} if it is the
schedule of a fair execution of $A$. 
Similarly, a trace $t$ of $A$ is said to be \emph{fair} if it is the
trace of a fair execution of $A$.


\section{Crash Problems}

In this section, we define problems, distributed problems,
crash problems, and failure-detector problems.
We also define a particular failure-detector problem corresponding to
the leader election oracle $\Omega$ of~\cite{chan:twfdf}.

\subsection{Problems} 
\label{subsec:problems}

We define a \emph{problem} $P$ to be a tuple $(I_P,O_P,T_P)$, where
$I_P$ and $O_P$ are disjoint sets of actions and $T_P$ is a set of
(finite or infinite) sequences over these actions such that there
exists an automaton $A$ where $input(A) = I_P$, $output(A) = O_P$, and
the set of fair traces of $A$ is a subset of $T_P$.  
In this case we state that $A$ \emph{solves} $P$.
We include the aforementioned assumption of solvability to satisfy a non-triviality
property, which we explain in Section~\ref{sec:solvingProblems}.

\paragraph{Distributed Problems.}
Here and for the rest of the paper, we introduce a fixed finite set
$\Pi$ of $n$ location IDs; we assume that $\Pi$ does not contain the
special symbol $\bot$. We assume a fixed total ordering $<_\Pi$ on $\Pi$.
We also assume a fixed mapping $loc$ from actions to $\Pi \cup \{ \bot
\}$; for an action $a$, if $loc(a) = i \in \Pi$, then we say that $a$
\emph{occurs at} $i$. 
A problem $P$ is said to be \emph{distributed} over $\Pi$ if, for every
action $a \in I_P \cup O_P$, $loc(a) \in \Pi$. 
We extend the definition of $loc$ by defining $loc(\bot) = \bot$.

Given a problem $P$ that is distributed over $\Pi$, and a location $i
\in \Pi$, $I_{P,i}$ and $O_{P,i}$ denote the set of actions in $I_P$
and $O_P$, respectively, that occur at location $i$;
that is, $I_{P,i} = \set{a|(a\in I_P) \wedge (loc(a)=i)}$ and $O_{P,i}
= \set{a|(a\in O_P) \wedge (loc(a)=i)}$. 


\paragraph{Crash Problems.}
We assume a set $\hat{I} = \set{crash_i \vert i\in \Pi}$ of crash
events, where $loc(crash_i) = i$. 
That is, $crash_i$ represents a crash that occurs at location $i$.
A problem $P = (I_P,O_P,T_P)$ that is distributed over $\Pi$ is said
to be a \emph{crash problem} iff $\hat{I} \subseteq I_P$. 
That is, $crash_i \in I_{P,i}$ for every $i \in \Pi$.

Given a (finite or infinite) sequence $t \in T_P$, $faulty(t)$ denotes
the set of locations at which a $crash$ event occurs in $t$. 
Similarly, $live(t) = \Pi \setminus faulty(t)$ denotes the set of
locations at which a $crash$ event does not occur in $t$. 
A location in $faulty(t)$ is said to be \emph{faulty} in $t$, and
a location in $live(t)$ is said to be \emph{live} in $t$. 

%

\subsection{Failure-Detector Problems}
\label{subsec:FDproblems}

Recall that a failure detector is an oracle that provides information
about crash failures. 
In our modeling framework, we view a failure detector as a special
type of crash problem.
A necessary condition for a crash problem $P = (I_P,O_P,T_P)$ to
be an asynchronous failure detector (AFD) is \emph{crash exclusivity},
which states that $I_P = \hat{I}$; that is, the actions $I_P$ are
exactly the $crash$ actions. 
Crash exclusivity guarantees that the only inputs to a failure
detector are the $crash$ events, and hence, failure detectors
provide information only about crashes. 
An AFD must also satisfy additional properties, which we describe next.

Let $\FD = (\hat{I},O_{\FD},T_{\FD})$ be a crash problem satisfying
crash exclusivity. 
We begin by defining a few terms that will be used in the definition of
an AFD. Let $t$ be an arbitrary sequence over $\hat{I}\cup O_{\FD}$.

\paragraph{Valid sequence.} 
The sequence $t$ is said to be \emph{valid} iff 
(1) for every $i \in \Pi$, no event in $O_{{\FD},i}$ (the set of
actions in $O_{\FD}$ at location $i$) occurs after a $crash_i$ event
in $t$, and 
(2) if no $crash_i$ event occurs in $t$, then $t$ contains infinitely
many events in $O_{{\FD},i}$.

Thus, a valid sequence contains no output events at a location $i$
after a $crash_i$ event, and contains infinitely many output events at
each live location.

\paragraph{Sampling.} 
A sequence $t'$ is a \emph{sampling} of $t$ iff (1) $t'$ is a
subsequence of $t$, (2) for every location $i \in \Pi$, (a) if $i$ is
live in $t$, then $t'|_{O_{{\FD},i}} = t|_{O_{{\FD},i}}$, and (b) if
$i$ is faulty in $t$, then  $t'$ contains the first $crash_i$ event in
$t$, and $t'|_{O_{{\FD},i}}$ is a prefix of $t|_{O_{{\FD},i}}$.

A sampling of sequence $t$ retains all events at live locations.
For each faulty location $i$, it may remove a suffix of the outputs at
location $i$.  It may also remove some crash events, but must retain
the first crash event.

\paragraph{Constrained Reordering.}
Let $t'$ be a valid permutation of events in $t$; $t'$ is a
\emph{constrained reordering} of $t$ iff the following is true. For
every pair of events $e$ and $e'$, if (1) $e$ precedes $e'$ in $t$, and
(2) either (a) $e,e' \in O_D$ and $loc(e) = loc(e')$, or (b) $e \in \hat{I}$ and $e' \in O_D$, then $e$ precedes
$e'$ in $t'$ as well.\footnote{Note that the definition of constrained reordering is less restrictive than the definition in \cite{cornejoetalAFD,cornejoetalAFD-TR}; specifically, unlike in \cite{cornejoetalAFD,cornejoetalAFD-TR}, this definition allow crashes to be reordered with respect to each other. However, this definition is ``compatible'' with the earlier definition in the sense that the results presented in \cite{cornejoetalAFD,cornejoetalAFD-TR} continue to be true under this new definition.}

A constrained reordering of sequence $t$ maintains the relative
ordering of events that occur at the same location and maintains the
relative order between any $crash$ event and any subsequent event.

\paragraph{Crash Extension.} Assume that $t$ is a finite sequence. A
crash extension of $t$ is a (possibly infinite) sequence $t'$ such
that $t$ is a prefix of $t'$ and the suffix of $t'$ following $t$ is a
sequence over $\hat{I}$.

In other words, a crash extension of $t$ is obtained by extending $t$
with $crash$ events.

\paragraph{Extra Crashes.} 
An \emph{extra crash} event in $t$ is a $crash_i$ event in $t$, for
some $i$, such that $t$ contains a preceding $crash_i$. 

An extra crash is a crash event at a location that has already
crashed.

\paragraph{Minimal-Crash Sequence.} 
Let $mincrash(t)$ denote the subsequence of $t$ that contains all the
events in $t$, except for the extra crashes; $mincrash(t)$ is called
the \emph{minimal-crash sequence} of $t$.

\paragraph{Asynchronous Failure Detector.}
Now we are ready to define asynchronous failure detectors.
A crash problem of the form $\FD = (\hat{I},O_{\FD},T_{\FD})$
(which satisfies crash exclusivity) is an \emph{asynchronous failure
  detector} (AFD, for short) iff $\FD$ satisfies the following
properties. 
\begin{enumerate}
\item \textbf{Validity.}
Every sequence $t \in T_{\FD}$ is valid.

\item \textbf{Closure Under Sampling}. For every sequence $t \in
  T_{\FD}$, every sampling of $t$ is also in $T_{\FD}$.

\item \textbf{Closure Under Constrained Reordering.} For every
  sequence $t \in T_{\FD}$, every constrained reordering $t$ is also
  in $T_{\FD}$. 

\item \textbf{Closure Under Crash Extension.} For every sequence $t
  \in T_{\FD}$, for every prefix $t_{pre}$ of $t$, for every crash
  extension $t'$ of $t_{pre}$, the following are true. (a) If $t'$ is finite, then $t'$ is a
  prefix of some sequence in $T_{\FD}$. (b) If $faulty(t') = \Pi$,
  then $t'$ is in $T_{\FD}$.

\item \textbf{Closure Under Extra Crashes.} 
For every sequence $t \in T_{\FD}$, every sequence $t'$ such that
$mincrash(t) = mincrash(t')$ is also in $T_{\FD}$.
\end{enumerate}

Of the properties given here, the first three---validity and closure
under sampling and constrained reordering---were also used in our
earlier papers~\cite{cornejoetalAFD,cornejoetalAFD-TR}.
The other two closure properties---closure under crash extension and
extra crashes---are new here.  

A brief motivation for the above properties is in order. The validity
property ensures that (1) after a location crashes, no outputs occur
at that location, and (2) if a location does not crash, outputs occur
infinitely often at that location. Closure under sampling permits a
failure detector to ``skip'' or ``miss'' any suffix of outputs at a faulty
location. Closure under constrained reordering permits ``delaying''
output events at any location. Closure under crash extension permits a
crash event to occur at any time. Finally, closure under extra crashes
captures the notion that once a location is crashed, the occurrence of
additional crash events (or lack thereof) at that location has no
effect.

We define one additional constraint, below.
This contraint is a formalization of an implicit assumption made in 
\cite{chan:twfdf}; namely, for any AFD $D$, any ``sampling'' (as
defined in \cite{char:isolt}) of a failure detector sequence in $T_D$
is also in $T_D$. 
%

\paragraph{Strong-Sampling AFDs.}
Let $\FD$ be an AFD, $t \in T_{\FD}$. 
A subsequence $t'$ of $t$ is said to be a \emph{strong sampling} of
$t$ if $t'$ is a valid sequence.
AFD $\FD$ is said to satisfy \emph{closure under strong sampling} if,
for every trace $t \in T_{\FD}$, every strong sampling of $t$ is also in
$T_{\FD}$. Any AFD that satisfies closure under strong sampling is said to
be a \emph{strong-sampling} AFD.

Although the set of strong-sampling AFDs are a strict subset of all AFDs, we conjecture that restricting our discussion to strong sampling AFDs does not weaken our result. Specifically, we assert without proof that for any AFD $D$, we can construct an ``equivalent'' strong-sampling AFD $D'$. This notion of equivalence is formally discussed in  Section \ref{subsec:solvingFDproblemsWithAnother}.


\subsection{The Leader Election Oracle.} 
\label{subset:omegaDef}

An example of a strong-sampling AFD is the leader election oracle
$\Omega$~\cite{chan:twfdf}.
Informally speaking, $\Omega$ continually outputs a location ID at
each live location; eventually and permanently, $\Omega$ outputs the
ID of a unique live location at all the live locations.
The $\Omega$ failure detector was shown in \cite{chan:twfdf} to be a
``weakest'' failure detector to solve crash-tolerant consensus, in a
certain sense. 
We will present a version of this proof in this paper.

We specify our version  of $\Omega = (\hat{I},O_\Omega,T_\Omega)$
as follows.
The action set $O_\Omega = \cup_{i \in \Pi}O_{{\Omega},i}$, where, for
each $i \in \Pi$, $O_{\Omega,i} = \{ FD\text{-}\Omega(j)_i|j \in
\Pi\}$. 
$T_\Omega$ is the set of all valid sequences $t$ over $\hat{I} \cup
O_\Omega$ that satisfy the following property:  
if $live(t) \neq \emptyset$, then there exists a location $l \in
live(t)$ and a suffix $t_{suff}$ of $t$ such that
$t_{suff}|_{O_\Omega}$ is a sequence over the set $\{FD\text{-}\Omega(l)_i|i \in live(t)\}$.

\begin{algorithm}\footnotesize
\caption{Automaton that implements the $\Omega$ AFD}
\label{alg:OmegaAutomaton}
The automaton $FD\text{-}\Omega$

\textbf{Signature:}

\tab input $crash_i$, $i \in \Pi$ 

\tab output $FD\text{-}\Omega(j)_i$, $i,j \in \Pi$

\textbf{State variables:}

\tab $crashset$, a subset of $\Pi$, initially $\emptyset$

\textbf{Transitions:}

\tab input $crash_i$

\tab effect

\tab \tab $crashset := crashset \cup \set{i}$

\tab

\tab output $FD\text{-}\Omega(j)_i$

\tab precondition

\tab \tab $(i\notin crashset) \wedge (j = \min (\Pi \setminus crashset))$

\tab effect

\tab \tab none

\textbf{Tasks:}

\tab One task per location $i \in \Pi$ defined as follows:

\tab \tab $\set{FD\text{-}\Omega(j)_i|j \in \Pi}$

\end{algorithm}
 
Algorithm \ref{alg:OmegaAutomaton} shows an automaton whose set of
fair traces is a subset of $T_{\Omega}$; it follows that $\Omega$
satisfies our formal definition of a ``problem''.
It is easy to see that $\Omega = (\hat{I}, O_\Omega,T_\Omega)$
satisfies all the properties of an AFD, and furthermore, note that
$\Omega$ also satisfies closure under strong sampling. 
The proofs of these observations are left as an exercise.

\paragraph{AFD $\Omega_f$.}  Here, we introduce $\Omega_f$, where $f\leq n$ is a natural number, as a generalization of $\Omega$. In this paper, we will show that $\Omega_f$ is a weakest strong-sampling AFD that solves fault-tolerant consensus if at most $f$ locations are faulty.
Informally speaking, $\Omega_f$ denotes the AFD
that behaves exactly like $\Omega$ in traces that have at most $f$
faulty locations. Thus,  $\Omega_{n}$ is the AFD $\Omega$.

Precisely, $\Omega_f = (\hat{I},O_\Omega, T_{\Omega_f})$, where
$T_{\Omega_f}$ is the set of all valid sequences $t$ over $\hat{I}
\cup O_{\Omega}$ such that, if $|faulty(t)| \leq f$, then $t \in
T_{\Omega}$.
This definition implies that $T_{\Omega_f}$ contains all the valid
sequences over $\hat{I} \cup O_{\Omega}$  such that $|faulty(t)|> f$.

It is easy to see that $\Omega_f$ is a strong-sampling AFD.

\section{System Model and Definitions}
\label{sec:systemModel}

We model an asynchronous system as the composition of a collection of
I/O automata of the following kinds:  process automata, channel
automata, a crash automaton, and an environment automaton.
%
%
The external signature of each automaton and the interaction among
them are described in Section \ref{subsec:systemInteraction}. 
The behavior of these automata is described in Sections
\ref{subsec:processAutomata}---\ref{subsec:environmentAutomaton}. 

For the definitions that follow, we assume an alphabet $\mathcal{M}$
of messages.

\subsection{System Structure}
\label{subsec:systemInteraction}

A system contains a collection of process automata, one for each
location in $\Pi$.
We define the association with a mapping $Proc$, which maps each
location $i$ to a process automaton $Proc_i$.
Automaton $Proc_i$ has the following external signature. 
It has an input action $crash_i$, which is an output from the crash
automaton,  
a set of output actions $\set{send(m,j)_i| m \in \mathcal{M} \wedge j
  \in \Pi \setminus \set{i}}$, 
and a set of input actions $\set{receive(m,j)_i| m \in
  \mathcal{M} \wedge j \in \Pi \setminus \set{i}}$.  
A process automaton may also have other external actions with which it
interacts with the external environment or a failure detector; the set
of such actions may vary from one system to another. 

For every ordered pair $(i,j)$ of distinct locations, the system
contains a channel automaton $C_{i,j}$, which models the channel that
transports messages from process $Proc_i$ to process $Proc_j$. 
Channel $C_{i,j}$ has the following external actions. 
The set of input actions $input(C_{i,j})$ is $\set{send(m,j)_i| m \in
  \mathcal{M}}$, which is a subset of outputs of the process automaton
$Proc_i$.
The set of output actions $output(C_{i,j})$ is $\set{receive(m,i)_j| m
  \in \mathcal{M}}$, which is a subset of inputs to $Proc_j$.

The crash automaton $\mathcal{C}$ models the occurrence of crash
failures in the system.
Automaton $\mathcal{C}$ has $\hat{I} = \set{crash_i | i \in \Pi}$ as
its set of output actions, and no input actions.

The environment automaton $\mathcal{E}$ models the external world with
which the distributed system interacts. The automaton $\mathcal{E}$ is a composition of $n$ automata $\set{\mathcal{E}_i | i \in\Pi}$.
For each location $i$, the set of input actions to automaton $\mathcal{E}_i$ includes the action $crash_i$.
In addition, $\mathcal{E}_i$ may have input and output actions corresponding
(respectively) to any outputs and inputs of the process automaton $Proc_i$ that
do not match up with other automata in the system.

We assume that, for every location $i$, every external action of
$Proc_i$ and $\mathcal{E}_i$, respectively, occurs at $i$, that is, $loc(a) = i$ for every external
action $a$ of $Proc_i$ and $\mathcal{E}_i$.

We provide some constraints on the structure of the various automata
below.

\subsection{Process Automata}
\label{subsec:processAutomata}

The process automaton at location $i$, $Proc_i$, is an I/O automaton
whose external signature satisfies the constraints given above, and
that satisfies the following additional properties.
\begin{enumerate}
\item
Every internal action of $Proc_i$ occurs at $i$, that is,
$loc(a) = i$ for every internal action $a$ of $Proc_i$.  
We have already assumed that every external action of $Proc_i$ occurs
at $i$; now we are simply extending this requirement to the internal
actions.


\item
Automaton $Proc_i$ is deterministic, as defined in Section~\ref{sec:
  IOA:  automata-defs}.

\item
When $crash_i$ occurs, it permanently disables all locally controlled
actions of $Proc_i$.  
\end{enumerate}

We define a \emph{distributed algorithm} $A$ to be a collection of
process automata, one at each location; formally, it is simply a
particular $Proc$ mapping.  
For convenience, we will usually write $A_i$ for the process automaton
$Proc_i$.

\subsection{Channel Automata}
\label{subsec:channelAutomata}

The channel automaton for $i$ and $j$, $C_{i,j}$, is an I/O automaton
whose external signature is as described above.
That is, $C_{i,j}$'s input actions are $ \set{   send(m,j)_i | m \in
  \mathcal{M}}$ and its output actions are $\set{receive(m,i)_j | m
  \in \mathcal{M}}$.

Now we require $C_{i,j}$ to be a specific I/O automaton---a
\emph{reliable FIFO channel}, as defined in~\cite{lync:da}.
This automaton has no internal actions, and all its output actions are
grouped into a single task.
The state consists of a FIFO queue of messages, which is initially empty.
A $send$ input event can occur at any time. 
The effect of an event $send(m,j)_i$ is to add $m$ to the end of the
queue. 
When a message $m$ is at the head of the queue, the output action
$receive(m,i)_j$ is enabled, and the effect is to remove $m$ from the
head of the queue.
Note that this automaton $C_{i,j}$ is deterministic.

\subsection{Crash Automaton}
\label{subsec:crashAutomaton}

The crash automaton $\mathcal{C}$ is an I/O automaton with
$\hat{I} = \set{crash_i | i \in \Pi}$ as its set of output actions,
and no input actions.

Now we require the following constraint on the behavior of
$\mathcal{C}$:  Every sequence over $\hat{I}$ is a fair trace of the
crash automaton.
That is, any pattern of crashes is possible.
For some of our results, we will consider restrictions on the number of locations that crash.

\subsection{Environment Automaton}
\label{subsec:environmentAutomaton}
The environment automaton $\mathcal{E}$ is an I/O automaton whose
external signature satisfies the constraints described in Section \ref{subsec:systemInteraction}. Recall that $\mathcal{E}$ is a composition of $n$ automata $\set{\mathcal{E}_i | i \in\Pi}$. For each location $i$, the following is true.
\begin{enumerate}
\item $\mathcal{E}_i$ has a unique initial state.
\item $\mathcal{E}_i$ has tasks $Env_{i,x}$, where $x$ ranges
over some fixed task index set $X_i$.
\item $\mathcal{E}_i$ is task-deterministic.
\item When $crash_i$ occurs, it permanently disables all locally controlled
actions of $\mathcal{E}_i$.
\end{enumerate}
In addition, in some specific cases we will require the traces of
$\mathcal{E}$ to satisfy certain ``well-formedness'' restrictions,
which will vary from one system to another.
We will define these specifically when they are needed, later in the
paper.

\begin{figure}[htpb]
	\centering
\includegraphics[scale=0.35,page=1]{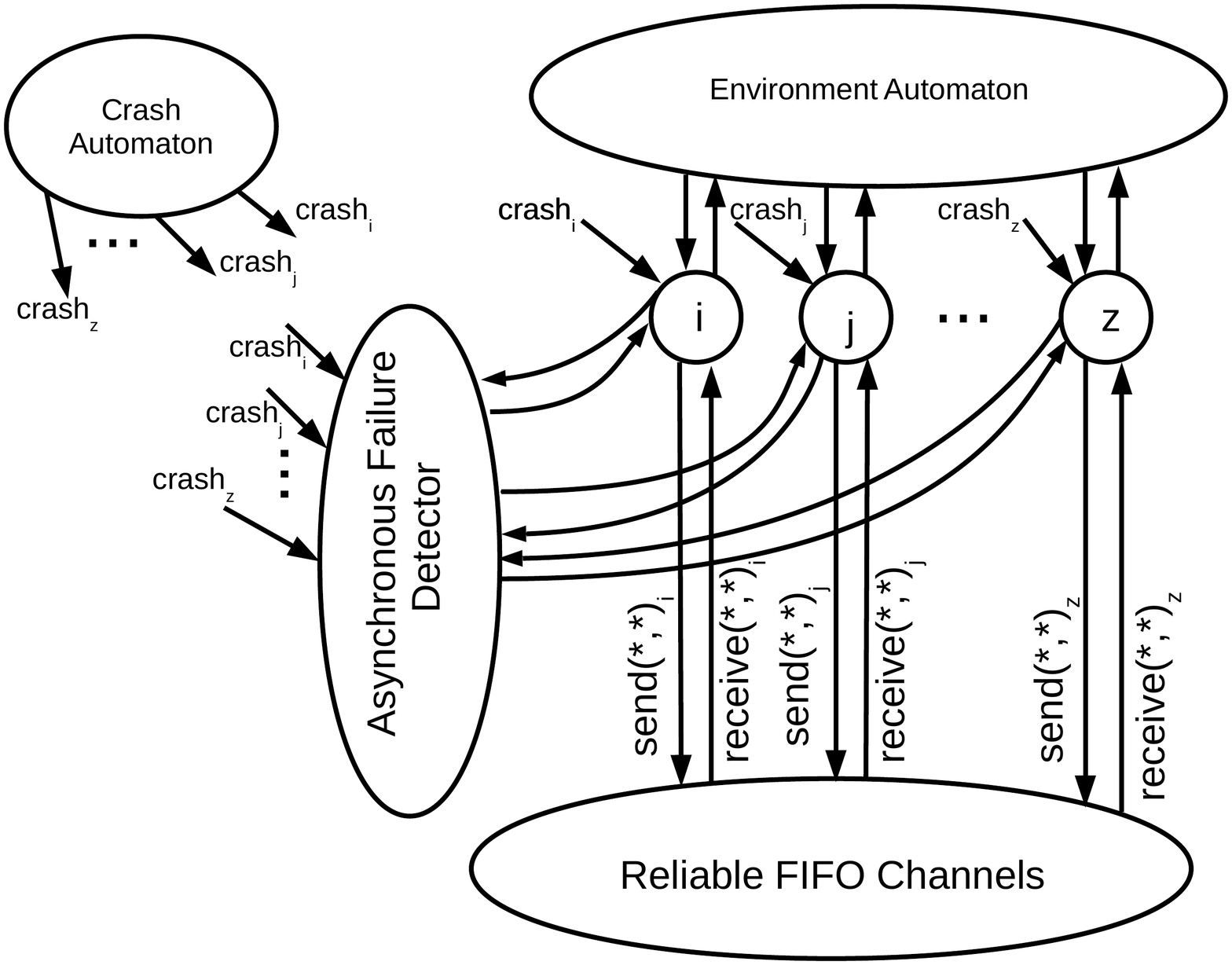}
	\caption{Interaction diagram for a message-passing asynchronous distributed system augmented with a failure detector automaton.}
	\label{fig:interaction_diagram}
\end{figure}


\section{Solving Problems}
\label{sec:solvingProblems}

In this section we define what it means for a distributed algorithm
to solve a crash problem in a particular environment.
We also define what it means for a distributed algorithm to solve one
problem $P$ using another problem $P'$. 
Based on these definitions, we define what it means for an AFD to
be sufficient to solve a problem.
%

\subsection{Solving a Crash Problem}
\label{subsec:solvingCrashProblems}

An automaton $\mathcal{E}$ is said to be an \emph{environment for} $P$
if the input actions of $\mathcal{E}$ are $O_P$, and the output
actions of $\mathcal{E}$ are $I_P \setminus \hat{I}$.
Thus, the environment's inputs and outputs ``match'' those of the
problem, except that the environment doesn't provide the problem's
$crash$ inputs.

If $\mathcal{E}$ is an environment for a crash problem $P =
(I_P,O_P,T_P)$, then an I/O automaton $U$ is said to \emph{solve} $P$
in environment $\mathcal{E}$ provided that the following conditions
hold:
\begin{enumerate}
\item
$input(U) = I_P$.
\item
$output(U) = O_P$.
\item
The set of fair traces of the composition of $U$, $\mathcal{E}$, and
the crash automaton is a subset of $T_P$.
\end{enumerate}

A distributed algorithm $A$ \emph{solves} a crash problem $P$ in an
environment $\mathcal{E}$ iff the automaton $\hat{A}$, which is
obtained by composing $A$ with the channel automata, solves $P$ in
$\mathcal{E}$. 
A crash problem $P$ is said to be \emph{solvable} in an environment
$\mathcal{E}$ iff there exists a distributed algorithm $A$ such
that $A$ solves $P$ in $\mathcal{E}$. 
If crash problem $P$ is not solvable in environment $\mathcal{E}$, then
it is said to be \emph{unsolvable} in $\mathcal{E}$.

\subsection{Solving One Crash Problem Using Another}

Often, an unsolvable problem $P$ may be solvable if the system
contains an automaton that solves some \emph{other} (unsolvable) crash
problem $P'$. We describe the relationship between $P$ and $P'$ as
follows.

Let $P = (I_P,O_P,T_P)$ and $P' = (I_{P'}, O_{P'}, T_{P'})$ be two
crash problems with disjoint sets of actions (except for $crash$
actions).
%
%
Let $\mathcal{E}$ be an environment for $P$.
Then a distributed algorithm $A$ \emph{solves} crash problem $P$ using
crash problem $P'$ in environment $\mathcal{E}$ iff the following are
true:
\begin{enumerate}
\item
For each location $i \in \Pi$, $input(A_i) = \cup_{j\in \Pi \setminus \{i\}}output(C_{j,i}) \cup I_{P,i} \cup O_{P',i}$. 
\item
For each location $i \in \Pi$, $output(A_i) = \cup_{j\in \Pi \setminus
  \{i\}}input(C_{i,j}) \cup O_{P,i} \cup 
I_{P',i} \setminus \set{crash_i}$.
\item
Let $\hat{A}$ be the composition of $A$ with the channel automata,
the crash automaton, and the environment automaton $\mathcal{E}$. 
Then for every fair trace $t$ of $\hat{A}$, 
if $t|_{I_{P'} \cup O_{P'}} \in T_{P'}$, then $t|_{I_P \cup O_P} \in
T_P$. 

In effect, in any fair execution of the system, if the sequence
of events associated with the problem $P'$ is consistent with the
specified behavior of $P'$, then the sequence of events associated
with problem $P$ is consistent with the specified behavior of $P$.
\end{enumerate}

Note that requirement 3 is vacuous if for every fair trace $t$ of
$\hat{A}$, $t|_{I_{P'} \cup O_{P'}} \notin T_{P'}$. 
However, in the definition of a problem $P'$, the requirement
that there exist some automaton whose set of fair traces is a subset
of $T_{P'}$ ensures that there are ``sufficiently many'' fair traces $t$
of $\hat{A}$, such that $t|_{I_{P'} \cup O_{P'}} \in T_{P'}$.
%

We say that a crash problem $P' = (I_{P'},O_{P'},T_{P'})$ 
\emph{is sufficient to solve} a crash problem $P = (I_P.O_P,T_P)$
in environment $\mathcal{E}$, denoted $P' \stronger_\mathcal{E} P$ iff
there exists a distributed algorithm $A$ that solves $P$ using $P'$ in
$\mathcal{E}$. If $P' \stronger_\mathcal{E} P$, then also we say that
$P$ \emph{is solvable using} $P'$ in $\mathcal{E}$. If no such
distributed algorithm exists, then we state that $P$ is
\emph{unsolvable} using $P'$ in $\mathcal{E}$, and we denote it as $P'
\not\stronger_{\mathcal{E}} P$.

\subsection{Using and Solving Failure-Detector Problems}
\label{subsec:solvingFDproblemsWithAnother}

Since an AFD is simply a kind of crash problem, the definitions above
automatically yield definitions for the following notions.
\begin{enumerate}
\item 
A distributed algorithm $A$ solves an AFD $\FD$ in environment $\mathcal{E}$.
\item 
A distributed algorithm $A$ solves a crash problem $P$ using an AFD
$\FD$ in environment $\mathcal{E}$.
\item 
An AFD $\FD$ is sufficient to solve a crash problem $P$ in environment
$\mathcal{E}$.
\item 
A distributed algorithm $A$ solves an AFD $\FD$ using a crash problem
$P$ in environment $\mathcal{E}$.
\item 
A crash problem $P$ is sufficient to solve an AFD $\FD$ in environment
$\mathcal{E}$.
\item 
A distributed algorithm $A$ solves an AFD $\FD'$ using another AFD
$\FD$.
\item 
An AFD $\FD$ is sufficient to solve an AFD $\FD'$.
\end{enumerate}

Note that, when we talk about solving an AFD, the environment
$\mathcal{E}$ has no output actions because the AFD has no input
actions except for $\hat{I}$, which are inputs from the crash
automaton. 
Therefore, we have the following lemma.

\begin{lemma}
Let $P$ be a crash problem and $\FD$ an AFD.
If $P \stronger_\mathcal{E} \FD$ in some environment $\mathcal{E}$
(for $\FD$), then for any other environment $\mathcal{E}'$ for $\FD$,
$P \stronger_{\mathcal{E}'} \FD$. 
\end{lemma}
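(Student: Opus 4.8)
The plan is to show that the very same distributed algorithm $A$ that witnesses $P \stronger_{\mathcal{E}} \FD$ also witnesses $P \stronger_{\mathcal{E}'} \FD$ for an arbitrary environment $\mathcal{E}'$ for $\FD$. The two signature requirements in the definition of ``$A$ solves $\FD$ using $P$'' (the constraints on $input(A_i)$ and $output(A_i)$) mention only the action sets of $\FD$, $P$, and the channels, so they are unaffected by the choice of environment. Hence the entire argument reduces to transporting the trace-inclusion requirement (condition~3) from $\mathcal{E}$ to $\mathcal{E}'$.

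The structural observation that makes this possible is that, because $\FD$ satisfies crash exclusivity ($I_\FD = \hat{I}$), every environment for $\FD$ has an empty set of output actions: its external actions are exactly the inputs $O_\FD \cup \hat{I}$. In particular $\mathcal{E}$ and $\mathcal{E}'$ share the same external signature, and neither can initiate a shared action nor otherwise constrain the rest of the system. Write $B$ for the composition of $A$ with the channel automata and the crash automaton, so that $\hat{A} = B \times \mathcal{E}$ and $\hat{A}' = B \times \mathcal{E}'$. A routine signature check shows that all four action sets relevant to condition~3, namely $\hat{I}$, $O_\FD$, $I_P$, and $O_P$, are (external) actions of $B$: $O_\FD$ and $I_P\setminus\hat I$ are outputs of $A$, $O_P$ is an input of $A$, and $\hat{I}$ consists of the crash automaton's outputs.

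The technical heart is a lifting lemma: every fair execution $\alpha'$ of $\hat{A}'$ can be matched by a fair execution $\alpha$ of $\hat{A}$ with $\alpha|B = \alpha'|B$. First I would invoke the projection theorem (Theorem~8.1 of~\cite{lync:da}) together with the standard characterization that an execution of a composition is fair iff its projection onto each component is fair (see~\cite[Chapter~8]{lync:da}) to conclude that $\beta := \alpha'|B$ is a fair execution of $B$. Then I would rebuild $\alpha$ by interleaving $\beta$ with the activity of $\mathcal{E}$: since input actions are always enabled, $\mathcal{E}$ can absorb exactly the $O_\FD$ and $crash$ events occurring in $\beta$ as joint steps, and since $\mathcal{E}$ has no outputs its locally controlled actions are purely internal and can be scheduled fairly without altering any $B$-event. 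This produces $\alpha$ with $\alpha|B = \beta = \alpha'|B$ and with $\alpha|\mathcal{E}$ a fair execution of $\mathcal{E}$, hence $\alpha$ fair by the componentwise characterization again.

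With the lifting lemma in hand the conclusion is immediate. Fix a fair trace $t'$ of $\hat{A}'$ with $t'|_{I_P \cup O_P} \in T_P$, let $\alpha'$ be a fair execution of $\hat{A}'$ with trace $t'$, and let $\alpha$ be the matched fair execution of $\hat{A}$, with trace $t$, so that $\alpha|B = \alpha'|B$. Since $\hat{I}, O_\FD, I_P, O_P \subseteq act(B)$, the events drawn from these sets coincide in $\alpha$ and $\alpha'$, giving $t|_{I_P \cup O_P} = t'|_{I_P \cup O_P} \in T_P$ and $t|_{\hat{I} \cup O_\FD} = t'|_{\hat{I} \cup O_\FD}$. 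Condition~3 of ``$A$ solves $\FD$ using $P$ in $\mathcal{E}$'' then yields $t|_{\hat{I} \cup O_\FD} \in T_\FD$, whence $t'|_{\hat{I} \cup O_\FD} \in T_\FD$, which is precisely condition~3 for $\mathcal{E}'$. I expect the main obstacle to be the fairness bookkeeping inside the lifting lemma: verifying that a fair execution of $\mathcal{E}$ over the input stream dictated by $\beta$ always exists and composes with $\beta$ to a fair composite, handling both the finite and infinite cases and the fact that $crash_i$ permanently disables the tasks of $\mathcal{E}_i$. Everything else is signature checking and projection algebra.
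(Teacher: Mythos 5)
Your proposal is correct and takes essentially the same approach as the paper: the paper in fact gives no formal proof of this lemma, justifying it solely by the preceding remark that an environment for an AFD has no output actions (the AFD's only inputs are the $crash$ events, which come from the crash automaton), and your argument is a rigorous elaboration of exactly that observation. The lifting construction you sketch --- projecting a fair execution of $B \times \mathcal{E}'$ onto $B$ and re-interleaving a fairly scheduled, output-free $\mathcal{E}$, then transporting condition~3 along the shared $B$-events --- is the standard way to make the paper's one-line justification precise.
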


Consequently, when we refer to an AFD $D$ being solvable using a crash
problem (or an AFD) $P$, we omit the reference to the environment
automaton and simply say that $P$ is sufficient to solve $\FD$; we
denote this relationship by $P \stronger \FD$. 
Similarly, when we say that an AFD $\FD$ is unsolvable using $P$, we
omit mention of the environment, and write simply $P \not \stronger \FD$.

Finally, if an AFD $\FD$ is sufficient to solve another AFD $\FD'$ (notion 7 in the list above), then
we say that $\FD$ \emph{is stronger than} $\FD'$, and we denote this
by $\FD \stronger \FD'$. 
If $\FD \stronger \FD'$, but $\FD' \not\stronger \FD$, then we
say that $\FD$ is \emph{strictly stronger} than $\FD'$, and we denote this
by $\FD \succ \FD'$. Also, if  $D \stronger D'$ and $D' \stronger D$, then we say that $D$ is \emph{equivalent} to $D'$.

We conjecture that for any AFD $D$, there exists a strong sampling AFD $D'$ such that $D$ is equivalent to $D'$; thus, if a non-strong-sampling AFD $D$ is a weakest to solve consensus, then there must exist an equivalent AFD $D'$ that is also a weakest to solve consensus. Therefore, it is sufficient to restrict our attention to strong-sampling AFDs.


\section{Observations}\label{sec:observations}


In this section, fix $\FD$ to be an AFD.
%
%
We define the notion of an observation $G$ of $D$ and present
properties of observations.
Observations are a key part of the emulation algorithm used to prove
the ``weakest failure detector'' result, in Section~\ref{sec: wfd}.


\subsection{Definitions and Basic Properties}
\label{subsec:observationDef}

An \emph{observation} is a DAG $G = (V,Z)$, where the set $V$ of
vertices consists of triples of the form $v = (i,k,e)$ where $i \in
\Pi$ is a location, $k$ is a positive integer, and $e$ is an action
from $O_{D,i}$; we refer to $i$, $k$, and $e$ as the location, index,
and action of $v$, respectively. 
Informally, a vertex $v = (i,k,e)$ denotes that $e$ is the $k$-th AFD
output at location $i$, and the observation represents a partial
ordering of AFD outputs at various locations. 
We say that an observation $G$ is \emph{finite} iff the set $V$ (and therefore the set $Z$) is finite; otherwise, $G$ is said to be \emph{infinite}.

We require the set $V$ to satisfy the following properties.
\begin{enumerate}
\item 
For each location $i$ and each positive integer $k$, $V$ contains at
most one vertex whose location is $i$ and index is $k$.
\item 
If $V$ contains a vertex of the form $(i,k,*)$ and $k'<k$, then $V$
also contains a vertex of the form $(i,k',*)$.
\end{enumerate}
Property 1 states that at each location $i$, for each positive integer $k$, there is at most
one $k$-th AFD output. 
Property 2 states that for any $i$ and $k$, if the $k$-th AFD output occurs at $i$,
then the first $(k-1)$ AFD outputs also occur at $i$.

The set $Z$ of edges imposes a partial ordering on the occurrence of
AFD outputs.  We assume that it satisfies the following properties.
\begin{enumerate}
\setcounter{enumi}{2}
\item 
For every location $i$ and natural number $k$, if $V$ contains
vertices of the form $v_1 = (i,k,*)$ and $v_2 = (i,k+1,*)$, then $Z$
contains an edge from $v_1$ to $v_2$. 
\item 
For every pair of distinct locations $i$ and $j$ such that $V$
contains an infinite number of vertices whose location is $j$, the
following is true. 
For each vertex $v_1$ in $V$ whose location is $i$, there is a vertex
$v_2$ in $V$ whose location is $j$ such that there is an edge from
$v_1$ to $v_2$ in $Z$.
\item 
For every triple $v_1$, $v_2$, $v_3$ of vertices such that $Z$
contains both an edge from $v_1$ to $v_2$ and an edge from $v_2$ to
$v_3$, $Z$ also contains an edge from $v_1$ to $v_3$. 
That is, the set of edges of $G$ is closed under transitivity.
\end{enumerate}
Property 3 states that at each location $i$, the $k$-th output at $i$ occurs before the
$(k+1)$-st output at $i$. 
Property 4 states that for every pair of locations $i$ and $j$ such that infinitely many AFD outputs occur at $j$, for every AFD output event $e$ at $i$ there exists some AFD output event $e'$ at $j$ such that $e$ occurs before
$e'$. 
Property 5 is a transitive closure property that simply captures the
notion that if event $e_1$ happens before event $e_2$ and $e_2$
happens before event $e_3$, then $e_1$ happens before $e_3$.

Given an observation $G = (V,Z)$, if $V$ contains an infinite number
of vertices of the form $(i,*,*)$ for some particular $i$, then $i$ is
said to be \emph{live} in $G$. 
We write $live(G)$ for the set of all the locations that are live in $G$.

\begin{lemma}
\label{prop:liveImpliesAllIndices}
Let $G = (V,Z)$ be an observation, $i$ a location in $live(G)$.
Then for every positive integer $k$, $V$ contains exactly one vertex
of the form $(i,k,*)$.
\end{lemma}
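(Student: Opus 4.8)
The plan is to prove the statement by combining the definition of $live(G)$ with the two structural properties (Properties 1 and 2) that the vertex set $V$ of an observation must satisfy. The claim has two parts to establish for each positive integer $k$: existence of a vertex of the form $(i,k,*)$, and uniqueness of such a vertex. Uniqueness is immediate from Property 1, which states that for each location $i$ and each index $k$, $V$ contains at most one vertex whose location is $i$ and index is $k$. So the real content is the existence half, and that is where I would focus.

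For existence, first I would unpack the definition of \emph{live}: since $i \in live(G)$, the set $V$ contains infinitely many vertices of the form $(i,*,*)$. The key observation is that the set of indices appearing among these vertices is an infinite subset of the positive integers. By Property 1, distinct vertices at location $i$ must have distinct indices (two vertices at $i$ with the same index would violate ``at most one''), so the infinitely many vertices at $i$ carry infinitely many distinct index values. Hence for any fixed positive integer $k$, there exists some vertex at $i$ whose index $k^* \geq k$; concretely, since the index set is infinite and therefore unbounded, I can pick a vertex $(i,k^*,*)$ with $k^* \geq k$.

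The final step is to invoke Property 2, the downward-closure property on indices: if $V$ contains a vertex of the form $(i,k^*,*)$ and $k < k^*$ (or $k = k^*$), then $V$ also contains a vertex of the form $(i,k,*)$. Applying this with the $k^* \geq k$ just chosen yields a vertex $(i,k,*)$ in $V$, establishing existence. Combining existence with the uniqueness from Property 1 gives \emph{exactly one} vertex of the form $(i,k,*)$ for every positive integer $k$, which is the claim.

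I do not anticipate a genuine obstacle here, as this is essentially an unwinding of definitions. The only point requiring mild care is the transition from ``infinitely many vertices at $i$'' to ``indices at $i$ are unbounded'': one must explicitly use Property 1 to rule out the degenerate possibility that infinitely many vertices share finitely many indices (which Property 1 forbids, since each $(i,k)$ pair admits at most one vertex). Once that is noted, unboundedness of the index set is forced, and the rest follows mechanically from Property 2.
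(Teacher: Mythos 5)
Your proof is correct and takes the same route as the paper, whose entire proof is the one-line remark that the claim ``follows from Properties 1 and 2 of observations.'' You have simply filled in the details the paper leaves implicit: uniqueness from Property 1, and existence via the unboundedness of indices at a live location (forced by Property 1) combined with the downward closure of Property 2.
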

\begin{proof}
Follows from Properties 1 and 2 of observations.
\end{proof}

\begin{lemma}
\label{prop:outgoingEdgesToLiveVertices}
Let $i$ and $j$ be distinct locations with $j \in live(G)$.
Let $v$ be a vertex in $V$ whose location is $i$.
Then there exists a positive integer $k$ such that for every positive
integer $k' \geq k$, $Z$ contains an edge from $v$ to some vertex of
the form $(j,k',*)$.
\end{lemma}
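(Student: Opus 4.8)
The plan is to obtain a single outgoing edge from $v$ into the set of $j$-vertices using Property 4, and then to propagate this edge to all larger indices at $j$ by chaining the intra-location edges of Property 3 together with the transitive-closure of Property 5.

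First I would record what liveness buys us. Since $j \in live(G)$, the observation contains infinitely many vertices whose location is $j$, and by Lemma \ref{prop:liveImpliesAllIndices} there is exactly one vertex of the form $(j,m,*)$ for every positive integer $m$; write $w_m$ for this vertex. Because $i \neq j$ and $j$ has infinitely many vertices, Property 4 applies to the ordered pair $(i,j)$ and the vertex $v$ (whose location is $i$): it guarantees a vertex $w_k = (j,k,*)$, for some positive integer $k$, such that $Z$ contains an edge from $v$ to $w_k$. This $k$ is the index asserted to exist in the statement.

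Next I would build a chain of edges along the $j$-vertices. For every $m \geq k$, both $w_m$ and $w_{m+1}$ lie in $V$ by Lemma \ref{prop:liveImpliesAllIndices}, so Property 3 supplies an edge from $w_m$ to $w_{m+1}$. Applying the transitive-closure Property 5 repeatedly along this chain yields, for every $k' > k$, an edge from $w_k$ to $w_{k'}$ in $Z$. Finally I would combine the two pieces: for $k' = k$ the required edge from $v$ to $w_k$ is exactly the one handed to us by Property 4, while for $k' > k$ transitivity applied to the edge $v \to w_k$ and the edge $w_k \to w_{k'}$ produces an edge from $v$ to $w_{k'} = (j,k',*)$. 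Hence for every $k' \geq k$ there is an edge from $v$ to a vertex of the form $(j,k',*)$.

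There is no real obstacle here; the only point needing mild care is that the transitive chain along $j$ must be well-defined, and this is precisely where liveness of $j$ (through Lemma \ref{prop:liveImpliesAllIndices}) is indispensable: it guarantees that every index $k' \geq k$ actually has a vertex $w_{k'}$ available as the endpoint, so that Property 3 can legitimately be invoked at each step of the chain.
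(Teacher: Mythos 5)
Your proof is correct and follows exactly the route the paper intends: the paper's own proof is the one-line citation ``Follows from Lemma \ref{prop:liveImpliesAllIndices}, and Properties 3, 4, and 5 of observations,'' and your argument is precisely the detailed expansion of that citation (Property 4 for the initial edge, Lemma \ref{prop:liveImpliesAllIndices} plus Property 3 for the chain along location $j$, and Property 5 to close under transitivity).
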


\begin{proof}
Follows from Lemma \ref{prop:liveImpliesAllIndices}, and
Properties 3, 4, and 5 of observations.
\end{proof}
%


\begin{lemma}
\label{prop:outgoingEdgesToCrashedVertices}
Let $i$ and $j$ be distinct locations with $j \in live(G)$ and $i
\notin live(G)$; that is, $V$ contains infinitely many vertices whose
location is $j$ and only finitely many vertices whose location is $i$.
Then there exists a positive integer $k$ such that for every $k' \geq
k$, there is no edge from any vertex of the form $(j,k',*)$ to any
vertex whose location is $i$.
\end{lemma}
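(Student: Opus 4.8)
The plan is to reduce the statement to a fact about a single $i$-vertex and then exploit acyclicity of the DAG together with Lemma~\ref{prop:outgoingEdgesToLiveVertices}. First I would use the hypothesis $i \notin live(G)$ to note that $V$ contains only finitely many vertices whose location is $i$; by Property 2 of observations these are exactly the vertices $w_1, \ldots, w_m$ with $w_p = (i,p,*)$ for some finite $m$. It therefore suffices to show that for each fixed $w_p$ there is a threshold index $k_p$ beyond which no $j$-vertex has an edge into $w_p$; taking $k = \max_p k_p$ then handles all $i$-vertices simultaneously, since $w_1, \ldots, w_m$ are all of them.

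The core of the argument is the treatment of a single $i$-vertex $w = w_p$. Since $i$ and $j$ are distinct and $j \in live(G)$, Lemma~\ref{prop:outgoingEdgesToLiveVertices} (applied with source location $i$ and target location $j$) yields a positive integer $k_p$ such that for every $k' \geq k_p$, $Z$ contains an edge from $w$ to some $j$-vertex of index $k'$; by Lemma~\ref{prop:liveImpliesAllIndices} this $j$-vertex is the unique vertex $(j,k',*)$. I would then argue by contradiction: suppose that for some $k' \geq k_p$ there is an edge from $(j,k',*)$ to $w$. Together with the edge from $w$ to $(j,k',*)$ just obtained, this produces a cycle $w \to (j,k',*) \to w$ in $G$, contradicting the fact that $G$ is a DAG. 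Hence for every $k' \geq k_p$ there is no edge from $(j,k',*)$ to $w$, and only $j$-vertices with index strictly less than $k_p$ can have edges into $w$.

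The step I expect to be the crux that makes the lemma work is recognizing that the \emph{outgoing} edges guaranteed by Lemma~\ref{prop:outgoingEdgesToLiveVertices} can be turned into a constraint on \emph{incoming} edges by way of acyclicity. The uniqueness supplied by Lemma~\ref{prop:liveImpliesAllIndices} is essential here: it guarantees that the $j$-vertex receiving an edge from $w$ at index $k'$ is literally the same vertex that would otherwise send an edge back to $w$, so that the two edges genuinely close a cycle rather than merely connecting $w$ to two distinct $j$-vertices. With that observation in hand, the remaining combination over the finitely many $i$-vertices $w_1, \ldots, w_m$ is immediate, and setting $k = \max_p k_p$ completes the proof.
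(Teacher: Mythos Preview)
Your proof is correct and follows essentially the same idea as the paper's --- use Lemma~\ref{prop:outgoingEdgesToLiveVertices} to get outgoing edges from $i$-vertices into high-index $j$-vertices, and then invoke acyclicity. The difference is in bookkeeping: you apply Lemma~\ref{prop:outgoingEdgesToLiveVertices} once for \emph{each} of the finitely many $i$-vertices $w_1,\ldots,w_m$, obtain per-vertex thresholds $k_p$, and set $k=\max_p k_p$. The paper instead applies Lemma~\ref{prop:outgoingEdgesToLiveVertices} only once, to the $i$-vertex $v_1$ of highest index, obtains a single $k$, and then uses Properties~3 and~5 (the chain of edges $(i,p,*)\to(i,p{+}1,*)\to\cdots\to v_1$ and transitive closure) to argue that an edge from $(j,k',*)$ into any lower $i$-vertex would yield an edge into $v_1$ and hence a cycle. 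Your version avoids the appeal to Properties~3 and~5 at the cost of multiple invocations of Lemma~\ref{prop:outgoingEdgesToLiveVertices}; the paper's version is a touch slicker but relies on one more structural property. Both are fine.
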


\begin{proof}
Fix $i$ and $j$ as in the hypotheses.
Let $v_1$ be the vertex in $V$ whose location is $i$ and whose index
is the highest among all the vertices whose location is $i$.
From Lemma \ref{prop:outgoingEdgesToLiveVertices} we know that
there exists a positive integer $k$ such that for every positive
integer $k' \geq k$, $Z$ contains an edge from $v_1$ to some vertex of the
form $(j,k',*)$. 
Since $G$ is a DAG, there is no edge from any vertex of the form
$(j,k',*)$, $k' \geq k$ to $v_1$. 
Applying Properties 3 and 5 of observations, we conclude that
there is no edge from any vertex of the form $(j,k',*)$ to any vertex
whose location is $i$.
\end{proof}

\begin{lemma}
\label{prop:finiteIncomingEdges}
Let $G = (V,Z)$ be an observation.
Every vertex $v$ in $V$ has only finitely many incoming edges in $Z$.
\end{lemma}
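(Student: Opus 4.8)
The plan is to exploit the finiteness of $\Pi$: since there are only $n$ locations, it suffices to show that for each location $j \in \Pi$, only finitely many vertices whose location is $j$ have an outgoing edge to $v$. Summing over the (finitely many) locations then yields finitely many incoming edges at $v$, recalling that $Z$ contains at most one edge between any ordered pair of vertices, so bounding the number of source vertices bounds the number of incoming edges. Fix $v = (i,k,e)$.

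First I would dispose of the easy locations. If $j \notin live(G)$, then by definition $V$ contains only finitely many vertices whose location is $j$, so at most finitely many of them can have an edge to $v$. This leaves the live locations, where $V$ contains infinitely many vertices, and this is where the real work lies.

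For a live location $j \neq i$, I would invoke Lemma~\ref{prop:outgoingEdgesToLiveVertices} with $v$ (whose location is $i$) and $j \in live(G)$: there is a positive integer $K$ such that $Z$ contains an edge from $v$ to some vertex of the form $(j,k',*)$ for every $k' \geq K$. Because $G$ is a DAG, none of these target vertices can also have an edge back to $v$, since that would create a $2$-cycle; hence only the vertices $(j,k',*)$ with $k' < K$ — of which there are finitely many — can have an edge to $v$. The remaining case is the live location $j = i$ itself, which Lemma~\ref{prop:outgoingEdgesToLiveVertices} does not cover, as it requires distinct locations. Here I would instead use Property~3 together with transitivity (Property~5): for every $k' > k$, $Z$ contains an edge from $v = (i,k,e)$ to $(i,k',*)$, so again acyclicity rules out any edge from $(i,k',*)$ back to $v$; only the indices $k' < k$ remain, which are finitely many.

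The main obstacle, such as it is, is the observation that the incoming edges from a live location can originate only among its finitely many ``early'' vertices; the whole argument hinges on combining the DAG property with the already-established fact (Lemma~\ref{prop:outgoingEdgesToLiveVertices} and Property~3) that $v$ \emph{sends} edges to all sufficiently late vertices at each live location. Once that is in place the counting is immediate. I would be careful to handle $j = i$ via Property~3 rather than Lemma~\ref{prop:outgoingEdgesToLiveVertices}, and to record explicitly that each source vertex contributes at most one incoming edge, so that finitely many sources give finitely many edges.
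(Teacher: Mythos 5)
Your proof is correct. It differs from the paper's argument in organization and in which intermediate results it leans on, rather than in the underlying mechanism. The paper argues by contradiction: it uses the pigeonhole principle over the finitely many locations to extract a single location $j \neq i$ contributing infinitely many incoming edges, then builds an explicit $3$-cycle $v \to (j,k,*) \to (j,k',*) \to v$ directly from Properties 3, 4, and 5 of observations. You instead give a direct per-location count: for each live location $j \neq i$ you invoke Lemma~\ref{prop:outgoingEdgesToLiveVertices} (which already packages Properties 3--5) to conclude that all sufficiently late vertices at $j$ receive an edge from $v$, so acyclicity (a $2$-cycle argument) confines the possible sources of incoming edges to the finitely many earlier vertices; summing over the $n$ locations finishes. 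Your route buys two things: the bound is explicit and constructive rather than obtained by contradiction, and you explicitly treat the case $j = i$ via Property~3 and transitivity --- a case the paper's proof silently dismisses when it asserts that the offending location ``must'' satisfy $j \neq i$; justifying that assertion requires essentially the argument you supply. The paper's proof is shorter because it needs only one offending location and works from Property~4 directly, but yours is the more complete case analysis.
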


\begin{proof}
For contradiction, assume that there exists a vertex $v$ with
infinitely many incoming edges, and let $i$ be the location of $v$.
Then there must be a location $j \neq i$ such that there are
infinitely many vertices whose location is $j$ that have an outgoing
edge to $v$.   
Fix such a location $j$. 
Note that $j$ must be live in $G$.

Since there are infinitely many vertices whose location is $j$, by
Property 4 of observations, we know that $v$ has an outgoing edge to
some vertex $(j,k,*)$. 
Since infinitely many vertices of the form
$(j,k',*)$ have an outgoing edge to $v$, fix some such $k' > k$. 
By Properties 3 and 5 of observations, we know that there exists a
edge from $(j,k,*)$ to $(j,k',*)$. 
Thus, we see that there exist edges from $v$ to $(j,k,*)$, 
from $(j,k,*)$ to $(j,k',*)$, and from $(j,k',*)$ to $v$, which yield a
cycle. 
This contradicts the assumption that $G$ is a DAG. 
\end{proof}

\subsection{Viable Observations}

Now consider an observation $G = (V,Z)$. 
If $\mathcal{V}$ is any sequence of vertices in $V$, then we define
the \emph{event-sequence} of $\mathcal{V}$ to be the sequence obtained
by projecting $\mathcal{V}$ onto its second component.

We say that a trace $t \in T_D$ is \emph{compatible} with an
observation $G$ provided that $t|O_D$ is the event sequence of some
topological ordering of the vertices of $G$.  
$G$ is a \emph{viable} observation if there exists a trace $t \in T_D$
that is compatible with $G$.

\begin{lemma}
\label{prop:liveInObsAndTrace}
Let $G$ be a viable observation, and suppose that $t \in T_D$ is
compatible with $G$. 
For each location $i$, $i$ is live in $G$ iff $i \in live(t)$.
\end{lemma}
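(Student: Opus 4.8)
The plan is to reduce the claim to a counting statement about the location-$i$ vertices of $G$ and then invoke the validity of $t$. Since $t$ is compatible with $G$, there is a topological ordering of the vertices of $V$ whose event-sequence is exactly $t|O_D$. This ordering enumerates every vertex of $G$ exactly once, so it induces a bijection between the vertices of $G$ and the events of $t|O_D$, under which a vertex $v = (i,k,e)$ corresponds to the occurrence of the action $e$. By definition the action of any vertex whose location is $i$ lies in $O_{D,i}$, and because the sets $\{O_{D,j}\}_{j \in \Pi}$ are pairwise disjoint (each output action has a unique location under $loc$), an event of $t|O_D$ lies in $O_{D,i}$ iff the vertex it corresponds to has location $i$. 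Hence the number of vertices of $G$ whose location is $i$ equals the number of $O_{D,i}$ events in $t$ (both possibly infinite), and in particular $i \in live(G)$ iff $t$ contains infinitely many events from $O_{D,i}$.

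It then remains to show that $t$ contains infinitely many $O_{D,i}$ events iff $i \in live(t)$; both directions follow directly from the validity of $t$, which holds since $t \in T_D$ and $D$ is an AFD. For the reverse implication, suppose $i \in live(t)$, so that no $crash_i$ event occurs in $t$; clause (2) of validity then guarantees that $t$ contains infinitely many events in $O_{D,i}$. For the forward implication I would argue by contraposition: if $i \in faulty(t)$, then $t$ contains a $crash_i$ event, which occurs at some finite position; clause (1) of validity forbids any $O_{D,i}$ event after this $crash_i$, so all $O_{D,i}$ events occur strictly before it, and there are therefore only finitely many of them.

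Combining the two equivalences yields $i \in live(G)$ iff $t$ has infinitely many $O_{D,i}$ events iff $i \in live(t)$, as required. There is no serious obstacle here: the only point demanding care is the vertex-event correspondence in the first step, where one must use both the compatibility hypothesis and the disjointness of the $O_{D,i}$ to match location-$i$ vertices with $O_{D,i}$ events; once this correspondence is established the conclusion is an immediate consequence of the two validity clauses.
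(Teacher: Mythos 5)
Your proof is correct, and in fact the paper states this lemma without giving any proof at all, so your argument fills in exactly the reasoning the authors left implicit: the topological-ordering bijection from compatibility identifies the location-$i$ vertices of $G$ with the $O_{D,i}$ events of $t$, and the two clauses of the validity property (which every trace in $T_D$ satisfies) convert "infinitely many $O_{D,i}$ events" into "no $crash_i$ event," in both directions. The only step requiring care—the vertex-to-event correspondence and the disjointness of the sets $O_{D,i}$—is handled properly, so there is nothing to add.
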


We now consider paths in an observation DAG, and their connection with
strong sampling, as defined in Section~\ref{subsec:FDproblems}. A \emph{path} in a observation is a sequence of vertices, where for each pair of consecutive vertices $u,v$ in a path, $(u,v)$ is an edge of the observation.

A \emph{branch} of an observation $G$ is a maximal path in $G$.
A \emph{fair branch} $b$ of $G$ is a branch of $G$ that satisfies the
additional property that, for every $i$ in $\Pi$, if $i$ is live in $G$, then
$b$ contains an infinite number of vertices whose location is $i$.
%

\begin{lemma}
\label{prop:sampledSubsequence}
\label{prop:pathIsTrace}
Let $G$ be a viable observation, and suppose that $t \in T_D$ is
compatible with $G$. 
Suppose $b$ is a fair branch of $G$, and let $\epsilon$ be the event
sequence of $b$. 
Then
\begin{enumerate}
\item
There exists a strong sampling $t'$ of $t$ such that
$t'|_{O_D} = \epsilon$. 
\item
If $\FD$ is a strong-sampling AFD, then there exists $t' \in T_D$ 
such that $t'$ is a strong sampling of $t$ and $t'|_{O_D} = \epsilon$.
\end{enumerate}
\end{lemma}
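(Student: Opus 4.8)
The plan is to prove part~(1) by explicitly constructing the subsequence $t'$ and verifying that it is valid, and then to obtain part~(2) immediately from closure under strong sampling. Since $t$ is compatible with $G$, by definition $t|_{O_D}$ is the event sequence of some topological ordering $\mathcal{V}$ of the vertices of $G$; this gives a position-preserving bijection between the vertices of $G$ and the output events of $t$, under which the $m$-th vertex of $\mathcal{V}$ corresponds to the $m$-th event of $t|_{O_D}$. I would form $t'$ by selecting from $t$ exactly the following events, in the order in which they occur in $t$: (a) for each vertex $v$ of the branch $b$, the output event of $t$ corresponding to $v$ under this bijection; and (b) for each location $i \notin live(G)$, the first $crash_i$ event in $t$, which exists since, by Lemma~\ref{prop:liveInObsAndTrace}, $i \notin live(t)$. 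By construction $t'$ is a subsequence of $t$.

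Two things then remain to check. First, $t'|_{O_D} = \epsilon$. The output events of $t'$ are exactly those corresponding to the vertices of $b$, so it suffices to show they occur in $t$ in the same order as the corresponding vertices occur along $b$. For consecutive vertices $u,u'$ of $b$ there is an edge $(u,u') \in Z$, and, more generally, for any $u$ preceding $u'$ along $b$, repeated application of transitive closure (Property~5 of observations) yields an edge from $u$ to $u'$; since $\mathcal{V}$ is a topological ordering it respects this edge, so the event of $u$ precedes the event of $u'$ in $t|_{O_D}$, hence in $t$. Thus the selected output events appear in $t$ (and so in $t'$) in exactly the order of $b$, giving $t'|_{O_D} = \epsilon$. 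Second, $t'$ is valid. For $i \in live(G)$, Lemma~\ref{prop:liveInObsAndTrace} gives $i \in live(t)$, so $t$ (hence $t'$) contains no $crash_i$; since $b$ is a \emph{fair} branch it contains infinitely many vertices at $i$, so $t'$ contains infinitely many events of $O_{D,i}$, satisfying validity condition~(2) and vacuously condition~(1). For $i \notin live(G)$, $t'$ contains $crash_i$, so condition~(2) is vacuous; and since $t \in T_D$ is valid, all $O_{D,i}$ events of $t$—in particular all the ones selected—precede the first $crash_i$ in $t$, so no $O_{D,i}$ event follows $crash_i$ in $t'$, giving condition~(1). Hence $t'$ is a valid subsequence of $t$, i.e.\ a strong sampling of $t$, which proves part~(1).

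Part~(2) is then immediate: if $D$ is a strong-sampling AFD, then it satisfies closure under strong sampling, so the strong sampling $t'$ of $t \in T_D$ constructed above lies in $T_D$.

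The main obstacle is the bookkeeping in part~(1)—simultaneously guaranteeing $t'|_{O_D} = \epsilon$ and the validity of $t'$. The order-matching relies essentially on transitive closure of $Z$ together with the fact that $t$ realizes a topological order of $G$, while validity forces the careful inclusion of exactly one $crash_i$ for each faulty location, positioned (automatically, since $t'$ is a subsequence) after all retained outputs at $i$, and relies on the fairness of $b$ to supply infinitely many outputs at each live location. Once these ingredients are in place, the remaining verification is routine.
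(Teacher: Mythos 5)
Your proposal is correct and follows essentially the same route as the paper's proof: form $t'$ as a subsequence of $t$ that retains exactly the output events corresponding to the vertices of the fair branch $b$ (together with crash events for faulty locations), verify that fairness of $b$ and validity of $t$ make $t'$ a valid sequence, hence a strong sampling with $t'|_{O_D} = \epsilon$, and then obtain Part~2 immediately from closure under strong sampling. If anything, your write-up is more careful than the paper's—it makes explicit the order-preservation argument (via the transitive-closure property of $Z$ and the fact that $t|_{O_D}$ is a topological ordering) and the exact treatment of crash events, both of which the paper's terse proof leaves implicit.
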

\begin{proof}
Fix $G$, $t$, $b$, $\epsilon$ and $D$ from the hypotheses of the Lemma statement.
 
\emph{Proof of Part 1.}
Since $b$ is a fair branch of $G$, for each location $i$ that is live in $t$, $\epsilon$ contains an infinite number of outputs at $i$. Furthermore, for each location $i$, the projection of $\epsilon$ on the events at $i$ is a subsequence of the projection of $t$ on the AFD outputs at $i$. Therefore, by deleting all the AFD output events from $t$ that do not appear in $\epsilon$, we obtain a strong-sampling $t'$ of $t$ such that $t'|_{O_D} = \epsilon$.

\emph{Proof of Part 2.}
In Part 2, assume $\FD$ is a strong-sampling AFD. From Part 1, we have already established that there exists a strong-sampling $t'$ of $t$ such that $t'|_{O_D} = \epsilon$. Fix such a $t'$. By closure under strong-sampling, since $t \in T_D$, we conclude that $t' \in T_D$ as well.
\end{proof}

Lemma \ref{prop:sampledSubsequence} is crucial to our results. In Section \ref{sec: wfd}, we describe an emulation algorithm that uses outputs from an AFD to produce viable observations, and the emulations consider paths of the observation and simulate executions of a consensus algorithm with AFD outputs from each path in the observation. Lemma \ref{prop:sampledSubsequence} guarantees that each fair path in the observation corresponds to an actual sequence of AFD outputs from some trace of the AFD. In fact, the motivation for closure-under-strong-sampling property is to establish Lemma \ref{prop:sampledSubsequence}. 

\subsection{Relations and Operations on Observations}\label{subsec:OperationsOnObservations}

The emulation construction in Section~\ref{sec: wfd} will require
processes to manipulate observations.  To help with this, we define
some relations and operations on DAGs and observations. 

\paragraph{Prefix.}
Given two DAGs $G$ and $G'$, $G'$ is said to be a
\emph{prefix} of $G$ iff $G'$ is a subgraph of $G$ and for every
vertex $v$ of $G'$, the set of incoming edges of $v$ in $G'$ is equal
to the set of incoming edges of $v$ in $G$.

\paragraph{Union.}
Let $G = (V,Z)$ and $G' = (V',Z')$ be two observations. 
Then the \emph{union} $G''$ of $G$ and $G'$, denoted $G \cup G'$, is
the graph $(V \cup V', Z \cup Z')$.
Note that, in general, this union need not be another observation.
%
%
%
However, under certain conditions, wherein the observations are finite and ``consistent'' in terms of the vertices and incoming edges at each vertex, the union of two observations is also an observation. We state this formally in the following Lemma.

\begin{lemma}
\label{prop:obs:union}
Let $G = (V,Z)$ and $G' = (V',Z')$ be two finite observations. 
Suppose that the following hold:
\begin{enumerate}
\item
There do not exist $(i,k,e) \in V$ and $(i,k,e') \in V'$
with $e \neq e'$.  
\item
If $v \in V \cap V'$ then $v$ has the same set of incoming edges (from
the same set of other vertices) in $G$ and $G'$.
\end{enumerate}
Then $G \cup G'$ is also an observation. 
\end{lemma}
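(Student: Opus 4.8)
The plan is to verify directly that $G'' = G \cup G' = (V \cup V',\, Z \cup Z')$ meets the defining requirements of an observation: that it is a DAG and satisfies Properties 1--5. I would treat the five numbered properties in turn and derive acyclicity last. Two preliminary remarks guide everything. First, since $G$ and $G'$ are both finite, $V \cup V'$ is finite, so no location has infinitely many vertices in $G''$; hence no location is live in $G''$ and Property 4 holds vacuously. Second, I would record the consequence of hypothesis~2 that is used repeatedly: if $v \in V \cap V'$ and $(u,v) \in Z \cup Z'$, then $u \in V \cap V'$ and $(u,v)$ lies in \emph{both} $Z$ and $Z'$; that is, $G$ and $G'$ induce the same set of edges into any shared vertex.

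For the vertex properties, Property~1 follows from hypothesis~1: two distinct vertices $(i,k,e),(i,k,e')$ of $V''$ cannot both lie in $V$ (by Property~1 of $G$) nor both in $V'$ (by Property~1 of $G'$), so one lies in $V$ and the other in $V'$, which hypothesis~1 forbids. Property~2 is immediate, since the witnessing lower-index vertex supplied by whichever observation contains $(i,k,*)$ already lies in $V''$. For Property~3, given $v_1=(i,k,*)$ and $v_2=(i,k+1,*)$ in $V''$, I would first argue that both lie in a single observation: if, say, $v_2 \in V$, then Property~2 of $G$ places some $(i,k,*)$ in $V$, and uniqueness (Property~1, just proved) forces it to equal $v_1$, so $v_1 \in V$ as well; the edge $(v_1,v_2)$ then comes from Property~3 of $G$, and the case $v_2 \in V'$ is symmetric.

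The main step is Property~5, transitivity of $Z''$, and this is where hypothesis~2 does the real work. Given edges $(v_1,v_2)$ and $(v_2,v_3)$ in $Z''$, the cases where both lie in $Z$, or both in $Z'$, follow at once from Property~5 of $G$ or of $G'$. In a mixed case, say $(v_1,v_2)\in Z$ and $(v_2,v_3)\in Z'$, the shared vertex $v_2$ lies in $V \cap V'$, so by the preliminary remark the incoming edge $(v_1,v_2)$ also belongs to $Z'$; now both edges lie in $Z'$, and Property~5 of $G'$ yields $(v_1,v_3)\in Z' \subseteq Z''$. The opposite mixed case is symmetric, transferring $(v_1,v_2)$ into $Z$ instead. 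I expect this transfer-via-shared-vertex argument to be the crux of the proof; the rest is bookkeeping.

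Finally, I would establish that $G''$ is acyclic. Since neither $Z$ nor $Z'$ contains a self-loop, neither does $Z''$, so the edge relation of $Z''$ is irreflexive; combined with the transitivity just proved, it is a strict partial order and hence acyclic. Concretely, a directed cycle $v_0 \to \cdots \to v_{m-1} \to v_0$ would, by repeated application of Property~5, force the self-loop $(v_0,v_0)\in Z''$, a contradiction. This completes the verification that $G''$ is an observation.
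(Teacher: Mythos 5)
Your proof is correct, and since the paper's own proof of this lemma consists of the single word ``Straightforward,'' your direct verification is exactly the argument the paper leaves implicit. The two points you isolate are indeed the only places where anything needs to be checked: finiteness of $V \cup V'$ makes Property 4 vacuous, and hypothesis 2 lets you transfer an incoming edge of a shared vertex from one edge set to the other, which is precisely what makes the mixed cases of transitivity (and, via irreflexivity, acyclicity) go through.
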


\begin{proof}
Straightforward.
\end{proof}



\paragraph{Insertion.}
Let $G = (V,Z)$ be a finite observation, $i$ a location, and $k$ the
largest integer such that $V$ contains a vertex of the form $(i,k,*)$.
Let $v$ be a triple $(i,k+1,e)$.
Then $insert(G,v)$, the result of inserting $v$ into $G$, is a new graph 
$G' = (V',Z')$, where $V' = V \cup \set{v}$ and 
$Z' = Z \cup \set{(v',v) | v' \in V}$. 
That is, $G'$ is obtained from $G$ by adding vertex $v$ and adding
edges from every vertex in $V$ to $v$.


\begin{lemma}
\label{prop:insertVertexYeildsObservation}
Let $G = (V,Z)$ be a finite observation, $i$ a location.
Let $k$ be the largest integer such that $V$ contains a vertex of the
form $(i,k,*)$.
Let $v$ be a triple $(i,k+1,e)$.
Then $insert(G,v)$ is a finite observation.
\end{lemma}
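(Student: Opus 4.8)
The plan is to verify directly that the graph $G' = insert(G,v)$ satisfies all five defining properties of an observation (Properties 1--5 from Section~\ref{subsec:observationDef}), together with finiteness. Finiteness is immediate: $G$ is finite and we add exactly one vertex and finitely many edges, so $V'$ and $Z'$ are finite. The key structural fact to keep in mind throughout is that $v = (i,k+1,e)$ is a \emph{new} vertex whose index at location $i$ is one greater than the previous maximum, and that all the added edges are \emph{incoming} to $v$ (from every vertex of $V$); no outgoing edges from $v$ are added.

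First I would dispose of the vertex-set conditions, Properties 1 and 2. For Property 1 (at most one vertex per location--index pair), note that $G$ already satisfies it, and the only new vertex is $(i,k+1,e)$; since $k$ was chosen as the largest index at location $i$ in $V$, there is no existing vertex of the form $(i,k+1,*)$, so no collision is created. For Property 2 (indices at a location form an initial segment), the only potentially new obligation concerns location $i$ at index $k+1$: we must check that vertices of the form $(i,k',*)$ exist for all $k' < k+1$, i.e.\ for $k' \le k$. But $V$ contains a vertex of index $k$ at $i$ by the choice of $k$, and $G$ satisfies Property 2, so all lower indices at $i$ are present as well; hence $G'$ satisfies Property 2.

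Next I would handle the edge properties. Property 3 (an edge from the $k'$-th to the $(k'+1)$-st vertex at each location) requires, in $G'$, a new edge from the old top vertex $(i,k,*)$ to $v=(i,k+1,e)$; this edge is present because $insert$ adds an edge from \emph{every} vertex of $V$ to $v$, in particular from $(i,k,*)$. All other index-consecutive pairs already lie in $G$ and are unaffected. Property 4 (each vertex has an outgoing edge to some vertex at every location that is live in the graph) is the step I expect to be the genuine content: I must confirm that adding $v$ does not create a new live location whose liveness would impose fresh reachability obligations. Since $G'$ is finite, \emph{no} location is live in $G'$ (liveness requires infinitely many vertices at a location), so Property 4 holds vacuously; the same vacuity applies to $G$, so nothing is disturbed. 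Finally, Property 5 (transitive closure of $Z'$) is the remaining check: for any path $v_1 \to v_2 \to v_3$ in $G'$ I must exhibit the edge $v_1 \to v_3$. The only new edges terminate at $v$, and $v$ has no outgoing edges, so $v$ can appear only as the endpoint $v_3$ of such a two-step path; then $v_1 \to v_2$ is an old edge of $Z$ with $v_2 \in V$, and the required edge $v_1 \to v$ is exactly one of the freshly added edges (since $v_1 \in V$). Paths not involving $v$ are closed already because $G$ was transitively closed. This establishes Property 5 and completes the verification that $insert(G,v)$ is a finite observation.

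The main obstacle is conceptual rather than computational: one must notice that finiteness makes Property 4 vacuous (so the new vertex creates no reachability obligations) and that the asymmetry of $insert$---all added edges point \emph{into} $v$ and none point out---is precisely what keeps transitivity (Property 5) from being violated. Once these two observations are in hand, every clause is routine.
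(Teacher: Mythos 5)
Your verification is correct, and it is worth noting that the paper states this lemma with no proof at all (the authors treat it as routine, as they do the neighboring Lemma~\ref{prop:obs:union} whose proof is just ``Straightforward''); your property-by-property check is exactly the straightforward argument they presumably had in mind, with the two genuinely load-bearing points---that Property~4 is vacuous for finite graphs, and that the purely ``incoming'' nature of the added edges is what saves transitivity---correctly isolated. One small omission: the definition of an observation requires the graph to be a DAG, and you never explicitly verify that $insert(G,v)$ is acyclic. This costs one sentence using the structural fact you already emphasize: every new edge points into $v$ and $v$ has no outgoing edges, so no cycle can pass through $v$, and a cycle avoiding $v$ would be a cycle in $G$ itself, contradicting that $G$ is a DAG. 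With that sentence added, every clause of the definition (acyclicity, Properties 1--5, and finiteness) is accounted for.
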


\subsection{Limits of Sequences of Observations}

Consider an infinite sequence $G_1 = (V_1,Z_1), G_2 = (V_2,Z_2),
\ldots$ of finite observations,
where each is a prefix of the next.
Then the \emph{limit} of this sequence is the graph $G^\infty
= (V,Z)$ defined as follows:
\begin{itemize}
\item
$V = \bigcup_y V_y$.
\item
$Z = \bigcup_y Z_y$.
\end{itemize}

\begin{lemma}\label{lem:GxIsPrefixOfGInfty}
For each positive integer $y$, 
$G_y$ is a prefix of $G^\infty$.
\end{lemma}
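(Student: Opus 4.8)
The plan is to verify directly the two clauses in the definition of prefix: that $G_y$ is a subgraph of $G^\infty$, and that every vertex of $G_y$ has the same set of incoming edges in $G_y$ as in $G^\infty$. The subgraph condition is immediate from the definition of the limit, since $V_y \subseteq \bigcup_x V_x = V$ and $Z_y \subseteq \bigcup_x Z_x = Z$. So all the real work lies in the incoming-edge condition.

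First I would record a preliminary fact: the prefix relation is transitive, and consequently $G_x$ is a prefix of $G_y$ whenever $x \leq y$. Transitivity follows by chasing the definition: if $G'$ is a prefix of $G''$ and $G''$ is a prefix of $G'''$, then $G'$ is a subgraph of $G'''$, and for any vertex $v$ of $G'$ its incoming edges agree across all three graphs (first because $G'$ is a prefix of $G''$, and then, since $v$ is also a vertex of $G''$, because $G''$ is a prefix of $G'''$). Since the hypothesis is that each $G_x$ is a prefix of $G_{x+1}$, an easy induction then yields that $G_x$ is a prefix of $G_y$ for all $x \leq y$.

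Now fix $y$ and a vertex $v \in V_y$. For the inclusion asserting that the incoming edges of $v$ in $G_y$ are incoming edges of $v$ in $G^\infty$, any edge $(u,v) \in Z_y$ lies in $Z = \bigcup_x Z_x$ and hence is an incoming edge of $v$ in $G^\infty$. For the reverse inclusion, take any incoming edge $(u,v) \in Z$; by definition of the limit there is some index $x$ with $(u,v) \in Z_x$, and I would split into two cases. If $x \leq y$, then $G_x$ being a prefix (hence subgraph) of $G_y$ gives $Z_x \subseteq Z_y$, so $(u,v) \in Z_y$. If $x > y$, then $G_y$ is a prefix of $G_x$, and since $v \in V_y$ the prefix property says $v$ has exactly the same incoming edges in $G_y$ and in $G_x$; as $(u,v)$ is an incoming edge of $v$ in $G_x$, it is also one in $G_y$, i.e.\ $(u,v) \in Z_y$.

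The only mild subtlety—the step I would treat as the main obstacle, though it is minor—is the case $x > y$: here one cannot appeal to monotonicity of the edge sets, but must instead invoke the prefix property of $G_y$ inside $G_x$ to pull the edge back into $Z_y$. Once both inclusions are in hand, the incoming-edge sets of every $v \in V_y$ coincide in $G_y$ and in $G^\infty$, which together with the subgraph observation establishes that $G_y$ is a prefix of $G^\infty$.
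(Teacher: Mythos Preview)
Your proof is correct; the paper states this lemma without proof, evidently regarding it as immediate from the definitions, so your detailed verification of both the subgraph and incoming-edge conditions is more than what the paper supplies.
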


%
Under certain conditions, the limit of the infinite sequence of observations $G_1,G_2,\ldots$ is also an observation; we note this in Lemma  \ref{lem:obs:limitObs}.
\begin{lemma}
\label{lem:obs:limitObs}
Let $G^\infty = (V,Z)$ be the limit of the infinite sequence $G_1 =
(V_1,Z_1), G_2 = (V_2,Z_2), \ldots$ of finite observations,
where each is a prefix of the next.
Suppose that the sequence satisfies the following property:
\begin{enumerate}
\item
For every vertex $v \in V$ and any location $j \in live(G^\infty)$,
there exists a vertex $v' \in V$ with location
$j$ such that  $Z$ contains the edge $(v,v')$.
\end{enumerate}
Then $G^\infty$ is an observation.
\end{lemma}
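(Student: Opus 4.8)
The plan is to verify directly that $G^\infty=(V,Z)$ satisfies the DAG condition together with the five defining properties of an observation. The guiding principle throughout is that every \emph{finite} configuration I need to reason about---a putative cycle, a colliding pair of vertices, a pair of composable edges---involves only finitely many vertices and edges of $G^\infty$, and hence is already present in a single finite observation $G_N$. This works because $V=\bigcup_y V_y$ and $Z=\bigcup_y Z_y$, and because $G_y$ being a prefix of $G_{y+1}$ gives the monotonicity $V_y\subseteq V_{y+1}$ and $Z_y\subseteq Z_{y+1}$; so finitely many objects drawn from the $V_y$'s and $Z_y$'s all appear together in $G_N$ for $N$ the maximum of their indices. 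Since each $G_N$ is itself an observation, the desired local property holds in $G_N$ and, as $V_N\subseteq V$ and $Z_N\subseteq Z$, transfers to $G^\infty$.

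First I would establish that $G^\infty$ is a DAG. A cycle would use only finitely many edges; taking $N$ to be the maximum of the indices of the observations contributing those edges, the whole cycle lies in $Z_N$ with its vertices in $V_N$, i.e.\ it is a cycle in $G_N$, contradicting the fact that the finite observation $G_N$ is a DAG. The same ``push everything into a single $G_N$'' argument dispatches Properties 1, 3, and 5. For Property 1, two vertices $(i,k,e),(i,k,e')$ both lie in some common $V_N$, where Property 1 of $G_N$ forces $e=e'$. For Property 3, a pair $(i,k,*),(i,k+1,*)$ lies in some $V_N$, where Property 3 of $G_N$ supplies the edge, which then belongs to $Z_N\subseteq Z$. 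For Property 5, two composable edges $v_1\to v_2$ and $v_2\to v_3$ both lie in some $Z_N$, where transitivity of $G_N$ yields $v_1\to v_3\in Z_N\subseteq Z$. Property 2 is even simpler: if $(i,k,*)\in V$ it lies in some $V_y$, and Property 2 of $G_y$ already supplies $(i,k',*)\in V_y\subseteq V$ for each $k'<k$.

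The main obstacle, and the sole place where the extra hypothesis is indispensable, is Property 4. This is a ``forward-reaching'' property: it demands, for every vertex $v_1$ at some location $i$ and every location $j$ with infinitely many vertices in $V$, an edge from $v_1$ to some vertex whose location is $j$. The finite-chase argument fails here, because in any finite $G_N$ the location $j$ carries only finitely many vertices and is therefore not live in $G_N$, so Property 4 of $G_N$ imposes no such edge. This is exactly the gap the hypothesis closes. Observing that ``$V$ contains infinitely many vertices whose location is $j$'' is precisely the statement $j\in live(G^\infty)$, I would apply the assumed property of the lemma directly to $v_1$ and $j$ to obtain a vertex $v'\in V$ with location $j$ such that $(v_1,v')\in Z$, which is exactly what Property 4 requires. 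With the DAG condition and all five properties verified, $G^\infty$ is an observation.
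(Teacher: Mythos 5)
Your proof is correct and takes essentially the same approach as the paper: the paper's own (one-line) proof simply asserts that all properties are straightforward from the definitions except Property 4, which follows from the lemma's assumption, and your write-up is exactly that argument with the ``push every finite configuration into a single $G_N$'' details made explicit. Your remark that the finite-chase argument must fail for Property 4 (since no location is live in any finite observation) is a nice articulation of why the extra hypothesis is indispensable, but it does not change the underlying structure of the proof.
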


\begin{proof}
All properties are straightforward from the definitions, except for
Property 4 of observations, which follows from the assumption of the
lemma.
\end{proof}

We define an infinite sequence $G_1 = (V_1,Z_1), G_2 = (V_2,Z_2),
\ldots$ of finite observations, where each is a prefix of the next, to be to be \emph{convergent} if the limit $G^\infty$ of this sequence is an observation.

\section{Execution Trees}
\label{subsec:treeOfExec}

In this section, we define a tree representing executions of a system
$\mathcal{S}$ that are consistent with a particular observation $G$ of a
particular failure detector $\FD$.
Specifically, we define a tree that describes executions of
$\mathcal{S}$ in which the sequence of AFD outputs is exactly the event-sequence
of some path in observation $G$.

Section \ref{subsec:TreeExecutionSystem} defines the system $\mathcal{S}$ for which the tree is defined.
The tree is constructed in two parts: Section \ref{subsec:taskTree} defines a ``task tree'', and Section \ref{subsec:TheAugmentedTree} adds tags to the nodes and edges of the task tree to yield the final execution tree. Additionally, Sections \ref{subsec:taskTree}  and \ref{subsec:TheAugmentedTree} prove certain basic properties of execution trees, and they establish a correspondence between the nodes in the tree and finite executions of $\mathcal{S}$.
 Section \ref{subsec:taggedTreeProps} defines that two nodes in the execution tree are ``similar'' to each other if they have the same tags, and therefore correspond to the same execution of $\mathcal{S}$; the section goes on to prove certain useful properties of nodes in the subtrees rooted at any two similar nodes. 
 Section \ref{subsec:PropertiesOfSimilarModuloNodes} defines that two nodes in the execution tree are ``similar-modulo-$i$'' to each other if the executions corresponding to the two nodes are indistinguishable for process automata at any location except possibly the the process automaton at $i$; the section goes on to prove certain useful properties of nodes in the subtrees rooted at any two similar-modulo-$i$ nodes. 
 Section \ref{subset:PropertiesOfTreesFromPrefixObservations} establishes useful properties of nodes that are in different execution trees that are constructed using two observations, one of which is a prefix of another. Finally, Section \ref{subset:FairBranchesOfExecutionTrees} proves that a ``fair branch" of infinite execution trees corresponds to a fair execution of system $\mathcal{S}$. The major results in this section are used in Sections \ref{sec:consensusAndAFD} and \ref{sec: wfd}, which show that $\Omega_f$ is a weakest strong-sampling AFD to solve consensus if at most $f$ locations crash.

\subsection{The System}\label{subsec:TreeExecutionSystem}

Fix $\mathcal{S}$ to be a system consisting of a distributed algorithm $A$, channel
automata, and an environment automaton $\mathcal{E}$ such that $A$ solves
a crash problem $P$ using $D$ in $\mathcal{E}$. 
%

The system $\mathcal{S}$ contains the following tasks.
The process automaton at $i$ contains a single task $Proc_i$. 
Each channel automaton $Chan_{i,j}$, where $j \in \Pi \setminus \{i\}$
contains a single task, which we also denote as $Chan_{i,j}$;  the actions in task $Chan_{i,j}$ are of the form $receive(*,i)_j$, which results in a message received at location $j$. 
Each automaton $\mathcal{E}_i$ has tasks $Env_{i,x}$, where $x$ ranges
over some fixed task index set $X_i$.
Let $T$ denote the set of all the tasks of $\mathcal{S}$.

Each task has an associated location, which is the location of all the
actions in the task.
The tasks at location $i$ are $Proc_i$, $Chan_{j,i}| j \in \Pi \setminus \{i\}$, 
and $Env_{i,x} |x \in X_i$.

Recall from Section \ref{sec:systemModel} that each process automaton, each channel automaton, and the environment automaton have unique initial states. Therefore, the system $\mathcal{S}$ has a unique initial state. From the definitions of the constituent automata of $\mathcal{S}$, we obtain the following lemma.

\begin{lemma}\label{lem:addCrashAfterLastEvent}
 Let $\alpha$ be an execution of system $\mathcal{S}$, and let $t = t_{pre} \cdot t_{suff}$ be the trace of $\alpha$ such that for some location $i$, $t_{suff}$ does not contain any locally-controlled actions at $Proc_i$ and $\mathcal{E}_i$. Then, there exists an execution $\alpha'$ of system $\mathcal{S}$ such that $t' = t_{pre} \cdot crash_i \cdot t_{suff}$ is the trace of $\alpha'$. 
\end{lemma}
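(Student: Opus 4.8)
The plan is to prove Lemma~\ref{lem:addCrashAfterLastEvent} by an explicit insertion-of-crash construction: I would take the given execution $\alpha$, locate the point in $\alpha$ corresponding to the boundary between $t_{pre}$ and $t_{suff}$, splice a single $crash_i$ step into $\alpha$ at that point, and then argue that the suffix of $\alpha$ after this point is still a valid execution fragment of $\mathcal{S}$ once we account for the state change caused by $crash_i$. The key structural fact I would exploit is the assumption on $t_{suff}$: it contains no locally-controlled actions of $Proc_i$ or $\mathcal{E}_i$. Because $crash_i$ only affects the local states of $Proc_i$ and $\mathcal{E}_i$ (and, via the problem's crash semantics, permanently disables their locally-controlled actions), inserting $crash_i$ cannot invalidate any of the steps taken in the portion of $\alpha$ generating $t_{suff}$.

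First I would make precise which states change. Write $\alpha = \alpha_{pre}\cdot \alpha_{suff}$ where $\alpha_{pre}$ is the shortest prefix of $\alpha$ whose trace is $t_{pre}$, and $\alpha_{suff}$ is the corresponding remaining fragment. Let $s$ be the final state of $\alpha_{pre}$. Since $crash_i$ is an input action of $\mathcal{C}$'s output type and is enabled in every state (input actions are always enabled, Section~\ref{sec: IOA: automata-defs}), there is a step $(s, crash_i, s')$ of $\mathcal{S}$. The state $s'$ differs from $s$ only in the local components of $Proc_i$ and $\mathcal{E}_i$ (by the constraints in Sections~\ref{subsec:processAutomata} and~\ref{subsec:environmentAutomaton}, $crash_i$ is an input to exactly these two automata plus it is the output of $\mathcal{C}$), and the effect is precisely to disable all their locally-controlled actions henceforth.

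Next I would define $\alpha' = \alpha_{pre}\cdot (crash_i, s')\cdot \beta$, where $\beta$ is obtained from $\alpha_{suff}$ by replacing each state along it with the state that has the $Proc_i$ and $\mathcal{E}_i$ components frozen in their crashed configuration and all other components unchanged. I would then verify step-by-step that $\beta$ is a legal execution fragment from $s'$: for each action $a$ appearing in $\alpha_{suff}$, either (a) $a$ is not an action of $Proc_i$ or $\mathcal{E}_i$, in which case the step relation is unaffected by the frozen components since the transition only reads and writes the other automata's states; or (b) $a$ is an action of $Proc_i$ or $\mathcal{E}_i$, in which case by hypothesis $a$ is \emph{not} locally controlled, so $a$ must be an input action ($receive$ or an incoming external action), which remains enabled and whose effect on the crashed automaton is vacuous or ignorable. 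The trace of $\alpha'$ is then $t_{pre}\cdot crash_i \cdot t_{suff}$ as required, since inserting the single external event $crash_i$ and leaving all other external events in place yields exactly this trace.

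The main obstacle I anticipate is case~(b): handling input actions of $Proc_i$ and $\mathcal{E}_i$ that occur in $t_{suff}$ after the crash. The subtlety is that after $crash_i$, these automata are ``crashed,'' yet $receive(m,j)_i$ and environment inputs may still arrive; I must check that the transition relation still admits these steps (input actions are enabled in every state, so syntactically yes) and that the resulting frozen state is consistent with my construction of $\beta$. I would argue that since $crash_i$ only disables \emph{locally-controlled} actions and leaves input actions enabled, and since the effect of $crash_i$ is to make subsequent local behavior trivial, the post-crash input steps either leave $Proc_i$'s and $\mathcal{E}_i$'s relevant state inert or can be absorbed without affecting any externally visible event. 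This is where the precise semantics of ``permanently disables all locally controlled actions'' (as opposed to deleting the automaton) must be used carefully to ensure $\beta$ is well defined and its trace matches $t_{suff}$ exactly.
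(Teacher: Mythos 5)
Your overall strategy --- split $\alpha$ at the $t_{pre}/t_{suff}$ boundary, splice in a $crash_i$ step (legal by input-enabledness), and replay the suffix while arguing that the crash only touches the $Proc_i$ and $\mathcal{E}_i$ components --- is exactly the paper's proof. The paper likewise fixes the prefix $\alpha_{pre}$ with final state $s$, appends $crash_i$ to reach $s'$, observes that $s$ and $s'$ agree on every component except those of $Proc_i$ and $\mathcal{E}_i$, and concludes that $t_{suff}$ can still be applied from $s'$ because it contains no locally-controlled actions of those two automata.

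However, your case analysis in step~(b) has a genuine hole. You claim that any action of $Proc_i$ or $\mathcal{E}_i$ occurring in $\alpha_{suff}$ is, ``by hypothesis,'' not locally controlled and hence an input action. But the hypothesis constrains only the \emph{trace} $t_{suff}$, and internal actions never appear in traces. Since process automata (and environment automata) are permitted to have internal actions (Section~\ref{subsec:processAutomata} only requires that they occur at $i$), the execution fragment $\alpha_{suff}$ may well contain internal --- hence locally controlled --- steps of $Proc_i$ or $\mathcal{E}_i$ that are invisible in $t_{suff}$; after $crash_i$ these are permanently disabled, so they cannot be replayed in your $\beta$. The repair is to delete those internal steps from $\alpha_{suff}$ before constructing $\beta$: each of them modifies only the $Proc_i$/$\mathcal{E}_i$ components, so deleting them leaves every other automaton's state sequence intact and leaves the trace $t_{suff}$ unchanged. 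Relatedly, your ``frozen components'' definition of $\beta$ is not itself a legal execution: input actions such as $receive(m,j)_i$ remain enabled after the crash but may still change $Proc_i$'s state, so the $Proc_i$/$\mathcal{E}_i$ components along $\beta$ must be taken to be whatever states result from applying those input actions starting from the post-crash state (input-enabledness guarantees such successor states exist), rather than literal copies of the crashed configuration. Neither fix changes your architecture --- which is the paper's --- but both are needed for the step-by-step verification you propose to actually go through.
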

\begin{proof}
Fix $\alpha$, $t= t_{pre} \cdot t_{suff}$ and $i$ as in the hypothesis of the claim.
Let $\alpha_{pre}$ be the prefix of $\alpha$ whose trace is $t_{pre}$. Let $s$ be the final state of $\alpha_{pre}$.
Let $\alpha'_{pre}$ be the execution $\alpha_{pre} \cdot crash_i \cdot s'$, where $s'$ is the state of $\mathcal{S}$ when $crash_i$ is applied to state $s$.

Note that $crash_i$ disables all locally-controlled actions at $Proc_i$ and $\mathcal{E}_i$, and it does not change the state of any other automaton in $\mathcal{S}$. Therefore, the state of all automata in $\mathcal{S}$ except for $Proc_i$ and $Env_i$ are the same in state $s$ and $s'$. Also, note that $t_{suff}$ does not contain any locally-controlled action at $Proc_i$ or $Env_i$, and $t_{suff}$ can be applied to state $s$. Therefore, $t_{suff}$ can also be applied to $s'$, thus extending $\alpha'_{pre}$ to an execution $\alpha'$ of $\mathcal{S}$. By construction, the trace $t'$ of $\alpha'$ is $t_{pre} \cdot crash_i \cdot t_{suff}$.
\end{proof}


\subsection{The Task Tree}\label{subsec:taskTree}
For any observation $G = (V,Z)$, we define a tree $\mathcal{R}^G$ that describes all executions of $\mathcal{S}$
in which the sequence of AFD output events is the event-sequence of
some path in $G$.


We describe our construction in two stages.
The first stage, in this subsection, defines the basic structure of
the tree, with annotations indicating where particular system tasks and
observation vertices occur.  
The second stage, described in the next subsection, adds information
about particular actions and system states.

The task tree is rooted at a special node called ``$\top$'' which
corresponds to the initial state of the system $\mathcal{S}$.
The tree is of height $|V|$; if $|V|$ is infinite, the tree has infinite height.\footnote{The intuitive reason for limiting the depth of the tree to $|V|$ is the following.
If $G$ is a finite observation, then none of the locations in $\Pi$ are live in $G$. In this case, we want all the branches in the task tree to be finite.
	On the other hand, if $G$ is an infinite observation, then some location in $\Pi$ is live in $G$, and in this case we want all the branches in the task tree to be infinite.
	On way to ensure these properties is to restrict the depth of the tree to $|V|$.} 
Every node $N$ in the tree that is at a depth $|V|$ is a leaf
node. All other nodes are internal nodes. 
Each edge in the tree is labeled by an element from $T \cup \{FD_{i} |
i \in \Pi\}$. 
Intuitively, the label of an edge corresponds to a task being given a
``turn'' or an AFD event occurring. 
An edge with label $l$ is said to be an $l$-edge, for short. 
The child of a node $N$ that is connected to $N$ by an edge labeled
$l$ is said to be an $l$-child of $N$.

In addition to labels at each edge, the tree is also augmented with
a \emph{vertex tag}, which is a vertex in $G$, at each node and edge. 
We write $v_N$ for the vertex tag at node $N$ and $v_E$ for the vertex
tag at edge $E$. Intuitively, each vertex tag denotes the latest
AFD output that occurs in the execution of $\mathcal{S}$ corresponding to 
the path in the tree from the root to node $N$ or the head node of edge $E$ (as appropriate).
The set of outgoing edges from each node $N$ in the tree is determined by the vertex tag $v_N$. 

We describe the labels and vertex tags in the task tree recursively,
starting with the $\top$ node. 
We define the vertex tag of $\top$ to be a special placeholder element
$(\bot,0,\bot)$, representing a ``null vertex'' of $G$.
For each internal node $N$ with vertex tag $v_N$, the outgoing edges
from $N$ and their vertex tags are as follows.
\begin{itemize}
\item
\emph{Outgoing $Proc$, $Chan$, and $Env$ edges.}
For every task $l$ in $T$, the task tree contains exactly one outgoing
edge $E$ from $N$ with label $l$ from $N$, \textit{i.e.}, an $l$-edge. 
The vertex tag $v_E$ of $E$ is $v_N$. 
\item
\emph{Outgoing $FD$-edges.} 
If $v_N = (\bot,0,\bot)$, then for every vertex $(i,k,e)$ of $G$, the task tree includes an edge $E$ from $N$ with label $FD_i$ and vertex tag $v_E = (i,k,e)$. 
For every location $i$ such that $G$ contains no vertices with location $i$, the task tree includes a single outgoing edge $E$ from $N$ with label $FD_i$ and vertex tag $(\bot,0,\bot)$.

Otherwise, ($v_N$ is a vertex of $G$) for every vertex (say) $(i,k,e)$ of $G$ that has an edge in $G$ from
vertex $v_N$, the task tree includes an outgoing edge $E$ from $N$ with
label $FD_i$ and vertex tag $v_E = (i,k,e)$. 
For every location $i$ such that there is no edge in $G$ from $v_N$ to
any vertex whose location is $i$, the task tree includes an outgoing edge
$E$ from $N$ with label $FD_i$ and vertex tag $v_E = v_N$.
\end{itemize}
For each node $\hat{N}$ that is a child of $N$ and whose incoming edge is
$E$, $v_{\hat{N}} = v_E$.

A \emph{path} in a rooted tree is an alternating sequence of nodes and edges,
  beginning and ending with a node, where (1) each node is incident to
  both the edge that precedes it and the edge that follows it in the
  sequence, and (2) the nodes that precede and follow an edge are the
  end nodes of that edge.
  
A \emph{branch} in a rooted tree is a maximal path in the tree that starts at the root.

The following two Lemmas follow from the construction of the task tree.
\begin{lemma}\label{prop:allLabelsExist}
For each label $l$, each internal node $N$ in $\mathcal{R}^{G}$ has at
least one outgoing $l$-edge.
\end{lemma}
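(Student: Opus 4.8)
The plan is to verify Lemma \ref{prop:allLabelsExist} directly from the recursive construction of the task tree, treating the two families of labels separately according to how their outgoing edges are defined.

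First I would fix an arbitrary internal node $N$ with vertex tag $v_N$ and an arbitrary label $l \in T \cup \{FD_i \mid i \in \Pi\}$, and split into two cases based on whether $l$ is a system task or a failure-detector label. If $l \in T$ (a $Proc$, $Chan$, or $Env$ task), the claim is immediate: the construction explicitly states that for \emph{every} task $l$ in $T$, the tree contains exactly one outgoing $l$-edge from $N$, so in particular at least one exists. This case requires no argument beyond citing the construction.

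The substantive case is $l = FD_i$ for some location $i$. Here I would observe that the construction guarantees an outgoing $FD_i$-edge from $N$ in every situation, by exhausting the sub-cases of the ``Outgoing $FD$-edges'' rule. If $v_N = (\bot,0,\bot)$ (the root), then either $G$ contains at least one vertex with location $i$ (in which case for each such vertex $(i,k,e)$ there is an $FD_i$-edge, hence at least one), or $G$ contains no vertex with location $i$ (in which case the construction provides a single $FD_i$-edge with tag $(\bot,0,\bot)$). Either way an $FD_i$-edge exists. If $v_N$ is a genuine vertex of $G$, then the construction again partitions on whether $G$ has an edge from $v_N$ to some vertex with location $i$: if yes, each such target vertex yields an $FD_i$-edge; if no, a single $FD_i$-edge with tag $v_E = v_N$ is added. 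In every sub-case at least one $FD_i$-edge is present.

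Since every label falls into one of these two cases and each yields at least one outgoing $l$-edge, the lemma follows for arbitrary $N$ and $l$. I do not anticipate a genuine obstacle here, as the statement is essentially a bookkeeping check that the construction's case analysis is exhaustive; the only point demanding care is to confirm that the $FD_i$ rule really does cover \emph{all} locations $i$ and both the root and non-root vertex-tag scenarios, so that no label is accidentally left without an edge. This is exactly why the construction includes the explicit ``fall-through'' clauses providing a single edge (with tag $(\bot,0,\bot)$ or $v_N$) whenever the DAG offers no suitable successor vertex.
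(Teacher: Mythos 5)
Your proposal is correct and follows essentially the same route as the paper, which simply asserts that this lemma (and its companion) "follow from the construction of the task tree." Your case analysis on $l \in T$ versus $l = FD_i$, including the fall-through clauses of the ``Outgoing $FD$-edges'' rule for both the root tag $(\bot,0,\bot)$ and non-root vertex tags, is exactly the construction-level check the paper leaves implicit.
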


\begin{lemma}
Let $q$ be a path in the tree that begins at the root node.
Let $\mathcal{V}$ be the sequence of distinct non-$(\bot,0,\bot)$ vertex tags of edges in path $q$.
Then there exists some path $p$ in $G$ such that $\mathcal{V}$ is the sequence of vertices along $p$.
\end{lemma}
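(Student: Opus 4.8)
The plan is to track how the vertex tag evolves as one traverses the edges of $q$ from the root, and to read off the edges of $G$ directly from the construction rules in Section~\ref{subsec:taskTree}. First I would record the two facts about tag propagation that follow immediately from that construction: (i) every outgoing $Proc$-, $Chan$-, and $Env$-edge $E$ of a node $N$ satisfies $v_E = v_N$, so such edges never change the tag; and (ii) an outgoing $FD_i$-edge $E$ of $N$ either satisfies $v_E = v_N$ (when $G$ has no edge from $v_N$ to a vertex at $i$) or has $v_E = (i,k,e)$ for a vertex of $G$ which, in the case that $v_N$ is itself a genuine vertex of $G$, is required to have an edge from $v_N$ in $Z$. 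Combined with the rule $v_{\hat{N}} = v_E$ relating a child's node tag to its incoming edge, this shows the tag is constant along maximal runs of edges and can only jump at an $FD$-edge.

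Next I would establish that once the tag becomes a genuine vertex of $G$ it never reverts to $(\bot,0,\bot)$: from a node $N$ with $v_N \in V$, rule (i) preserves the non-null tag, and rule (ii) either preserves it or replaces it by another vertex of $V$; the only edges producing the null tag issue from nodes whose tag is already null. Hence along $q$ there is an initial (possibly empty) stretch of null tags followed by a stretch of genuine vertices, and $\mathcal{V} = w_1, w_2, \ldots$ is precisely the list of those genuine vertex tags with consecutive repetitions removed.

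The core step is to show $(w_j, w_{j+1}) \in Z$ for each consecutive pair of $\mathcal{V}$. Fix the first edge $E_{a+1}$ of $q$ carrying tag $w_{j+1}$; its immediate predecessor edge $E_a$ in $q$ carries tag $w_j$ (this is exactly the transition point), and the node $N_a$ lying between them satisfies $v_{N_a} = v_{E_a} = w_j$. Since $E_{a+1}$ is an outgoing edge of $N_a$ with $v_{E_{a+1}} = w_{j+1} \neq w_j = v_{N_a}$, fact (i) rules out a $Proc$/$Chan$/$Env$ label, so $E_{a+1}$ is an $FD_i$-edge with $w_{j+1} = (i,k,e)$; moreover, because $v_{N_a} = w_j$ is a genuine vertex, the ``stay the same'' case of rule (ii) would force $v_{E_{a+1}} = w_j$, so the construction must have selected $w_{j+1}$ under the other case, namely because $Z$ contains an edge from $w_j$ to $w_{j+1}$. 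Thus $(w_j, w_{j+1}) \in Z$, and $\mathcal{V}$ is a path in $G$.

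The one subtlety I expect to need care about is the meaning of ``distinct'' together with the possibility that a genuine vertex tag recurs non-consecutively, which would make ``the transition point'' ambiguous and could threaten $\mathcal{V}$ being a genuine path. This is ruled out by acyclicity: each change of a genuine tag follows an edge of $Z$, so a non-consecutive recurrence $w \to \cdots \to w$ would yield a directed cycle in $G$, contradicting that $G$ is a DAG. Hence the genuine vertex tags strictly advance in $G$, so ``consecutive-duplicate removal'' and ``keep each value once'' coincide, and the argument above applies to every adjacent pair of $\mathcal{V}$, completing the proof.
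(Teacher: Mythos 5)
Your proof is correct. The paper offers no proof of this lemma at all (it is one of the two lemmas asserted to ``follow from the construction of the task tree''), and your argument --- tracking how the vertex tag propagates along $q$, observing that only $FD$-edges can change a genuine tag and that the construction rule then forces the corresponding edge to lie in $Z$, and invoking acyclicity of $G$ to rule out non-consecutive recurrences of a tag --- is precisely the direct argument from the construction that the paper leaves implicit.
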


\subsection{The Augmented Tree}\label{subsec:TheAugmentedTree}


Now we augment the task tree produced in the previous section to
include additional tags --- configuration tags $c_N$ at the nodes, which are states of
the system $\mathcal{S}$, and action tags $a_E$ at the edges, which are actions of $\mathcal{S}$
or $\bot$. However, the action tags cannot be $crash$ actions.
The resulting tagged tree is our execution tree $\mathcal{R}^G$. Intuitively, the configuration tag $c_N$ of a node $N$ denotes a state of system $\mathcal{S}$, and the action tag $a_E$ for an edge $E$ with label $l$ from node $N$ denotes an action $a_E$ from task $l$ that occurs when system $\mathcal{S}$ is in state $c_N$. It is easy to see that for any path in the execution tree, the sequence of alternating configuration tags and action tags along the path represents an execution fragment of $\mathcal{S}$.

We define the tags recursively, this time starting from the
already-defined task tree.
For the $\top$ node, the configuration tag is the initial state of
$\mathcal{S}$.
For each internal node $N$ with configuration tag $c_N$ and vertex tag $v_N$,
the new tags are defined as follows:

\begin{itemize}
\item
\emph{Outgoing $FD$-edges.} 
For every edge $E$ from node $N$ with label $FD_i$, the action tag
$a_E$ is determined as follows.
If the vertex tag $v_E = (i,k,e) \neq v_N$, then $a_E = e$.
If $v_E = v_N$, then $a_E = \bot$.

Essentially, if $v_E = (i,k,e) \neq v_N$, then 
this corresponds to the action $e$ of $v_E$ occurring when $\mathcal{S}$ is in state $c_N$; we model this by setting $a_E$ to $e$. Otherwise, $v_E = v_N$ and no event from $FD_i$ occurs  when $\mathcal{S}$ is in state $c_N$; we mode this by setting $a_E$ to $\bot$. 

\item
\emph{Outgoing $Proc$ and $Env$ edges.} 
For every edge $E$ from node $N$ with label $l \in \set{Proc_i} \cup \set{Env_{i,x} |x \in X_i}$ for some
location $i$, the action tag $a_E$ is determined as follows.
If (1) some action $a$ in task $l$ is enabled in state $c_N$,  and (2) either 
(a) $v_N$ is a vertex of $G$ and $G$ contains an edge from $v_N$ to a vertex with location $i$, or (b) $v_N = (\bot, 0, \bot)$ and $G$ has a vertex with location $i$, then
$a_E$ is $a$; otherwise $a_E$ is $\bot$. 
%
%
Note that since each process automaton and each constituent automaton of the environment automaton in $\mathcal{S}$ is task-deterministic, for each location $i$ at most one
action in the $Proc_i$ task is enabled in $c_N$ and, for each location $i$ and each $x \in X_i$, at most one action in the $Env_{i,x}$ task is enabled in $c_N$. Therefore, at most one action $a$ in task $l$ is enabled in state $c_N$, and thus $a_E$ is well-defined.

Fix node $N$ in $\mathcal{R}^G$ and a location $i$. Observe that if the action tag of an $FD_i$ edge from $N$ is $\bot$, then for all $FD_i$ edges that are descendants of $N$, their action tag is $\bot$. The condition (2) above for determining $a_E$ for a $Proc_i$ or $Env_i$ edge $E$ from $N$ implies that, if no AFD output events at $i$ follow $N$ in the maximal subtree of $\mathcal{R}^G$ rooted at $N$, then no $Proc_i$ event of $Env_i$ event follows $N$ in that subtree either; we formalize this claim is Lemma \ref{lem:crashedLocationNoActionsInSubtree}. 

\item
\emph{Outgoing $Chan$ edges.}
For every edge $E$ from node $N$ with label $l \in \set{Chan_{i,j} | i \in \Pi \wedge j \in \Pi \setminus \set{i}}$, the action tag $a_E$ is determined as
follows:
If some action $a$ in task $l$ is enabled in state $c_N$, then $a_E =
a$; otherwise $a_E = \bot$. 
Note that since all automata in $\mathcal{S}$ are task deterministic, at most
one action in task $l$ is enabled in $c_N$. Informally, we state that if some action in task $l$ is enabled in state $c_N$, then that event occurs along the edge $E$; otherwise, no event occurs along the edge $E$.
\end{itemize}

Each node $\hat{N}$ that is a child of $N$ and whose incoming edge is
$E$ is tagged as follows. 
If the action tag $a_E = \bot$ then $c_{\hat{N}} = c_N$. 
Otherwise, $c_{\hat{N}}$ is the state of $\mathcal{S}$ resulting from applying the
action $a_E$ to state $c_N$. 




The following Lemmas establish various relationships between nodes,
paths, and branches in
$\mathcal{R}^G$. Note that these Lemmas following immediately
from the construction.


 For each node $N$, let $path(N)$ be the path from the root node $\top$
 to $N$ in the tree $\mathcal{R}^{G}$. 
Let $exe(N)$ be the sequence of alternating config tags and action
tags along $path(N)$ such that $exe(N)$ contains exactly the
non-$\bot$ action tags and their preceding config tags in $path(N)$ and
ends with the config tag $c_N$. 

\begin{lemma}\label{prop:nodeFiniteExe}
 For each node $N$ in $\mathcal{R}^{G}$, the sequence $exe(N)$ is a finite execution of the system $\mathcal{S}$ that ends in state $c_N$ and if $exe(N)|_{O_D}$ is non-empty (and therefore, $v_N$ is a vertex of $G$), then $exe(N)|_{O_D}$ is the event-sequence of the vertices in $G$ for some path to $v_N$.
\end{lemma}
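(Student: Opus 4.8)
The plan is to prove both assertions simultaneously by induction on the depth of $N$ in $\mathcal{R}^G$, equivalently on the length of $path(N)$. For the base case $N = \top$: by construction $c_\top$ is the unique initial state of $\mathcal{S}$, and $exe(\top)$ is the single-state sequence $c_\top$, which is a null execution ending in $c_\top$. Since no $FD$-edge has contributed an action tag, $exe(\top)|_{O_D}$ is empty and the second assertion holds vacuously.

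For the inductive step, let $\hat{N}$ be a child of $N$ reached by edge $E$ with label $l$, and assume the lemma for $N$. I would split on whether $a_E = \bot$. If $a_E = \bot$, then $c_{\hat{N}} = c_N$ and, because $exe$ discards $\bot$ action tags, $exe(\hat{N}) = exe(N)$ (ending in $c_{\hat{N}} = c_N$); moreover such an edge leaves the vertex tag unchanged, so $v_{\hat{N}} = v_N$ and both assertions transfer directly from the inductive hypothesis. If $a_E \neq \bot$, then $exe(\hat{N})$ is obtained from $exe(N)$ by appending $a_E$ and $c_{\hat{N}}$, and the crux is to verify that $(c_N, a_E, c_{\hat{N}})$ is a genuine step of $\mathcal{S}$. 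For a $Proc$, $Env$, or $Chan$ edge, the definition of $a_E$ selects the unique action of task $l$ enabled in $c_N$ (unique by task-determinism) and sets $c_{\hat{N}}$ to its result, so this is a step by definition. For an $FD_i$ edge, $a_E = e$ is the action component of $v_E = (i,k,e)$, hence an element of $O_{D,i}$, which by the signature requirements for solving $P$ using $D$ is an input action of $A_i$ and therefore of $\mathcal{S}$; since input actions are enabled in every state, $(c_N, a_E, c_{\hat{N}})$ is again a step. Appending this single step to the finite execution $exe(N)$, which ends in $c_N$ by the inductive hypothesis, shows that $exe(\hat{N})$ is a finite execution of $\mathcal{S}$ ending in $c_{\hat{N}}$, establishing the first assertion.

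For the second assertion, I would note that every action tag from a $Proc$, $Env$, or $Chan$ edge is a locally controlled action of a process, channel, or environment automaton and hence is not in $O_D$, whereas the action tags of $FD$-edges lie in $O_D$. Consequently the AFD-output events of $exe(\hat{N})$ are exactly the action tags of the non-$\bot$ $FD$-edges along $path(\hat{N})$, each equal to the action component of that edge's vertex tag. The vertex tag changes along $path(\hat{N})$ precisely at these non-$\bot$ $FD$-edges, and since $G$ is a DAG each change moves to a strictly new vertex; thus the sequence of these vertex tags coincides with the sequence $\mathcal{V}$ of distinct non-$(\bot,0,\bot)$ vertex tags of $path(\hat{N})$. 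Invoking the earlier lemma that $\mathcal{V}$ is the vertex sequence of some path $p$ in $G$, I conclude that $exe(\hat{N})|_{O_D}$ is the event-sequence of $p$. Finally $v_{\hat{N}}$ equals the vertex tag of $E$, which is the last vertex visited and hence the terminal vertex of $p$; when $exe(\hat{N})|_{O_D}$ is non-empty this terminal vertex is a genuine vertex of $G$, so $p$ is a path to $v_{\hat{N}}$, as required.

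The routine part is the concatenation bookkeeping underlying the first assertion. The step demanding the most care is verifying that the action tags arising from $FD$-edges are legitimate enabled-input steps of $\mathcal{S}$, and that the non-$\bot$ $FD$-edges line up exactly with the vertices of a single path to $v_N$ in $G$; this is where the DAG property and the earlier path lemma do the real work.
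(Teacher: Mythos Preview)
Your proposal is correct and is exactly the natural inductive argument underlying this lemma. The paper itself does not give an explicit proof, stating only that it follows immediately from the construction of $\mathcal{R}^G$; your induction on depth, the case split on $a_E = \bot$, the appeal to task-determinism and input-enabledness for the step property, and the invocation of the preceding path lemma for the $O_D$-projection are precisely what is needed to unpack that claim.
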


\begin{lemma}\label{prop:botChild}\label{prop:nonBotChild}
Let $N$ be a node, let $\hat{N}$ be a child of $N$, and let $E$ be the edge from $N$ to $\hat{N}$ in $\mathcal{R}^G$. Then the following are true.
 \begin{enumerate}
 \item If $a_E = \bot$, then $c_N = c_{\hat{N}}$, $exe(N) = exe(\hat{N})$ and $v_N = v_{\hat{N}}$.
 \item If $a_E \neq \bot$, then $exe(\hat{N}) = exe(N) \cdot a_E \cdot c_{\hat{N}}$.
 \end{enumerate}
\end{lemma}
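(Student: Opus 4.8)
The plan is to prove both parts directly from the recursive construction of the augmented tree, by a case analysis on the action tag $a_E$. The statement is essentially a verification that the configuration and execution tags propagate correctly from a node to its child, so the proof should follow immediately from unwinding the definitions given in Section~\ref{subsec:TheAugmentedTree}.

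For Part~1, assume $a_E = \bot$. First I would invoke the rule for tagging children: the construction states explicitly that if $a_E = \bot$, then $c_{\hat{N}} = c_N$. This gives the configuration-tag equality immediately. For the execution-sequence equality $exe(N) = exe(\hat{N})$, I would appeal to the definition of $exe(\cdot)$: by definition, $exe(\hat{N})$ is the sequence of alternating config and action tags along $path(\hat{N})$ that retains only the non-$\bot$ action tags and their preceding config tags, ending in $c_{\hat{N}}$. Since $path(\hat{N})$ extends $path(N)$ by the single edge $E$ with action tag $a_E = \bot$, the only new entries are a $\bot$ action tag and the config tag $c_{\hat{N}} = c_N$; because $exe(\cdot)$ discards $\bot$ action tags, and since the final config tag is unchanged ($c_{\hat{N}} = c_N$), we get $exe(\hat{N}) = exe(N)$. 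Finally, for $v_N = v_{\hat{N}}$, I would note that $v_{\hat{N}} = v_E$ by the task-tree construction, and that an action tag $a_E = \bot$ on an $FD$-edge forces $v_E = v_N$ (by the $FD$-edge tagging rule), while on a $Proc$, $Env$, or $Chan$ edge the vertex tag always satisfies $v_E = v_N$ regardless of $a_E$; in all cases $v_{\hat{N}} = v_N$.

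For Part~2, assume $a_E \neq \bot$. By the child-tagging rule, $c_{\hat{N}}$ is the state resulting from applying $a_E$ to $c_N$. Again using the definition of $exe(\cdot)$, the path $path(\hat{N})$ is $path(N)$ followed by edge $E$ and node $\hat{N}$; since $a_E$ is a non-$\bot$ action tag, it is retained along with its preceding config tag $c_N$ (already the final config tag of $exe(N)$) and the new final config tag $c_{\hat{N}}$. Hence $exe(\hat{N}) = exe(N) \cdot a_E \cdot c_{\hat{N}}$, as claimed.

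The proof is genuinely routine, so there is no serious obstacle; the one point requiring mild care is the vertex-tag claim $v_N = v_{\hat{N}}$ in Part~1, because the three edge types ($FD$, $Proc$/$Env$, and $Chan$) set their vertex and action tags by different rules, and one must check that $a_E = \bot$ is compatible with $v_E = v_N$ in each case. For $Proc$, $Env$, and $Chan$ edges this is immediate since the construction always sets $v_E = v_N$. For $FD$-edges, the action tag is $\bot$ precisely when $v_E = v_N$, so the implication holds there as well. Since these verifications are mechanical, I would state the result as following immediately from the construction, matching the paper's stated intent that ``these Lemmas follow immediately from the construction.''
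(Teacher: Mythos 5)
Your proof is correct and matches the paper's approach: the paper offers no written proof for this lemma, stating only that it ``follows immediately from the construction,'' and your argument is precisely the careful unwinding of the task-tree and augmented-tree tagging rules that this assertion presupposes. In particular, your handling of the one non-trivial point---that $a_E = \bot$ forces $v_E = v_N$ on $FD$-edges (since a vertex tag $v_E \neq v_N$ would make $a_E$ the action of $v_E$, which is never $\bot$), while $Proc$, $Env$, and $Chan$ edges always satisfy $v_E = v_N$ by construction---is exactly right.
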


\begin{lemma}\label{prop:ancesterPrefix}\label{prop:ancesterPrefixAFDEvents}
 For each node $N$ in $\mathcal{R}^{G}$ and any descendant $\hat{N}$ of $N$, $exe(N)$ is a prefix of $exe(\hat{N})$ and $exe(N)|_{O_D}$ is a prefix of $exe(\hat{N})|_{O_D}$.
\end{lemma}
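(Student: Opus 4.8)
The plan is to prove both assertions by a single induction on the distance $d$ between $N$ and its descendant $\hat{N}$, that is, the number of edges on the (unique) path from $N$ to $\hat{N}$ in $\mathcal{R}^G$. The workhorse is the immediately preceding Lemma \ref{prop:botChild}, which settles the parent--child case, together with transitivity of the prefix relation on sequences. For the base case $d = 0$ we have $\hat{N} = N$, so $exe(N) = exe(\hat{N})$ and both prefix assertions are trivial.

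For the inductive step, assume the statement for all descendants of $N$ at distance $d$, and let $\hat{N}$ be a descendant at distance $d+1$. Let $N'$ be the parent of $\hat{N}$; then $N'$ is a descendant of $N$ at distance $d$, and $\hat{N}$ is a child of $N'$ across some edge $E$. The induction hypothesis gives that $exe(N)$ is a prefix of $exe(N')$. Applying Lemma \ref{prop:botChild} to $E$, either $a_E = \bot$, in which case $exe(N') = exe(\hat{N})$, or $a_E \neq \bot$, in which case $exe(\hat{N}) = exe(N') \cdot a_E \cdot c_{\hat{N}}$; either way $exe(N')$ is a prefix of $exe(\hat{N})$. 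Since the prefix relation is transitive, $exe(N)$ is a prefix of $exe(\hat{N})$, establishing the first claim.

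For the projection claim I would invoke the general fact that projection onto a fixed set of actions preserves prefixes: if $s$ is a prefix of $t$, then $s|_{B}$ is precisely the initial segment of $t|_{B}$ consisting of those $B$-events that occur within the $s$-prefix, hence $s|_{B}$ is a prefix of $t|_{B}$. Instantiating $s = exe(N)$, $t = exe(\hat{N})$, and $B = O_D$ then immediately yields that $exe(N)|_{O_D}$ is a prefix of $exe(\hat{N})|_{O_D}$. I do not anticipate any substantive obstacle, since the paper itself notes that the lemma follows from the construction; the only points needing mild care are folding both cases of Lemma \ref{prop:botChild} into the single statement ``$exe(N')$ is a prefix of $exe(\hat{N})$'' and phrasing the projection argument purely at the level of sequences rather than re-deriving it from the tree construction.
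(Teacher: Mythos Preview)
Your proposal is correct and follows essentially the same approach as the paper, which simply states that the lemma follows from repeated application of Lemma~\ref{prop:botChild} along the path from $N$ to $\hat{N}$. You have merely spelled out the induction and the projection-preserves-prefixes observation explicitly, which the paper leaves implicit.
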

\begin{proof}
 Follows from repeated application of Lemmas \ref{prop:botChild} along the path from $N$ to $\hat{N}$.
\end{proof}

\begin{lemma}\label{lem:childNodeUniqueByLabelAndVertexTag}
For each node $N$ in $\mathcal{R}^G$, each child node $\hat{N}$ of $N$ is uniquely determined by the label $l$ of the edge from $N$ to $\hat{N}$ and the vertex tag $v_{\hat{N}}$.
\end{lemma}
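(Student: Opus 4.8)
The plan is to reduce the statement to an injectivity claim about the outgoing edges of $N$. In a tree, each child $\hat{N}$ of $N$ has exactly one incoming edge $E$, and by construction $v_{\hat{N}} = v_E$ and the label $l$ is the label of $E$. Hence it suffices to show that the map sending each outgoing edge $E$ of $N$ to the pair $(\text{label}(E), v_E)$ is injective; once this is established, the pair $(l, v_{\hat{N}})$ pins down the incoming edge of $\hat{N}$, and therefore $\hat{N}$ itself. So I would fix $N$ and two outgoing edges $E_1, E_2$ with the same label $l$ and the same vertex tag $v$, and argue $E_1 = E_2$ by a case analysis on $l$, following the edge-construction rules of Section~\ref{subsec:taskTree}.

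First I would dispatch the easy case where $l \in T$, i.e.\ $l$ is a $Proc$, $Chan$, or $Env$ task. The construction stipulates that $N$ has \emph{exactly one} outgoing $l$-edge for each such task, so $E_1 = E_2$ immediately, regardless of the vertex tag (which in this case is forced to be $v_N$). Because $T$ and $\set{FD_i \mid i \in \Pi}$ are disjoint label sets, a task-labeled edge and an $FD$-labeled edge can never share a label, so these two categories cannot collide.

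The substantive case is $l = FD_i$ for some location $i$, where $N$ may have several outgoing $FD_i$-edges; here I must show that distinct $FD_i$-edges carry distinct vertex tags. I would split along whether $v_N = (\bot,0,\bot)$ or $v_N$ is a genuine vertex of $G$, and then, within each, note that the two sub-cases of the $FD$-edge rule are mutually exclusive for a fixed $i$: either $G$ contains a vertex of location $i$ reachable from $v_N$ (respectively, any vertex of location $i$, in the root case), in which case every $FD_i$-edge is tagged by a \emph{distinct} vertex $(i,k,e)$ of $G$; or there is no such vertex, in which case there is a \emph{single} $FD_i$-edge tagged by $v_N$ (respectively $(\bot,0,\bot)$). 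In the first sub-case, distinct $FD_i$-edges correspond to distinct vertices of $V$, which are distinct triples (this is exactly Property~1 of observations, guaranteeing that the location–index pair determines at most one vertex), so their tags differ; in the second sub-case there is only one edge. Either way, equality of vertex tags forces $E_1 = E_2$.

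The main obstacle is purely bookkeeping: verifying that the sub-cases of the $FD_i$ rule cannot simultaneously produce two edges sharing a tag, in particular that the ``self/null'' edge tagged $v_N$ (or $(\bot,0,\bot)$) never coincides with a ``genuine-vertex'' edge tagged $(i,k,e)$. This is handled by the mutual exclusivity just noted, together with the observation that $(i,k,e)$ has location $i \in \Pi$ while the null tag has location $\bot \neq i$. Collecting the three cases yields injectivity of $E \mapsto (\text{label}(E), v_E)$, and hence the lemma.
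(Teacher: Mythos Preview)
Your proposal is correct and takes essentially the same approach as the paper: reduce to injectivity of $E \mapsto (\text{label}(E), v_E)$ on outgoing edges of $N$, note that task labels in $T$ yield a unique outgoing edge, and that distinct $FD_i$-edges carry distinct vertex tags by the construction rules. The paper's proof is a one-paragraph sketch of exactly this argument; your version is simply more explicit about the sub-cases of the $FD_i$ rule and why they cannot collide.
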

\begin{proof}
The proof follows from the construction of $\mathcal{R}^G$. Fix $N$. If two outgoing edges $E_1$ and $E_2$ from $N$ have the same label, then that label must be from $\set{FD_i | i\in \Pi}$. However, for each location $i$, each of the outgoing $FD_i$-edges from $N$ have a different vertex tag, and the vertex tag of an  $FD_i$-child $\hat{N}$ of $N$ is the same as the vertex tag of the edge from $N$ to $\hat{N}$. Hence, for any no two child nodes of $N$, either the label of the edge from $N$ to each of the child nodes is distinct, or the vertex tag of each of the child nodes is distinct. 
\end{proof} 

\begin{lemma}\label{prop:FDchildVertexEdgeExists}
For each node $N$ in $\mathcal{R}^G$ and any child $\hat{N}$ of $N$ such that the edge $E$ from $N$ to
$\hat{N}$ has the label $FD_i$ (for some location $i$) and the action tag $a_E$ of the edge is non-$\bot$, the following
is true. (1) $v_{\hat{N}} \neq v_N$, (2) $a_E$ is the action of $v_{\hat{N}}$, and (3) if $v_N \neq (\bot, 0, \bot)$, then there is an edge
from $v_N$ to $v_{\hat{N}}$ in $G$.
\end{lemma}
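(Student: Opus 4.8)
The plan is to unwind the recursive definition of the action tags on $FD_i$-edges, exactly as set up in the ``Outgoing $FD$-edges'' clause of Section \ref{subsec:TheAugmentedTree}, and to pair it with the corresponding clause of the task-tree construction in Section \ref{subsec:taskTree}. Fix $N$, the child $\hat{N}$, and the edge $E$ with label $FD_i$ and non-$\bot$ action tag $a_E$. First I would recall that, by the construction of the augmented tree, the action tag of an $FD_i$-edge is $\bot$ precisely when $v_E = v_N$, and otherwise equals the action component of $v_E$. Since we are given $a_E \neq \bot$, this immediately forces $v_E \neq v_N$. Because $\hat{N}$ is the child of $N$ across $E$, the task-tree construction stipulates $v_{\hat{N}} = v_E$, so $v_{\hat{N}} \neq v_N$, giving claim (1), and $a_E$ is the action of $v_E = v_{\hat{N}}$, giving claim (2).

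For claim (3), I would return to the ``Outgoing $FD$-edges'' clause of the task-tree definition and split into the two cases on $v_N$ that the construction itself distinguishes. If $v_N = (\bot,0,\bot)$, there is nothing to prove, since the hypothesis of (3) excludes this case. So suppose $v_N \neq (\bot,0,\bot)$, i.e. $v_N$ is a genuine vertex of $G$. Then the task-tree construction says that the outgoing $FD_i$-edges from $N$ are of two types: for each vertex $(i,k,e)$ of $G$ having an edge in $G$ from $v_N$, there is an $FD_i$-edge $E'$ with $v_{E'} = (i,k,e)$; and additionally, if there is no edge in $G$ from $v_N$ to any vertex at location $i$, there is a single $FD_i$-edge with $v_{E'} = v_N$. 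Since our edge $E$ has $v_E = v_{\hat{N}} \neq v_N$, it cannot be the second type, so it must be of the first type, which means $v_{\hat{N}}$ is a vertex of $G$ at location $i$ with an edge from $v_N$ to $v_{\hat{N}}$ in $G$. This establishes claim (3).

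The argument is essentially a matter of reading off the right branch of the recursive definition, so I expect no genuine obstacle; the only thing to be careful about is the case analysis on whether $v_N$ is the null placeholder $(\bot,0,\bot)$ or a real vertex of $G$, since the two subclauses of the task-tree construction behave differently and (3) is deliberately stated only for the latter. I would make sure to invoke the identity $v_{\hat{N}} = v_E$ (from the final sentence of the task-tree subsection, ``$v_{\hat{N}} = v_E$'') explicitly, as all three claims funnel through it, and to note that the exclusion of the placeholder case in (3) matches exactly the branch of the construction in which edges can carry vertex tags equal to $v_N$ rather than to successor vertices.
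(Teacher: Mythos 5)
Your proof is correct and is exactly the argument the paper intends: the paper states this lemma without proof, noting that it ``follows immediately from the construction,'' and your unwinding of the two clauses (the action-tag rule $a_E = \bot \iff v_E = v_N$ for $FD_i$-edges, plus the identity $v_{\hat{N}} = v_E$ and the case split on whether $v_N$ is the placeholder) is precisely that immediate verification.
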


\begin{lemma}\label{prop:sameVertexTag}
For each node $N$ in $\mathcal{R}^{G}$ and any descendant $\hat{N}$ of $N$ such that there is no $FD$-edge in the path from $N$ to $\hat{N}$, $v_N = v_{\hat{N}}$.
\end{lemma}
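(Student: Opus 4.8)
The plan is to proceed by a straightforward induction on the number of edges in the path from $N$ to $\hat{N}$. The single construction fact that drives the whole argument is that the three non-$FD$ edge types---$Proc$, $Chan$, and $Env$ edges---all carry the vertex tag of their source node forward unchanged: by construction, any such edge $E$ out of a node $M$ satisfies $v_E = v_M$, and any child reached along an incoming edge $E$ satisfies $v_{\hat{M}} = v_E$. Thus the vertex tag can change only across an $FD$-edge, which is precisely what the hypothesis (no $FD$-edge on the path) rules out.

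For the base case, when the path from $N$ to $\hat{N}$ has length zero we have $N = \hat{N}$, and the claim $v_N = v_{\hat{N}}$ is immediate. For the inductive step, I would write the path from $N$ to $\hat{N}$ as the path from $N$ to some node $M$ followed by a single edge $E$ from $M$ to $\hat{N}$. Since the entire path contains no $FD$-edge, neither does its prefix from $N$ to $M$, so the inductive hypothesis yields $v_M = v_N$. Moreover, $E$ is itself a non-$FD$-edge, that is, an edge labeled by a task in $T$. Applying the construction rule for such edges gives $v_{\hat{N}} = v_E = v_M$, and combining this with $v_M = v_N$ yields $v_{\hat{N}} = v_N$, as required.

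I expect no real obstacle here: the result follows directly from the definition of the vertex tags on $Proc$, $Chan$, and $Env$ edges, in contrast with the $FD$-edge case recorded in Lemma~\ref{prop:FDchildVertexEdgeExists}. The only point requiring any care is to confirm that $FD$-edges are the only edges capable of altering the vertex tag along a path, which is immediate once one observes that every task-labeled edge inherits its source node's tag verbatim. For this reason the proof can remain very short, essentially a one-line appeal to the construction together with the induction sketched above.
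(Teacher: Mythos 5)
Your proposal is correct and matches the paper's proof, which is exactly the same induction on the length of the path from $N$ to $\hat{N}$ (the paper simply states the induction without spelling out the details). The key fact you identify---that every $Proc$, $Chan$, and $Env$ edge $E$ out of a node $M$ satisfies $v_E = v_M$ and hence the child inherits $v_M$ unchanged---is precisely the construction rule that makes the inductive step go through.
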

\begin{proof}
The proof is by induction on the length of the path from $N$ to $\hat{N}$.
\end{proof}

\begin{lemma}\label{lem:descendantVertexTagEdge}
For each node $N$ in $\mathcal{R}^{G}$ and for any descendant $\hat{N}$ of $N$, either $v_{N} = v_{\hat{N}}$, or if $v_N \neq (\bot, 0, \bot)$, then there is an edge from $v_N$ to $v_{\hat{N}}$ in $G$.
\end{lemma}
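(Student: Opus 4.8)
The plan is to argue by induction on the length $m$ of the path from $N$ to $\hat{N}$ in $\mathcal{R}^G$, with the transitive-closure property of observations (Property 5) doing the essential work in the inductive step. The base case $m=0$ is immediate, since then $\hat{N}=N$ and $v_N = v_{\hat{N}}$, so the first disjunct holds. The workhorse of the induction is a \emph{single-step} characterization that I would read off directly from the tag-assignment rules of Section~\ref{subsec:TheAugmentedTree} together with Lemma~\ref{prop:FDchildVertexEdgeExists}: for any node $N'$ and any child $\hat{N}$ reached by an edge $E$, one of the following holds: (i) $v_{\hat{N}} = v_{N'}$; (ii) $v_{N'} \neq (\bot,0,\bot)$ and $G$ contains an edge from $v_{N'}$ to $v_{\hat{N}}$; or (iii) $v_{N'} = (\bot,0,\bot)$. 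Indeed, if $E$ is a $Proc$, $Chan$, or $Env$ edge, or an $FD$-edge with $a_E = \bot$, then by construction $v_{\hat{N}} = v_E = v_{N'}$, giving (i); and if $E$ is an $FD$-edge with $a_E \neq \bot$, then Lemma~\ref{prop:FDchildVertexEdgeExists} yields (ii) when $v_{N'} \neq (\bot,0,\bot)$ and we fall into (iii) otherwise.

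For the inductive step I would let $N'$ be the predecessor of $\hat{N}$ on the path, so the path from $N$ to $N'$ has length $m$, and apply the induction hypothesis to $N$ and $N'$: either $v_N = v_{N'}$, or $v_N \neq (\bot,0,\bot)$ and $G$ has an edge from $v_N$ to $v_{N'}$. I would then combine this with the single-step characterization of the final edge $N' \to \hat{N}$ by cases. If the last step is of type (i), substituting $v_{\hat{N}}$ for $v_{N'}$ in the hypothesis gives the claim verbatim. If it is of type (ii), then either $v_N = v_{N'}$, in which case the edge from $v_{N'}=v_N$ to $v_{\hat{N}}$ is the desired edge (and $v_N \neq (\bot,0,\bot)$), or there is an edge from $v_N$ to $v_{N'}$, in which case Property~5 of observations composes it with the edge from $v_{N'}$ to $v_{\hat{N}}$ to produce an edge from $v_N$ to $v_{\hat{N}}$. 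If the last step is of type (iii), then $v_{N'} = (\bot,0,\bot)$, which is not a vertex of $G$; hence the second alternative of the induction hypothesis (an edge into $v_{N'}$) is impossible, forcing $v_N = v_{N'} = (\bot,0,\bot)$, and the claim holds because its implication is vacuously satisfied whenever $v_N = (\bot,0,\bot)$.

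The only delicate point, and the step I expect to need the most care in the writeup, is the bookkeeping around the null placeholder $(\bot,0,\bot)$. One must notice that the second disjunct of the statement is vacuously true whenever $v_N = (\bot,0,\bot)$, and that by the single-step analysis a genuine vertex tag can never revert to $(\bot,0,\bot)$ along a descending path, so the ``real-vertex'' regime is entered at most once and is never left. Once this is observed, the heart of the argument is merely the repeated application of transitivity, exactly the chaining of per-edge reachability in $G$. I would therefore streamline the presentation by first recording the single-step claim as a short auxiliary observation and then presenting the three-case induction above.
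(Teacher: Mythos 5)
Your proof is correct and follows essentially the same route as the paper's: induction on the length of the tree path, with Lemma~\ref{prop:FDchildVertexEdgeExists} supplying the single-edge case and Property~5 (transitive closure) of observations chaining edges together, including the same bookkeeping observations that the claim is vacuous when $v_N = (\bot,0,\bot)$ and that a genuine vertex tag never reverts to the null tag along a descending path. The only cosmetic difference is that you peel off the last edge in an ordinary induction, whereas the paper uses strong induction and splits the path at an arbitrary intermediate node.
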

\begin{proof}
Fix $N$ and $\hat{N}$ as in the hypothesis of the lemma. Let the path from $N$ to $\hat{N}$ contain $d$ edges. We prove the lemma by strong induction on $d$.

\emph{Inductive hypothesis.} For any pair of nodes $N_1$ and $N_2$ such that $N_2$ is a descendant of $N_1$ in $\mathcal{R}^G$, and the path from $N_1$ to $N_2$ contains $d$ edges,  either $v_{N_1} = v_{N_2}$, or if $v_{N_1} \neq (\bot, 0, \bot)$, then  there is an edge from $v_{N_1}$ to $v_{N_2}$ in $G$.

\emph{Inductive step.} Fix $N_1$ and $N_2$. If $d=0$, note that $N_1 = N_2$, and therefore $v_{N_1} = v_{N_2}$; therefore the lemma is satisfied. For $d=1$, $N_2$ is a child of $N_1$, and let $E_{1,2}$ be the edge from $N_1$ to $N_2$. If $v_{N_1} = v_{N_2}$, the the lemmas is satisfied. Assume $v_{N_1} \neq v_{N_2}$ and $v_{N_1} \neq (\bot, 0, \bot)$; note that if $v_{N_1} \neq v_{N_2}$, then by construction $E_{1,2}$ is an FD-edge and $a_{E_{1,2}} \neq \bot$. Invoking Lemma \ref{prop:FDchildVertexEdgeExists}, we know that there is an edge from $v_{N_1}$ to $v_{N_2}$ in $G$.

For any $d > 1$, there exists at least one node $N_{1.5}$ in the path from $N_1$ to $N_2$. Fix $N_{1.5}$. By construction, the path from $N_1$ to $N_{1.5}$ contains fewer than $d$ edges, and the path from $N_{1.5}$ to $N_2$ contains fewer than $d$ edges. Invoking the inductive hypothesis for nodes $N_1$ and $N_{1.5}$, we know that either $v_{N_1} = v_{N_{1.5}}$ or, if $v_{N_1}  \neq (\bot, 0, \bot)$, then there is an edge from $v_{N_1}$ to $v_{N_{1.5}}$ in $G$. Similarly, invoking the inductive hypothesis for nodes $N_{1.5}$ and $N_2$, we know that either $v_{N_{1.5}} = v_{N_2}$ or, if $v_{N_{1.5}}  \neq (\bot, 0, \bot)$,  there is an edge from $v_{N_{1.5}}$ to $v_{N_2}$ in $G$. Therefore, either (1) $v_{N_1} = v_{N_2}$, or (2) if $v_{N_1} \neq (\bot, 0, \bot)$, then $v_{N_{1.5}} \neq (\bot, 0, \bot)$, and there is a path from  $v_{N_1}$ to $v_{N_2}$ in $G$. In case (1) the induction is complete. In case (2), invoking the transitive closure property of $G$, we know that there is an edge from $v_{N_1}$ to $v_{N_2}$ in $G$, and the induction is complete.

\end{proof}

\begin{lemma}\label{prop:liveLocationNonBotAction}
For each label $FD_i$ where $i$ is live in $G$, every $FD_i$-edge in $\mathcal{R}^{G}$ has a non-$\bot$ action tag.
\end{lemma}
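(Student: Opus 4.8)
The plan is to fix a live location $i$ and an arbitrary $FD_i$-edge $E$ leaving a node $N$ with vertex tag $v_N$, and to show that $v_E \neq v_N$; by the action-tag rule for $FD$-edges in Section~\ref{subsec:TheAugmentedTree}, this is exactly the condition $a_E \neq \bot$. Inspecting the task-tree construction in Section~\ref{subsec:taskTree}, an $FD_i$-edge out of $N$ carries the tag $v_E = v_N$ in only two situations: when $v_N = (\bot,0,\bot)$ and $G$ has no vertex whose location is $i$, and when $v_N$ is a vertex of $G$ but $G$ contains no edge from $v_N$ to any vertex whose location is $i$. In every other case each $FD_i$-edge out of $N$ carries a tag of the form $(i,k,e)$, which differs from $v_N$ (a self-loop being impossible since $G$ is a DAG), forcing $a_E = e \neq \bot$. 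Hence it suffices to rule out the two ``stuck'' situations, i.e.\ to exhibit a vertex of $G$ at location $i$ reachable from $v_N$ (or simply present, when $v_N$ is the null vertex).

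I would dispatch the case $v_N = (\bot,0,\bot)$ immediately: since $i$ is live in $G$, the set $V$ contains infinitely many vertices whose location is $i$, so $G$ certainly has some vertex at location $i$, and the first ``stuck'' situation cannot occur. For the case where $v_N$ is a genuine vertex of $G$, say with location $j$, I would split on whether $j = i$. If $j \neq i$, then $i$ and $j$ are distinct and $i \in live(G)$, so Lemma~\ref{prop:outgoingEdgesToLiveVertices} (with source vertex $v_N$ at location $j$ and live location $i$) yields an edge in $Z$ from $v_N$ to a vertex whose location is $i$, ruling out the second ``stuck'' situation.

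The one subcase that needs separate care---and the place where a reader might wrongly try to reuse Lemma~\ref{prop:outgoingEdgesToLiveVertices}---is $j = i$, i.e.\ $v_N = (i,k_0,*)$ already sits at the live location $i$, since that lemma is stated only for distinct locations. Here I would instead appeal to Lemma~\ref{prop:liveImpliesAllIndices}: liveness of $i$ guarantees that $V$ contains the vertex $(i,k_0+1,*)$, and then Property~3 of observations supplies an edge in $Z$ from $v_N = (i,k_0,*)$ to $(i,k_0+1,*)$, again a vertex at location $i$. In all cases $G$ contains an edge from $v_N$ to some vertex at location $i$ (or, for the null vertex, such a vertex exists), so no $FD_i$-edge out of $N$ can be tagged $v_N$; thus $v_E \neq v_N$ and $a_E \neq \bot$, completing the argument. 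The only real subtlety is this tidy case analysis on the location of $v_N$; everything else is a direct reading of the two construction rules.
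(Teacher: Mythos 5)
Your proof is correct. The paper offers no proof of this lemma---it is among those asserted to follow immediately from the construction of $\mathcal{R}^G$---and your argument is exactly the natural filling-in of that claim: reduce $a_E \neq \bot$ to exhibiting a vertex of $G$ at location $i$ with an edge from $v_N$ (or merely existing, when $v_N$ is the null vertex), then dispatch the null-vertex case by liveness, the case $loc(v_N) \neq i$ by Lemma~\ref{prop:outgoingEdgesToLiveVertices}, and the case $loc(v_N) = i$ by Lemma~\ref{prop:liveImpliesAllIndices} together with Property~3 of observations; your flagging of that last subcase is apt, since Lemma~\ref{prop:outgoingEdgesToLiveVertices} is stated only for distinct locations and cannot be reused there.
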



\begin{lemma}\label{lem:fairBranchUnfairExe}
For every branch $b$ of $\mathcal{R}^{G}$, $exe(b)$ is an execution of system $\mathcal{S}$.
\end{lemma}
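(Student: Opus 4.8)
The plan is to realize $exe(b)$ as the limit of the finite executions $exe(N)$ attached to the nodes $N$ lying along the branch $b$, and to argue by cases according to whether $b$ is finite or infinite. Recall first that, by analogy with the definition of $exe(N)$ for a node $N$, the sequence $exe(b)$ is read off along the whole path $b$: it is the sequence of alternating configuration tags and action tags obtained from $b$ by retaining exactly the non-$\bot$ action tags together with their preceding configuration tags. If $b$ is finite, then, being a maximal path from the root, it ends at a leaf node $N$ (at depth $|V|$), so $b = path(N)$ and hence $exe(b) = exe(N)$; this is a finite execution of $\mathcal{S}$ by Lemma~\ref{prop:nodeFiniteExe}, which disposes of the finite case immediately.

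Now suppose $b$ is infinite, and write the nodes along $b$ as $\top = N_0, N_1, N_2, \dots$, where $N_{y+1}$ is a child of $N_y$. By Lemma~\ref{prop:ancesterPrefix}, $exe(N_y)$ is a prefix of $exe(N_{y+1})$ for every $y$, so the sequences $exe(N_0), exe(N_1), \dots$ form an increasing chain under the prefix order. Moreover, by Lemma~\ref{prop:botChild} and Lemma~\ref{prop:nonBotChild}, passing from $N_y$ to $N_{y+1}$ either leaves the execution unchanged (when the incoming action tag is $\bot$) or appends a single action followed by a single configuration. Consequently $exe(b)$ is precisely the limit of this chain, that is, the unique sequence each of whose $exe(N_y)$ is a prefix; this limit is finite when the chain stabilizes and infinite otherwise.

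It remains to verify that this limit satisfies the definition of an execution of $\mathcal{S}$. Its first element is the configuration tag $c_\top$, which is the initial state of $\mathcal{S}$, so $exe(b)$ begins in an initial state. For the step condition, any consecutive triple $(s,a,s')$ occurring in $exe(b)$ already occurs inside some finite $exe(N_y)$, since every finite prefix of the limit is a prefix of some $exe(N_y)$; because $exe(N_y)$ is a finite execution of $\mathcal{S}$ by Lemma~\ref{prop:nodeFiniteExe}, the triple $(s,a,s')$ lies in $trans(\mathcal{S})$. Hence every consecutive triple of $exe(b)$ is a valid step, and $exe(b)$ is an execution of $\mathcal{S}$. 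The argument is essentially limits-of-prefixes bookkeeping; the only real work is the infinite case, and the main obstacle is simply to state the limit cleanly and confirm that the step condition for the (possibly infinite) limit follows pointwise from the finite executions guaranteed by Lemma~\ref{prop:nodeFiniteExe}, which the prefix relation of Lemma~\ref{prop:ancesterPrefix} renders routine.
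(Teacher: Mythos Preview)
Your proposal is correct and follows essentially the same approach as the paper: both realize $exe(b)$ as the limit of the prefix-ordered chain $exe(\top), exe(N_1), exe(N_2),\ldots$, invoke Lemma~\ref{prop:ancesterPrefix} for the prefix relation, and Lemma~\ref{prop:nodeFiniteExe} to conclude each term is a finite execution of $\mathcal{S}$. Your version is in fact somewhat more explicit than the paper's—you separate the finite-branch case and spell out why the limit satisfies the initial-state and step conditions—whereas the paper leaves these checks implicit.
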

\begin{proof}
Fix a branch $b$ of $\mathcal{R}^{G}$. Let $\top,E_1,N_1,E_2,N_2,\ldots$, where each $E_x$ is an edge in $\mathcal{R}^G$ and each $N_x$ is a node in $\mathcal{R}^G$,
denote the sequence of nodes that constitute $b$. By definition,
$exe(b)$ is the limit of the prefix-ordered sequence
$exe(\top)$, $exe(N_1)$, $exe(N_2)$, $\ldots$; note that this sequence might be infinite. Note that
$exe(\top)|_{O_D}$ is a prefix of
$exe(N_1)|_{O_D}$, and from Lemma
\ref{prop:ancesterPrefixAFDEvents}, we know that  $exe(N_x)|_{O_D}$ is
a prefix of $exe(N_{x+1})|_{O_D}$ for any positive integer
$x$. Therefore, the limit of the prefix-ordered sequence
$exe(\top)|_{O_D}$, $exe(N_1)|{O_D}$, $exe(N_2)|_{O_D}$, $\ldots$
exists, and this limit is $exe(b)|_{O_D}$. By Lemma
\ref{prop:nodeFiniteExe}, we know that $exe(\top)$ and each
$exe(N_x)$, where $x$ is a positive integer, is a finite execution of
$\mathcal{S}$, and therefore, $exe(b)$ is an execution of $\mathcal{S}$. 
\end{proof}
\begin{lemma}\label{lem:crashedLocationNoOutgoingActions}
For any node $N$ in $\mathcal{R}^G$, any location $i$ and any $FD_i$-edge $E$ outgoing from $N$, if $a_E = \bot$, then for each outgoing $Proc_i$-edge or $Env_i$-edge $E'$ from $N$, $a_{E'} = \bot$.
\end{lemma}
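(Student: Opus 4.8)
The plan is to prove Lemma~\ref{lem:crashedLocationNoOutgoingActions} directly by tracing through the construction of the action tags in the augmented tree. The statement connects an $FD_i$-edge with $\bot$ action tag to the $Proc_i$- and $Env_i$-edges emanating from the same node $N$, so the natural approach is to understand what $a_E = \bot$ on an $FD_i$-edge tells us about the vertex tag $v_N$, and then to invoke the defining condition~(2) for the action tags of $Proc_i$ and $Env_i$ edges.

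First I would unpack the hypothesis. Fix $N$, $i$, and an $FD_i$-edge $E$ from $N$ with $a_E = \bot$. By the rule for $FD$-edges in Section~\ref{subsec:TheAugmentedTree}, $a_E = \bot$ exactly when $v_E = v_N$. I would then split into the two cases of the task-tree construction for outgoing $FD_i$-edges. If $v_N = (\bot,0,\bot)$, then the existence of an $FD_i$-edge with vertex tag equal to $v_N = (\bot,0,\bot)$ forces, by construction, that $G$ contains no vertex with location $i$ (since whenever $G$ has a vertex at location $i$, all $FD_i$-edges out of the root carry genuine vertices of $G$ as tags, never $(\bot,0,\bot)$). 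If instead $v_N$ is a genuine vertex of $G$, then an $FD_i$-edge with $v_E = v_N$ arises precisely in the case where $G$ contains no edge from $v_N$ to any vertex whose location is $i$.

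Next I would feed each of these two structural conclusions into condition~(2) governing the action tag of a $Proc_i$- or $Env_i$-edge $E'$ from $N$. That condition sets $a_{E'}$ to a non-$\bot$ action only if either (a) $v_N$ is a vertex of $G$ and $G$ contains an edge from $v_N$ to a vertex with location $i$, or (b) $v_N = (\bot,0,\bot)$ and $G$ has a vertex with location $i$. In the first case above ($v_N = (\bot,0,\bot)$, no vertex at $i$), clause~(b) fails and clause~(a) fails vacuously, so $a_{E'} = \bot$. In the second case ($v_N$ a real vertex with no outgoing edge to location $i$), clause~(a) fails and clause~(b) fails, so again $a_{E'} = \bot$. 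In both cases condition~(2) is violated, hence $a_{E'} = \bot$ for every outgoing $Proc_i$- or $Env_i$-edge $E'$ from $N$, which is the desired conclusion.

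I expect this to be a short, definition-chasing proof with no genuine obstacle; the only subtle point is making the case analysis on $v_N$ airtight and being careful that the condition for an $FD_i$-edge to carry $v_E = v_N$ is stated symmetrically to the condition appearing in clause~(2) for $Proc_i$/$Env_i$ edges. The key observation is simply that ``$FD_i$ makes no progress out of $N$'' (captured by $v_E = v_N$, i.e.\ no successor vertex at $i$) is exactly the negation of the hypothesis~(2) that licenses a non-$\bot$ $Proc_i$ or $Env_i$ action, so the two conditions line up by design.
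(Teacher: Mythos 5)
Your proposal is correct and follows essentially the same route as the paper's proof: both extract from $a_E = \bot$ the structural fact about $G$ (either $v_N = (\bot,0,\bot)$ and $G$ has no vertex at location $i$, or $v_N$ is a vertex of $G$ with no outgoing edge to any vertex at location $i$), and then apply the defining condition for $Proc_i$/$Env_{i,x}$ action tags to conclude $a_{E'} = \bot$. Your version is slightly more explicit in routing the argument through the equivalence $a_E = \bot \iff v_E = v_N$ and in the two-case analysis, but the substance is identical.
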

\begin{proof}
Fix $N$, $i$, and $E$ as in the hypothesis of the Lemma; thus, $a_E = \bot$. From the construction of $\mathcal{R}^G$, we know that $a_{E} = \bot$ iff either $v_N$ is not a vertex in $G$ and there  is no vertex in $G$ whose location is $i$, or there is no edge in $G$ from $v_{N}$ to any vertex whose location is $i$.

Fix $E'$ to be either a $Proc_i$-edge or $Env_i$-edge outgoing from $N$. From the construction of  $\mathcal{R}^G$, we know that if  either 
(a) $v_N$ is a vertex of $G$ and $G$ contains no edges from $v_N$ to a vertex with location $i$, or (b) $v_N = (\bot, 0, \bot)$ and $G$ has a no vertex with location $i$, then $a_{E'}$ is $\bot$. 
\end{proof}

For any node $N$ in $\mathcal{R}^G$,  let $\mathcal{R}^G|_N$ denote the maximal subtree of $\mathcal{R}^G$ rooted at $N$.

\begin{lemma}\label{lem:crashedLocationNoActionsInSubtree}
For any node $N$ in $\mathcal{R}^G$, any location $i$ and any $FD_i$-edge $E$ outgoing from $N$, if $a_E = \bot$, then for each $Proc_i$-edge or $Env_i$-edge $E'$ in  $\mathcal{R}^G|_N$,  $a_{E'} = \bot$.
\end{lemma}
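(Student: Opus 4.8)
The plan is to reduce this subtree-wide statement to the single-node statement already established in Lemma \ref{lem:crashedLocationNoOutgoingActions}, which says that whenever an $FD_i$-edge out of a node carries action tag $\bot$, every $Proc_i$- and $Env_i$-edge out of that same node also carries $\bot$. So it suffices to prove the following propagation claim: \emph{every $FD_i$-edge outgoing from every node $N'$ in $\mathcal{R}^G|_N$ (that is, $N'=N$ or a descendant of $N$) has action tag $\bot$}. Granting this claim, I would fix an arbitrary $Proc_i$- or $Env_i$-edge $E'$ in $\mathcal{R}^G|_N$, let $N'$ be its tail node (an internal node of the subtree, which by Lemma \ref{prop:allLabelsExist} has an outgoing $FD_i$-edge), and apply Lemma \ref{lem:crashedLocationNoOutgoingActions} at $N'$ to conclude $a_{E'} = \bot$.

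To prove the propagation claim, first I would unwind what $a_E=\bot$ means through the construction of the augmented tree: an $FD_i$-edge has action tag $\bot$ exactly when its vertex tag equals its tail's vertex tag, which happens precisely when either (Case A) $v_N = (\bot,0,\bot)$ and $G$ contains no vertex with location $i$, or (Case B) $v_N$ is a genuine vertex of $G$ and $G$ contains no edge from $v_N$ to any vertex with location $i$. I would then handle the two cases separately. In Case A, $G$ has no location-$i$ vertex at all, so for \emph{any} node $N'$---regardless of whether $v_{N'}$ is null or a real vertex---there is no location-$i$ vertex for an outgoing $FD_i$-edge to point to; hence every $FD_i$-edge in the whole tree (\emph{a fortiori} in $\mathcal{R}^G|_N$) has vertex tag equal to its tail's vertex tag and thus action tag $\bot$.

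In Case B I would use Lemma \ref{lem:descendantVertexTagEdge}: since $v_N \neq (\bot,0,\bot)$, any descendant $N'$ satisfies either $v_{N'} = v_N$ or there is an edge from $v_N$ to $v_{N'}$ in $G$. If an $FD_i$-edge out of $N'$ had a non-$\bot$ tag, its vertex tag would be a location-$i$ vertex $w$ with an edge from $v_{N'}$ to $w$ in $G$. In the first subcase ($v_{N'}=v_N$) this directly contradicts the Case B hypothesis; in the second subcase, combining the edge $v_N \to v_{N'}$ with $v_{N'}\to w$ and invoking the transitive-closure Property 5 of observations yields an edge $v_N \to w$ to a location-$i$ vertex, again contradicting Case B. Hence no $FD_i$-edge out of any $N'$ in the subtree carries a non-$\bot$ tag, establishing the claim. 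The main obstacle is exactly this propagation step: Lemma \ref{lem:descendantVertexTagEdge} is silent when $v_N=(\bot,0,\bot)$, so Case A must be dispatched directly from the construction rather than through that lemma, and the transitive closure of $G$ is the crucial ingredient that prevents a location-$i$ output from ``reappearing'' deeper in the subtree.
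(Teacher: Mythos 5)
Your proof is correct and takes essentially the same route as the paper's: unwind what $a_E=\bot$ means for an $FD_i$-edge of $N$, argue that this condition propagates to every node of $\mathcal{R}^G|_N$ (so every $FD_i$-edge in the subtree also carries $\bot$), and then finish by applying Lemma \ref{lem:crashedLocationNoOutgoingActions} at the tail node of the given $Proc_i$- or $Env_i$-edge. The only difference is rigor: where the paper simply asserts that descendant vertex tags cannot have outgoing edges to location-$i$ vertices, you justify that step explicitly via Lemma \ref{lem:descendantVertexTagEdge} together with the transitive-closure property of observations, which is precisely the detail the paper leaves implicit.
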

\begin{proof}
Fix $N$ and $i$ as in the hypothesis of the Lemma; thus, an outgoing $FD_i$-edge $E$ from $N$, $a_E = \bot$. From the construction of $\mathcal{R}^G$, we know that $a_{E} = \bot$ iff either $v_N$ is not a vertex in $G$ and there is no vertex in $G$ whose location is $i$, or there is no edge in $G$ from $v_{N}$ to any vertex whose location is $i$.

Fix $N'$ to be any node in $\mathcal{R}^G|_N$. By construction, $N'$ is a descendant of $N$. From the construction of $\mathcal{R}^G$, note that for every descendant $N'$ of $N$ in $\mathcal{R}^G$, if $v_{N}$ is a vertex in $G$, then $v_{N'}$ is a descendant of $v_N$. Thus, either there is no vertex in $G$ whose location is $i$, or $v_{N'}$ does not have any outgoing edges to a vertex in $G$ whose location is $i$. From the construction of $\mathcal{R}^G$, we see that  $a_{E''} = \bot$ for an outgoing $FD_i$-edge $E''$ from $N'$. From Lemma \ref{lem:crashedLocationNoOutgoingActions}, we know that  for each outgoing $Proc_i$-edge or $Env_i$-edge $E'$ from $N$, $a_{E'} = \bot$.
\end{proof}

Next, we establish the relationship between traces compatible with $G$ and the action tags of $FD$-edges in $\mathcal{R}^G$. Specifically, we show that
the following is true. 
For any node $N$ in $\mathcal{R}^G$ such that the vertex tag $v_N$ is a vertex in $G$, let $a$ be the event of $v_N$, and assume that some $FD_i$-edge of $N$ has a non-$\bot$ action tag. Then  in any trace $t$ compatible with $G$, and for any location $i$, no $crash_i$ event precedes $a$ in $t$.

\begin{lemma}\label{lem:noCrashIUntilBot}
 Let $N$ be any node in $\mathcal{R}^G$ such that $G$ contains $v_N$.
Let there exist an $FD_i$-edge $E$ in $\mathcal{R}^G|_N$ such that $a_E \neq \bot$. Then for any arbitrary trace in $t \in T_D$ that is compatible with $G$, no $crash_i$ event precedes the event of $v_N$ in $t$.
\end{lemma}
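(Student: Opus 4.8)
The plan is to reduce the claim to a structural fact about $G$ and then close with a short argument from compatibility and validity. Concretely, I would show that the existence of a non-$\bot$ $FD_i$-edge anywhere in $\mathcal{R}^G|_N$ forces $G$ to contain an edge from $v_N$ to some output vertex at location $i$; once that edge is in hand, the event of $v_N$ must precede an $O_{D,i}$-output event in $t$, and validity of $t$ rules out any earlier $crash_i$.

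First I would locate the witnessing edge. Since $E$ lies in $\mathcal{R}^G|_N$, its tail node $M$ is either $N$ itself or a descendant of $N$, and let $\hat{M}$ be its head. Because $E$ is an $FD_i$-edge with $a_E \neq \bot$, Lemma~\ref{prop:FDchildVertexEdgeExists} gives that $v_{\hat{M}} \neq v_M$, that $v_{\hat{M}}$ is an output vertex at location $i$ (so $v_{\hat{M}} = (i,k,e)$ with $e \in O_{D,i}$ and $a_E = e$), and, provided $v_M \neq (\bot,0,\bot)$, that $G$ contains an edge from $v_M$ to $v_{\hat{M}}$. To discharge this side condition I would apply Lemma~\ref{lem:descendantVertexTagEdge} to the pair $N,M$: since $v_N$ is a vertex of $G$ by hypothesis, either $v_M = v_N$ or $G$ contains an edge from $v_N$ to $v_M$, and in both cases $v_M$ is a genuine vertex of $G$, hence $v_M \neq (\bot,0,\bot)$.

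Next I would assemble the edge from $v_N$ to $v_{\hat{M}}$ in $G$. If $v_M = v_N$ (in particular when $M = N$), the edge $v_N \to v_{\hat{M}}$ is exactly the one supplied in the previous step. Otherwise $G$ contains both $v_N \to v_M$ and $v_M \to v_{\hat{M}}$, and the transitive-closure property (Property~5 of observations) yields $v_N \to v_{\hat{M}}$. Either way $G$ has an edge from $v_N$ to the location-$i$ vertex $v_{\hat{M}} = (i,k,e)$.

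Finally I would invoke compatibility and validity. Fixing a topological ordering of $G$ that witnesses $t \in T_D$ compatible with $G$, the edge $v_N \to v_{\hat{M}}$ forces the event of $v_N$ to precede the event $e$ of $v_{\hat{M}}$ in $t$. Since $e \in O_{D,i}$ occurs in $t$ and every sequence in $T_D$ is valid, no $crash_i$ event occurs before $e$; as the event of $v_N$ precedes $e$, no $crash_i$ can precede the event of $v_N$ either, which is the claim. I expect the only delicate part to be the bookkeeping in the middle steps---peeling the correct tail node $M$ off the subtree, verifying the $(\bot,0,\bot)$ side condition so that Lemma~\ref{prop:FDchildVertexEdgeExists} applies, and chaining the two $G$-edges through transitivity while handling the degenerate case $M = N$; the concluding compatibility/validity step is then a one-line contradiction argument.
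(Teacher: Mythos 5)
Your proposal is correct and follows essentially the same route as the paper's own proof: locate the tail node of the witnessing $FD_i$-edge, obtain a path in $G$ from $v_N$ to the output vertex at location $i$ whose action is $a_E$, and then combine compatibility (topological ordering) with validity of traces in $T_D$ to rule out an earlier $crash_i$. The only difference is that you spell out the bookkeeping the paper glosses over---the degenerate case $v_M = v_N$, the $(\bot,0,\bot)$ side condition via Lemma~\ref{lem:descendantVertexTagEdge}, and the explicit appeal to transitive closure---which makes your version slightly more careful but not a different argument.
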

\begin{proof}
Fix $N$, $i$, and $E$ as in the hypotheses of the lemma. Let $N'$ denote the upper endpoint of $E$.
Since $N'$ is in $\mathcal{R}^G|_N$, $N'$ is a descendant of $N$, and consequently, there exists a path from $v_N$ to $v_{N'}$ in $G$. Since $a_E \neq \bot$, we know that $v_{N'}$ has an outgoing edge to some vertex $v$ in $G$, fix $v$; note that $a_E$ is the event of $v$. Since we have a path from $v_N$ to $v_N'$ in $G$ and an edge from $v_N'$ to $v$ in $G$, we have a path from $v_N$ to $v$ in $G$.
Therefore, in every topological sort of $G$, $v$ follows $v_N$.

Now consider $t$, and assume for contradiction that $crash_i$ precedes the event of $v_N$ in $t$.
Since $t|_{O_D}$ is a topological sort of $G$, the event $a_E$ of $v$ follows the event of $v_N$ in $t$.
Then $crash_i$ precedes the event $a_E$ of $v$ in $t$. Recall that $E$ is an $FD_i$ edge and therefore $a_E \in O_{D,i}$. In other words, $crash_i$ precedes an $O_{D,i}$ event in $t$; thus, $t$ is not a valid sequence. This contradicts our assumption that $t$ is a trace in $T_D$, because all traces in $T_D$ are valid.
\end{proof}

We define a \emph{non-$\bot$} node. A node $N$ in $\mathcal{R}^G$ is said to be a \emph{non-$\bot$} node iff the path from the root to $N$ does not contain any edges whose action tag is $\bot$. In the subsequent sections, non-$\bot$ nodes play a significant role, and so we prove some useful properties about non-$\bot$ nodes next.

\begin{lemma}\label{lem:nonBotNodeUniqueExe}
Suppose $N$ and $N'$ are a non-$\bot$ nodes in $\mathcal{R}^G$ such that (1) $N$ and $N'$ are at the same depth $d$,  (2) the projection of the paths from $\top$ to $N$ and $\top$ to $N'$ on the set of labels are equal, (3) the projection of  the paths from $\top$ to $N$ and $\top$ to $N'$ on the vertex tags are also equal. Then $N=N'$.
\end{lemma}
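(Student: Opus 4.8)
The plan is to prove the statement by induction on the common depth $d$ of $N$ and $N'$, with Lemma \ref{lem:childNodeUniqueByLabelAndVertexTag} doing the work at each level. For the base case $d=0$, both $N$ and $N'$ are the root $\top$, so $N=N'$ trivially. The whole argument is driven by the recursive construction of $\mathcal{R}^G$, together with the construction fact that the vertex tag of a node equals that of its incoming edge (i.e.\ $v_{\hat N}=v_E$), so that the node and edge vertex-tag sequences along $path(N)$ coincide.

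For the inductive step I would let $M$ and $M'$ be the parents of $N$ and $N'$, with $E$ the edge from $M$ to $N$ and $E'$ the edge from $M'$ to $N'$; both $M$ and $M'$ lie at depth $d-1$. The first task is to verify that $M$ and $M'$ again satisfy the hypotheses of the lemma. They are non-$\bot$ nodes because a prefix of a path with no $\bot$-tagged edges still has no $\bot$-tagged edges. The label projection of the path to $M$ is obtained from that of the path to $N$ by deleting the final entry (the label of $E$), and similarly for $M'$; since the label projections of the paths to $N$ and to $N'$ are equal sequences of length $d$, their truncations agree, giving hypothesis (2) for $M,M'$ and in particular forcing the labels of $E$ and $E'$ to coincide. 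The identical truncation argument applied to the vertex-tag projections yields hypothesis (3) for $M,M'$ and shows the final vertex tags $v_N=v_E$ and $v_{N'}=v_{E'}$ agree. Applying the induction hypothesis gives $M=M'$.

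Finally, $N$ and $N'$ are both children of the common node $M=M'$, reached by edges carrying the same label and leading to nodes with the same vertex tag. Lemma \ref{lem:childNodeUniqueByLabelAndVertexTag} asserts that a child of a node is uniquely determined by the label of its incoming edge together with its own vertex tag, so $N=N'$, completing the induction.

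The step I expect to be most delicate is purely a matter of interpreting condition (3). The clean argument above requires reading ``the projection of the path on the vertex tags'' as a per-position (per-edge, equivalently per-node) sequence of length $d$, so that deleting the last edge of the path deletes exactly the last entry of the vertex-tag sequence in lockstep with the label sequence; under that reading the non-$\bot$ hypothesis is not even needed. If instead the projection records only the \emph{distinct} non-$(\bot,0,\bot)$ tags, the truncation no longer aligns with the depth, and I would first have to use that $N$ and $N'$ are non-$\bot$ nodes---so that, by the construction, the vertex tag changes exactly across $FD$-edges and is constant across $Proc$, $Chan$, and $Env$ edges---to reconstruct the per-position vertex-tag sequence from the label sequence and the distinct-tag sequence, after which the same induction applies. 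Everything else follows routinely from the construction of $\mathcal{R}^G$.
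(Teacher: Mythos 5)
Your proof is correct and takes exactly the approach the paper intends: the paper's entire proof is the sentence ``The proof is a straightforward induction on $d$,'' and your induction on depth---truncating the label and vertex-tag sequences to the parents, applying the inductive hypothesis to get $M=M'$, and closing the step with Lemma \ref{lem:childNodeUniqueByLabelAndVertexTag}---is precisely that argument made explicit. Your observation that $v_{\hat N}=v_E$ aligns the node and edge tag sequences, and your discussion of the alternative ``distinct non-$(\bot,0,\bot)$ tags'' reading of hypothesis (3), are sound refinements rather than deviations.
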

\begin{proof}
The proof is a straightforward induction on $d$.
\end{proof}

The inductive extension of Lemma \ref{lem:childNodeUniqueByLabelAndVertexTag} is that each non-$\bot$ node $N$ in $\mathcal{R}^G$ is uniquely determined by the sequence of labels and vertex tags of the edges from $\top$ to $N$. We prove this next.

\begin{lemma}\label{lem:LabelsAndVertexAgsDenoteUniqueNode}
Each non-$\bot$ node $N$ in $\mathcal{R}^G$ is uniquely determined by the sequence of labels and vertex tags of the edges from $\top$ to $N$.
\end{lemma}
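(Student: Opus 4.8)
The claim is that each non-$\bot$ node $N$ in $\mathcal{R}^G$ is uniquely determined by the sequence of labels and vertex tags along the edges of $path(N)$. The plan is to prove this by induction on the depth $d$ of $N$, reducing the statement to the single-step uniqueness already established in Lemma~\ref{lem:childNodeUniqueByLabelAndVertexTag}. The base case $d = 0$ is immediate, since the only node at depth $0$ is the root $\top$, and the empty sequence of labels and vertex tags determines it trivially.

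For the inductive step, suppose the claim holds for all non-$\bot$ nodes at depth $d$, and let $N$ and $N'$ be non-$\bot$ nodes at depth $d+1$ whose sequences of labels and vertex tags along $path(N)$ and $path(N')$ agree. First I would observe that any prefix of a non-$\bot$ node's path ends at a non-$\bot$ node, since the defining property of a non-$\bot$ node (no $\bot$ action tags on the path from $\top$) is inherited by every ancestor. Let $M$ and $M'$ be the parents of $N$ and $N'$, respectively; these sit at depth $d$, and their associated label/vertex-tag sequences are exactly the length-$d$ prefixes of those of $N$ and $N'$, hence equal. By the inductive hypothesis, $M = M'$. It then remains to show that $N$ and $N'$ are the same child of this common parent.

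The final step invokes Lemma~\ref{lem:childNodeUniqueByLabelAndVertexTag}, which states that each child of a node is uniquely determined by the label of its incoming edge together with its vertex tag. Since the label/vertex-tag sequences of $N$ and $N'$ agree in their last entry, the incoming edges to $N$ and $N'$ from the common parent $M$ carry the same label, and $v_N = v_{N'}$. Lemma~\ref{lem:childNodeUniqueByLabelAndVertexTag} then forces $N = N'$, completing the induction. I do not expect any genuine obstacle here: the content is entirely in the per-step uniqueness lemma, and this statement is just its transitive (inductive) closure. The only point requiring minor care is the bookkeeping that the vertex tag of a child node equals the vertex tag of its incoming edge (part of the task-tree construction, and already used in Lemma~\ref{lem:childNodeUniqueByLabelAndVertexTag}), so that matching the recorded vertex-tag sequences really does match the data that Lemma~\ref{lem:childNodeUniqueByLabelAndVertexTag} consumes.
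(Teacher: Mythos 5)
Your proposal is correct and follows essentially the same route as the paper: induction on depth, with the base case being the unique root $\top$ and the inductive step reducing to Lemma~\ref{lem:childNodeUniqueByLabelAndVertexTag} applied at the common parent. Your version is, if anything, slightly more careful than the paper's, since you explicitly note that ancestors of a non-$\bot$ node are themselves non-$\bot$ (needed to apply the inductive hypothesis to the parent) and that a child's vertex tag equals that of its incoming edge.
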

\begin{proof}
The proof is by induction the depth $d_N$ of $N$.

\emph{Base case.} $d_\top = 0$, and there is unique $\top$ node in $\mathcal{R}^G$.

\emph{Inductive Hypothesis.} For some positive integer $d$, each non-$\bot$ node $N$ in $\mathcal{R}^G$ at depth $d$ is uniquely determined by the sequence of labels and vertex tags of the edges from $\top$ to $N$.

\emph{Inductive step.} Fix $N'$ to be any non-$\bot$ node in $\mathcal{R}^G$ at depth $d+1$. By construction, there is an edge whose lower endpoint is $N'$ and whose upper end point is a node $N''$ at depth $d$. By Lemma \ref{lem:childNodeUniqueByLabelAndVertexTag}, we know that given $N''$, $N'$ is uniquely determined by the label $l$ of the edge from $N'$ to $N''$ and the vertex tag $v_{N''}$. However, by the inductive hypothesis, $N''$ is uniquely determined by the sequence of labels and vertex tags of the edges from $\top$ to $N''$. Therefore, $N'$ is uniquely determined in $\mathcal{R}^G$ by the sequence of labels and vertex tags of the edges from $\top$ to $N'$. This completes the induction.
\end{proof}



\subsection{Properties of ``Similar'' Nodes in Execution Trees}
\label{subsec:taggedTreeProps}

For any two nodes $N$ and $N'$ in $\mathcal{R}^{G}$ such that $c_N =
c_{N'}$ and $v_N = v_{N'}$, the following lemmas establish a
relationship between the descendants of $N$ and $N'$. Informally,
these lemmas establish that the maximal subtrees of $\mathcal{R}^G$
rooted at $N$ and $N'$ are in some sense similar to each other.
Lemma \ref{lem:sameConfigSameChild} establishes that for every child
$\hat{N}$ of $N$ there exists a child $\hat{N'}$ of $N'$ that is
``similar'' to $\hat{N}$. Lemma \ref{lem:sameConfigSameExtensionLength}
extends such similarity to arbitrary descendants of $N$; that is,
for any descendant $\hat{N}$ of $N$, there exist ``similar'' descendants
of $N'$. Lemma \ref{lem:everyDescendantNoBot} states that for any descendant $\hat{N}$ of $N$, there exists a descendant $\hat{N}_{\not\bot}$ of $N$ that is ``similar'' to $\hat{N}$, but the path from $N$ to $\hat{N}_{\not\bot}$ does not contain any edges with a $\bot$ action tag.

The proofs use
the notion of ``distance'' between a node and its descendant as
defined next. The \emph{distance} from a node $N$ to its descendant
$\hat{N}$ is the number of edges in the path from $N$ to
$\hat{N}$. Note that if the distance from $N$ to $\hat{N}$ is $1$,
then $\hat{N}$ is a child of $N$.

%


\begin{lemma}\label{lem:sameConfigSameChild}
 Let $N$ and $N'$ be two nodes in $\mathcal{R}^{G}$ such that $c_N =
 c_{N'}$ and $v_N = v_{N'}$. Let $l$ be an arbitrary label in
 $T \cup \set{FD_i | i \in \Pi}$. Let  $\hat{E}$ and $\hat{N}$ be an $l$-edge and the
 corresponding $l$-child of $N$, respectively. There exists an
 $l$-edge $E^{\prime}$ of $N'$ and the corresponding $l$-child
 $\hat{N^{\prime}}$ of $N'$ such that $a_{\hat{E}} = a_{\hat{E^{\prime}}}$, $v_{\hat{E}} = v_{\hat{E'}}$,
 $c_{\hat{N}} = c_{\hat{N^{\prime}}}$, and $v_{\hat{N}} = v_{\hat{N^{\prime}}}$. 
\end{lemma}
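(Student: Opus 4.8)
The plan is to unfold the two-stage construction of $\mathcal{R}^{G}$ and observe that all the data attached to the outgoing edges of a node---their labels, their vertex tags, their action tags, and the vertex and configuration tags of the resulting children---are determined entirely by that node's own vertex tag $v_N$ and configuration tag $c_N$ (together with the fixed observation $G$ and system $\mathcal{S}$). Since $v_N = v_{N'}$ and $c_N = c_{N'}$ by hypothesis, the outgoing structure of $N$ and of $N'$ must agree edge for edge, and the four required equalities follow. I would make this precise with a case analysis on the label $l$, establishing the vertex tags first (they are fixed already at the task-tree stage), then the action tags, then the configuration tags.

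First I would match the vertex tags. If $l \in T$ is a task label, the task-tree construction gives exactly one outgoing $l$-edge from each node, and its vertex tag equals the node's own vertex tag; hence $v_{\hat{E}} = v_N = v_{N'} = v_{E'}$, where $E'$ is the unique $l$-edge of $N'$. If $l = FD_i$, the set of vertex tags carried by the outgoing $FD_i$-edges is a function of $v_N$ and $G$ alone: either the location-$i$ successors of $v_N$ in $G$, or the sentinel tag ($v_N$, respectively $(\bot,0,\bot)$) when no such successor exists. Because $v_N = v_{N'}$, these sets coincide, so given $\hat{E}$ there is an $FD_i$-edge $E'$ of $N'$ with $v_{E'} = v_{\hat{E}}$; I take $\hat{N'}$ to be its lower endpoint. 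In either case $v_{\hat{N}} = v_{\hat{E}} = v_{E'} = v_{\hat{N'}}$.

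Next I would match the action tags. For an $FD_i$-edge, $a_E = e$ when $v_E = (i,k,e) \neq v_N$ and $a_E = \bot$ when $v_E = v_N$; since $v_{\hat{E}} = v_{E'}$ and $v_N = v_{N'}$, the two edges fall into the same case, so $a_{\hat{E}} = a_{E'}$. For a $Proc_i$, $Env_{i,x}$, or $Chan_{i,j}$ edge, $a_E$ is the (by task-determinism unique) action enabled in the node's configuration, gated by conditions that reference only $v_N$ and $G$. As $c_N = c_{N'}$ forces the enabled action to be the same and $v_N = v_{N'}$ makes the gating conditions agree, again $a_{\hat{E}} = a_{E'}$. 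Finally, the configuration tag of a child equals $c_N$ when the incoming action tag is $\bot$, and otherwise the unique state obtained by applying that action tag to $c_N$ (uniqueness from the determinism of the constituent automata of $\mathcal{S}$); since $a_{\hat{E}} = a_{E'}$ and $c_N = c_{N'}$, this yields $c_{\hat{N}} = c_{\hat{N'}}$.

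I do not expect a genuine obstacle, as the argument is a careful unfolding of the definitions. The only point requiring care is the $FD_i$ case, where one must argue that the collection of outgoing $FD_i$-edge vertex tags depends on $v_N$ alone, so that $\hat{E}$ can be matched to a well-defined $E'$; Lemma~\ref{lem:childNodeUniqueByLabelAndVertexTag} then guarantees that the matched child $\hat{N'}$ is uniquely pinned down by its label and vertex tag, so the choice of $\hat{N'}$ is unambiguous.
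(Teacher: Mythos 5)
Your proposal is correct and takes essentially the same route as the paper's proof: a case analysis on the label $l$ that unfolds the two-stage construction of $\mathcal{R}^G$ and observes that the outgoing edges' vertex tags, action tags, and resulting child configurations are all determined by $(v_N, c_N)$ together with $G$ and the task-determinism of $\mathcal{S}$. The only difference is organizational—you match vertex tags, then action tags, then configuration tags across all labels, while the paper proceeds label-case by label-case (splitting the $FD_i$ case on whether $a_{\hat{E}} = \bot$)—but the underlying argument is identical.
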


\begin{proof}
 Fix $N$, $N'$, $l$, $\hat{E}$, and $\hat{N}$ as in the hypotheses of the lemma. We consider two cases: $l$ is in $T$, and $l$ is in $\set{FD_i | i \in \Pi}$. 

 \emph{Case 1.} $l \in T$. Since $c_N = c_{N'}$, $v_N = v_{N'}$, and the system is task deterministic, we know that there exists an outgoing $l$ edge $\hat{E^\prime}$ from $N'$ such that $a_{\hat{E}} = a_{\hat{E^\prime}}$. Let $\hat{N^\prime}$ be the $l$-child of $N'$ connected by edge $\hat{E^\prime}$. Since $c_{\hat{N}}$ is obtained by applying $a_{\hat{E}}$ to $c_N$, and $c_{\hat{N^\prime}}$ is obtained by applying $a_{\hat{E^\prime}}$ to $c_{N'}$, we see that $c_{\hat{N}} = c_{\hat{N^\prime}}$. Also, by construction, $v_{\hat{N}} = v_N = v_{\hat{E}}$ and $v_{\hat{N^\prime}} = v_{N'} = v_{\hat{E'}}$; therefore, $v_{\hat{E}} = v_{\hat{E'}}$ and $v_{\hat{N}} = v_{\hat{N^\prime}}$.

 \emph{Case 2.} $l$ is of the form $FD_i$, for some particular $i$. Then we consider two subcases: (a) $a_{\hat{E}} = \bot$ and (b) $a_{\hat{E}} \neq \bot$.
 
 \emph{Subcase 2(a).} $a_{\hat{E}} = \bot$. Then either (i) $v_N = (\bot, 0, \bot)$ and $G$ has no vertices with location $i$, or (ii) $v_N$ is a vertex of $G$ and $G$ has no vertices with location $i$ to which $v_N$ has an outgoing edge. In both cases (i) and (ii), by construction, $v_{\hat{N}} = v_N = v_{\hat{E}}$. Since $v_N = v_{N'}$, from the construction of $\mathcal{R}^G$, we know that there is an $l$-edge $\hat{E^\prime}$ of $N'$ such that $a_{\hat{E^\prime}} = \bot$, and we also know that for the $l$-child $\hat{N^\prime}$ of $N'$ that is connected to $N'$ by $\hat{E^\prime}$, $v_{\hat{N^\prime}} = v_{N'} = v_{\hat{E'}}$. Therefore, $v_{\hat{E}} = v_{\hat{E'}}$ and $v_{\hat{N}} = v_{\hat{N^\prime}}$.

\emph{Subcase 2(b).} $a_{\hat{E}} \neq \bot$. Then either (i) $v_N = (\bot, 0, \bot)$ and $G$ has a vertex $v'$ of the form $(i,*,a_{\hat{E}})$, or (ii) $v_N$ is a vertex of $G$ and $G$ has a vertex $v'$ of the form $(i,*,a_{\hat{E}})$ to which $v_N$ has an outgoing edge such that $v_{\hat{N}} = v_{\hat{E}} = v'$. Since $v_N = v_{N'}$, in both cases (i) and (ii), from the construction of $\mathcal{R}^G$, we know that there is an $l$-edge $\hat{E^\prime}$ of $N'$ such that $a_{\hat{E^\prime}} = a_{\hat{E}}$ and $v_{\hat{E'}} = v_{\hat{E}}$, and we also know that for the $l$-child $\hat{N^\prime}$ of $N'$ that is connected to $N'$ by $\hat{E^\prime}$, $v_{\hat{N^\prime}} = v'$. Therefore, $v_{\hat{E}} = v_{\hat{E'}}$  and $v_{\hat{N}} = v_{\hat{N^\prime}}$.

In both subcases, since $c_{\hat{N}}$ is obtained by applying $a_{\hat{E}}$ to $c_N$, and $c_{\hat{N^\prime}}$ is obtained by applying $a_{\hat{E^\prime}}$ to $c_{N'}$, we see that $c_{\hat{N}} = c_{\hat{N^\prime}}$.  
 \end{proof}

 \begin{lemma}\label{lem:sameConfigSameExtensionLength}
 Let $N$ and $N'$ be two nodes in $\mathcal{R}^{G}$ such that $c_N = c_{N'}$ and $v_N = v_{N'}$, and let $\hat{N}$ be a descendant of $N$. There exists a descendant $\widehat{N'}$ of $N'$ such that the following is true.
 \begin{enumerate}
 \item $v_{\hat{N}} = v_{\widehat{N'}}$.
 \item $c_{\hat{N}} = c_{\widehat{N'}}$. 
 \item Let the path from $N$ to $\hat{N}$ be $p$ and the path from $N'$ to $\widehat{N'}$ be $p'$. Then, $p$ and $p'$ are of the same length.
  \item The suffix of $exe(\hat{N})$ following $exe(N)$ is identical to the suffix of $exe(\widehat{N'})$ following $exe(N')$.
  \end{enumerate}
\end{lemma}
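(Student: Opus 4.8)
The plan is to induct on the \emph{distance} $d$ from $N$ to $\hat N$, using Lemma \ref{lem:sameConfigSameChild} as the single-step engine that advances the induction by one edge. The inductive claim I would carry is exactly the conjunction of the four conclusions, phrased for descendants at distance $d$: for every descendant $\hat N$ of $N$ at distance $d$ there is a descendant $\widehat{N'}$ of $N'$, also at distance $d$, with $v_{\hat N}=v_{\widehat{N'}}$, $c_{\hat N}=c_{\widehat{N'}}$, and with identical execution suffixes following $exe(N)$ and $exe(N')$, respectively. Keeping $c$ and $v$ matched at every level is what makes Lemma \ref{lem:sameConfigSameChild} reapplicable, so the hypothesis must bundle these equalities together rather than proving them separately.

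For the base case $d=0$ I would take $\widehat{N'}=N'$; then $v_{\hat N}=v_N=v_{N'}=v_{\widehat{N'}}$ and $c_{\hat N}=c_N=c_{N'}=c_{\widehat{N'}}$ by the hypothesis of the lemma, both paths are empty (hence of equal length), and both execution suffixes are empty, so all four conclusions hold trivially. For the inductive step, let $\hat N$ be at distance $d+1$ and let $\hat N_{-}$ be its parent on the path from $N$, at distance $d$. The inductive hypothesis supplies a matching descendant $\widehat{N'}_{-}$ of $N'$ at distance $d$ with $c_{\hat N_{-}}=c_{\widehat{N'}_{-}}$ and $v_{\hat N_{-}}=v_{\widehat{N'}_{-}}$. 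Writing $\hat E$ for the $l$-edge from $\hat N_{-}$ to $\hat N$, I would invoke Lemma \ref{lem:sameConfigSameChild} on the pair $(\hat N_{-},\widehat{N'}_{-})$ to obtain an $l$-edge $\hat E'$ and its $l$-child $\widehat{N'}$ of $\widehat{N'}_{-}$ with $a_{\hat E}=a_{\hat E'}$, $v_{\hat E}=v_{\hat E'}$, $c_{\hat N}=c_{\widehat{N'}}$, and $v_{\hat N}=v_{\widehat{N'}}$. This $\widehat{N'}$ sits at distance $d+1$ from $N'$, which gives conclusions 1--3 immediately.

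For conclusion 4 I would read off the last step of each execution via Lemma \ref{prop:botChild}. Since $a_{\hat E}=a_{\hat E'}$, either both action tags are $\bot$, in which case $exe(\hat N)=exe(\hat N_{-})$ and $exe(\widehat{N'})=exe(\widehat{N'}_{-})$, so the suffixes are unchanged and remain equal by the inductive hypothesis; or both are non-$\bot$, in which case each execution is extended by the same pair, namely $a_{\hat E}\cdot c_{\hat N}=a_{\hat E'}\cdot c_{\widehat{N'}}$, again preserving equality of the suffixes. Specializing the completed induction to the given $\hat N$ then yields the lemma.

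I do not expect a serious obstacle; the argument is a routine lift of the single-step Lemma \ref{lem:sameConfigSameChild} through induction. The one point requiring care is conclusion 4: one must track the execution suffix across edges whose action tag is $\bot$ (which contribute nothing to $exe$) versus non-$\bot$ (which append an action and a configuration), and confirm that the equality $a_{\hat E}=a_{\hat E'}$ forces both edges into the same case. Lemma \ref{prop:botChild} packages exactly this case split, so the bookkeeping stays light.
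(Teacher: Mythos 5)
Your proposal is correct and follows exactly the route the paper takes: the paper's own proof is precisely "a straightforward induction on the length of the path from $N$ to $\hat{N}$, as a simple inductive extension of Lemma \ref{lem:sameConfigSameChild}," which you have carried out in full detail, including the case split via Lemma \ref{prop:botChild} for tracking the execution suffix across $\bot$ versus non-$\bot$ action tags.
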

\begin{proof}
The lemma is a simple inductive extension of Lemma \ref{lem:sameConfigSameChild}. The proof follows from a straightforward induction on the length of the path from $N$ to $\hat{N}$.
\end{proof}

Next, we show that for any node $N$ and any descendant $\hat{N}$ of $N$, there exists a node $\hat{N}_{\not\bot}$ of $N$ that is ``similar'' to $\hat{N}$, and the path from $N$ to $\hat{N}_{\not\bot}$ does not contain any edges with a $\bot$ action tag.

\begin{lemma}\label{lem:everyDescendantNoBot}
 Let $N$ be an arbitrary node in $\mathcal{R}^{G}$. For every descendant $\hat{N}$ of $N$, there exists a descendant $\hat{N}_{\not\bot}$ of $N$ such that  $v_{\hat{N}} = v_{\hat{N}_{\not\bot}}$, the suffix of $exe(\hat{N})$ following $exe(N)$ is identical to the suffix of $exe(\hat{N}_{\not\bot})$ following $exe(N)$, and the path  from $N$ to $\hat{N}_{\not\bot}$ does not contain any edges whose action tag is $\bot$.
\end{lemma}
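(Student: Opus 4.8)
The plan is to induct on the distance $d$ from $N$ to $\hat{N}$, that is, on the number of edges along the unique path from $N$ to $\hat{N}$ in $\mathcal{R}^G$. The base case $d=0$ is immediate: then $\hat{N}=N$, and taking $\hat{N}_{\not\bot}:=N$ works, since the empty path contains no $\bot$-action-tag edges and both execution suffixes following $exe(N)$ are empty.

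For the inductive step I take $d\geq 1$, let $M$ be the parent of $\hat{N}$ (so $M$ is a descendant of $N$ at distance $d-1$), and let $E$ be the edge from $M$ to $\hat{N}$. Applying the inductive hypothesis to $M$ yields a descendant $M_{\not\bot}$ of $N$ with $v_M=v_{M_{\not\bot}}$, with the suffix of $exe(M)$ following $exe(N)$ equal to the suffix of $exe(M_{\not\bot})$ following $exe(N)$, and with no $\bot$-action-tag edge on the path from $N$ to $M_{\not\bot}$. The first thing I would observe is that this suffix equality already forces $c_M=c_{M_{\not\bot}}$: by Lemma \ref{prop:ancesterPrefix}, $exe(N)$ is a prefix of both $exe(M)$ and $exe(M_{\not\bot})$, and by the definition of $exe(\cdot)$ each of those executions ends in the config tag of its node, so two identical suffixes must end in the same config tag (or both be empty, in which case both configs equal $c_N$).

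I would then split on the action tag of $E$. If $a_E=\bot$, Lemma \ref{prop:botChild} gives $exe(\hat{N})=exe(M)$ and $v_{\hat{N}}=v_M$, so $\hat{N}_{\not\bot}:=M_{\not\bot}$ satisfies all three conclusions verbatim. If $a_E\neq\bot$, let $l$ be the label of $E$; since $M$ and $M_{\not\bot}$ agree on their config and vertex tags, I apply Lemma \ref{lem:sameConfigSameChild} to the pair $(M,M_{\not\bot})$, to the $l$-edge $E$, and to its $l$-child $\hat{N}$, obtaining an $l$-edge $E'$ of $M_{\not\bot}$ with $l$-child $\hat{N}'$ such that $a_{E'}=a_E$, $v_{\hat{N}'}=v_{\hat{N}}$, and $c_{\hat{N}'}=c_{\hat{N}}$. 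Because $a_{E'}=a_E\neq\bot$, appending $E'$ to the $\bot$-free path from $N$ to $M_{\not\bot}$ keeps it $\bot$-free, so $\hat{N}_{\not\bot}:=\hat{N}'$ is the desired descendant. The suffixes match because Lemma \ref{prop:nonBotChild} gives $exe(\hat{N})=exe(M)\cdot a_E\cdot c_{\hat{N}}$ and $exe(\hat{N}')=exe(M_{\not\bot})\cdot a_{E'}\cdot c_{\hat{N}'}$, and the corresponding pieces coincide by the inductive hypothesis together with $a_{E'}=a_E$ and $c_{\hat{N}'}=c_{\hat{N}}$.

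I expect the only delicate point to be the recovery of $c_M=c_{M_{\not\bot}}$ from the execution-suffix equality, since that is exactly what licenses the invocation of Lemma \ref{lem:sameConfigSameChild} in the non-$\bot$ case; the remainder is a routine edge-by-edge matching. Should that step feel unclean, a safe alternative is to strengthen the inductive statement to carry $c_{\hat{N}}=c_{\hat{N}_{\not\bot}}$ as an explicit fourth conclusion, which costs nothing extra in either case and removes the need to re-derive config equality.
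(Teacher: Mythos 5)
Your proof is correct, but it takes a different route from the paper's. The paper's proof is a one-shot construction: it takes the suffix $\alpha$ of $exe(\hat{N})$ following $exe(N)$, extracts its sequence of actions $\alpha_t$ and the corresponding sequence of task labels $\alpha_l$, and asserts that by the construction of $\mathcal{R}^G$ there is a path $\hat{p}$ from $N$ whose labels are $\alpha_l$ and whose action tags are exactly $\alpha_t$; since $\alpha_t$ contains no $\bot$ elements, $\hat{p}$ is $\bot$-free. Your induction on the distance from $N$ to $\hat{N}$ instead rebuilds the path edge by edge: $\bot$-tagged edges are collapsed via Lemma~\ref{prop:botChild}, and non-$\bot$ edges are transferred to the replica node via Lemma~\ref{lem:sameConfigSameChild}, after recovering $c_M = c_{M_{\not\bot}}$ from the suffix equality (that recovery step is sound: $exe(\cdot)$ ends in the node's config tag by Lemma~\ref{prop:nodeFiniteExe}, so equal suffixes force equal final configs, and empty suffixes force both configs to equal $c_N$). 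What your approach buys is rigor on exactly the points the paper leaves implicit: the paper's proof never verifies that the endpoint of $\hat{p}$ has the same vertex tag as $\hat{N}$ (which is part of the lemma's conclusion, and is not automatic since distinct $FD$-edges from a node can carry the same action with different vertex tags), whereas Lemma~\ref{lem:sameConfigSameChild} hands you $v_{\hat{N}'} = v_{\hat{N}}$ and $c_{\hat{N}'} = c_{\hat{N}}$ at every step. The cost is a longer argument; the paper's construction is shorter but relies on an unproved "by construction" claim. Your approach also fits the paper's own toolkit: it mirrors the inductive extension pattern used for Lemma~\ref{lem:sameConfigSameExtensionLength}, and your suggested strengthening of the inductive statement to carry config-tag equality as a fourth conclusion is a clean way to avoid re-deriving it.
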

\begin{proof}
Fix $N$ and $\hat{N}$ as in the hypothesis of the lemma. Let $p$ be the path from $N$ to $\hat{N}$. If $p$ does not contain any edges whose action tag is $\bot$, then the lemma is satisfied when $\hat{N}_{\not\bot} = \hat{N}$. Otherwise, the following arguments hold.

Let $\alpha$ be the suffix of $exe(\hat{N})$ following $exe(N)$, starting with the state $c_N$. Let $\alpha_t$ denote the trace of $\alpha$, and let $\alpha_l$ denote the sequence of tasks in $\mathcal{S}$ such that for each $x$, $\alpha_t[x]$ is an action in task $\alpha_l[x]$. By construction, there exists a path from $N$ whose projection on the labels is $\alpha_l$, and furthermore, since $\alpha_t$ is the trace of $\alpha$, and the starting state of $\alpha$ is $c_N$, there exists path $\hat{p}$ from $N$ whose projection on action tags is $\alpha_t$; fix such a path $\hat{p}$. Note that, by construction, $\alpha_t$ does not contain any $\bot$ elements. Thus, path $\hat{p}$ has no edges with $\bot$ action tag, and the suffix of $exe(\hat{N})$ following $exe(N)$ is identical to the suffix of $exe(\hat{N}_{\not\bot})$ following $exe(N)$.
\end{proof}

\begin{corollary}\label{cor:NonBotNodesExist}
For each node $N$ in $\mathcal{R}^{G}$, there exists a non-$\bot$ node $N'$ in $\mathcal{R}^{G}$ such that $exe(N) = exe(N')$, $v_N = v_{N'}$.
\end{corollary}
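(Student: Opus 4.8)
The plan is to derive this directly from Lemma \ref{lem:everyDescendantNoBot} by instantiating it at the root node $\top$. Since every node in $\mathcal{R}^G$ is a descendant of $\top$, I would let $\top$ play the role of the arbitrary node $N$ in that lemma and let the given node $N$ play the role of the descendant $\hat{N}$. The lemma then produces a descendant $\hat{N}_{\not\bot}$ of $\top$, which I would rename $N'$, satisfying three things: $v_N = v_{N'}$, agreement of the execution suffixes following $exe(\top)$, and the property that the path from $\top$ to $N'$ contains no edge whose action tag is $\bot$.

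Next I would unwind what these conclusions say. By definition, $path(\top)$ is the single node $\top$, so $exe(\top)$ consists of just the initial configuration $c_\top$ of $\mathcal{S}$. By Lemma \ref{prop:ancesterPrefix}, $exe(\top)$ is a prefix of both $exe(N)$ and $exe(N')$; since the two suffixes following this common one-state prefix are identical, it follows that $exe(N) = exe(N')$. Finally, because the path from $\top$ to $N'$ contains no edge with a $\bot$ action tag, $N'$ is a non-$\bot$ node by the definition of a non-$\bot$ node. Combined with $v_N = v_{N'}$ from the lemma, this yields exactly the claimed node $N'$.

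There is essentially no obstacle here: the corollary is an immediate specialization of Lemma \ref{lem:everyDescendantNoBot} to the case where the ancestor is the root. The only point requiring a moment's care is translating the lemma's statement about matching execution \emph{suffixes following} $exe(\top)$ into the full equality $exe(N) = exe(N')$ demanded by the corollary; this is handled simply by observing that at the root the relevant prefix $exe(\top)$ is a single state shared by both executions, so matching suffixes force the executions themselves to coincide.
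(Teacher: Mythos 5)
Your proof is correct and is essentially identical to the paper's own argument, which also just applies Lemma \ref{lem:everyDescendantNoBot} at the root node $\top$ with the given node $N$ as the descendant. Your extra step of spelling out why matching suffixes following the one-state prefix $exe(\top)$ yield the full equality $exe(N) = exe(N')$ is a detail the paper leaves implicit, but it is the same proof.
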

\begin{proof}
 Follows by applying Lemma \ref{lem:everyDescendantNoBot} to the root node and noting that $N$ is a descendant of the root node.
\end{proof}


\subsection{Properties of Similar-Modulo-$i$ Nodes in Execution Trees}\label{subsec:PropertiesOfSimilarModuloNodes}
Next, we establish properties of $\mathcal{R}^G$ with respect to nodes
whose configuration tags and vertex tags are indistinguishable at all process automata
except one. The aforementioned relation between nodes is formalized as
the \emph{similar-modulo-$i$} relation (where $i$ is a location).
Intuitively, we say that node $N$ is \emph{similar-modulo-$i$ to} $N'$ if
the only process automaton that can distinguish state $c_N$ from state
$c_{N'}$ is the process automaton at $i$. 
The formal definition follows.

Given two nodes $N$ and $N'$ in $\mathcal{R}^{G}$ and a location $i$,
$N$ is said to be \emph{similar-modulo-$i$ to} $N'$ (denoted $N \sim_i N'$)
if the following are true. 
\begin{enumerate}
\item $v_N = v_{N'}$.
\item For every location $j \in \Pi \setminus \set{i}$, the state of
       $Proc_j$ is the same in $c_N$ and $c_{N'}$.
\item For every location $j \in \Pi \setminus \set{i}$, the state of $\mathcal{E}_j$ is the same in $c_N$ and $c_{N'}$. 
\item For every pair of distinct locations $j,k \in \Pi \setminus \set{i}$,
      the state of $Chan_{j,k}$ is the same in $c_N$ and $c_{N'}$. 
\item For every location $j \in \Pi \setminus \set{i}$, the contents of
      the queue in $Chan_{i,j}$ in state $c_N$ is a prefix of the contents
      of the queue in $Chan_{i,j}$ in state $c_{N'}$.
\end{enumerate}

 Note that due to property 5, the $\sim_i$ relation is not symmetric; that is, $N \sim_i N'$ does not imply $N' \sim_i N$. However, the relation is reflexive; that is, $N \sim_i N$ for any node $N$.

Also note that if $N' \sim_i N$, then the states of $Proc_i$, $\mathcal{E}_i$, and $Chan_{j,i}$ for all $j \neq i$ may be different in $c_{N'}$ and $c_{N}$. Furthermore, the states of $Chan_{i,j}$ for all $j \neq i$ may also be different in $c_{N'}$ and $c_{N}$, but it is required that the messages in transit from $i$ to $j$ in state $c_N$ form a prefix of the messages in transit from $i$ to $j$ in state $c_{N'}$.

We define a node $N$ to be a \emph{post-$crash_i$} node, where $i$ is a location,  if the following property is satisfied. If $v_N = (\bot, 0 ,\bot)$, then there are no vertices in $G$ whose location is $i$. Otherwise, there are no outgoing edges in $G$ from $v_N$ to any vertex whose location is $i$. Note that if $\mathcal{R}^G$ contains any post-$crash_i$ node, then $i$ is not live in $G$. Furthermore, if a node $N$ in $\mathcal{R}^G$ is a post-$crash_i$ node, and there exists a node $N'$ such that $N \sim_i N'$, then $N'$ is also a post-$crash_i$ node.
 
 \begin{lemma}\label{lem:similarModuloChild}
 Let $N$ and $N'$ be two post-$crash_i$ nodes in $\mathcal{R}^{G}$ for some location $i$ in $\Pi$, such that $N \sim_i N'$. Let $l$ be any label, and let
 $N^l$ be an $l$-child of $N$. Then, one of the following is true: (1) $N^l \sim_i N'$, or (2) there exists an $l$-child $N^{\prime l}$ of $N'$ such that $N^l$ and $N^{\prime l}$ are post-$crash_i$ nodes and $N^l \sim_i N^{\prime l}$.
\end{lemma}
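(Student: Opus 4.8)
The plan is to run a case analysis driven by the edge $\hat{E}$ from $N$ to $N^l$, its action tag $a_{\hat{E}}$, and the location $loc(a_{\hat{E}})$ at which that action occurs, reading each transition off the construction rules of $\mathcal{R}^G$ and then checking the five clauses of $\sim_i$ one at a time. The organizing dichotomy is this: conclusion (1) should hold exactly when the step out of $N$ leaves untouched every component \emph{constrained} by $\sim_i$ (the automata $Proc_j$, $\mathcal{E}_j$ for $j\neq i$, the channels $Chan_{j,k}$ with $j,k\neq i$, and the from-$i$ channels $Chan_{i,j}$), since then $c_{N^l}$ agrees with $c_N$ on all of these and $N\sim_i N'$ transfers verbatim; conclusion (2) should hold when the step does touch constrained components but in a way $N'$ can match. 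Throughout I will use that $v_N=v_{N'}$ forces $N$ and $N'$ to have identical families of outgoing $FD$-edges (same labels, vertex tags, hence action tags), and that being post-$crash_i$ depends only on the vertex tag, so once $v_{N^l}=v_{N^{\prime l}}$ is shown the two children are post-$crash_i$ together.

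First I would dispose of the cases giving conclusion (1). If $a_{\hat{E}}=\bot$, then Lemma \ref{prop:botChild} gives $c_{N^l}=c_N$ and $v_{N^l}=v_N$, so $N^l\sim_i N'$ immediately. If $a_{\hat{E}}\neq\bot$ but $loc(a_{\hat{E}})=i$, I would first observe that $l$ can be neither $Proc_i$, $Env_{i,x}$, nor $FD_i$: since $N$ is post-$crash_i$, the vertex-tag condition gating $Proc_i/Env_i$ edges fails and the unique $FD_i$-edge carries vertex tag $v_N$, so all three would force $a_{\hat{E}}=\bot$. Hence $l=Chan_{j,i}$ for some $j\neq i$ and $a_{\hat{E}}=receive(m,j)_i$, which alters only the state of $Proc_i$ and of $Chan_{j,i}$ --- neither constrained by $\sim_i$ --- while keeping $v_{N^l}=v_N$; so again $N^l\sim_i N'$.

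The remaining case is $a_{\hat{E}}\neq\bot$ with $loc(a_{\hat{E}})=j\neq i$, where I aim for conclusion (2). The enabling of $a_{\hat{E}}$ depends only on data shared by $c_N$ and $c_{N'}$: for $l\in\{FD_j,Proc_j,Env_{j,x}\}$ it depends on $v_N=v_{N'}$ and on the equal states of $Proc_j$ and $\mathcal{E}_j$; for a receive label $Chan_{h,j}$ (destination $j\neq i$) it depends on that channel's queue. Using $v_N=v_{N'}$ and Lemma \ref{lem:childNodeUniqueByLabelAndVertexTag} I would extract the matching $l$-edge $E'$ of $N'$ with $a_{E'}=a_{\hat{E}}$ and $v_{E'}=v_{\hat{E}}$, and take $N^{\prime l}$ to be its child. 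Applying the one common action to both configurations changes only constrained components that were already equal ($Proc_j$, $\mathcal{E}_j$, and any message appended to or removed from a channel with both endpoints different from $i$), and it never touches a channel out of $i$ except in the sub-case singled out below; verifying the five clauses then yields $N^l\sim_i N^{\prime l}$. For the post-$crash_i$ requirement, only the $FD_j$ sub-case changes the vertex tag, and there $v_{N^l}$ is a $G$-successor of $v_N$ (or a source vertex when $v_N=(\bot,0,\bot)$), so the transitive-closure property of $G$ (Property 5 of observations) together with $N$ being post-$crash_i$ rules out any edge from $v_{N^l}$ to a location-$i$ vertex; as $v_{N^{\prime l}}=v_{N^l}$, both children are post-$crash_i$.

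The main obstacle is the single sub-case $l=Chan_{i,j}$ with $j\neq i$ --- delivery along a channel \emph{out of} $i$ --- which is the one place the asymmetric clause 5 of $\sim_i$ is indispensable. A non-$\bot$ action tag means the queue of $Chan_{i,j}$ is non-empty in $c_N$, with head $m$; clause 5 says this queue is a prefix of the queue of $Chan_{i,j}$ in $c_{N'}$, so the latter queue is also non-empty with the \emph{same} head $m$. Hence $receive(m,i)_j$ is enabled at $N'$ and the matching edge $E'$ exists. Dequeuing the common head $m$ from a queue and from a longer queue of which it is a prefix preserves the prefix relation, so clause 5 survives for $Chan_{i,j}$; the delivered message reaches $Proc_j$ with $j\neq i$, whose state was equal in $c_N$ and $c_{N'}$, so clause 2 survives; all other components are untouched. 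This re-establishes $N^l\sim_i N^{\prime l}$ and completes conclusion (2). I expect this prefix-preservation step, together with the routine confirmation that no channel with both endpoints $\neq i$ ever develops a mismatch in the other sub-cases, to be the only genuinely delicate bookkeeping.
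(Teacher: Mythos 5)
Your proof is correct and follows essentially the same case analysis as the paper's: dispose of $\bot$-tagged edges via Lemma \ref{prop:botChild}, note that post-$crash_i$ forces $\bot$ tags on $Proc_i$/$Env_{i,x}$/$FD_i$ edges, match the remaining edges of $N'$ using $v_N=v_{N'}$ and task determinism, and use the asymmetric prefix clause of $\sim_i$ for the $Chan_{i,j}$ delivery case. The only (harmless) deviation is organizational: you sort cases by the location of the action rather than by label type, and you place the $Chan_{j,i}$ delivery-into-$i$ case under conclusion (1) ($N^l \sim_i N'$) where the paper places it under conclusion (2); both satisfy the lemma's disjunction.
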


\begin{proof}
 Fix $N$, $N'$, $i$, $l$, and $N^l$ as in the hypotheses of the lemma. Let $E$ be the $l$-edge from $N$ to $N^l$, and let $a_E$ be the action tag of $E$. 
 
 If $a_E = \bot$, then by Lemma \ref{prop:botChild}, we know that $c_N = c_{N^l}$ and $v_N = v_{N^l}$. Therefore, $N^l \sim_i N'$, and the lemma is satisfied. For the remainder of this proof, we assume $a_E \neq \bot$.
 
 Note that label $l$ is an element of $\set{Proc_i} \cup \set{Env_{i,x} | x \in X_i}  \cup \set{FD_i} \cup  \set{Proc_j | j \in \Pi \setminus \set{i}} \cup \set{Env_{j,x} | j \in \Pi \setminus \set{i} \wedge x \in X_j} \cup \set{FD_j | j \in \Pi \setminus \set{i}} \cup \set{Chan_{j,k} | j \in \Pi \wedge k \in \Pi \setminus \set{j}}$.
  
  \emph{Case 1.} $l \in \set{Proc_i} \cup \set{Env_{i,x} | x \in X_i}$. From the definition of a post-$crash_i$ node, we know that there are no vertices with location $i$ that have an incoming edge from $v_N$ ($=v_{N'}$). Therefore, from the construction of $\mathcal{R}^G$, we see that $a_E = a_{E'} = \bot$. In this case, we have already established that $N^l \sim_i N^\prime$.
  
  \emph{Case 2.} $l = FD_i$. We know that there are no vertices with location $i$ that have an incoming edge from $v_N$, and therefore, $a_E = a_{E'} = \bot$. In this case, we have already established that $N^l \sim_i N^\prime$.
  
For the remainder of the cases, let $N^{\prime l}$ be the $l$-child of $N'$ connected to $N'$ by edge $E'$. Note that since $v_N = v_{N^\prime}$, we know that $v_{N^l} = v_{N^{\prime l}}$.

 \emph{Case 3.} $l  \in \set{Proc_j | j \in \Pi \setminus \set{i}}  \cup \set{Env_{j,x} | j \in \Pi \setminus \set{i} \wedge x \in X_j}$. From the definition of the $\sim_i$ relation, we know that the state of $Proc_j$  is the same in states $c_N$ and $c_{N'}$, and similarly, the state of $\mathcal{E}_j$ is the same in states $c_N$ and $c_{N'}$. Therefore, $a_E = a_{E'}$. Consequently, the state of $Proc_j$  is the same in $c_{N^l}$ and $c_{N^{\prime l}}$, and the state of $\mathcal{E}_j$ is the same in $c_{N^l}$ and $c_{N^{\prime l}}$. 
 
 Also, from the definition of the $\sim_i$ relation, we know that for every location $k \in \Pi \setminus \set{i,j}$, the state of $Chan_{j,k}$ is the same in $c_N$ and $c_{N'}$. Therefore, from state $c_N$, if $a_E$ changes the state of $Chan_{j,k}$ for some $k \neq i$, then we know that the state of $Chan_{j,k}$ is the same in $c_{N^l}$ and $c_{N^{\prime l}}$. 
 
 Thus, the states of all other automata in $\mathcal{S}$ are unchanged. We have already established that $v_{N^l} = v_{N^{\prime l}}$, and we can verify that $N^l \sim_i N^{\prime l}$.
 

\emph{Case 4.} $l \in \set{FD_j | j \in \Pi \setminus \set{i}}$. Since $v_N = v_{N'}$, we see that $a_E = a_{E'}$. Applying $a_E$ to $c_N$ and applying $a_{E'}$ to $c_{N'}$, and recalling that we have already established $v_{N^l} = v_{N^{\prime l}}$, we can verify that $N^l \sim_i N^{\prime l}$.

 \emph{Case 5.} Let $l$ be $Chan_{j,k}$ where $j\in \Pi$ and $k \in \Pi \setminus \set{j}$. Recall that we have already established $v_{N^l} = v_{N^{\prime l}}$. 
 We consider three subcases: (a) $k=i$, (b) $j\neq i$ and $k \neq i$, (c) $j = i$.
 
 \emph{Case 5(a).} Let $l$ be $Chan_{j,i}$ where $j \in \Pi \setminus \set{i}$. Since the definition of $\sim_i$ does not restrict the state of $Chan_{j,i}$ or the state of the process automaton at $i$, we see that $N^l \sim_i N^{\prime l}$.
 
 \emph{Case 5(b).} Let $l$ be $Chan_{j,k}$ where $j\in \Pi \setminus \set{i}$ and $k \in \Pi \setminus \set{i,j}$. From the definition of the $\sim_i$ relation, we know that the state of $Chan_{j,k}$ is the same in $c_N$ and $c_{N'}$. Therefore, $a_E = a_{E'}$.
 
 Thus, we see that the state of $Chan_{j,k}$ is the same in $c_{N^l}$ and $c_{N^{\prime l}}$. Similarly, since $N \sim_i N'$ and $a_E = a_{E'}$, we see that the state of the process automaton at $k$ is also the same in $c_{N^l}$ and $c_{N^{\prime l}}$. The states of all other automata in $\mathcal{S}$ are unchanged.
 Thus, we can verify that $N^l \sim_i N^{\prime l}$.
 
 \emph{Case 5(c).} Let $l$ be $Chan_{i,k}$ where $k \in \Pi \setminus \set{i}$. 
 Since we have assumed $a_E \neq \bot$, $a_E$ must be the action $receive(m,i)_k$ for some message $m \in \mathcal{M}$. From the definition of the $\sim_i$ relation, we know that the queue of messages in $Chan_{i,k}$ in state $c_N$ is a prefix of the queue of messages in $Chan_{i,k}$ in state $c_{N'}$, and the state of the process automaton at $k$ is also the same in $c_N$ and $c_{N'}$. Therefore, action $a_{E}$ is enabled in state $c_{N'}$ , and $a_E$ is in task $l$; therefore $a_E = a_{E'}$.
 
 Consequently, the queue of messages in $Chan_{i,k}$ in state $c_{N^l}$ is a prefix of the queue of messages in $Chan_{i,k}$ in state $c_{N^{\prime l}}$. 
 Recall that the state of the process automaton at $k$ is the same in $c_{N}$ and $c_{N'}$. 
 Therefore, the state of the process automaton at $k$ is the same in states $c_{N^l}$ and $c_{N^{\prime l}}$. The states of all other automata in $\mathcal{S}$ are unchanged.
 Thus, we can verify that $N^l \sim_i N^{\prime l}$. Furthermore, note that by construction, if a node $N_0$ is a post-$crash_i$ node, then all its descendants are post-$crash_i$ nodes. Therefore, $N^l$ and $N^{\prime l}$ are post-$crash_i$ nodes.
 \end{proof}

\begin{theorem}\label{thm:similarModuloDescendant}
 Let $N$ and $N'$ be two post-$crash_i$ nodes in $\mathcal{R}^{G}$ for some location $i$ in $\Pi$ such that $N \sim_i N'$. For every descendant $\hat{N}$ of $N$, there exists a descendant $\widehat{N'}$ of $N'$ such that $\hat{N}$ and $\widehat{N'}$ are post-$crash_i$ nodes and $\hat{N} \sim_i \widehat{N'}$.
\end{theorem}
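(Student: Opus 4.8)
The plan is to prove this as the inductive extension of Lemma \ref{lem:similarModuloChild}, exactly the way Lemma \ref{lem:sameConfigSameExtensionLength} extended Lemma \ref{lem:sameConfigSameChild}. The induction is on the distance $d$ from $N$ to $\hat{N}$, i.e.\ the number of edges on the path from $N$ to $\hat{N}$ in $\mathcal{R}^G$. The inductive hypothesis is the statement of the theorem restricted to descendants at distance $d$.

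For the base case $d = 0$ we have $\hat{N} = N$, so I would simply take $\widehat{N'} = N'$; by hypothesis $N$ and $N'$ are post-$crash_i$ nodes and $N \sim_i N'$, so all requirements hold trivially. For the inductive step, let $\hat{N}$ be a descendant of $N$ at distance $k+1$, let $\hat{N}_k$ be the parent of $\hat{N}$ (at distance $k$ from $N$), and let $l$ be the label of the edge from $\hat{N}_k$ to $\hat{N}$, so that $\hat{N}$ is an $l$-child of $\hat{N}_k$. By the inductive hypothesis applied to $\hat{N}_k$, there is a descendant $\widehat{N'}_k$ of $N'$ such that $\hat{N}_k$ and $\widehat{N'}_k$ are post-$crash_i$ nodes and $\hat{N}_k \sim_i \widehat{N'}_k$. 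I would then invoke Lemma \ref{lem:similarModuloChild} with the two post-$crash_i$ nodes $\hat{N}_k$ and $\widehat{N'}_k$, the label $l$, and the $l$-child $\hat{N}$ of $\hat{N}_k$. The lemma yields two cases: either (1) $\hat{N} \sim_i \widehat{N'}_k$, in which case I set $\widehat{N'} = \widehat{N'}_k$; or (2) there is an $l$-child $\widehat{N'}$ of $\widehat{N'}_k$ such that $\hat{N}$ and $\widehat{N'}$ are post-$crash_i$ nodes and $\hat{N} \sim_i \widehat{N'}$, in which case I take that $\widehat{N'}$. In both cases $\widehat{N'}$ is a descendant of $N'$ (being $\widehat{N'}_k$ itself or a child of it), so the induction closes.

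The one invariant that needs explicit care is the post-$crash_i$ status of both nodes, and it is where I would slow down. In case (2) the lemma hands both facts to us directly. In case (1) the lemma gives only the similarity $\hat{N} \sim_i \widehat{N'}_k$, so I must separately argue that $\hat{N}$ is a post-$crash_i$ node: this follows from the observation recorded in the discussion preceding Lemma \ref{lem:similarModuloChild} (and used in its Case 5(c)) that every descendant of a post-$crash_i$ node is itself a post-$crash_i$ node, since $\hat{N}_k$ is post-$crash_i$ by the inductive hypothesis and $\hat{N}$ is its child. The node $\widehat{N'}_k$ is post-$crash_i$ directly by the inductive hypothesis.

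I expect no serious obstacle here; the content is entirely carried by Lemma \ref{lem:similarModuloChild}, and the theorem is strictly a transitive iteration of it. The only point worth flagging is that, unlike Lemma \ref{lem:sameConfigSameExtensionLength}, this theorem does \emph{not} require the path from $N'$ to $\widehat{N'}$ to have the same length as the path from $N$ to $\hat{N}$. This is essential, because the ``stay put'' alternative (1) of Lemma \ref{lem:similarModuloChild} leaves $\widehat{N'}$ unadvanced while $\hat{N}$ moves one step deeper; were equal path lengths demanded, this branch would break the induction. Since only ``$\widehat{N'}$ is a descendant of $N'$'' is required, the mismatch is harmless, and the proof is a routine induction.
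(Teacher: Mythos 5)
Your proposal is correct and follows essentially the same argument as the paper's own proof: induction on the distance from $N$ to $\hat{N}$, with the base case $\hat{N}=N$, $\widehat{N'}=N'$, and the inductive step obtained by applying Lemma \ref{lem:similarModuloChild} to $\hat{N}_k$ and $\widehat{N'}_k$ and taking $\widehat{N'}$ to be either $\widehat{N'}_k$ (case 1) or the provided $l$-child (case 2). In fact, your explicit justification that $\hat{N}$ remains a post-$crash_i$ node in case (1) --- via the observation that children of post-$crash_i$ nodes are post-$crash_i$ --- is a detail the paper's proof passes over silently, so your write-up is, if anything, slightly more careful.
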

\begin{proof}
 Fix $N$, $N'$, and $i$ as in the hypothesis of the lemma; thus, $N$ and $N'$ are post-$crash_i$ nodes and $N \sim_i N'$. The proof is by induction on the distance from $N$ to $\hat{N}$.
 
 \emph{Base Case.} 
 Let the distance from $N$ to $\hat{N}$ be $0$. That is, $N=\hat{N}$. Trivially, we see that $\widehat{N'} = N'$ satisfies the lemma.
 
 \emph{Inductive Hypothesis.} 
 For every descendant $\hat{N}$ of $N$ at a distance $k$ from $N$, there exists a descendant $\widehat{N'}$ of $N'$ such that $\hat{N}$ and $\widehat{N'}$ are post-$crash_i$ nodes and $\hat{N} \sim_i \widehat{N'}$.
 
 \emph{Inductive Step.}
 Fix $\hat{N}$ to be a descendant of $N$ at a distance $k+1$ from $N$. Let $\hat{N}_k$ be the parent of $\hat{N}$. Note that, by construction, $\hat{N}_k$ is a descendant of $N$ at a distance $k$ from $N$. Let $l$ be the label of edge $E$ that connects $\hat{N}_k$ and $\hat{N}$.
 By the inductive hypothesis, there exists a descendant $\widehat{N'}_k$ of $N'$ such that $\hat{N}_k$ and $\widehat{N'}_k$ are post-$crash_i$ nodes and $\hat{N}_k \sim_i \widehat{N'}_k$.
  Invoking Lemma \ref{lem:similarModuloChild}, we know that at least one of the following is true. (1) $\hat{N} \sim_i \widehat{N'}_k$. (2) there exists an $l$-child $\widehat{N'}$ of $\widehat{N'}_k$ such that $\hat{N}$ and $\widehat{N'}$ are post-$crash_i$ nodes and $\hat{N} \sim_i \widehat{N'}$. In other words, there exists a descendant $\widehat{N'}$ of $N'$ such that  $\hat{N}$ and $\widehat{N'}$ are post-$crash_i$ nodes and $\hat{N} \sim_i \widehat{N'}$.
 
 This completes the induction and the proof.
\end{proof}

\subsection{Properties of Task Trees from Different Observations}\label{subset:PropertiesOfTreesFromPrefixObservations}
Next, we present the properties of task trees from two observations $G$ and $G'$, where $G'$ is a prefix of $G$. Lemma \ref{lem:prefixObsYieldsSubtree} states that for every path in $\mathcal{R}^{G'}$ that does not contain any edges with $\bot$ action tags, a corresponding path of the same length with the same tags and labels on the corresponding nodes and edges exists in $\mathcal{R}^{G}$. Corollaries \ref{cor:prefixObsPrefixExe} and \ref{cor:NonBotExistsFromPrefix} state that for every node in $\mathcal{R}^{G'}$, there exist nodes in $\mathcal{R}^{G}$ such that both nodes represent the same execution of the system $\mathcal{S}$. Lemma \ref{lem:UniqueNodeMappingFromPrefixObsToSuperObs} proves a stronger property about non-$\bot$ nodes; specifically, it shows that for every non-$\bot$ node in $G'$, there is a corresponding node, called a ``replica'', in $G'$ such that both the nodes have the identical paths from the $\top$ node in their respective execution trees.

Lemma \ref{lem:superObsYieldsSuperTree} states that for every path $p$ in $\mathcal{R}^G$ such that the sequence of distinct non-$(\bot,0,\bot)$ vertex labels in $p$ is a path in $G'$, there exists a corresponding path in $\mathcal{R}^{G'}$ of the same length with the same tags and labels on the corresponding nodes and edges. 

We extend the result from Lemma \ref{lem:UniqueNodeMappingFromPrefixObsToSuperObs} to execution trees constructed from a sequence of observations, where each is a prefix of the next observation in the sequence; in Lemma \ref{lem:nodePersistsOneHop}, we show that non-$\bot$ nodes persist from one execution tree to the next, and in Lemma \ref{lem:nodePersistsForEver}, we show that they persist in an infinite suffix of the execution trees. 

\begin{lemma}\label{lem:prefixObsYieldsSubtree}
Let an observation $G'$ be a prefix of an observation $G$. Fix any path $p'$ in $\mathcal{R}^{G'}$ that starts at the root node and does not contain edges with $\bot$ action tags. Let the length of $p'$ be $k$ edges. Then there exists a ``corresponding'' path $p$ in $\mathcal{R}^G$ of length $k$ such that the following is true. 
(1) For every positive integer $x \leq k+1$, let $N'_x$ be the $x$-th node in $p'$ and let $N_x$ be the $x$-th node in $p$. Then the tags of $N'_x$ are identical to the tags of $N_x$. (2) For any positive integer $x \leq k$, let $E'_x$ be the $x$-th edge in $p'$, and let $E_x$ be the $x$-th edge in $p$. Then the tags and labels of $E'_x$ are identical to the tags and labels of $E_x$.
\end{lemma}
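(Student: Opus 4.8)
My plan is to induct on the length $k$ of the path $p'$, constructing the corresponding path $p$ in $\mathcal{R}^G$ edge-by-edge while maintaining the invariant that the $x$-th node of $p$ has tags (configuration tag, vertex tag) identical to the $x$-th node of $p'$, and the $x$-th edge of $p$ has label and action tag identical to those of $p'$. The base case is immediate: both trees are rooted at a $\top$ node whose configuration tag is the unique initial state of $\mathcal{S}$ and whose vertex tag is the null vertex $(\bot,0,\bot)$, so $N_1 = \top = N'_1$ with identical tags.

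\textbf{The inductive step.}
Assume I have built the prefix of $p$ of length $x-1$ matching the prefix of $p'$, so in particular $c_{N_x} = c_{N'_x}$ and $v_{N_x} = v_{N'_x}$. Let $E'_x$ be the $x$-th edge of $p'$, with label $l$ and (non-$\bot$, by hypothesis) action tag $a_{E'_x}$, leading to $N'_{x+1}$. I must produce an $l$-edge $E_x$ out of $N_x$ in $\mathcal{R}^G$ with the same action tag and vertex tag, leading to a node $N_{x+1}$ with matching configuration and vertex tags. Since the configuration tags agree and both systems share the same underlying automaton $\mathcal{S}$ with the same task-deterministic transition structure, the action enabled for any task label in $c_{N_x}$ is determined identically; the subtlety is entirely in the $FD$-edges and in the condition (2) governing $Proc$/$Env$ edges, because those depend on the \emph{graph structure} of $G$ versus $G'$. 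The key is that $G'$ is a \emph{prefix} of $G$: every vertex of $G'$ is a vertex of $G$, and (crucially) has exactly the same set of incoming edges in $G$ as in $G'$.

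\textbf{The main obstacle: matching the $FD$-edge structure.}
The heart of the argument is showing that the outgoing-edge conditions in the construction agree at $N_x$ across the two trees. Since $a_{E'_x} \neq \bot$, if $l = FD_i$ then by Lemma~\ref{prop:FDchildVertexEdgeExists} the vertex tag $v_{N'_{x+1}}$ is a genuine vertex $(i,*,a_{E'_x})$ of $G'$ that is either a root-level vertex (when $v_{N'_x} = (\bot,0,\bot)$) or has an incoming edge from $v_{N'_x}$ in $G'$. Because $G'$ is a prefix of $G$, this same vertex lies in $G$ with the same incoming edge from $v_{N'_x} = v_{N_x}$, so $\mathcal{R}^G$ contains a matching $FD_i$-edge $E_x$ out of $N_x$ with $v_{E_x} = v_{N'_{x+1}}$ and action tag $a_{E_x} = a_{E'_x}$. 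For the $Proc_i$/$Env_i$ case with non-$\bot$ action tag, condition (2) requires that $G$ (resp.\ $G'$) contain an edge from $v_{N_x}$ to \emph{some} vertex with location $i$; I must argue this existential condition transfers. The delicate point is that a vertex with location $i$ reachable from $v_{N_x}$ in $G'$ remains present with its incoming edges intact in $G$ (prefix property), giving the forward direction; since $a_{E'_x}\neq\bot$ the condition holds in $G'$ and hence in $G$. In every case the resulting child node $N_{x+1}$ has configuration tag obtained by applying the common action $a_{E_x}=a_{E'_x}$ to the common state $c_{N_x}=c_{N'_x}$, hence $c_{N_{x+1}} = c_{N'_{x+1}}$, and $v_{N_{x+1}} = v_{E_x} = v_{N'_{x+1}}$. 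This closes the induction, and appealing to Lemma~\ref{lem:childNodeUniqueByLabelAndVertexTag} confirms that the matching child is the unique one determined by its label and vertex tag, so $p$ is well-defined.
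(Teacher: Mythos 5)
Your proposal is correct and follows essentially the same route as the paper's proof: induction on the path length, matching configuration and vertex tags at each step, with the case split between $FD$-edges (where the prefix property of observations transfers the vertex and its incoming edge from $G'$ to $G$) and task edges (where task determinism plus the transferred graph condition yields the matching action tag). If anything, your explicit treatment of how condition (2) for $Proc$/$Env$ edges transfers from $G'$ to $G$ is slightly more careful than the paper's Case 2, which leaves that point implicit.
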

\begin{proof}
Fix $G'$ and $G$ as in the hypothesis of the lemma. The proof follows from a simple induction on the length $k$ of path $p'$.

\emph{Base case.} $k=0$. There exists a single path $p'$ consisting of $k$ edges that starts at the root node of $\mathcal{R}^{G'}$. Let $N'_0$ be the root node of $\mathcal{R}^{G'}$ with vertex tag $v_{N'_0} = (\bot, 0, \bot)$ and config tag $c_{N'_0}$ is the start state of  system $\mathcal{S}$. Similarly, there exists a single path $p$ consisting of $k$ edges that starts at the root node of $\mathcal{R}^G$ and contains no edges. $N_0$ is the root node of $\mathcal{R}^G$ with vertex tag $v_{N_0}= (\bot, 0, \bot)$ and config tag $c_{N_0}$ is the start state of  system $\mathcal{S}$.

\emph{Inductive hypothesis.} For some non-negative integer $k$, for every path $p'$ consisting of $k$ edges in $\mathcal{R}^{G'}$ that starts at the root node and does not contain edges with $\bot$ action tags. Then there exists a ``corresponding'' path $p$ in $\mathcal{R}^G$ consisting of $k$ edges such that the following is true. 
(1) For every positive integer $x \leq k+1$, let $N'_x$ be the $x$-th node in $p'$ and let $N_x$ be the $x$-th node in $p$. Then the tags of $N'_x$ are identical to the tags of $N_x$. (2) For any positive integer $x \leq k$, let $E'_x$ be the $x$-th edge in $p'$, and let $E_x$ be the $x$-th edge in $p$. Then the tags and labels of $E'_x$ are identical to the tags and labels of $E_x$.

\emph{Inductive step.} Fix any path $p'$ consisting of $k+1$ edges that starts at the root node of $\mathcal{R}^{G'}$ and does not contain edges with $\bot$ action tags. Let $p'_{pre}$ be the prefix of $p'$ that consists of $k$ edges. By the inductive hypotheses, there exists a ``corresponding'' path $p_{pre}$ in $\mathcal{R}^G$ consisting of $k$ edges such that the following is true. 
(1) For every positive integer $x \leq k+1$, let $N'_x$ be the $x$-th node in $p'_{pre}$ and let $N_x$ be the $x$-th node in $p_{pre}$. Then the tags of $N'_x$ are identical to the tags of $N_x$. (2) For any positive integer $x \leq k$, let $E'_x$ be the $x$-th edge in $p'_{pre}$, and let $E_x$ be the $x$-th edge in $p_{pre}$. Then the tags and labels of $E'_x$ are identical to the tags and labels of $E_x$.

The last node of $p'_{pre}$ and $p_{pre}$ are $N'_{k+1}$ and $N_{k+1}$, respectively. By the inductive hypotheses, $c_{N'_{k+1}} = c_{N_{k+1}}$ and $v_{N'_{k+1}} = v_{N_{k+1}}$. Consider the node $N''$ that is the last node of path $p'$. By construction, there is an edge $E''$ from $N'_{k+1}$ to $N''$, and furthermore, $a_{E''} \neq \bot$. Let the label of $E''$ be $l''$. Note that either (1) $l''$ is of the form $FD_*$, or (2) $l'' \in T$ is a task in system $\mathcal{S}$. We consider each case separately.

\emph{Case 1.} $l''$ is of the form $FD_*$. Since $a_{E''} \neq \bot$, we know from the construction of the task tree that the vertex tags $v_{E''} = v_{N''}$, $v_{E''}$ is of the form $(i,k,a_{E''})$, where $i$ is a location and $k$ is a positive integer. Furthermore, we know that $v_{E''}$ is a vertex in $G$', and either (a) $v_{N'_{k+1}} = (\bot, 0, \bot)$ or (b) $G'$ contains an edge from $v_{N'_{k+1}}$ to $v_{E''}$. 
From the inductive hypothesis we know that $v_{N'_{k+1}} = v_{N_{k+1}}$. Since $G'$ is a prefix of $G$, we know that $G$ contains the vertex $v_{E''}$. 

If $v_{N'_{k+1}} = (\bot, 0, \bot)$, then $v_{N_{k+1}} = (\bot, 0, \bot)$. Otherwise, $G'$ contains an edge from $v_{N'_{k+1}}$ to $v_{E''}$, and since $G'$ is a prefix of $G$, $G$ contains an edge from $v_{N'_{k+1}}$ to $v_{E''}$.
In both cases, we see that, from the construction of the task tree, $\mathcal{R}^G$ contains an $l''$-edge $\widehat{E''}$ from $N_{k+1}$ to a node $\widehat{N''}$ such that $a_{\widehat{E''}} = a_{E''}$ and $v_{\widehat{E''}} = v_{\widehat{N''}} = v_{E''}$. From the inductive hypothesis, we know that $c_{N'_{k+1}} = c_{N_{k+1}}$.  Since $c_{N''}$ is obtained by applying $a_{E''}$ to $c_{N'_{k+1}}$, and $c_{\widehat{N''}}$ is obtained by applying $a_{\widehat{E''}}$ to $c_{N_{k+1}}$, we see that $c_{N''} = c_{\widehat{N''}}$.

\emph{Case 2.} $l'' \in T$. Since $c_{N'_{k+1}} = c_{N_{k+1}}$, $v_{N'_{k+1}} = v_{N_{k+1}}$, and the system is task deterministic, we know that there exists an outgoing $l''$ edge $\widehat{E''}$ from $N_{k+1}$ to a node $\widehat{N''}$ such that $a_{\widehat{E''}} = a_{E''}$. Since $c_{N''}$ is obtained by applying $a_{E''}$ to $c_{N'_{k+1}}$, and $c_{\widehat{N''}}$ is obtained by applying $a_{\widehat{E''}}$ to $c_{N_{k+1}}$, we see that $c_{N''} = c_{\widehat{N''}}$. Also, by construction, $v_{N''} = v_{N_{k+1}} = v_{E''}$ and $v_{\widehat{N''}} = v_{N'_{k+1}} = v_{\widehat{E''}}$; therefore, $v_{E''} = v_{\widehat{E''}}$ and $v_{N''} = v_{\widehat{N''}}$.

Therefore, in all cases there exists an $l''$-edge $\widehat{E''}$ of $N_{k+1}$ to a node $\widehat{N''}$ in $\mathcal{R}^G$ such that  the tags  of $N''$ and $\widehat{N''}$ are identical, and the tags and labels of $E''$ and $\widehat{E''}$ are identical. Recall that $E''$ is an $l''$-edge from $N'_{k+1}$ to $N''$.

Recall that $p'$ is a path consisting of $k+1$ edges whose prefix is path $p'_{pre}$ consisting of $k$ edges starting from the root node in $\mathcal{R}^{G'}$  and does not contain edges with $\bot$ action tags, and $p_{pre}$ is a path consisting of $k$ edges starting from the root node in $\mathcal{R}^{G}$  and does not contain edges with $\bot$ action tags. Furthermore, the last node of $p'_{pre}$ is $N'_{k+1}$ and the last node of $p_{pre}$ is $N_{k+1}$. Also recall that, (1) for every positive integer $x \leq k+1$, the tags of $N'_x$ are identical to the tags of $N_x$, and (2) for every positive integer $x \leq k$, the tags and labels of $E'_x$ are identical to the tags and labels of $E_x$. Therefore, we extend $p_{pre}$ by edge $\widehat{E''}$ to obtain a path $p$ such that the following is true.

(1) For every positive integer $x \leq k+2$, let $N'_x$ be the $x$-th node in $p'$ and let $N_x$ be the $x$-th node in $p$. Then the tags of $N'_x$ are identical to the tags of $N_x$. (2) For any positive integer $x \leq k+1$, let $E'_x$ be the $x$-th edge in $p'$, and let $E_x$ be the $x$-th edge in $p$. Then the tags and labels of $E'_x$ are identical to the tags and labels of $E_x$.

This completes the induction.
\end{proof}

\begin{corollary}\label{cor:prefixObsPrefixExe}
If an observation $G'$ is a prefix of an observation $G$, then for every node $N'$ in $\mathcal{R}^{G'}$, there exists a node $N$ in $\mathcal{R}^G$ such that $exe(N') = exe(N)$ and $v_{N'} = v_N$.
\end{corollary}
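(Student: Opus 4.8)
The plan is to reduce this to a combination of Corollary~\ref{cor:NonBotNodesExist} and Lemma~\ref{lem:prefixObsYieldsSubtree}. The obstacle is that Lemma~\ref{lem:prefixObsYieldsSubtree} only transports paths in $\mathcal{R}^{G'}$ that contain no edges with $\bot$ action tags, whereas an arbitrary node $N'$ in $\mathcal{R}^{G'}$ may be reached by a path with many $\bot$-tagged edges. So the first move is to pass from $N'$ to a companion non-$\bot$ node before invoking the transport lemma.

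First, I would fix $N'$ in $\mathcal{R}^{G'}$ and apply Corollary~\ref{cor:NonBotNodesExist} within the tree $\mathcal{R}^{G'}$ to obtain a non-$\bot$ node $N'_{\not\bot}$ in $\mathcal{R}^{G'}$ with $exe(N'_{\not\bot}) = exe(N')$ and $v_{N'_{\not\bot}} = v_{N'}$. By definition of a non-$\bot$ node, the path $p'$ from $\top$ to $N'_{\not\bot}$ in $\mathcal{R}^{G'}$ contains no edge whose action tag is $\bot$, which is exactly the hypothesis required to apply the transport lemma.

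Next, I would invoke Lemma~\ref{lem:prefixObsYieldsSubtree} on the path $p'$ (using that $G'$ is a prefix of $G$) to obtain a corresponding path $p$ of the same length in $\mathcal{R}^{G}$, with the property that corresponding nodes carry identical tags and corresponding edges carry identical action tags and labels. Let $N$ be the final node of $p$. The conclusion of Lemma~\ref{lem:prefixObsYieldsSubtree} immediately gives $v_N = v_{N'_{\not\bot}}$, and since every edge along $p'$ (hence along $p$) has a non-$\bot$ action tag, the sequences of alternating config tags and action tags defining $exe(N'_{\not\bot})$ and $exe(N)$ coincide, so $exe(N) = exe(N'_{\not\bot})$.

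Finally, I would chain the two equalities: $exe(N) = exe(N'_{\not\bot}) = exe(N')$ and $v_N = v_{N'_{\not\bot}} = v_{N'}$, which is precisely the claim. I do not anticipate a genuine difficulty here; the only point that needs care is the first step (realizing that one must route through a non-$\bot$ node so that the transport lemma is applicable), together with the routine observation that identical action and config tags along matched paths yield identical $exe$ values.
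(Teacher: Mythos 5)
Your proposal is correct and follows essentially the same route as the paper: the paper's proof also first replaces $N'$ by a non-$\bot$ companion node (via Lemma~\ref{lem:everyDescendantNoBot}, of which your cited Corollary~\ref{cor:NonBotNodesExist} is just the root-node specialization) and then transports its path to $\mathcal{R}^G$ using Lemma~\ref{lem:prefixObsYieldsSubtree} before chaining the equalities.
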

\begin{proof}
Fix a node $N'$ in $\mathcal{R}^{G'}$. By Lemma \ref{lem:everyDescendantNoBot}, we know that there exists a node $N'_{\not\bot}$ in $\mathcal{R}^{G'}$ such that $v_{N'} = v_{N'_{\not\bot}}$ and $exe(N') = exe(N'_{\not\bot})$, and the path from the root to $N'_{\not\bot}$ does not contain any edges with $\bot$ action tag. Invoking Lemma \ref{lem:prefixObsYieldsSubtree}, we know that there exists a node $N$ in $\mathcal{R}^G$ such that the path from the root to $N$ in $\mathcal{R}^G$ and the path from root to $N'_{\not\bot}$ in $\mathcal{R}^{G'}$ contain the same sequence of action tags and vertex tags. Therefore, $exe(N) = exe(N'_{\not\bot})$ and $v_N = v_{N'_{\not\bot}}$. Therefore, $exe(N') = exe(N)$ and $v_{N'} = v_N$.
\end{proof}

\begin{corollary}\label{cor:NonBotExistsFromPrefix}
If an observation $G'$ is a prefix of an observation $G$, then for every node $N'$ in $\mathcal{R}^{G'}$, there exists a non-$\bot$ node $N$ in $\mathcal{R}^G$ such that $exe(N') = exe(N)$ and $v_{N'} = v_{N}$.
\end{corollary}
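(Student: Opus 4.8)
The plan is to obtain the desired non-$\bot$ node by chaining the two results already established, namely Corollary~\ref{cor:prefixObsPrefixExe} and Corollary~\ref{cor:NonBotNodesExist}. The former lets us transport a node from the smaller tree $\mathcal{R}^{G'}$ into the larger tree $\mathcal{R}^G$ while preserving both its execution and its vertex tag; the latter lets us replace any node of $\mathcal{R}^G$ by a non-$\bot$ node representing the same execution, within the same tree. The new corollary is exactly the composition of these two transformations.

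Concretely, first I would fix an arbitrary node $N'$ in $\mathcal{R}^{G'}$ and invoke Corollary~\ref{cor:prefixObsPrefixExe} (which applies since $G'$ is a prefix of $G$) to produce a node $N_0$ in $\mathcal{R}^G$ with $exe(N_0) = exe(N')$ and $v_{N_0} = v_{N'}$. The node $N_0$ need not be a non-$\bot$ node, so the second step would apply Corollary~\ref{cor:NonBotNodesExist} to $N_0$ inside $\mathcal{R}^G$, yielding a non-$\bot$ node $N$ of $\mathcal{R}^G$ with $exe(N) = exe(N_0)$ and $v_N = v_{N_0}$. Transitivity of equality then gives $exe(N) = exe(N') $ and $v_N = v_{N'}$, with $N$ a non-$\bot$ node of $\mathcal{R}^G$, as required.

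The only point requiring any care is recognizing that Corollary~\ref{cor:prefixObsPrefixExe} by itself does \emph{not} deliver a non-$\bot$ node, so the extra application of Corollary~\ref{cor:NonBotNodesExist} is what upgrades the conclusion; this is the sole substantive difference from the preceding corollary and is the reason the statement is worth recording separately. I expect no genuine obstacle here, since both inputs are already proved and the argument is purely a bookkeeping composition. As an alternative (and a useful sanity check), one could instead inspect the proof of Corollary~\ref{cor:prefixObsPrefixExe} directly: there the witnessing node $N$ is built by Lemma~\ref{lem:prefixObsYieldsSubtree} from the non-$\bot$ node $N'_{\not\bot}$ of $\mathcal{R}^{G'}$, and since that lemma copies action tags faithfully and $N'_{\not\bot}$ has no $\bot$ action tags on its root path, the produced node $N$ is \emph{already} non-$\bot$; this gives the result without the extra appeal to Corollary~\ref{cor:NonBotNodesExist}, but the chaining argument is cleaner to state and relies only on the corollary statements.
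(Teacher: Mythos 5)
Your proposal is correct and matches the paper's own proof exactly: the paper likewise chains Corollary~\ref{cor:prefixObsPrefixExe} (to obtain a node $N_0$ in $\mathcal{R}^G$ with $exe(N_0)=exe(N')$ and $v_{N_0}=v_{N'}$) with Corollary~\ref{cor:NonBotNodesExist} (to upgrade $N_0$ to a non-$\bot$ node $N$), concluding by transitivity. Your observation that the witnessing node in the proof of Corollary~\ref{cor:prefixObsPrefixExe} is already non-$\bot$ is a valid shortcut, but the composition argument is the one the paper records.
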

\begin{proof}
Fix $G'$, $G$ and $N'$ as in the hypotheses of the corollary. Invoking Lemma \ref{cor:prefixObsPrefixExe}, we know there exists a node $N_0$ in $\mathcal{R}^G$ such that $exe(N') = exe(N_0)$ and $v_{N'} = v_{N_0}$. Invoking Corollary \ref{cor:NonBotNodesExist} on node $N_0$, we know that there exists a non-$\bot$ node $N$ in $\mathcal{R}^G$ such that $exe(N_0) = exe(N)$ and $v_{N_0} = v_{N}$. In other words, there exists a non-$\bot$ node $N$ in $\mathcal{R}^G$ such that $exe(N') = exe(N)$ and $v_{N'} = v_{N}$.
\end{proof}

\begin{lemma}\label{lem:UniqueNodeMappingFromPrefixObsToSuperObs}
If an observation $G'$ is a prefix of an observation $G$, then for every node non-$\bot$ node $N'$ in $\mathcal{R}^{G'}$, there exists a unique non-$\bot$ node $N$ in $\mathcal{R}^G$ such that the sequence of labels and vertex tags of the edges from $\top$ to $N'$ in $\mathcal{R}^{G'}$ is identical to the sequence of labels and vertex tags of the edges from $\top$ to $N$ in $\mathcal{R}^G$.
\end{lemma}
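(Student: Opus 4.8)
The plan is to combine the two immediately preceding results: Lemma~\ref{lem:prefixObsYieldsSubtree}, which lifts a $\bot$-free root path of $\mathcal{R}^{G'}$ to a matching root path of $\mathcal{R}^G$, and Lemma~\ref{lem:LabelsAndVertexAgsDenoteUniqueNode}, which says a non-$\bot$ node is pinned down by the label/vertex-tag sequence of its root path. First I would fix a non-$\bot$ node $N'$ in $\mathcal{R}^{G'}$ and let $p'$ be the path from $\top$ to $N'$. By the definition of a non-$\bot$ node, $p'$ contains no edge whose action tag is $\bot$, so $p'$ is exactly the kind of path to which Lemma~\ref{lem:prefixObsYieldsSubtree} applies.

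For existence, I would apply Lemma~\ref{lem:prefixObsYieldsSubtree} to $p'$ to obtain a ``corresponding'' path $p$ in $\mathcal{R}^G$ of the same length, whose $x$-th node carries the same tags as the $x$-th node of $p'$ and whose $x$-th edge carries the same label and the same tags (in particular both the vertex tag $v_E$ and the action tag $a_E$) as the $x$-th edge of $p'$. Let $N$ be the final node of $p$. Since every edge of $p'$ has a non-$\bot$ action tag and each corresponding edge of $p$ has the identical action tag, no edge of $p$ has a $\bot$ action tag; hence $N$ is a non-$\bot$ node of $\mathcal{R}^G$. Because Lemma~\ref{lem:prefixObsYieldsSubtree} preserves both labels and vertex tags edge-by-edge, the sequence of labels and vertex tags along $p$ agrees with the sequence of labels and vertex tags along $p'$, which is precisely the existence claim.

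For uniqueness, I would suppose $N_1$ and $N_2$ are both non-$\bot$ nodes of $\mathcal{R}^G$ whose root-to-node label/vertex-tag sequences each equal that of $N'$. Then $N_1$ and $N_2$ have the same label/vertex-tag sequence, so Lemma~\ref{lem:LabelsAndVertexAgsDenoteUniqueNode} forces $N_1 = N_2$. This identifies the node $N$ uniquely and completes the argument. I do not expect any substantive obstacle here, since the two cited lemmas carry essentially all the content; the only point that needs care is confirming that the vertex tag is one of the edge tags preserved by Lemma~\ref{lem:prefixObsYieldsSubtree} (it is, as an edge's tags comprise both its vertex tag and its action tag), so that the lifted path genuinely reproduces the label/vertex-tag sequence rather than only the action and configuration tags.
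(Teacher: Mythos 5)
Your proof is correct and matches the paper's own argument exactly: existence is obtained by lifting the $\bot$-free root path of $N'$ via Lemma~\ref{lem:prefixObsYieldsSubtree}, and uniqueness follows from Lemma~\ref{lem:LabelsAndVertexAgsDenoteUniqueNode}. If anything, your write-up is slightly more careful than the paper's, since you explicitly verify that the lifted node is non-$\bot$ and that the vertex tags are among the edge tags preserved by the lifting lemma.
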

\begin{proof}
Fix $G$, $G'$, and $N$ as in the hypothesis of the lemma. Applying Lemma \ref{lem:prefixObsYieldsSubtree} to the path in $\mathcal{R}^{G'}$ from $\top$ to $N$ we conclude at least one non-$\bot$ node $N'$ in $\mathcal{R}^G$ such that the sequence of labels and vertex tags of the edges from $\top$ to $N'$ in $\mathcal{R}^{G'}$ is identical to the sequence of labels and vertex tags of the edges from $\top$ to $N$ in $\mathcal{R}^G$. Fix any such node $N$. Applying Lemma \ref{lem:LabelsAndVertexAgsDenoteUniqueNode} to $N$, we conclude that $N$ is unique.
\end{proof}

\begin{lemma}\label{lem:superObsYieldsSuperTree}
Let an observation $G'$ be a prefix of an observation $G$. Fix any path $p$ in $\mathcal{R}^G$ such that (1) $p$ starts at the root node and (2) the sequence of distinct non-$(\bot,0,\bot)$ vertex tags in $p$ is the sequence of vertices in some path in $G'$. Let the length of $p$ be $k$ edges. 
Then there exists a ``corresponding'' path $p'$ in $\mathcal{R}^{G'}$ of length $k$ such that the following is true. 
(1) For every positive integer $x \leq k+1$, let $N'_x$ be the $x$-th node in $p'$ and let $N_x$ be the $x$-th node in $p$. Then the tags of $N'_x$ are identical to the tags of $N_x$.
(2) For any positive integer $x \leq k$, let $E'_x$ be the $x$-th edge in $p'$, and let $E_x$ be the $x$-th edge in $p$. Then the tags and labels of $E'_x$ are identical to the tags and labels of $E_x$.
\end{lemma}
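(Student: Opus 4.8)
The plan is to prove this by induction on the length $k$ of the path $p$, mirroring the inductive construction used for Lemma \ref{lem:prefixObsYieldsSubtree} but run in the opposite direction (from $\mathcal{R}^G$ down to $\mathcal{R}^{G'}$). I would build the corresponding path $p'$ one edge at a time, maintaining as the inductive invariant that the $x$-th node of $p'$ and the $x$-th node of $p$ carry identical configuration and vertex tags, and that corresponding edges carry identical labels, action tags, and vertex tags. The base case $k=0$ is immediate: both roots are tagged with the null vertex $(\bot,0,\bot)$ and with the (unique) initial configuration of $\mathcal{S}$.

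For the inductive step, let $E$ be the last edge of $p$, outgoing from a node $N$ whose partner $N'$ in $p'$ satisfies $c_{N'}=c_N$ and $v_{N'}=v_N$ by the inductive hypothesis; I would case on the label $l$ of $E$. A $Chan$ edge is governed entirely by whether its task is enabled in the configuration, so task-determinism together with $c_{N'}=c_N$ reproduces it in $\mathcal{R}^{G'}$ with no reference to $G$. For an $FD$ edge with a non-$\bot$ action tag, the target vertex tag $v_E$ is one of the non-null vertex tags appearing on $p$, so condition (2) forces $v_E \in V'$; since $G'$ is a prefix, the incoming-edge set of $v_E$ is identical in $G'$ and $G$, so the witnessing edge $v_N\to v_E$ (or merely the presence of $v_E$, when $v_N$ is null) is inherited by $G'$, and the matching $FD$ edge exists in $\mathcal{R}^{G'}$. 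For an $FD_i$ edge with a $\bot$ tag we have $v_E=v_N$, and because $G'$ is a subgraph of $G$ the absence in $G$ of a location-$i$ successor of $v_N$ forces its absence in $G'$, so the corresponding $\mathcal{R}^{G'}$ edge is $\bot$ as well.

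The hard part will be a $Proc_i$ or $Env_i$ edge carrying a non-$\bot$ action tag. Its non-$\bot$ status in $\mathcal{R}^G$ records that $G$ has an edge from $v_N$ to \emph{some} location-$i$ vertex, and $\mathcal{R}^{G'}$ will reproduce a non-$\bot$ tag only if $G'$ likewise has an edge from $v_N$ to a location-$i$ vertex; once that is established, task-determinism and $c_{N'}=c_N$ force the same enabled action to be selected. This is precisely where condition (2) must carry the argument: if a vertex tag $w$ with location $i$ occurs on $p$ strictly after $N$, then $w\in V'$ by condition (2), and since the distinct non-null vertex tags of $p$ form a path of $G'$ along which $v_N$ precedes $w$, the transitivity property (Property 5 of observations) supplies the edge $v_N\to w$ in $G'$ --- exactly the witness required. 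I expect the genuinely delicate point, and the crux of the lemma, to be ensuring that such a location-$i$ witness is present in $G'$ even when no later location-$i$ vertex happens to be recorded on $p$; pinning that down seems to require combining condition (2) with the fact that the location-$i$ vertices of the prefix $G'$ form an initial segment of those of $G$, and this is the step I would write out with the most care. The remaining bookkeeping --- that applying the shared action tag to equal configurations yields equal child configurations, and that the child vertex tags agree --- then closes the induction.
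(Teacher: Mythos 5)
Your instinct about where the difficulty lies is exactly right, but the step you leave open cannot be closed, because the lemma as stated is false. Consider the system of Section~\ref{subsec:consensusSystem} (the setting in which this lemma is actually invoked), and let $G$ be the observation with two vertices $v_1=(1,1,e_1)$ and $w=(2,1,e_2)$ and the single edge $(v_1,w)$, and let $G'$ be the observation consisting of $v_1$ alone. Then $G'$ is a prefix of $G$, since $v_1$ has no incoming edges in either graph. In $\mathcal{R}^G$, take $p$ to be $\top, E_1, N_1, E_2, N_2$, where $E_1$ is the $FD_1$-edge with vertex tag $v_1$ (action tag $e_1$) and $E_2$ is the $Env_{2,0}$-edge out of $N_1$: the action $propose(0)_2$ is enabled in $c_{N_1}$ (an AFD output at location $1$ does not change the state of $\mathcal{E}_{C,2}$), and $G$ contains the edge $(v_1,w)$ with $loc(w)=2$, so the construction sets $a_{E_2}=propose(0)_2\neq\bot$. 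The sequence of distinct non-$(\bot,0,\bot)$ vertex tags on $p$ is just $v_1$, trivially a path in $G'$, so $p$ satisfies both hypotheses of the lemma. Yet no corresponding path exists in $\mathcal{R}^{G'}$: the $Env_{2,0}$-edge out of the node corresponding to $N_1$ has action tag $\bot$, because $G'$ contains no vertex at location $2$ at all. Your hoped-for rescue---that the location-$i$ vertices of the prefix $G'$ form an initial segment of those of $G$---is true but powerless: that initial segment can be empty, and hypothesis (2) cannot prevent this, since it only constrains vertices that appear as vertex tags on $p$, while a $Proc_i$ or $Env_{i,x}$ edge contributes no new vertex tag. (The same example exposes a second defect you do not address: $\mathcal{R}^{G'}$ has height $|V'|=1$, so it contains no path of length $2$ of any kind.)

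So the problem is not a missing trick in your induction. Your treatment of the root, the $Chan$ edges, the $FD$ edges, and those $Proc$/$Env$ edges that do have a later same-location vertex tag on $p$ is sound, and it shows the induction goes through precisely when $G'$ supplies the needed witnesses: an edge in $G'$ from $v_N$ to a location-$i$ vertex for every non-$\bot$ $Proc_i$/$Env_{i,x}$ edge of $p$, plus enough vertices for $\mathcal{R}^{G'}$ to be deep enough. Those must be added as hypotheses; no proof of the statement as written can exist. For what it is worth, the paper does not close this gap either---its entire proof reads ``a simple induction on the length $k$ of path $p$''---and at the points where the lemma is used (Lemmas~\ref{lem:bivalentNodeFiniteTimeInAllGs} and~\ref{lem:univalentNodeFiniteTimeInAllGs}) it quietly imposes extra conditions on the prefix, such as containing sufficiently many vertices at live locations and all of $G$'s vertices at non-live locations, which is an implicit admission that bare path-containment does not suffice. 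By isolating the exact point of failure, your write-up is more informative than the proof it was meant to reproduce; the correct fix is to strengthen the lemma's hypotheses, not to search for a cleverer induction.
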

\begin{proof}
The proof follows from a simple induction on the length $k$ of path $p$. 
\end{proof}

Given any pair of observations $G$ and $G'$ such that $G'$ is a prefix of $G$, and given a non-$\bot$ node $N'$ in $G'$, we define the \emph{replica} of $N'$ from $G'$ in $G$ to be the unique node $N$ in $G$ that satisfies Lemma  \ref{lem:UniqueNodeMappingFromPrefixObsToSuperObs}. We use this notion of a replica node to talk about a non-$\bot$ node ``persisting'' over task trees constructed from a sequence observations such that each observation in the sequence is a prefix of each succeeding observation.

Given a non-$\bot$ node $N$ in a tree $\mathcal{R}^G$ and its replica $N'$ in a tree $\mathcal{R}^{G'}$, since the sequence of labels and vertex tags of the edges from $\top$ to $N$ in $\mathcal{R}^G$ is identical to the sequence of labels and vertex tags of the edges from $\top$ to $N'$ in $\mathcal{R}^{G'}$, we refer to any non-$\bot$ node $N$ and its replicas as $N$.

Let $\mathcal{G} = G_1,G_2,\ldots$ be an infinite sequence of finite observations such that (1) for any positive integer $x$, $G_x$ is a prefix of $G_{x+1}$, and (2) the sequence of observations converge to some observation $G^\infty$.
\begin{lemma}\label{lem:nodePersistsOneHop}
Fix a positive integer $x$ and suppose $N$ is a non-$\bot$ node  $\mathcal{R}^{G_x}$. Then $\mathcal{R}^{G_{x+1}}$ contains $N$.
\end{lemma}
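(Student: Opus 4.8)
The plan is to recognize this lemma as an immediate consequence of Lemma \ref{lem:UniqueNodeMappingFromPrefixObsToSuperObs} together with the replica convention established just before the statement. The key observation is that, by assumption (1) on the sequence $\mathcal{G}$, $G_x$ is a prefix of $G_{x+1}$, and both are finite observations, so the hypotheses of Lemma \ref{lem:UniqueNodeMappingFromPrefixObsToSuperObs} are satisfied with $G' = G_x$ and $G = G_{x+1}$.

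First I would invoke Lemma \ref{lem:UniqueNodeMappingFromPrefixObsToSuperObs} directly. Since $N$ is a non-$\bot$ node of $\mathcal{R}^{G_x}$ and $G_x$ is a prefix of $G_{x+1}$, the lemma guarantees a unique non-$\bot$ node $\hat{N}$ in $\mathcal{R}^{G_{x+1}}$ whose sequence of labels and vertex tags along the path from $\top$ coincides with that of $N$ in $\mathcal{R}^{G_x}$. This node $\hat{N}$ is, by definition, precisely the \emph{replica} of $N$ from $G_x$ in $G_{x+1}$.

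Finally, I would appeal to the replica convention, which identifies a non-$\bot$ node with all of its replicas and refers to them all as $N$. Under this convention, the existence of the replica $\hat{N}$ in $\mathcal{R}^{G_{x+1}}$ is exactly the assertion that $\mathcal{R}^{G_{x+1}}$ contains $N$, which completes the proof. There is no substantive obstacle here: the only hypothesis that needs checking is that $G_x$ is a prefix of $G_{x+1}$, which is furnished directly by the definition of the sequence $\mathcal{G}$, and all of the real work is already carried out in the previously-established Lemma \ref{lem:UniqueNodeMappingFromPrefixObsToSuperObs}. The lemma is therefore essentially a restatement of that result specialized to consecutive members of a convergent prefix-ordered sequence, phrased in the persistence language needed for the subsequent stabilization argument.
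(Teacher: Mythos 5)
Your proposal is correct and matches the paper's own proof, which simply cites Lemma \ref{lem:UniqueNodeMappingFromPrefixObsToSuperObs}; you have just made explicit the instantiation ($G' = G_x$, $G = G_{x+1}$, justified by the prefix ordering of the sequence $\mathcal{G}$) and the replica-identification convention that the paper leaves implicit.
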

\begin{proof}
Follows from Lemma \ref{lem:UniqueNodeMappingFromPrefixObsToSuperObs}.
\end{proof}

\begin{lemma}\label{lem:nodePersistsForEver}
Fix a positive integer $x$ and suppose $N$ is a non-$\bot$ node  $\mathcal{R}^{G_x}$. Then for any $x' > x$, $\mathcal{R}^{G_{x'}}$ contains $N$.
\end{lemma}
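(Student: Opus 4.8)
The plan is to prove the statement by a straightforward induction on $x'$, using Lemma \ref{lem:nodePersistsOneHop} as the engine for each single step, together with the fact that replicas of non-$\bot$ nodes are again non-$\bot$ nodes, so that the single-step lemma can be reapplied at every stage.

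First I would make precise what ``contains $N$'' means across several observations. Recall that, by the discussion following Lemma \ref{lem:UniqueNodeMappingFromPrefixObsToSuperObs}, when an observation $G'$ is a prefix of $G$ and $N'$ is a non-$\bot$ node of $\mathcal{R}^{G'}$, the \emph{replica} of $N'$ in $\mathcal{R}^{G}$ is the unique non-$\bot$ node whose sequence of edge labels and vertex tags from $\top$ agrees with that of $N'$; we identify $N'$ with its replica and write both as $N$. A key preliminary observation is that this replica is itself a non-$\bot$ node: by Lemma \ref{lem:prefixObsYieldsSubtree} the replica has exactly the same action tags along its root path as $N'$, and since $N'$ is non-$\bot$ none of these tags equals $\bot$. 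Hence the replica of a non-$\bot$ node is again a non-$\bot$ node, which is precisely what allows us to chain the one-hop result.

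The induction itself is on $x' > x$. For the base case $x' = x+1$, the claim that $\mathcal{R}^{G_{x+1}}$ contains $N$ is exactly Lemma \ref{lem:nodePersistsOneHop}, since by hypothesis $N$ is a non-$\bot$ node of $\mathcal{R}^{G_x}$ and, by property (1) of the sequence $\mathcal{G}$, $G_x$ is a prefix of $G_{x+1}$. For the inductive step, I would suppose that $\mathcal{R}^{G_{x'}}$ contains $N$ for some $x' > x$; by the preliminary observation, this occurrence of $N$ is a non-$\bot$ node of $\mathcal{R}^{G_{x'}}$. Since $G_{x'}$ is a prefix of $G_{x'+1}$ (again property (1) of $\mathcal{G}$), applying Lemma \ref{lem:nodePersistsOneHop} to the non-$\bot$ node $N$ in $\mathcal{R}^{G_{x'}}$ yields that $\mathcal{R}^{G_{x'+1}}$ contains $N$, completing the induction.

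I do not expect any genuine obstacle here; the content of the lemma is essentially bookkeeping that upgrades the one-hop persistence of Lemma \ref{lem:nodePersistsOneHop} to many-hop persistence. The only point requiring care --- and the place where a careless argument could slip --- is the well-definedness of the identification of $N$ with its replica along a chain of observations: one must confirm that the tracked node remains a non-$\bot$ node at each stage (so the hypothesis of the single-step lemma is met), and that the uniqueness clause of Lemma \ref{lem:UniqueNodeMappingFromPrefixObsToSuperObs}, together with the transitivity of the prefix relation, forces ``the replica of the replica'' to coincide with ``the replica,'' so that the node carried to stage $x'+1$ is unambiguously the same $N$. Both facts follow directly from Lemmas \ref{lem:prefixObsYieldsSubtree} and \ref{lem:UniqueNodeMappingFromPrefixObsToSuperObs}.
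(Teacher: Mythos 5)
Your proof is correct and follows essentially the same route as the paper's: the paper simply states that the result follows by induction on $x'-x$, with Lemma~\ref{lem:nodePersistsOneHop} supplying each step. Your additional care about the replica of a non-$\bot$ node being again non-$\bot$ (so the one-hop lemma can be reapplied) and about the well-definedness of the identification is exactly the bookkeeping the paper leaves implicit.
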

\begin{proof}
The proof follows from a simple induction on $x-x'$.
\end{proof}
 
\subsection{Fair Branches of Execution Trees}\label{subset:FairBranchesOfExecutionTrees}


In this subsection, we define fair branches of execution trees, and we establish the correspondence between fair branches
in the execution trees and fair traces of system $\mathcal{S}$.\footnote{Recall that $\mathcal{S}$ consists of the process automata, the environment automaton, and the the channel automata.} 

We define a branch of an infinite task tree $\mathcal{R}^G$ of the observation $G$ to be a \emph{fair branch} if, for each label $l$, the branch contains an infinite number of edges labeled $l$. Therefore, a fair branch satisfies the following properties. 
\begin{lemma}\label{lem:fairTreeBranchProperties}
For each location $i$, and each fair branch $b$ of $\mathcal{R}^G$, the following are true.
\begin{enumerate}
\item Branch $b$ contains infinitely many $FD_i$, $Proc_i$ and $Env_{i,x}$ edges (for all $x \in X_i$) (regardless of whether $i$ is live or not live in $G$).
\item If $i$ is live in $G$, then (a) every $FD_i$ edge in $b$ has a non-$\bot$ action tag and (b) some infinite subset of the $O_{D,i}$ events contained in $G$ occur in $b$.\footnote{Note that $b$ is not guaranteed to contain all the $O_{D,i}$ events contained in $G$.}
\item If $i$ is not live in $G$, then there exists a suffix of $b$ such that the action tag of each $FD_i$, $Proc_i$, and $Env_{i,x}$ edge (for all $x \in X_i$) is $\bot$.
\end{enumerate}
\end{lemma}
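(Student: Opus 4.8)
The plan is to dispatch the three parts in order, leaning on the definition of a fair branch together with the construction lemmas of Sections~\ref{subsec:taskTree}--\ref{subsec:PropertiesOfSimilarModuloNodes}. Part~1 is essentially immediate from the definition of a fair branch: by definition $b$ contains infinitely many edges carrying each label $l \in T \cup \set{FD_i \mid i \in \Pi}$, and $FD_i$, $Proc_i$, and each $Env_{i,x}$ (for $x \in X_i$) are among these labels. Since Lemma~\ref{prop:allLabelsExist} confirms that each such label is realized as an outgoing edge at every internal node, $b$ contains infinitely many edges of each of these labels, regardless of whether $i$ is live in $G$.

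For Part~2, I would first observe that 2(a) is exactly Lemma~\ref{prop:liveLocationNonBotAction}: when $i$ is live in $G$, every $FD_i$-edge of $\mathcal{R}^G$, and in particular every one lying on $b$, has a non-$\bot$ action tag. For 2(b) I would combine Part~1 with 2(a): $b$ carries infinitely many $FD_i$-edges, each with a non-$\bot$ action tag. By Lemma~\ref{prop:FDchildVertexEdgeExists}, for each such edge $E$ from $N$ to $\hat{N}$ we have $v_{\hat{N}} \neq v_N$, the action tag $a_E$ is the action of the location-$i$ vertex $v_{\hat{N}}$, and (when $v_N$ is a real vertex) $G$ contains an edge from $v_N$ to $v_{\hat{N}}$. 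The crucial sub-step is to show that the vertex tags produced by successive non-$\bot$ $FD_i$-edges along $b$ are pairwise distinct: by Lemma~\ref{lem:descendantVertexTagEdge} there is an edge (hence a path) in $G$ between consecutive such tags, and since $G$ is a DAG and, by Properties~1 and~3 of observations, two location-$i$ vertices joined by a path in $G$ must have strictly increasing indices, these tags are distinct location-$i$ vertices. Hence infinitely many distinct location-$i$ vertices of $G$ --- equivalently, an infinite subset of the $O_{D,i}$ events of $G$, which are infinite in number by Lemma~\ref{prop:liveImpliesAllIndices} --- occur as action tags along $b$.

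For Part~3, suppose $i$ is not live in $G$, so $V$ contains only finitely many vertices whose location is $i$. Reusing the distinctness argument of Part~2(b), only finitely many $FD_i$-edges of $b$ can carry a non-$\bot$ action tag, since each such tag is a distinct location-$i$ vertex. As $b$ has infinitely many $FD_i$-edges by Part~1, there is a node $N^*$ on $b$ whose outgoing $FD_i$-edge on $b$ has action tag $\bot$; by the $FD_i$-edge construction this makes $N^*$ a post-$crash_i$ node, and every $FD_i$-edge on $b$ beyond $N^*$ is likewise $\bot$. Since every descendant $N'$ of $N^*$ has $v_{N'}$ reachable from $v_{N^*}$ in $G$ while $v_{N^*}$ has no outgoing edge to a location-$i$ vertex, transitivity of $G$ forces $N'$ to be post-$crash_i$ as well, so all $FD_i$-edges in $\mathcal{R}^G|_{N^*}$ are $\bot$; and Lemma~\ref{lem:crashedLocationNoActionsInSubtree} applied at $N^*$ gives that every $Proc_i$-edge and $Env_i$-edge in $\mathcal{R}^G|_{N^*}$ has action tag $\bot$. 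The suffix of $b$ beginning at $N^*$ lies in $\mathcal{R}^G|_{N^*}$, so it is the desired suffix.

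The main obstacle is the distinctness/strict-monotonicity sub-step shared by Parts~2(b) and~3: I must establish that the vertex tags seen along $b$ at successive non-$\bot$ $FD_i$-edges never repeat and have strictly increasing location-$i$ indices. This is the one place where the DAG property of $G$, Properties~1 and~3 of observations, and Lemma~\ref{lem:descendantVertexTagEdge} (which converts a descendant relationship in the tree into an edge of $G$) must be assembled with care. Everything else reduces quickly to the definition of a fair branch and the cited construction lemmas.
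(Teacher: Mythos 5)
Your proof is correct. The paper actually states this lemma without any proof, presenting it as an immediate consequence of the definition of a fair branch and of the tree construction; your argument supplies exactly the details that reading intends --- Part 1 directly from the definition, Part 2(a) from Lemma~\ref{prop:liveLocationNonBotAction}, and Parts 2(b) and 3 from the strict monotonicity of location-$i$ vertex-tag indices along $b$ (obtained by combining Lemma~\ref{prop:FDchildVertexEdgeExists} with Lemma~\ref{lem:descendantVertexTagEdge} to rule out repeated tags, then invoking the DAG property and Properties 1--3 and 5 of observations, and finishing Part 3 with post-$crash_i$ nodes and Lemma~\ref{lem:crashedLocationNoActionsInSubtree}) --- so your route is the one the paper implicitly relies on, carried out rigorously.
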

For any location $i$ and fair branch $b$ of $\mathcal{R}^G$, $b$ may contain a $Proc_i$ or an $Env_{i,x}$ edge $E$ such that $a_E = \bot$ for either of two reasons. (1) If $i$ is not live in $G$, then it may be the case that there is no outgoing edge from $v_E$ to any vertex whose location is $i$. (2) There is no enabled action from the corresponding task in $c_N$, where $N$ is the node immediately preceding $E$ in $b$; this is regardless of whether $i$ is live in $G$ or otherwise.



The main result of this subsection is Theorem \ref{thm:fairBranchFairExec}, which says that, if $D$ is a strong-sampling AFD, then for any viable observation $G$ of $D$ and for every fair branch $b$ in $\mathcal{R}^G$, (1) the projection of $b$ on the actions of the system $\mathcal{S}$ corresponds to a fair trace of system $\mathcal{S}$, and (2) the projection of $b$ on the AFD actions corresponds to a trace in $T_D$.
We use  multiple helper lemmas to prove the main result, which we summarize after the following definitions.

 For the remainder of this section, fix $D$ to be a strong-sampling AFD and fix $G$ to be an infinite observation of $D$. 
 Consider a branch $b$ in $\mathcal{R}^{G}$; since $G$ is an infinite observation, $b$ must also be of infinite length. Let the sequence of nodes in
$b$ be $\top,N_1,N_2,\ldots$ in that order. The sequence $exe(b)$ is
the limit of the prefix-ordered infinite sequence $exe(\top)$,
$exe(N_1)$, $exe(N_2)$, $\ldots$.\footnote{Note that we have 
  overloaded the function $exe$ to map both nodes and branches to sequences 
  of alternating states and actions. Since the domains of all the instances
  of the $exe()$ function are disjoint, we can refer to $exe(N)$ or $exe(b)$   
  without any ambiguity.} Note that $exe(b)$ may be a finite
or an infinite sequence.
Let $trace(b)$ denote the trace of the execution $exe(b)$.
Recall that for any node $N$ in $\mathcal{R}^G$,  $\mathcal{R}^G|_N$ denotes the maximal subtree of $\mathcal{R}^G$ rooted at $N$.
  
 In Lemma \ref{lem:fairBranchEventSequence}, we show that for any fair branch $b$ in $\mathcal{R}^G$, $exe(b)|_{O_D}$ is the event-sequence of some fair branch in $G$. 
However, note that even if $b$ is a fair branch of $\mathcal{R}^G$, $exe(b)$ need not be a fair execution of $\mathcal{S}$; also, even if $G$ is viable for $D$, the projection of $exe(b)$ on $O_D \cup \hat{I}$ need not be in $T_D$. The primary reason for these limitations is that the tree $\mathcal{R}^G$ does not contain any $crash$ events. We rectify this omission  in Lemma \ref{lem:fairExe}; we insert $crash$ events in $trace(b)$ to obtain a trace $t_{\mathcal{S}}$ of $\mathcal{S}$ such that $t_{\mathcal{S}}$ is a fair trace of $\mathcal{S}$, and if $G$ is viable for $D$, then $t_{\mathcal{S}}|_{O_D \cup \hat{I}}$ is compatible with $G$.
%
%
Lemma \ref{lem:fairExe} implies Theorem \ref{thm:fairBranchFairExec}.

\begin{lemma}
\label{lem:fairBranchEventSequence}
For every fair branch $b$ of $\mathcal{R}^{G}$, $exe(b)|_{O_D}$ is the
event-sequence of some fair branch in $G$.
\end{lemma}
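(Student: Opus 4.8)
The plan is to identify $exe(b)|_{O_D}$ with the event-sequence of the sequence $p$ of distinct non-$(\bot,0,\bot)$ vertex tags encountered along the edges of $b$, and then to verify that $p$ is a fair branch of $G$. That the distinct non-null vertex tags along any root-path form the vertices of a genuine path in $G$ is exactly the content of the unlabeled lemma stated immediately after Lemma~\ref{prop:allLabelsExist}; applying it to the (infinite) branch $b$ gives at once that $p$ is a path in $G$. So the two remaining tasks are (i) to show $exe(b)|_{O_D}$ is precisely the event-sequence of $p$, and (ii) to show $p$ is maximal and fair.

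For (i), I would first observe that the only edges of $\mathcal{R}^G$ whose action tags lie in $O_D$ are the $FD$-edges (the $Proc$-, $Env$-, and $Chan$-edges carry process, environment, or channel actions, and action tags are never $crash$ actions), so $exe(b)|_{O_D}$ is exactly the ordered sequence of non-$\bot$ action tags of the $FD$-edges of $b$. By Lemma~\ref{prop:FDchildVertexEdgeExists}, each such non-$\bot$ $FD$-edge $E$ from $N$ to $\hat N$ satisfies $v_{\hat N}\neq v_N$, has $a_E$ equal to the action of $v_{\hat N}$, and (when $v_N\neq(\bot,0,\bot)$) is joined to $v_N$ by an edge of $G$; by Lemma~\ref{prop:sameVertexTag} the vertex tag is unchanged across every non-$FD$ edge and every $\bot$-tagged $FD$-edge, so the new tags introduced by the non-$\bot$ $FD$-edges are precisely the entries of $p$, in order. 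Since any two consecutive entries of $p$ are joined by an edge of $G$, the sequence is a directed chain in the DAG $G$ and hence has pairwise distinct entries, and the $m$-th $O_D$ event of $exe(b)$ is the action of the $m$-th vertex $v_m$ of $p$. Therefore $exe(b)|_{O_D}$ is the event-sequence of $p$.

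For (ii), fairness is where the hypothesis that $b$ is a fair branch of the tree is used: for each location $i$ live in $G$, Lemma~\ref{lem:fairTreeBranchProperties} gives that $b$ contains infinitely many $FD_i$-edges, and Lemma~\ref{prop:liveLocationNonBotAction} gives that every $FD_i$-edge has a non-$\bot$ action tag, so by part (i) infinitely many pairwise distinct $i$-vertices appear in $p$; this is exactly the fairness condition required of a branch of $G$. Because $G$ is an infinite observation it has a live location, so $p$ contains infinitely many vertices and is thus an infinite, forward-unextendable path, i.e.\ a maximal path of $G$ and hence a branch; combined with the fairness just established, $p$ is a fair branch of $G$, completing the argument.

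The step I expect to be the main obstacle is the careful bookkeeping in (i): verifying that the non-$\bot$ $FD$-edges of $b$ are in an exact, order-preserving correspondence with the vertices of $p$ (no $O_D$ event is produced without advancing the vertex tag, and no vertex of $p$ recurs), and that the indices of the $i$-vertices strictly increase along $b$, so that the infinitely many $FD_i$-edges genuinely yield infinitely many \emph{distinct} $i$-vertices rather than repetitions. The maximality clause also merits a remark, since an infinite branch of the tree produces a forward-infinite path of $G$, which is maximal in the (forward) sense that the subsequent trace correspondence in Lemma~\ref{lem:fairExe} requires.
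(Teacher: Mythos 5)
Your proposal is correct and takes essentially the same route as the paper's proof: both identify $exe(b)|_{O_D}$ with the event-sequence of the path of distinct non-$(\bot,0,\bot)$ vertex tags along $b$ (the paper does this via Lemma \ref{prop:nodeFiniteExe}, Lemma \ref{prop:ancesterPrefixAFDEvents}, and a limit of prefixes, you via the unlabeled path lemma together with Lemma \ref{prop:FDchildVertexEdgeExists}), and both derive fairness identically from the infinitude of $FD_i$-edges in a fair branch combined with Lemma \ref{prop:liveLocationNonBotAction}; your explicit treatment of maximality is, if anything, more careful than the paper's bare assertion that the tag sequence is a branch. One small citation nit: invariance of the vertex tag across a $\bot$-tagged $FD$-edge follows from Lemma \ref{prop:botChild}(1) (or directly from the construction), not from Lemma \ref{prop:sameVertexTag}, which covers only $FD$-free paths.
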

\begin{proof}
Fix $b$ to  be a fair branch of $\mathcal{R}^{G}$.
Let $b = \top,E_1,N_1,E_2,N_2,\ldots$, where for each natural number $x$, $N_x$ is a node in $\mathcal{R}^G$ and $E_x$ is an edge with lower endpoint $N_x$ in $\mathcal{R}^G$.
Applying Lemma \ref{prop:nodeFiniteExe}, we know that
for any positive integer $x$, if $v_{N_x} = (\bot, 0, \bot)$, then $exe(N_x)|_{O_D}$ is the empty sequence, and otherwise, $exe(N_x)|_{O_D}$ ends with the event of 
$v_{N_x}$. Note that since $G$ is an infinite observation and $b$ is a fair branch of $\mathcal{R}^{G}$, there exists a positive integer $x$ such that for all $x' \geq x$, $v_{N_{x'}} \neq (\bot, 0, \bot)$\footnote{
We know such a positive integer $x$ exists for the following reason. Since $G$ is an infinite observation, $G$ has some live location $i$. By Lemma \ref{lem:fairTreeBranchProperties}, we know that every $FD_i$ edge in $\mathcal{R}^G$ has a non-$\bot$ action tag, and this can happen only if for each $FD_i$ edge, the vertex tag of the node preceding that edge is not $(\bot, 0, \bot)$. Since $b$ is a fair branch, $b$ contains infinitely many such nodes; fix any such a node $N$. By Lemma \ref{prop:FDchildVertexEdgeExists}, we know that for each descendant $\hat{N}$ of $N$ in $b$, $v_{\hat{N}}$ is a vertex in $G$ and therefore, $v_{\hat{N}} \neq (\bot, 0, \bot)$.}.
Applying Lemma \ref{prop:ancesterPrefixAFDEvents}, we
know that for any positive integer $x$, $exe(N_x)|_{O_D}$ is a prefix of
$exe(N_{x+1})|_{O_D}$. Therefore, $exe(b)|_{O_D}$ is the limit of the event-sequence of 
$v_{N_1},v_{N_2},\ldots$. By the construction of $\mathcal{R}^G$, this means that, $exe(b)|_{O_D}$ is the
event-sequence of some branch $b' = v_{N_1},v_{N_2},\ldots$ in $G$. It
remains to show that $b'$ is a fair branch in $G$. 
Recall that $b'$ is a fair branch if for every location $i$ that is live in $G$, $b$ contains an infinite number of vertices whose location is $i$.

Fix a location $i \in live(G)$. Since $b$ is a fair branch of $\mathcal{R}^G$, there are infinitely many edges in $b$ whose
label is $FD_i$; for each such $FD_i$-edge, applying Lemma
\ref{prop:liveLocationNonBotAction}, we know that the action tag of
the $FD_i$-edge is non-$\bot$. Therefore, the sequence $v_{N_1},v_{N_2},\ldots$ contains infinitely many
vertices whose location is $i$. Thus, by definition, $b'$ is a fair branch
in $G$. Therefore, $exe(b)|_{O_D}$ is the event-sequence of $b'$, which is a fair
branch in $G$.
\end{proof}

Next, we assume that $G$ is a viable observation for $D$. 
In Lemma \ref{lem:fairExe}, for each fair branch $b$ of $\mathcal{R}^G$, we insert $crash$ events in $trace(b)$ to get a trace $t_{\mathcal{S}}$ of the system $\mathcal{S}$ such that $trace(b) =
t_\mathcal{S}|_{act(\mathcal{S}) \setminus \hat{I}}$ and $t_{\mathcal{S}}|_{O_D \cup \hat{I}} \in T_D$.

\begin{lemma}
\label{lem:fairExe}
For every fair branch $b$ of $\mathcal{R}^{G}$, there exists a fair
execution  $\alpha_z$ of the system $\mathcal{S}$ such that $trace(b) = \alpha_z|_{act(\mathcal{S}) \setminus \hat{I}}$ and $\alpha_z|_{O_D \cup \hat{I}} \in T_D$.
\end{lemma}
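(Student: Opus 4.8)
The plan is to start from the execution $exe(b)$, which is already an execution of $\mathcal{S}$ by Lemma~\ref{lem:fairBranchUnfairExe}, and to splice in the $crash$ events that the tree omits. First I would fix the set $live(G)$ of live locations and produce the AFD trace I will build against: by Lemma~\ref{lem:fairBranchEventSequence}, $\epsilon := exe(b)|_{O_D}$ is the event-sequence of some fair branch of $G$, so since $G$ is viable and $D$ is a strong-sampling AFD, Lemma~\ref{prop:sampledSubsequence}(2) yields a trace $t' \in T_D$ with $t'|_{O_D} = \epsilon$. Using validity together with Lemma~\ref{prop:liveInObsAndTrace}, one checks that $faulty(t') = \Pi \setminus live(G)$: each live location contributes infinitely many outputs to $\epsilon$ and hence no crash, while each faulty location contributes only finitely many and hence forces a $crash_i$; moreover every $O_{D,i}$ event precedes every $crash_i$ in $t'$.

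Next I would construct $\alpha_z$ explicitly. For each faulty $i$, Lemma~\ref{lem:fairTreeBranchProperties}(3) gives a point past which $b$ has only $\bot$-tagged $FD_i$, $Proc_i$, and $Env_{i,x}$ edges, so $trace(b)$ has a last event among the $O_{D,i}$ events and the locally-controlled actions of $Proc_i$ and $\mathcal{E}_i$. Inserting $crash_i$ immediately after that event meets the hypothesis of Lemma~\ref{lem:addCrashAfterLastEvent}. Since $\Pi$ is finite and inserting one $crash_j$ never introduces a new locally-controlled $Proc_i$ or $\mathcal{E}_i$ action, one application of that lemma per faulty location yields, by induction, a genuine execution $\alpha_z$ of $\mathcal{S}$ whose trace is $trace(b)$ with these crashes spliced in; by construction $\alpha_z|_{act(\mathcal{S})\setminus\hat{I}} = trace(b)$.

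I would then verify fairness task by task. For a live $i$, every $FD_i$, $Proc_i$, $Chan_{j,i}$, and $Env_{i,x}$ task gets infinitely many edges along the fair branch $b$, and (by the tree construction, since $v_N$ always has an outgoing edge to a live location's vertex) a $\bot$ action tag on such an edge means the task is disabled in that state; hence each such task either takes infinitely many steps or is disabled infinitely often. For a faulty $i$, the inserted $crash_i$ permanently disables $Proc_i$ and $\mathcal{E}_i$, so those tasks are disabled from then on; since $Proc_i$ emits only finitely many messages the channels $Chan_{i,j}$ drain and are disabled infinitely often, while the channels $Chan_{j,i}$ are handled exactly as in the live case. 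Thus every task of $\mathcal{S}$ is fair.

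The main obstacle is the final clause, $\alpha_z|_{O_D\cup\hat{I}} \in T_D$. Writing $u := \alpha_z|_{O_D\cup\hat{I}}$, we have $u|_{O_D} = \epsilon$ and $u$ carries exactly one $crash_i$ per faulty $i$, placed right after the last $O_{D,i}$ event. I would compare $u$ with $w := mincrash(t')$: closure under extra crashes gives $w \in T_D$, and $w$ also has a single $crash_i$ per faulty $i$ after all $O_{D,i}$ events, with the same $O_D$ projection $\epsilon$, so $u$ is a permutation of $w$. It then remains to show $u$ is a \emph{constrained reordering} of $w$. The same-location $O_D$ order is preserved because both restrict to $\epsilon$; and whenever a $crash_i$ precedes an output $e'$ in $w$, that $e'$ must follow the last $O_{D,i}$ event of $\epsilon$, so the copy of $crash_i$ in $u$ (placed just after that last $O_{D,i}$) also precedes $e'$. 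Since $u$ is valid (no $O_{D,i}$ after $crash_i$, and infinitely many outputs at each live location), closure under constrained reordering gives $u \in T_D$. The delicate points are pinning down the crash-before-output constraint precisely and confirming that the chosen placement keeps $u$ valid; everything else is bookkeeping over the closure properties.
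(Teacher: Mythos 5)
Your overall architecture is genuinely different from the paper's at the decisive last step. The paper chooses the insertion points of the $crash_i$ events so that $\alpha_z|_{O_D\cup\hat{I}}$ equals, verbatim, the trace $t'_G \in T_D$ obtained from Lemma~\ref{prop:sampledSubsequence} (each $crash_i$ is inserted immediately \emph{before} $e_{i^+}$, the earliest $O_D$ event following $crash_i$ in $t'_G$); membership in $T_D$ is then immediate by construction, and the real work is showing that this placement is legal for Lemma~\ref{lem:addCrashAfterLastEvent}, which the paper does via Lemma~\ref{lem:crashedLocationNoOutgoingActions}. You instead place $crash_i$ where Lemma~\ref{lem:addCrashAfterLastEvent} obviously applies (right after the last $i$-event of $trace(b)$) and then try to buy membership in $T_D$ from closure under extra crashes and constrained reordering. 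That route could in principle be completed, but your verification of the constrained-reordering condition has a genuine gap.

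The gap is here: you justify the required clause ``if $crash_i$ precedes an output $e'$ in $w$ then $crash_i$ precedes $e'$ in $u$'' by asserting that $crash_i$ in $u$ is ``placed just after that last $O_{D,i}$'' event. That misdescribes your own construction: you placed $crash_i$ after the last event among $O_{D,i}$ events \emph{and} locally-controlled $Proc_i$/$\mathcal{E}_i$ actions, and these are not the same position. For a location $i$ that is not live in $G$, $Proc_i$ can keep taking steps (e.g., $send$ actions) after its last AFD output on the branch, because a $Proc_i$-edge gets a non-$\bot$ tag whenever the current vertex tag $v_N$ still has an outgoing edge in $G$ to \emph{some} location-$i$ vertex --- and such vertices exist in $G$ even though they are not on the branch. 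Consequently, in $u$ there may be outputs at other locations lying strictly between the last $O_{D,i}$ event and $crash_i$, and for such an output $e'$ your argument decides nothing: both $e'$ and $crash_i$ follow the last $O_{D,i}$ event, and it is exactly their relative order that is in question. The claim you need is in fact true, but for a reason you never invoke: if $e'$ precedes the last $i$-event in $trace(b)$, then by the tree's tagging rule and the transitive closure of $G$ there is an edge from the vertex of $e'$ to some location-$i$ vertex $v''$, so in the compatible trace $t_G$ the event of $v''$ (an $O_{D,i}$ output) follows $e'$, hence by validity every $crash_i$ of $t_G$ follows $e'$; since $t'$ and $w$ are order-preserving subsequences of $t_G$, $crash_i$ follows $e'$ in $w$ as well. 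This is precisely the content of the paper's Lemma~\ref{lem:noCrashIUntilBot}/Lemma~\ref{lem:crashedLocationNoOutgoingActions} step that your proposal omits. Note also that you cannot dodge the issue by literally moving the crash to ``just after the last $O_{D,i}$ event'': then the suffix could contain locally-controlled $Proc_i$ actions, so Lemma~\ref{lem:addCrashAfterLastEvent} (and the crash semantics of $\mathcal{S}$) would no longer permit the insertion.
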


\begin{proof}
Fix a fair branch $b$ of $\mathcal{R}^{G}$. Let $b = \top,E_1,N_1,E_2,N_2,\ldots$, where for each natural number $x$, $N_x$ is a node in $\mathcal{R}^G$ and $E_x$ is an edge with lower endpoint $N_x$ in $\mathcal{R}^G$.
By Lemma \ref{lem:fairBranchUnfairExe}, we know that $exe(b)$ is an execution of system $\mathcal{S}$. 
We construct a new execution $\alpha_z$ of system $\mathcal{S}$ by starting with $exe(b)$, and inserting $crash$ events as permitted by Lemma \ref{lem:addCrashAfterLastEvent}; we then define $\alpha_z$ to be an execution whose trace $t_z$. In order to invoke Lemma \ref{lem:addCrashAfterLastEvent} we must ascertain the specific positions within $exe(b)$ where we may insert $crash$ events. We determine these positions, by deriving a trace $t'_G \in T_D$ such that the sequence of AFD output events in $t'_G$ is the projection of $exe(b)$ on AFD output events. We then use the positions of $crash$ events in $t'_G$ to determine the positions in $exe(b)$ where $crash$ events are inserted.

Recall that $G$ is a viable observation for $D$. By Lemma \ref{lem:fairBranchEventSequence}, we know that $exe(b)|_{O_D}$ is the event-sequence of some fair branch $b'_G$ in $G$. Let $t_G \in T_D$ be compatible with $G$, and we assume that $t_G$ has no extra crashes.\footnote{Note that for any trace $t_G$ that is compatible with $G$, the trace $mincrash(t_G)$ is also compatible with $G$ and does not contain any extra crashes. So, it is reasonable to assume that $t_G$ does not contain any extra crashes.}
By Lemma \ref{prop:sampledSubsequence}, we know that there exists $t'_G \in T_D$ such that $t'_G|_{O_D}$ is a strong sampling of $t_G$ and $t'_G$ is the event sequence of $b'_G$. Fix such a trace $t'_G$. By construction, $t'_G|_{O_D} = exe(b)|_{O_D} = trace(b)|_{O_D}$ and $live(t'_G) = live(G)$. Note that $t'_G$ does not contain any extra crashes.

For each location $i$ that is not live in $G$, let $e_{i^+}$ be the earliest event from $O_D$ that follows the $crash_i$ event in $t'_G$.

We construct $\alpha_z$ by iteratively applying Lemma \ref{lem:addCrashAfterLastEvent} to $exe(b)$, once for each location $i$ that is not live in $G$, as follows. Starting with  $trace(b)$, for each location $i$ that is not live in $G$, insert $crash_i$ immediately before event $e_{i^+}$. If more than one crash event is inserted in the same position in $trace(b)$, order these crash events in the order in which they appear in $t'_G$.
Let the trace, thus obtained, be $t_z$. 

Note that by construction $t_z|_{\hat{I} \cup O_D} = t'_G$. Therefore, $t_z|_{\hat{I} \cup O_D}$ is a strong sampling of $t_G$. For each location $i$ that is not live in $G$, let vertex $v_{i^+}$ be the vertex corresponding to event $e_{i^+}$; since $t_z|_{\hat{I} \cup O_D}$ is a strong sampling of $t_G$, $crash_i$ precedes $e_{i^+}$ in $t_G$, and therefore, there are no edges from $v_{i^+}$ to any vertex whose location is $i$. Therefore, by construction of $\mathcal{R}^G$, for any node $N$ whose vertex tag is $v_{i^+}$, and for any outgoing $FD_i$-edge $E$ from $N$, $a_E = \bot$. Therefore, by Lemma \ref{lem:crashedLocationNoOutgoingActions}, we know that for outgoing $Proc_i$, $Env_i$ and $FD_i$ edges from the descendants  of $N$, their action tags are also $\bot$. Therefore, in $trace(b)$, for each location $i$ that is not live in $G$, there are no $Proc_i$, $O_{D,i}$, or $\mathcal{E}_i$ events following $e_{i^+}$.
Therefore, starting with $trace(b)$ and iteratively applying Lemma \ref{lem:addCrashAfterLastEvent} for each $crash$ event inserted, we conclude that there exists an execution $\alpha_z$ of $\mathcal{S}$ whose trace is $t_z$.

It remains to show that (1) $\alpha_z|_{O_D \cup \hat{I}} \in T_D$ and (2) $\alpha_z$ is a fair execution of $\mathcal{S}$. We prove each part separately.

\emph{Claim 1.} $\alpha_z|_{\hat{I} \cup O_D} \in T_D$.
\begin{proof}
Note that by construction $\alpha_z|_{\hat{I} \cup O_D} = t'_G$ and therefore, $\alpha_z|_{\hat{I} \cup O_D} \in T_D$.
\end{proof}



\emph{Claim 2.}
$\alpha_z$ is a fair execution of $\mathcal{S}$.

\begin{proof}
By construction, $\alpha_z$ is an execution of $\mathcal{S}$. In order to show that $\alpha_z$ is a fair execution of $\mathcal{S}$, we have to show the following. 
(a) If $\alpha_z$ is finite, then for each task $l \in T$, $l$ is not enabled in the final state of $\alpha_z$; and (b) if $\alpha_z$ is infinite, then for each task $l \in T$, $\alpha_z$ contains either infinitely many events from $l$ or infinitely many occurrences of states in which $l$ is not enabled. (Recall that $T$ is the set of tasks in $\mathcal{S}$.)

\emph{Case (a)} $\alpha_z$ is finite. We show that this is impossible as follows. Assume for contradiction that $\alpha_z$ is finite. Since $G$ is an infinite observation, there exists a location $j$ such that there are infinitely many vertices in $G$ whose location is $j$. Since $b$ is a fair branch of $\mathcal{R}^G$,  we know that $b$ contains infinitely many $FD_j$ edges. Applying Lemma \ref{prop:liveLocationNonBotAction}, we conclude that the action tag of each $FD_i$-edge in $b$ is non-$\bot$, and therefore, $exe(b)$ is infinite. Therefore, $trace(b)$ is infinite. Since $t_z$ is obtained by inserting events into $trace(b)$, $t_z$ is infinite, and consequently $\alpha_z$ is infinite. Thus, we have a contradiction.

\emph{Case (b)} $\alpha_z$ is infinite. For contradiction, assume that $\alpha_z$ is not a fair execution. Therefore, there must exist a task $l$ such that  $\alpha_z$ contains only finitely many events from $l$ and only finitely many occurrences of states in which $l$ is not enabled. Fix such an $l$. We consider each possible value of $l$.

\begin{itemize}
\item $l \in \set{Chan_{j,k} | j \in \Pi, k \in \Pi \setminus \set{j}}$. From the construction of $\mathcal{R}^G$, we know that for each $l$-edge $E$ from a node $N'$ in $b$, if some action $a$ in $l$ is enabled in $c_{N'}$, then $a_E = a$. Furthermore, note that in any execution $\alpha$ of $\mathcal{S}$, if some action $a$ in $l$ is enabled in a state $s$ of $\alpha$, $a$ remains enabled in the suffix of $\alpha$ following $s$ until $a$ occurs. By assumption, since there are only finitely many events from $l$ in $\alpha_z$, and $t_z$ (the trace of $\alpha_z$) is constructed by inserting events into $trace(b)$, it follows that in some infinite suffix of $b$, for each node $N'$, no action from $l$ is enabled in $c_{N'}$. Since inserting $crash$ events does not change the state of the channel automata, it follows that no action from $l$ is enabled in some infinite suffix of $\alpha_z$. This contradicts our assumption that $\alpha_z$ contains only finitely many occurrences of states in which $l$ is not enabled.

\item  $l \in \set{Proc_j, Env_{j,x} | j \in \Pi, x \in X_j}$. Fix the location of $l$ to be $k$. We consider two subcases: (i) $k$ is not live in $G$, and (ii) $k$ is live in $G$.
  \begin{itemize}
  \item $k$ is not live in $G$. By construction, $b$ contains only finitely many $l$-edges whose action tags are non-$\bot$, and by construction of $\alpha_z$, we know that $\alpha_z$ contains a $crash_k$ event, following which there are no events from task $l$. In other words, $\alpha_z$ contains only finitely many events from $l$. However, recall that a $crash_k$ event disables all the actions from $l$ forever thereafter. Therefore, in the suffix of $\alpha_z$ following a $crash_i$ event, no action from $l$ is enabled. This contradicts our assumption that    $\alpha_z$ contains only finitely many events from $l$ and only finitely many occurrences of states in which $l$ is not enabled.
  
  \item $k$ is live in $G$. Therefore $G$ contains infinitely many vertices whose location is $k$. Note that in $b$, $l$-edges occur infinitely often. By construction of the tree $\mathcal{R}^G$, we know that for each node $N'$ in $b$ that immediately precedes an $l$-edge $E'$, either $v_{N'}$ is not a vertex in $G$ and $G$ contains infinitely many vertices whose location is $k$, or $v_{N'}$ has an outgoing edge to some vertex in $G$ whose location is $k$; consequently, if some action in $l$ is enabled in $N'$, then $a_{E'} \neq \bot$. Therefore, if $exe(b)$ contains only finitely many events from $l$, then it must have only finitely many occurrences of states in which $l$ is enabled; in other words, $exe(b)$ contains infinitely many occurrences of states in which $l$ is not enabled. 
  
  By construction of $\alpha_z$, we know that $\alpha_z$ does not contain a $crash_k$ event. Since $t_z$ (the trace of $\alpha_z$) is obtained from $trace(b)$ by inserting only $crash$ events and $trace(b)$ does not contain any $crash_k$ events, we know that the projection of $exe(b)$ on the states of $Proc_k$ and $\mathcal{E}_k$ is equal to the projection of $\alpha_z$ on the states of $Proc_k$ and $\mathcal{E}_k$. Therefore, if $exe(b)$ contains infinitely many occurrences of states in which $l$ is not enabled, then $\alpha_z$ contains infinitely many occurrences of states in which $l$ is not enabled. Thus we have a contradiction.
  \end{itemize}
\end{itemize} 
Thus, we have proved that $\alpha_z$ is a fair execution of $\mathcal{S}$.
\end{proof}

The proof follows from Claims 1 and 2.
\end{proof}

\begin{theorem}\label{thm:fairBranchFairExec}
Let $D$ be a strong-sampling AFD. Let $G$ be a viable observation for $D$.
For every fair branch $b$ of $\mathcal{R}^{G}$, there exists a fair
trace $t_{\mathcal{S}}$ of $\mathcal{S}$ such that $trace(b) = t_{\mathcal{S}}|_{act(\mathcal{S}) \setminus \hat{I}}$ and $t_{\mathcal{S}}|_{O_D \cup \hat{I}} \in T_D$.
\end{theorem}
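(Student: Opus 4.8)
The plan is to show that Theorem~\ref{thm:fairBranchFairExec} is essentially a restatement of Lemma~\ref{lem:fairExe}, which has already been established. First I would observe that the two statements are almost identical: Lemma~\ref{lem:fairExe} produces, for every fair branch $b$ of $\mathcal{R}^G$, a fair \emph{execution} $\alpha_z$ of $\mathcal{S}$ satisfying $trace(b) = \alpha_z|_{act(\mathcal{S}) \setminus \hat{I}}$ and $\alpha_z|_{O_D \cup \hat{I}} \in T_D$, whereas the theorem asks for a fair \emph{trace} $t_{\mathcal{S}}$ with the corresponding projection properties. The only gap to close is the passage from the execution $\alpha_z$ to its trace.

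The key step is therefore to set $t_{\mathcal{S}}$ to be the trace of the fair execution $\alpha_z$ guaranteed by Lemma~\ref{lem:fairExe}. By definition (Section~\ref{sec: IOA: fairness}), the trace of a fair execution is a fair trace, so $t_{\mathcal{S}}$ is a fair trace of $\mathcal{S}$. It then remains to verify that the two projection identities carry over from $\alpha_z$ to $t_{\mathcal{S}}$. For this I would note that projecting an execution onto a set of external actions and then extracting the subsequence of external actions commutes appropriately: since $act(\mathcal{S}) \setminus \hat{I}$ and $O_D \cup \hat{I}$ consist of external actions of the composed system, the restriction of the trace $t_{\mathcal{S}}$ to each of these sets equals the restriction of the full execution $\alpha_z$ to the same set (the internal actions and intermediate states, which are the only things discarded in passing from $\alpha_z$ to its trace, contribute nothing to these projections). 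Hence $t_{\mathcal{S}}|_{act(\mathcal{S}) \setminus \hat{I}} = \alpha_z|_{act(\mathcal{S}) \setminus \hat{I}} = trace(b)$ and $t_{\mathcal{S}}|_{O_D \cup \hat{I}} = \alpha_z|_{O_D \cup \hat{I}} \in T_D$, as required.

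Since Lemma~\ref{lem:fairExe} already carries the entire combinatorial and model-theoretic weight of the argument---constructing $\alpha_z$ by inserting $crash$ events at the positions dictated by a strong-sampled trace $t'_G$ compatible with $G$, and verifying fairness task-by-task---the theorem follows immediately, and I expect the proof to be a single short paragraph invoking Lemma~\ref{lem:fairExe} and the definition of a fair trace.

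I do not anticipate a genuine obstacle here; the subtlety, if any, is purely bookkeeping about which actions survive each projection, and this is handled by the observation that trace extraction only removes internal actions and states. The real difficulty of the result lives in Lemma~\ref{lem:fairExe}, whose proof must simultaneously guarantee that the inserted $crash$ events yield a legal execution (via Lemma~\ref{lem:addCrashAfterLastEvent}, applicable because no $Proc_i$, $Env_i$, or $O_{D,i}$ events follow the insertion point at a non-live location), that the failure-detector projection lands in $T_D$ (via closure under strong sampling and Lemma~\ref{prop:sampledSubsequence}), and that fairness is preserved at every live location. Given those, the theorem is an immediate corollary.
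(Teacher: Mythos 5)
Your proposal is correct and matches the paper's own proof, which consists of exactly the reduction you describe: Theorem~\ref{thm:fairBranchFairExec} is proved by invoking Lemma~\ref{lem:fairExe} and taking $t_{\mathcal{S}}$ to be the trace of the fair execution $\alpha_z$ it provides. Your additional remarks about the projection identities carrying over from $\alpha_z$ to its trace are just the bookkeeping the paper leaves implicit in the phrase ``follows directly.''
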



\begin{proof}
 Fix $D$ and $G$ as in the hypotheses of the theorem statement. The proof follows directly from Lemma \ref{lem:fairExe}.
\end{proof}

\section{Consensus Using Strong-Sampling AFDs}
\label{sec:consensusAndAFD}


In this section, we show how a strong-sampling AFD sufficient to solve 
crash-tolerant  consensus circumvents the impossibility of consensus
in asynchronous systems. We use this result in the next section to
demonstrate that $\Omega_f$ is a weakest strong-sampling AFD to
solve $f$-crash-tolerant binary consensus, which is defined next.

 
\subsection{Crash-Tolerant Binary Consensus}
\label{subsec:consensusDef} 

For any $f$  in $[0,\ldots,n-1]$, the \emph{$f$-crash-tolerant binary consensus} problem $P =(I_P,O_P,T_{P,f})$ is specified as follows. The set $I_P$ is $\{ propose(v)_i| v \in \{0,1\} \wedge i \in\Pi\} \cup \set{crash_i | i \in\Pi}$, and the set $O_P$ is $\{ decide(v)_i | v \in \{0,1\} \wedge i \in\Pi\}$. Before defining the set of sequences $T_{P,f}$, we provide the following auxiliary definitions. 

Let $t$ be an arbitrary (finite or infinite) sequence over $I_P \cup O_P$. The following definitions apply to the sequence $t$. 

\paragraph{Decision value.} If an event $decide(v)_i$ occurs for some $i\in\Pi$ in sequence $t$, then $v$ is said to be a \emph{decision value} of $t$.

\paragraph{Environment well-formedness:} 
The \emph{environment well-formedness} property states that (1) the environment provides each location with 
at most one input value, (2) the environment does not provide any input values at a location after a crash event at that location, and (3) the environment provides each live location with exactly one input value. 
Precisely, (1) for each location $i \in \Pi$
at most one event from the set $\{propose(v)_i|v \in \{0,1\}\}$ occurs in $t$, (2) for each location $i \in faulty(t)$ no event from the set $\{propose(v)_i|v \in \{0,1\}\}$ follows a $crash_i$ event in $t$, and (3) for each location $i \in live(t)$ exactly one event from the set $\{propose(v)_i|v \in \{0,1\}\}$ occurs in $t$.

\paragraph{$f$-crash limitation:} The \emph{$f$-crash limitation} property states that at most $f$ locations crash. Precisely, there exist at most $f$ locations $i$ such that $crash_i$ occurs in $t$. 

\paragraph{Crash validity:} The \emph{crash validity} property states that no location decides after crashing.  That is, for every location $i \in crash(t)$, no event from the set $\set{decide(v)_i| v \in \set{0,1}}$ follows a $crash_i$ event in $t$.

\paragraph{Agreement:} The \emph{agreement} property states that no two locations decide differently. That is, if two events $decide(v)_i$ and $decide(v')_j$ occur in $t$, then $v=v'$. 

\paragraph{Validity:} The \emph{validity} property states that any decision value at any location must be an input value at some location. That is, for each location $i \in \Pi$, if an event $decide(v)_i$ occurs in $t$, then there exists a location $j \in \Pi$ such that the event $propose(v)_j$ occurs in $t$.

\paragraph{Termination:} The \emph{termination} property states that each location decides at most once, and each live location decides exactly once. That is, for each location $i \in \Pi$, at most one event from the set $\set{decide(v)_i| v \in \set{0,1}}$ occurs in $t$, and for each location $i \in live(t)$, exactly one event from the set $\set{decide(v)_i| v \in \set{0,1}}$ occurs in $t$.

Using the above definitions, we define the set $T_{P,f}$ for $f$-crash-tolerant binary consensus as follows.

\paragraph{The set $T_{P,f}$.} $T_{P,f}$ is the set of all sequences $t$ over $I_P \cup O_P$ such that, if $t$ satisfies  environment well-formedness and $f$-crash limitation,
then $t$ satisfies crash validity, agreement, validity, and termination. Note that $T_{P,f}$ contains all the sequences over $I_P \cup O_P$ in which more than $f$ locations crash; informally, $f$-crash-tolerant consensus provides no guarantees if more than $f$ locations crash.

\subsection{A Well-formed Environment Automaton for Consensus}\label{subsec:conEnvDef}
Given an environment automaton $\mathcal{E}$ whose set of input actions is $O_P \cup \hat{I}$ and set of output actions is $I_P\setminus \hat{I}$, $\mathcal{E}$ is said to be a \emph{well-formed environment} iff every fair trace $t$ of $\mathcal{E}$ satisfies environment well-formedness. For our purpose, we assume a specific well-formed environment $\mathcal{E}_C$ defined next.

The automaton $\mathcal{E}_C$ is a composition of $n$ automata $\set{\mathcal{E}_{C,i} | i \in\Pi}$. Each automaton $\mathcal{E}_{C,i}$ has two output actions $propose(0)_i$ and $propose(1)_i$, three input actions $decide(0)_i$, $decide(1)_i$, and $crash_i$, and no internal actions. Each output action constitutes a separate task. Action $propose(v)_i$, where $v \in \set{0,1}$, permanently disables actions $propose(v)_i$ and $propose(1-v)_i$. The $crash_i$ input action disables  actions $propose(v)_i$ and $propose(1-v)_i$. The automaton $\mathcal{E}_{C,i}$ is shown in Algorithm \ref{alg:ConsensusEnv}.

Next, we show that $\mathcal{E}_C$ is a well-formed environment automaton. Observe that the automaton $\mathcal{E}_{C}$ satisfies the following Lemma.

\begin{algorithm}\footnotesize
\caption{Automaton $\mathcal{E}_{C,i}$, where $i \in \Pi$. The composition of $\set{\mathcal{E}_{C,i} | i \in\Pi}$ constitutes the environment automaton $\mathcal{E}_C$ for consensus.}
\label{alg:ConsensusEnv}

\textbf{Signature:}

\tab input $crash_i$, $decide(0)_i$, $decide(1)_i$

\tab output $propose(0)_i$, $propose(1)_i$

\textbf{Variables:}

\tab $stop$: Boolean, initially $false$

\textbf{Actions:}

\tab input $crash_i$

\tab effect

\tab \tab $stop$ := $true$

\tab

\tab input $decide(b)_i$, $b \in \set{0,1}$

\tab effect

\tab \tab *none*

\tab

\tab output $propose(b)_i$, $b \in \set{0,1}$

\tab precondition

\tab \tab $stop = false$

\tab effect

\tab \tab $stop$ := $true$

\tab

\textbf{Tasks:}

\tab $Env_{i,0} = \set{propose(0)_i}$, $Env_{i,1} = \set{propose(1)_i}$

\end{algorithm}

Note that for each location $i$, each action $propose(v)_i$ (where $v \in \set{0,1}$ and $i \in \Pi$) in $\mathcal{E}_C$ constitutes a separate task $Env_{i,v}$ in $\mathcal{E}_{C,i}$.

\begin{lemma}\label{prop:proposeDisables}
 In $\mathcal{E}_C$, action $propose(v)_i$ (where $v \in \set{0,1}$ and $i \in \Pi$) permanently disables the actions $propose(v)_i$ and $propose(1-v)_i$.
\end{lemma}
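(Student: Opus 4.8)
The plan is to argue directly from the transition relation of $\mathcal{E}_{C,i}$ given in Algorithm \ref{alg:ConsensusEnv}, exploiting the monotonicity of the Boolean state variable $stop$. The key observation is that the precondition of \emph{every} output action $propose(b)_i$ (for both $b = v$ and $b = 1-v$) is exactly $stop = false$, so to show that both actions are permanently disabled after $propose(v)_i$ occurs, it suffices to show that $stop$ equals $true$ in every reachable state following the occurrence of $propose(v)_i$.

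First I would note that, by inspection of the effects in Algorithm \ref{alg:ConsensusEnv}, the occurrence of $propose(v)_i$ sets $stop := true$. Next, the crucial step is to establish that $stop$ is \emph{monotone}: once it is $true$, it remains $true$ forever. This follows by checking each of the (finitely many) actions of $\mathcal{E}_{C,i}$ in turn: the input $crash_i$ sets $stop := true$, each input $decide(b)_i$ has no effect on $stop$, and each output $propose(b)_i$ sets $stop := true$. Since no action in $\mathcal{E}_{C,i}$ has the effect of setting $stop := false$, the value $true$ is never reset. Because $propose(v)_i$ and $propose(1-v)_i$ are locally controlled actions whose states are governed entirely by $\mathcal{E}_{C,i}$ (and composition with the rest of $\mathcal{E}_C$ does not touch the $stop$ variable of $\mathcal{E}_{C,i}$), this analysis within the single component suffices.

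Combining these two facts completes the argument: in any execution, once $propose(v)_i$ occurs, $stop = true$ in the resulting state and in all subsequent reachable states, so the precondition $stop = false$ of both $propose(v)_i$ and $propose(1-v)_i$ fails permanently, and hence both actions are permanently disabled. I do not anticipate any real obstacle here, consistent with the lemma being elementary; the only point requiring care is the exhaustive verification that no action reestablishes $stop = false$, which is precisely what justifies the word ``permanently'' in the statement.
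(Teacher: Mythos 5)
Your proof is correct and follows essentially the same route as the paper: both argue directly from the pseudocode of Algorithm \ref{alg:ConsensusEnv}, noting that the shared precondition of $propose(v)_i$ and $propose(1-v)_i$ is $stop = false$ and that the occurrence of $propose(v)_i$ sets $stop := true$. Your explicit verification that no action of $\mathcal{E}_{C,i}$ ever resets $stop$ to $false$ is what actually justifies the word ``permanently,'' a step the paper leaves implicit (its proof even contains a slip, stating that the effect of $propose(v)_i$ is to set $stop$ to $false$ when it is in fact $true$).
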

\begin{proof}
 Fix $v \in \{0,1\}$ and $i \in \Pi$. From the pseudocode in Algorithm \ref{alg:ConsensusEnv}, we know that the precondition for actions $propose(v)_i$ and $propose(1-v)_i$ is $(stop = false)$. We also see that the effect of action $propose(v)_i$ is to set $stop$ to $false$. Thus, the Lemma follows.
\end{proof}

\begin{theorem}\label{thm:environmentWellFormed}
 Automaton $\mathcal{E}_C$ is a well-formed environment.
\end{theorem}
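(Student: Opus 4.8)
The plan is to fix an arbitrary fair trace $t$ of $\mathcal{E}_C$, together with an underlying fair execution $\alpha$ whose trace is $t$, and verify the three clauses of environment well-formedness directly. Since $\mathcal{E}_C$ is the composition of the per-location automata $\mathcal{E}_{C,i}$, and every action of $\mathcal{E}_{C,i}$ occurs at location $i$, these automata share no actions across locations; hence I would reason about each location $i \in \Pi$ in isolation, examining the events of $t$ that occur at $i$. Because $\mathcal{E}_{C,i}$ has no internal actions, its trace records every one of its events, so no hidden steps complicate the analysis.

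For clause (1)---at most one $propose$ event per location---I would invoke Lemma \ref{prop:proposeDisables} directly: the first occurrence of either $propose(0)_i$ or $propose(1)_i$ permanently disables both $propose(0)_i$ and $propose(1)_i$, so no second $propose$ event at $i$ can ever be enabled. For clause (2)---no $propose$ at a faulty location after its crash---I would observe that the effect of $crash_i$ is to set $stop := true$, and that no action of $\mathcal{E}_{C,i}$ ever resets $stop$ to $false$; since $stop = false$ is the precondition of both $propose$ actions, they remain disabled throughout the suffix of $\alpha$ following $crash_i$. This also matches the general requirement from Section \ref{subsec:environmentAutomaton} that $crash_i$ permanently disables all locally controlled actions of $\mathcal{E}_i$.

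The main work, and the only place fairness enters, is clause (3): every live location performs exactly one $propose$. Combined with clause (1), it suffices to show that at least one $propose$ event occurs at each $i \in live(t)$. I would argue by contradiction: suppose no $propose$ event occurs at some live $i$. Then $stop$ is never set (no $crash_i$ occurs at a live location, and only $crash_i$ and the $propose$ actions modify $stop$), so $stop = false$ in every state along $\alpha$, and hence both tasks $Env_{i,0}$ and $Env_{i,1}$ are enabled in every state. If $\alpha$ is finite, task $Env_{i,0}$ is enabled in its final state, contradicting the finite-execution fairness condition; if $\alpha$ is infinite, then $Env_{i,0}$ is enabled in infinitely many states yet contributes no events and is never disabled, again contradicting fairness. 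Either way we reach a contradiction, so at least one $propose$ event occurs at $i$, and clause (1) forces it to be exactly one.

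The delicate step is clause (3), where one must carefully apply the two-part definition of a fair execution (treating the finite and infinite cases separately) and confirm that for a live location the enabling condition $stop = false$ genuinely persists until a $propose$ fires. Everything else reduces to reading off the effects in Algorithm \ref{alg:ConsensusEnv} together with the already-established Lemma \ref{prop:proposeDisables}.
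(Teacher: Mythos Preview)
Your proposal is correct and follows essentially the same approach as the paper: both proofs split into the same three clauses, invoke Lemma~\ref{prop:proposeDisables} for clause~(1), track the one-way update of $stop$ by $crash_i$ for clause~(2), and use fairness together with the persistence of $stop=\mathit{false}$ at live locations for clause~(3). Your version is slightly more explicit in separating the finite and infinite cases of fairness, but this is elaboration rather than a different route.
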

\begin{proof}
 To establish the theorem, we have to prove the following three claims for every fair trace $t$ of $\mathcal{E}_C$. (1) For each location $i \in \Pi$, at most one event from the set $\set{propose(v)_i| v \in \set{0,1}}$ occurs in $t$. (2) For each location $i \in faulty(t)$, no event from the set $\set{propose(v)_i| v \in \set{0,1}}$ follows a $crash_i$ event in $t$. (3) For each location $i \in live(t)$, exactly one event from the set $\set{propose(v)_i| v \in \set{0,1}}$ occurs in $t$.
 
 \textit{Claim 1.} For each location $i \in \Pi$, at most one event from the set $\set{propose(v)_i| v \in \set{0,1}}$ occurs in $t$.
 \begin{proof}
 Fix $i$.
 If no event from $\set{propose(v)_i| v \in \set{0,1}}$ occurs in $t$, then the claim is satisfied. For the remainder of the proof of this claim, assume some event from $\set{propose(v)_i| v \in \set{0,1}}$ occurs in $t$; let $e$ be the earliest such event. Let $t_{pre}$ be the prefix of $t$ that ends with $e$. After event $e$ occurs, we know from Lemma \ref{prop:proposeDisables} that $e$ disables all actions in  $\set{propose(v)_i| v \in \set{0,1}}$. Therefore, the suffix of $t$ following $t_{pre}$, no event from $\set{propose(v)_i| v \in \set{0,1}}$ occurs.
 \end{proof}
 
 \textit{Claim 2.} For each location $i \in faulty(t)$, no event from the set $\set{propose(v)_i| v \in \set{0,1}}$ follows a $crash_i$ event in $t$.
 \begin{proof}
 Fix $i$ to be a location in $faulty(t)$. From the pseudocode in Algorithm \ref{alg:ConsensusEnv}, we know that action $crash_i$ sets $stop$ to $true$. Furthermore, no action sets $stop$ to $false$. Also, observe that the precondition for actions in  $\set{propose(v)_i| v \in \set{0,1}}$ is $stop=false$. Therefore, actions in $\set{propose(v)_i| v \in \set{0,1}}$ do not follow a $crash_i$ event in $t$.
 \end{proof}
 
 \textit{Claim 3.}
 For each location $i \in live(t)$, exactly one event from the set $\set{propose(v)_i| v \in \set{0,1}}$ occurs in $t$.
 
 \begin{proof}
 Fix $i$ to be a location in $live(t)$.  In Algorithm \ref{alg:ConsensusEnv}, we see that $stop$ is initially $false$, and is not set to true until either $crash_i$ occurs or an event from $\set{propose(v)_i| v \in \set{0,1}}$ occurs. Since $i \in live(t)$, we know that $crash_i$ does not occur in $t$.
 Since $t$ is a fair trace, actions in $\set{propose(v)_i| v \in \set{0,1}}$ remain enabled until one of the actions occur. After one event from $\set{propose(v)_i| v \in \set{0,1}}$ occurs, from Claim 1, we know that no more events from $\set{propose(v)_i| v \in \set{0,1}}$ occur.
 \end{proof}
 
 The theorem follows from Claims 1, 2, and 3.
\end{proof}

\subsection{System Definition}
\label{subsec:consensusSystem}

For the remainder of this section, fix a strong-sampling AFD $D$, a distributed algorithm $A$, and a natural number $f$ ($f< n$) such that $A$ solves $f$-crash-tolerant
binary consensus using AFD $D$ in environment $\mathcal{E}_C$.
Let $\mathcal{S}$ be a system that is composed of distributed
algorithm $A$, channel automata, and the well-formed environment
automaton $\mathcal{E}_C$.

Based on the properties of $f$-crash-tolerant binary consensus and
system $\mathcal{S}$, we have the following Lemma which
restricts the number of decision values in an execution of
$\mathcal{S}$.


\begin{lemma}
\label{lem:fairExecExactlyOneDecision}
For every
 fair execution $\alpha$ of $\mathcal{S}$, where
$\alpha|_{\hat{I} \cup O_D} \in T_D$ and $\alpha|_{I_P \cup O_P}$
satisfies $f$-crash-limitation, $\alpha|_{I_P \cup O_P}$ has exactly
one decision value.
\end{lemma}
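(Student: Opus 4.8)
The plan is to reduce the statement to the defining property of ``$A$ solves $f$-crash-tolerant binary consensus using $D$ in $\mathcal{E}_C$'' and then read off agreement and termination from membership in $T_{P,f}$. First I would upgrade the given fair execution $\alpha$ of $\mathcal{S}$ to a fair trace of the automaton $\hat{A}$ that appears in the solving-using definition, namely the composition of $A$, the channel automata, the crash automaton, and $\mathcal{E}_C$. The only difference between $\mathcal{S}$ and $\hat{A}$ is that $\mathcal{S}$ treats the $crash$ events (and the AFD outputs $O_D$) as free external inputs, whereas $\hat{A}$ generates the $crash$ events through the crash automaton. Since every sequence over $\hat{I}$ is a fair trace of the crash automaton, I would interleave crash-automaton steps into $\alpha$ at exactly the positions of its $crash$ events, obtaining a fair execution $\alpha'$ of $\hat{A}$ with $trace(\alpha') = trace(\alpha)$ (the $crash$ events are merely reclassified from inputs to outputs, and the $O_D$ events remain inputs). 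Because inserting the crash automaton alters neither the enabledness nor the occurrences of any task of $A$, of the channels, or of $\mathcal{E}_C$, and the crash-automaton tasks are fair for any crash pattern, $\alpha'$ is fair. Set $t = trace(\alpha')$.

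Next I would invoke the solving-using property. By hypothesis $t|_{\hat{I} \cup O_D} = \alpha|_{\hat{I} \cup O_D} \in T_D$, so the defining condition of ``$A$ solves $P$ using $D$ in $\mathcal{E}_C$'' yields $t|_{I_P \cup O_P} \in T_{P,f}$, and $t|_{I_P \cup O_P} = \alpha|_{I_P \cup O_P}$. To convert this membership into the safety and liveness guarantees, I need $\alpha|_{I_P \cup O_P}$ to satisfy both environment well-formedness and $f$-crash-limitation. The latter is assumed. For the former I would project $\alpha$ onto $\mathcal{E}_C$: this projection is a fair execution of $\mathcal{E}_C$ (projection of a fair execution of a composition onto a component is fair, a standard property of I/O automata), so by Theorem \ref{thm:environmentWellFormed} its trace satisfies environment well-formedness; since well-formedness constrains only the $propose$ and $crash$ events, which are shared between $\mathcal{E}_C$ and the problem, $\alpha|_{I_P \cup O_P}$ inherits environment well-formedness.

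With environment well-formedness and $f$-crash-limitation in hand, the definition of $T_{P,f}$ guarantees that $\alpha|_{I_P \cup O_P}$ satisfies agreement, validity, and termination. I would then finish as follows: since $f < n$ and at most $f$ locations crash, $live(\alpha|_{I_P \cup O_P})$ is nonempty, so termination forces at least one $decide$ event, giving at least one decision value; agreement forces all $decide$ events to carry the same value, so there is at most one decision value. Hence $\alpha|_{I_P \cup O_P}$ has exactly one decision value. The main obstacle I anticipate is the bookkeeping in the first paragraph: making precise that a fair execution of $\mathcal{S}$, in which $crash$ and $O_D$ events appear as unconstrained inputs, corresponds to a genuine fair trace of $\hat{A}$, so that the solving-using hypothesis legitimately applies; everything after that step is a direct unwinding of the definition of $T_{P,f}$.
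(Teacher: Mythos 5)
Your proof is correct and takes essentially the same route as the paper's: invoke the definition of solving $f$-crash-tolerant consensus using $D$ to conclude $\alpha|_{I_P \cup O_P} \in T_{P,f}$, use Theorem \ref{thm:environmentWellFormed} to supply environment well-formedness, and then combine agreement (at most one decision value) with termination and $f < n$ (at least one live location, hence at least one decision) to get exactly one decision value. The only difference is that you make explicit two steps the paper leaves implicit --- pasting the crash automaton into $\alpha$ to obtain a genuine fair trace of $\hat{A}$ before applying the solving-using hypothesis, and projecting onto $\mathcal{E}_C$ to transfer well-formedness --- which strengthens the rigor but is not a different approach.
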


%
%


\begin{proof}
Fix $\alpha$ to be a fair execution of $\mathcal{S}$ such that
$\alpha|_{\hat{I} \cup O_D} \in T_D$ and $\alpha|_{I_P \cup O_P}$
satisfies $f$-crash-limitation. Recall that $\mathcal{S}$ consists of
a distributed algorithm $A$ that solves $f$-crash-tolerant binary
consensus using AFD $D$, the channel automata, and
$\mathcal{E}$. Since $\alpha|_{\hat{I} \cup O_D} \in T_D$, we know
from the definition of ``solving a problem using an AFD'' that
$\alpha|_{I_P \cup O_P} \in T_{P,f}$.
%

Recall that $T_{P,f}$ is the set of all sequences $t$ over $I_P \cup O_P$
such that if $t$ satisfies environment well-formedness and $f$-crash
limitation, then $t$ satisfies crash validity, agreement, validity,
and termination. We assumed that $\alpha|_{I_P \cup O_P}$ satisfies
$f$-crash limitation.

From Theorem \ref{thm:environmentWellFormed}, we know that
$\mathcal{E}_C$ is a well-formed environment. Therefore, $\alpha|_{I_P
  \cup O_P}$ satisfies environment well-formedness. Consequently,
$\alpha|_{I_P \cup O_P}$ satisfies agreement and termination. By the
agreement property we know that $\alpha|_{I_P \cup O_P}$ contains at
most one decision value. Since $f < n$, we know that there is at least one location for which no crash event occurs, and therefore, by the
termination property, we know that at least one location decides. In
other words,  $\alpha|_{I_P \cup O_P}$ has exactly one decision value.
\end{proof}

\subsection{Trees of Executions}\label{subsec:consensusTree}
For the remainder of this section, fix $G$ to be an arbitrary
viable observation of $D$ such that at most $f$ locations are not
live in $G$. Recall the construction of the execution trees from Section \ref{subsec:treeOfExec}; construct
the tree $\mathcal{R}^{G}$ for system $\mathcal{S}$. 

The primary reasons for fixing $G$ to be a viable observation are the following. Lemmas \ref{prop:fairBranchOneDecision}, \ref{prop:eachNodeAtMostOneDecision}, and \ref{prop:eachNodeAtMostOneDecisionInAllGs}, which talk about possible decision values in branches of $\mathcal{R}^{G}$, are true only for viable observations. Furthermore, the notion of ``valence'' defined in Section \ref{subsec:valence} is applicable only when Lemmas \ref{prop:fairBranchOneDecision}, \ref{prop:eachNodeAtMostOneDecision}, and \ref{prop:eachNodeAtMostOneDecisionInAllGs} hold, and consequently, ``valence'' makes sense only for viable observations. Since the rest of Section \ref{sec:consensusAndAFD} discusses the properties of branches of the execution trees and their valences, we must fix $G$ to be a viable observation for the remainder of the section.

Since $G$ is a viable observation of $D$, by definition, there exists a trace $t_D \in T_D$ such that
$t_D|_{O_D}$ is the event sequence of some topological ordering of the
vertices in $G$. Fix such a trace $t_D$ for the remainder of this section.

The set $L$ of labels in  $\mathcal{R}^{G}$ is $\set{FD_i|i\in\Pi}
\cup \set{Proc_i| i \in \Pi} \cup  \set{Env_{i,v}|i\in\Pi \wedge v \in
  \{0,1\}} \cup \set{Chan_{i,j}| i\in\Pi \wedge j
  \in \Pi \setminus \{i\}}$. 

Recall from Section \ref{subsec:consensusDef} that in any sequence $t$
over $I_P \cup O_P$, if an event $decide(v)_i$ occurs, then $v$ is
said to be a decision value of $t$. We extend this definition to
arbitrary sequences; for any sequence $t$, if $t$ contains an element
$decide(v)_i$ (where $v \in \set{0,1}$ and $i \in \Pi$), then $v$ is
said to be a \emph{decision value} of $t$.

The next Lemma follows immediately from Theorem \ref{thm:fairBranchFairExec} and Lemma \ref{lem:fairExecExactlyOneDecision}.
\begin{lemma}\label{prop:fairBranchOneDecision}
 For each fair branch $b$ in $\mathcal{R}^{G}$, $exe(b)$ has exactly one decision value.
\end{lemma}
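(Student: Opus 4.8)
The plan is to chain the two results cited immediately before the statement. First I would fix an arbitrary fair branch $b$ of $\mathcal{R}^G$ and invoke Theorem \ref{thm:fairBranchFairExec} to obtain a fair trace $t_{\mathcal{S}}$ of $\mathcal{S}$ with $trace(b) = t_{\mathcal{S}}|_{act(\mathcal{S}) \setminus \hat{I}}$ and $t_{\mathcal{S}}|_{O_D \cup \hat{I}} \in T_D$. By the definition of a fair trace, $t_{\mathcal{S}}$ is the trace of some fair execution $\alpha$ of $\mathcal{S}$; since every action in $\hat{I} \cup O_D$ and every action in $I_P \cup O_P$ is external, projecting $\alpha$ onto either set agrees with projecting $t_{\mathcal{S}}$, so in particular $\alpha|_{\hat{I} \cup O_D} = t_{\mathcal{S}}|_{\hat{I} \cup O_D} \in T_D$. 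This sets up the hypotheses of Lemma \ref{lem:fairExecExactlyOneDecision}.

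Next I would verify the one hypothesis of Lemma \ref{lem:fairExecExactlyOneDecision} not already in hand, namely that $\alpha|_{I_P \cup O_P}$ satisfies $f$-crash-limitation. This is the only step that requires care, and I expect it to be the main obstacle. The crash events of $\alpha$ are exactly those of $t_{\mathcal{S}}$, and since $\hat{I} \subseteq I_P$ and $\hat{I} \subseteq O_D \cup \hat{I}$, the set of crashed locations in $\alpha|_{I_P \cup O_P}$ equals the set of crashed locations in $t_{\mathcal{S}}|_{O_D \cup \hat{I}}$. From the construction underlying Theorem \ref{thm:fairBranchFairExec} (Lemma \ref{lem:fairExe}), $crash$ events are inserted only for locations that are not live in $G$, exactly one per such location; equivalently $faulty(t_{\mathcal{S}}|_{O_D \cup \hat{I}}) = \Pi \setminus live(G)$. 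By the standing hypothesis of this subsection, at most $f$ locations are not live in $G$, so at most $f$ locations crash in $\alpha$, and $f$-crash-limitation holds. The delicate point is precisely this bookkeeping: one must confirm that the crashes introduced in the construction sit only at non-live locations and that ``at most $f$ non-live locations'' transfers to ``at most $f$ crash events.''

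With both hypotheses confirmed, Lemma \ref{lem:fairExecExactlyOneDecision} yields that $\alpha|_{I_P \cup O_P}$ has exactly one decision value, and it remains to transfer this back to $exe(b)$. Every $decide(v)_i$ action lies in $O_P \subseteq act(\mathcal{S}) \setminus \hat{I}$, so the $decide$ events occurring in $\alpha|_{I_P \cup O_P}$ are precisely the $decide$ events occurring in $t_{\mathcal{S}}|_{act(\mathcal{S}) \setminus \hat{I}} = trace(b)$; and since $trace(b)$ is the trace of $exe(b)$ and $decide$ actions are external, these are exactly the $decide$ events of $exe(b)$. Hence $exe(b)$ and $\alpha|_{I_P \cup O_P}$ exhibit the same decision values, and therefore $exe(b)$ has exactly one decision value. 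Apart from the crash-limitation step, everything is a matter of matching projections onto external action sets, which is why the statement can be said to follow ``immediately'' from the two cited results.
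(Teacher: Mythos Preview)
Your proposal is correct and follows essentially the same route as the paper: invoke Theorem~\ref{thm:fairBranchFairExec} to obtain a fair execution, verify $f$-crash-limitation from the standing hypothesis that at most $f$ locations are not live in $G$, apply Lemma~\ref{lem:fairExecExactlyOneDecision}, and transfer the decision value back to $exe(b)$ via the projection identities. The only cosmetic difference is in the $f$-crash-limitation step: you reach inside the construction of Lemma~\ref{lem:fairExe} to note that $crash$ events are inserted precisely at the non-live locations of $G$, whereas the paper argues more black-box, observing that $t'_D := t_{\mathcal{S}}|_{\hat{I}\cup O_D}\in T_D$ and that $t'_D|_{O_D}$ is the event-sequence of a fair branch of $G$, so by validity the faulty locations of $t'_D$ are exactly the non-live locations of $G$.
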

\begin{proof}
Fix a fair branch $b$ in $\mathcal{R}^{G}$. Invoking Theorem \ref{thm:fairBranchFairExec}, we know that there exists a fair trace $t_\mathcal{S}$ of $\mathcal{S}$ such that $trace(b) = t_\mathcal{S}|_{act(\mathcal{S}) \setminus \hat{I}}$ and $t_\mathcal{S}|_{\hat{I} \cup O_D} \in T_D$. Let $\alpha_\mathcal{S}$ be an execution of $\mathcal{S}$ whose trace is $t_\mathcal{S}$, and let $t'_D = t_\mathcal{S}|_{\hat{I} \cup O_D}$. Since $trace(b) = t_\mathcal{S}|_{act(\mathcal{S}) \setminus \hat{I}}$, we know that $t'_D|_{O_D} = trace(b)|_{O_D} = exe(b)|_{O_D}$. Invoking Lemma \ref{lem:fairBranchUnfairExe}, we know that $exe(b)|_{O_D}$ is the event-sequence of some fair branch in $G$. Therefore, $t'_D|_{O_D}$ is the event-sequence of some fair branch in $G$.
 
Since at most $f$ locations are not live in $G$, there are at most $f$ locations $i$ such that $t'_D$ has only finitely many events from $O_{D,i}$. Since $t'_D \in T_D$, we know that $t'_D$ has at most $f$ locations that are not live in $t'_D$. Recall that $t'_D = t_\mathcal{S}|_{\hat{I} \cup O_D}$, and therefore, there are at most $f$ locations that are not live in $\alpha_\mathcal{S}$. In other words,  $\alpha_\mathcal{S}|_{I_P \cup O_P}$
satisfies $f$-crash-limitation. Thus, invoking Lemma \ref{lem:fairExecExactlyOneDecision}, we know that $\alpha_\mathcal{S}|_{I_P \cup O_P}$ has exactly one decision value. Since $trace(b) = \alpha_\mathcal{S}|_{act(\mathcal{S}) \setminus \hat{I}}$, we know that $trace(b)$, and therefore $exe(b)$, has exactly one decision value.
\end{proof}
%
\begin{lemma}\label{prop:eachNodeAtMostOneDecision}
 For each node $N$ in $\mathcal{R}^{G}$, $exe(N)$ has at most one decision value.
\end{lemma}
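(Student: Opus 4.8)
The plan is to reduce the statement to the already-established Lemma \ref{prop:fairBranchOneDecision} by exhibiting $exe(N)$ as a prefix of the execution read off some fair branch passing through $N$. First I would record that $G$ is infinite: since $f < n$ and at most $f$ locations fail to be live in $G$, at least one location is live in $G$, so $V$ contains infinitely many vertices. Consequently $\mathcal{R}^G$ has infinite height, and every node at finite depth (in particular $N$ and each of its descendants) is an internal node.

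Next I would construct a fair branch $b$ of $\mathcal{R}^G$ that passes through $N$. Starting from $path(N)$, I extend it downward in round-robin fashion: the label set is finite (finitely many locations, each with a finite task index set), so enumerating the labels as $l_1,\dots,l_m$ I repeatedly append an $l_1$-edge, then an $l_2$-edge, and so on through $l_m$, cycling forever. By Lemma \ref{prop:allLabelsExist} each internal node has an outgoing $l$-edge for every label $l$, and since every finite-depth node is internal this process never terminates. The resulting path is infinite, hence a maximal path from the root, and by construction it contains infinitely many edges with each label; thus $b$ is a fair branch and $N$ lies on $b$.

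Then I would invoke the prefix structure of $exe$. Every node of $b$ below $N$ is a descendant of $N$, so Lemma \ref{prop:ancesterPrefix} gives that $exe(N)$ is a prefix of $exe(\hat N)$ for each such $\hat N$, and therefore $exe(N)$ is a prefix of the limit $exe(b)$. In particular, every event of the form $decide(v)_i$ occurring in $exe(N)$ also occurs in $exe(b)$, so every decision value of $exe(N)$ is a decision value of $exe(b)$. By Lemma \ref{prop:fairBranchOneDecision}, $exe(b)$ has exactly one decision value; hence $exe(N)$ has at most one decision value, which is the claim.

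I expect the only mildly delicate step to be the fair-branch construction: one must confirm that $G$ is infinite so that no branch terminates prematurely at a depth-$|V|$ leaf, and that round-robin scheduling over the finite label set genuinely produces a branch that is \emph{fair} in the precise sense used by Lemma \ref{prop:fairBranchOneDecision}. Everything else follows immediately from the prefix-closure property of $exe$ (Lemma \ref{prop:ancesterPrefix}) together with the single-decision-value property of fair branches.
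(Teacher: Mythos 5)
Your proposal is correct and follows essentially the same route as the paper's proof: fix a fair branch $b$ through $N$, observe that $exe(N)$ is a prefix of $exe(b)$, and invoke Lemma \ref{prop:fairBranchOneDecision} to bound the decision values. The only difference is that you explicitly justify the existence of such a fair branch (via the infinitude of $G$ and round-robin extension using Lemma \ref{prop:allLabelsExist}), a step the paper's proof asserts without argument, so your version is simply a more detailed rendering of the same proof.
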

\begin{proof}
Fix node $N$ of $\mathcal{R}^G$. Fix $b$ to be a fair branch that contains node $N$. By construction $exe(N)$ is a prefix of $exe(b)$. Invoking Lemma \ref{prop:fairBranchOneDecision} yields that $exe(b)$ has exactly one decision value. Therefore, $exe(N)$ must have at most one decision value.
\end{proof}

Fix a convergent sequence of finite observations $G_1, G_2, G_3, \ldots$ that converge to $G$; that is, for each positive integer $x$, $G_x$ is a prefix of $G_{x+1}$, and $\lim_{x \rightarrow \infty} G_x = G$. Construct the sequence of trees $\mathcal{R}^{G_1}, \mathcal{R}^{G_2}, \ldots$ for system $\mathcal{S}$

\begin{lemma}\label{prop:eachNodeAtMostOneDecisionInAllGs}
 For each observation $G' \in \set{G, G_1, G_2, \ldots }$, for each node $N$ in $\mathcal{R}^{G'}$, $exe(N)$ has at most one decision value.
\end{lemma}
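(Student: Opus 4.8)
The plan is to reduce the claim for every observation in the list $\set{G, G_1, G_2, \ldots}$ to the single case already settled by Lemma \ref{prop:eachNodeAtMostOneDecision}, namely $G' = G$. For that case there is nothing to prove: Lemma \ref{prop:eachNodeAtMostOneDecision} already asserts that every node $N$ of $\mathcal{R}^{G}$ satisfies that $exe(N)$ has at most one decision value. So the entire argument concerns the finite observations $G_1, G_2, \ldots$, and the strategy is to transport each node of $\mathcal{R}^{G_x}$ to a node of $\mathcal{R}^{G}$ that induces the identical execution.

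First I would record that each $G_x$ is a prefix of $G$. Since the sequence $G_1, G_2, \ldots$ is convergent with limit $G$ (that is, $G = G^\infty$), Lemma \ref{lem:GxIsPrefixOfGInfty} gives that $G_x$ is a prefix of $G^\infty = G$ for every positive integer $x$. This is the only place the hypothesis that the sequence converges to $G$ is used, and it sets up the prefix relation in the direction required by the transfer corollary.

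Next, fix any $G' = G_x$ and any node $N$ in $\mathcal{R}^{G'}$. Because $G_x$ is a prefix of $G$, Corollary \ref{cor:prefixObsPrefixExe} supplies a node $N^\ast$ in $\mathcal{R}^{G}$ with $exe(N) = exe(N^\ast)$ (and incidentally $v_{N} = v_{N^\ast}$, though only the equality of executions is needed). Applying Lemma \ref{prop:eachNodeAtMostOneDecision} to $N^\ast$ shows that $exe(N^\ast)$ has at most one decision value; since the decision values of a sequence are determined by the sequence alone, and $exe(N)$ and $exe(N^\ast)$ are literally the same sequence, $exe(N)$ inherits the same bound. This completes the case analysis and hence the lemma.

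I expect no substantive obstacle: the statement is essentially a corollary of the prefix-to-tree transfer already established. The one point to handle carefully is invoking the correct direction of the prefix relation---$G_x$ is a prefix of the limit $G$, not conversely---so that Corollary \ref{cor:prefixObsPrefixExe}, which maps a node of the prefix observation's tree into the tree of the larger observation, is the applicable tool. Once that direction is fixed, the reduction to Lemma \ref{prop:eachNodeAtMostOneDecision} is immediate and requires no further computation.
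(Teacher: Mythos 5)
Your proposal is correct and follows essentially the same route as the paper: both reduce the claim to Lemma \ref{prop:eachNodeAtMostOneDecision} by noting that every $G'$ in the sequence is a prefix of $G$ and then transporting each node of $\mathcal{R}^{G'}$ to a node of $\mathcal{R}^{G}$ with the identical execution (you via Corollary \ref{cor:prefixObsPrefixExe}, the paper via the slightly stronger Corollary \ref{cor:NonBotExistsFromPrefix}, whose extra non-$\bot$ guarantee is not needed here). The only cosmetic differences are that you argue directly rather than by contradiction and treat the case $G' = G$ separately, whereas the paper handles it uniformly since $G$ is trivially a prefix of itself.
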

\begin{proof}
Fix an observation $G'$ and a node $N$ as in the hypothesis of the lemma. For contraction, assume that $exe(N)$ has more than one decision value.

Recall that every observation in $\set{G, G_1, G_2, \ldots }$ is a prefix of $G$, and therefore, $G'$ is a prefix of $G$. Therefore, by Lemma \ref{cor:NonBotExistsFromPrefix}, we know that $\mathcal{R}^G$ has a node $N_G$ such that $exe(N) = exe(N_G)$. Since $exe(N)$ has more than one decision value, $exe(N_G)$ must also have more than one decision value. However, this contradicts Lemma \ref{prop:eachNodeAtMostOneDecision}.
\end{proof}

\subsection{Valence}\label{subsec:valence}

For any arbitrary observation $G' \in \set{G, G_1, G_2, \ldots }$ and any arbitrary node $N$ in $\mathcal{R}^{G'}$, we define the notion of ``valence" as follows. From Lemma
\ref{prop:nodeFiniteExe}, we know that $exe(N)$ is a finite execution
of system $\mathcal{S}$. Node $N$ is said to be \emph{bivalent} in $\mathcal{R}^{G'}$ if
there exist two descendants $N_0$ and $N_1$ of $N$ such that
$exe(N_0)$ has a decision value $0$ and $exe(N_1)$ has a decision
value $1$; recall from Lemma
\ref{prop:eachNodeAtMostOneDecisionInAllGs} that every node has at most one
decision value.  Similarly, $N$ is said to be $v$-valent in $\mathcal{R}^{G'}$ if there exists a descendant
$N_v$ of $N$ such that $v$ is a decision value of $exe(N_v)$, and for
every descendant $N_{v'}$ of $N$, it is not the case that $1-v$ is a
decision value of $exe(N_{v'})$. If $N$ is either $0$-valent or
$1$-valent, then it is said to be \emph{univalent}.

\subsubsection{Valence of nodes in execution trees of $\set{G,G_1,G_2,\ldots}$}
Here we show the following properties related to valence for any arbitrary observation $G' \in \set{G, G_1, G_2, \ldots }$ and any arbitrary node $N$ in $\mathcal{R}^{G'}$. If $N$ is bivalent in $\mathcal{R}^{G'}$, then it does not have a decision value (Lemma \ref{prop:bivalentNoDecisionValue}). If a non-$\bot$ node $N$ is bivalent in $\mathcal{R}^{G_x}$ for some $x$, then for all $x' \geq x$, $N$ remains a non-$\bot$ bivalent node in $\mathcal{R}^{G_{x'}}$ and in $\mathcal{R}^{G}$ (Lemma \ref{lem:bivalentNodeStableInNextG} and Corollaries  \ref{cor:bivalentNodeStableInNextG} and \ref{cor:bivalentNodeStableInAllGs}). If a non-$\bot$ node $N$ is bivalent in $\mathcal{R}^G$, then for some positive integer $x$ and all $x' \geq x$, $N$ remains a non-$\bot$ bivalent node in $\mathcal{R}^{G_{x'}}$ (Lemma \ref{lem:bivalentNodeFiniteTimeInAllGs} and Corollary \ref{cor:bivalentNodeFiniteTimeInAllGs}). Finally, if a non-$\bot$ node $N$ is univalent in $\mathcal{R}^G$, then for some positive integer $x$ and all $x' \geq x$, $N$ remains a non-$\bot$ univalent node in $\mathcal{R}^{G_{x'}}$ (Lemma \ref{lem:univalentNodeFiniteTimeInAllGs}).

\begin{lemma}\label{prop:bivalentNoDecisionValue}
 Fix $G'$ to be an arbitrary observation in $\set{G, G_1, G_2, \ldots }$. Then, for every bivalent node $N$ in $\mathcal{R}^{G'}$, $exe(N)$ does not have a decision value in $\mathcal{R}^{G'}$.
\end{lemma}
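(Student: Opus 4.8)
The plan is to argue by contradiction, exploiting the prefix relationship between the execution of a node and the executions of its descendants, together with the ``at most one decision value'' property already established. Concretely, I would assume that $exe(N)$ has a decision value $v \in \{0,1\}$ and derive that $v$ must simultaneously equal $0$ and $1$.

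First I would unpack the two hypotheses. Since $N$ is bivalent in $\mathcal{R}^{G'}$, the definition of valence gives two descendants $N_0$ and $N_1$ of $N$ such that $exe(N_0)$ has decision value $0$ and $exe(N_1)$ has decision value $1$. From the contradiction assumption, some event $decide(v)_i$ (for a location $i$) occurs in $exe(N)$. The key structural fact is Lemma \ref{prop:ancesterPrefix}: because $N_0$ and $N_1$ are descendants of $N$, the sequence $exe(N)$ is a prefix of both $exe(N_0)$ and $exe(N_1)$.

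Next I would push the decision event forward. Since $decide(v)_i$ occurs in the prefix $exe(N)$, it also occurs in $exe(N_0)$ and in $exe(N_1)$; hence $v$ is a decision value of each of these executions. Now I invoke Lemma \ref{prop:eachNodeAtMostOneDecisionInAllGs}, which guarantees that $exe(N_0)$ and $exe(N_1)$ each have \emph{at most} one decision value. Applying this to $N_0$, whose decision value is $0$, forces $v = 0$; applying it to $N_1$, whose decision value is $1$, forces $v = 1$. These two conclusions are incompatible, yielding the desired contradiction and establishing that $exe(N)$ has no decision value.

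I do not expect any real obstacle here: the argument is essentially a one-step propagation of a decision event along the prefix order combined with uniqueness of decision values. The only points requiring minor care are (i) confirming that $G'$ being a prefix of $G$ is exactly what licenses the use of Lemma \ref{prop:eachNodeAtMostOneDecisionInAllGs} for every $G' \in \{G, G_1, G_2, \ldots\}$ (which that lemma already handles), and (ii) being explicit that ``has decision value $v$'' for $exe(N)$ means the literal occurrence of a $decide(v)_i$ event, so that the prefix containment of Lemma \ref{prop:ancesterPrefix} transfers the event verbatim to the descendants.
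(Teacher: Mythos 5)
Your proposal is correct and follows essentially the same route as the paper's proof: assume $exe(N)$ has a decision value, propagate it to descendants via the prefix relation of Lemma \ref{prop:ancesterPrefix}, and contradict bivalence using the at-most-one-decision-value guarantee of Lemma \ref{prop:eachNodeAtMostOneDecisionInAllGs}. The paper phrases the contradiction via a single descendant deciding $1-v$, while you symmetrically force $v=0$ and $v=1$ from the two witnesses $N_0$ and $N_1$; this is the same argument, just spelled out more explicitly.
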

\begin{proof}
Fix $N$ be a bivalent node in $\mathcal{R}^{G'}$. By Lemma
\ref{prop:eachNodeAtMostOneDecisionInAllGs}, $exe(N)$ has at most one
decision value. For contradiction, let $exe(N)$ have a decision value
(say) $v$. Then, every descendant $\hat{N}$ of $N$ also has exactly one decision
value $v$. However, since $N$ is bivalent, some descendant $\hat{N}$ of $N$ must
have a decision value $1-v$. Thus, we have a contradiction.
\end{proof}

Applying Lemma \ref{lem:nodePersistsForEver} to the sequence $G_1,G_2,\ldots$, we conclude the following. For each positive integer $x$, for each non-$\bot$ node $N$ in $\mathcal{R}^{G_x}$, for each positive integer $x'\geq x$, $\mathcal{R}^{G_{x'}}$ contains node $N$.

\begin{lemma}\label{lem:bivalentNodeStableInNextG}
For each positive integer $x$, if a non-$\bot$ node $N$ is bivalent in $\mathcal{R}^{G_x}$, then  node $N$ in $\mathcal{R}^{G_{x+1}}$ is a non-$\bot$ node and is bivalent.
\end{lemma}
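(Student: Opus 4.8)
The plan is to reduce the statement to the persistence machinery for \emph{non-$\bot$} nodes developed in Section~\ref{subset:PropertiesOfTreesFromPrefixObservations}, the only subtlety being that the witnesses to bivalence supplied by the definition need not themselves be non-$\bot$ nodes. First I would record that $N$ survives as a non-$\bot$ node: since $G_x$ is a prefix of $G_{x+1}$ and $N$ is a non-$\bot$ node of $\mathcal{R}^{G_x}$, Lemma~\ref{lem:UniqueNodeMappingFromPrefixObsToSuperObs} yields a unique \emph{replica} of $N$ in $\mathcal{R}^{G_{x+1}}$ carrying the identical sequence of edge labels and vertex tags from $\top$; this replica is non-$\bot$ by construction, and per the convention introduced after Lemma~\ref{lem:superObsYieldsSuperTree} we refer to it as $N$. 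Equivalently one can cite Lemma~\ref{lem:nodePersistsOneHop} directly. This already settles the ``$N$ is a non-$\bot$ node in $\mathcal{R}^{G_{x+1}}$'' half of the claim.

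Next I would handle bivalence. By hypothesis there exist descendants $N_0$ and $N_1$ of $N$ in $\mathcal{R}^{G_x}$ with $exe(N_0)$ having decision value $0$ and $exe(N_1)$ having decision value $1$. These descendants may be reached through edges with $\bot$ action tags, so they are not directly amenable to the persistence lemma. The fix is to invoke Lemma~\ref{lem:everyDescendantNoBot} at $N$: for each $v \in \{0,1\}$ it produces a descendant $N_{v,\not\bot}$ of $N$ with $v_{N_{v,\not\bot}} = v_{N_v}$, with the suffix of $exe(N_{v,\not\bot})$ following $exe(N)$ identical to that of $exe(N_v)$, and with the path from $N$ to $N_{v,\not\bot}$ free of $\bot$ action tags. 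Since $exe(N_v) = exe(N)\cdot(\text{suffix})$ and the suffixes agree, $exe(N_{v,\not\bot}) = exe(N_v)$, so $N_{v,\not\bot}$ still has decision value $v$. Because $N$ is itself non-$\bot$ and the path from $N$ to $N_{v,\not\bot}$ has no $\bot$ action tags, the entire path $\top \to N \to N_{v,\not\bot}$ is $\bot$-free, so $N_{v,\not\bot}$ is a non-$\bot$ node of $\mathcal{R}^{G_x}$.

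Finally I would transport these two non-$\bot$ witnesses across to $\mathcal{R}^{G_{x+1}}$. Applying Lemma~\ref{lem:UniqueNodeMappingFromPrefixObsToSuperObs} (or Lemma~\ref{lem:prefixObsYieldsSubtree} to the $\bot$-free path) to each $N_{v,\not\bot}$ gives a replica in $\mathcal{R}^{G_{x+1}}$ whose path from $\top$ has the same labels, vertex tags, and action tags, hence the same $exe$ and the same decision value $v$. The one point requiring care --- and the main obstacle --- is preserving the \emph{descendant} relation: I must argue that the replica of $N_{v,\not\bot}$ is a descendant of the replica of $N$, not merely some unrelated node. This follows because the replica map preserves the full label/vertex-tag sequence from $\top$, so $N$, lying on the $\top$-to-$N_{v,\not\bot}$ path in $\mathcal{R}^{G_x}$, corresponds to a node on the $\top$-to-(replica) path in $\mathcal{R}^{G_{x+1}}$, and uniqueness (Lemma~\ref{lem:LabelsAndVertexAgsDenoteUniqueNode}) identifies that intermediate node with the replica of $N$. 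Thus in $\mathcal{R}^{G_{x+1}}$ the node $N$ has descendants with decision values $0$ and $1$, so $N$ is bivalent there, completing the proof.
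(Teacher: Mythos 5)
Your proof is correct and follows essentially the same route as the paper's: replace the bivalence witnesses $N_0, N_1$ by descendants reached along $\bot$-free paths (Lemma~\ref{lem:everyDescendantNoBot} applied at $N$), note these carry the same $exe$ and hence the same decision values, and then transport the resulting $\bot$-free root-to-witness paths into $\mathcal{R}^{G_{x+1}}$ via Lemma~\ref{lem:prefixObsYieldsSubtree}. The only difference is that you explicitly verify the descendant relation is preserved under this transport (via uniqueness, Lemma~\ref{lem:LabelsAndVertexAgsDenoteUniqueNode}), a point the paper leaves implicit since the transported path from $\top$ to each witness passes through the replica of $N$ by construction.
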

\begin{proof}
Fix $x$ and $N$ as in the hypotheses of the lemma.  
Since $N$ is bivalent, there exists some descendant $N_1$ of $N$ in $\mathcal{R}^{G_x}$ such that the decision value of $exe(N_1)$ is $1$, and there exists some descendant $N_0$ of $N$ in $\mathcal{R}^{G_x}$ such that the decision value of $exe(N_0)$ is $0$. Applying Lemma \ref{cor:NonBotNodesExist}, we know that there exist descendants $N_{\not\bot.1}$ and $N_{\not\bot.0}$ such that decision value of $exe(N_{\not\bot.1})$ is $1$, and the decision value of $exe(N_{\not\bot.0})$ is $0$, and in the paths from $N$ to $N_{\not\bot.1}$ and from $N$ to $N_{\not\bot.0}$, there is no edge whose action tag is $\bot$. In other words, $N_{\not\bot.1}$ and $N_{\not\bot.0}$ are non-$\bot$ nodes. 

Thus, in $\mathcal{R}^{G_x}$, the path from root to $N$, from root to $N_{\not\bot.1}$, and from root to $N_{\not\bot.0}$ does not contain any edge whose action tag is $\bot$. Recall that $G_x$ is a prefix of $G_{x+1}$. 
Applying Lemma \ref{lem:prefixObsYieldsSubtree}, we know that $\mathcal{R}^{G_{x+1}}$ contains the non-$\bot$ nodes $N$, $N_{\not\bot.1}$, and $N_{\not\bot.0}$. That is, node $N$ in $\mathcal{R}^{G_{x+1}}$ is bivalent.
\end{proof}

\begin{corollary}\label{cor:bivalentNodeStableInNextG}
For each positive integer $x$, if a non-$\bot$ node $N$ is bivalent in $\mathcal{R}^{G_x}$, then for all positive integers $x'\geq x$, node $N$ in $\mathcal{R}^{G_{x'}}$ is a non-$\bot$ node and is bivalent.
\end{corollary}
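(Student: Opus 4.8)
The plan is to obtain this corollary by a straightforward induction on $x'$, using the single-step Lemma~\ref{lem:bivalentNodeStableInNextG} as the inductive engine. The corollary is precisely the ``eventually forever'' strengthening of that lemma: the lemma promises that bivalence and non-$\bot$-ness of a node $N$ are preserved when passing from $\mathcal{R}^{G_x}$ to $\mathcal{R}^{G_{x+1}}$, and the corollary asserts the same for every later tree $\mathcal{R}^{G_{x'}}$ with $x' \geq x$. So the whole content is to iterate the one-hop preservation.

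First I would fix $x$ and the non-$\bot$ bivalent node $N$ in $\mathcal{R}^{G_x}$ as in the hypothesis, and set up an induction on $x'$ with base case $x' = x$. For the base case there is nothing to prove: by assumption $N$ is a non-$\bot$ node in $\mathcal{R}^{G_x}$ and is bivalent there. For the inductive step, I would assume as the inductive hypothesis that for some $x' \geq x$ the node $N$ is a non-$\bot$ node in $\mathcal{R}^{G_{x'}}$ and is bivalent in $\mathcal{R}^{G_{x'}}$. I would then invoke Lemma~\ref{lem:bivalentNodeStableInNextG}, applying it with $x'$ playing the role of the generic index ``$x$'' in that lemma's statement; this is legitimate because $x'$ is a positive integer and $N$ is a non-$\bot$ bivalent node of $\mathcal{R}^{G_{x'}}$, which is exactly the hypothesis the lemma requires. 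The lemma then yields that $N$ in $\mathcal{R}^{G_{x'+1}}$ is a non-$\bot$ node and is bivalent, completing the induction.

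I should also make explicit (or at least take for granted via the discussion preceding the lemma, which applies Lemma~\ref{lem:nodePersistsForEver}) that ``node $N$ in $\mathcal{R}^{G_{x'}}$'' is well defined: because $N$ is non-$\bot$, it persists as its replica across the convergent sequence $G_1, G_2, \ldots$, so referring to ``the same'' node $N$ in each successive tree is meaningful. With that understood, the statement of Lemma~\ref{lem:bivalentNodeStableInNextG} is literally the inductive step, so no further argument is needed.

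I do not expect any real obstacle here: this is a routine ``one-step implies all-steps'' induction, and the only subtlety worth flagging is the clean reindexing when applying Lemma~\ref{lem:bivalentNodeStableInNextG} at each step (using $x'$ rather than the original $x$) together with the persistence of non-$\bot$ nodes that makes the phrase ``node $N$ in $\mathcal{R}^{G_{x'}}$'' unambiguous.
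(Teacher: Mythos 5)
Your proposal is correct and matches the paper's own proof, which is exactly the induction on $x'-x$ (equivalently on $x'$) with Lemma~\ref{lem:bivalentNodeStableInNextG} supplying the inductive step. Your extra remark about the persistence of non-$\bot$ nodes (via replicas) making ``node $N$ in $\mathcal{R}^{G_{x'}}$'' well defined is a sound clarification of a point the paper leaves implicit.
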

\begin{proof}
The Corollary is an inductive extension of Lemma \ref{lem:bivalentNodeStableInNextG}, where the induction is on $x'-x$.
\end{proof}

\begin{corollary}\label{cor:bivalentNodeStableInAllGs}
For each positive integer $x$, if a non-$\bot$ node $N$ is bivalent in $\mathcal{R}^{G_x}$, then $N$ is a non-$\bot$ node and is bivalent in $\mathcal{R}^{G}$.
\end{corollary}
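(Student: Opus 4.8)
The plan is to mimic the proof of Lemma \ref{lem:bivalentNodeStableInNextG}, replacing the one-step prefix relation between $G_x$ and $G_{x+1}$ with the prefix relation between $G_x$ and $G$. The latter is available because $G$ is the limit of the sequence $G_1,G_2,\ldots$, and Lemma \ref{lem:GxIsPrefixOfGInfty} guarantees that every $G_y$ is a prefix of this limit. Thus the entire argument will rest on a single application of Lemma \ref{lem:prefixObsYieldsSubtree} with the prefix observation $G_x$ and the super-observation $G$.

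First I would unpack the bivalence of $N$ in $\mathcal{R}^{G_x}$: there exist descendants $N_0$ and $N_1$ of $N$ whose executions $exe(N_0)$ and $exe(N_1)$ have decision values $0$ and $1$, respectively. To feed these witnesses into Lemma \ref{lem:prefixObsYieldsSubtree}, I need them to be reachable from $N$ along $\bot$-free paths. I would therefore apply Lemma \ref{lem:everyDescendantNoBot} to $N_0$ and $N_1$ to obtain descendants $N_{\not\bot.0}$ and $N_{\not\bot.1}$ of $N$ such that the path from $N$ to each carries no $\bot$ action tag, and whose execution suffixes following $exe(N)$ coincide with those of $N_0$ and $N_1$; since $exe(\cdot)$ is unchanged, the decision values $0$ and $1$ are preserved. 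Because $N$ is itself a non-$\bot$ node, the path from $\top$ to $N$ is also $\bot$-free, so the full paths from $\top$ to $N_{\not\bot.0}$ and to $N_{\not\bot.1}$, both of which pass through $N$, contain no edges with $\bot$ action tags.

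Next I would invoke Lemma \ref{lem:prefixObsYieldsSubtree} on these two $\bot$-free root-paths of $\mathcal{R}^{G_x}$. Since $G_x$ is a prefix of $G$, the lemma produces corresponding paths in $\mathcal{R}^G$ of the same length with identical node tags, edge tags, and edge labels. In particular, $\mathcal{R}^G$ contains, as non-$\bot$ nodes, $N$ together with $N_{\not\bot.0}$ and $N_{\not\bot.1}$, where the latter two are still descendants of $N$, and their executions (hence their decision values $0$ and $1$) carry over. This exhibits the two descendants required by the definition of bivalence, so $N$ is a non-$\bot$ bivalent node in $\mathcal{R}^G$.

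The one point requiring care, and the only place the argument could slip, is guaranteeing that $N_{\not\bot.0}$ and $N_{\not\bot.1}$ remain \emph{descendants of $N$} (not merely present somewhere) in $\mathcal{R}^G$, and that the decision values are genuinely determined by the transported tags. Both are secured by the fact that Lemma \ref{lem:prefixObsYieldsSubtree} transports an \emph{entire} path from the root node, preserving every intermediate node as well as the $decide$-events encoded in the action tags; consequently $N$ lies on the transported path as an ancestor of each witness. I note that one cannot shortcut this through Corollary \ref{cor:bivalentNodeStableInNextG} alone, since that corollary only reaches the finite observations $G_{x'}$ and never the limit $G$ itself; a direct appeal to the prefix relation between $G_x$ and the limit is unavoidable, which is precisely what the plan above carries out.
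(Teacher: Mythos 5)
Your proof is correct and is essentially the intended argument: the paper leaves this corollary unproved, but it is meant to follow exactly as in Lemma \ref{lem:bivalentNodeStableInNextG}, replacing the prefix relation $G_x \preceq G_{x+1}$ with $G_x \preceq G$ (via Lemma \ref{lem:GxIsPrefixOfGInfty}) and applying Lemma \ref{lem:prefixObsYieldsSubtree} once to transport the $\bot$-free root paths to $N$ and to its two univalent witnesses. Your closing observation that Corollary \ref{cor:bivalentNodeStableInNextG} alone cannot reach the limit tree $\mathcal{R}^{G}$ is also accurate.
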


\begin{lemma}\label{lem:bivalentNodeFiniteTimeInAllGs}
If a non-$\bot$ node $N$ is bivalent in $\mathcal{R}^{G}$, then there exists a positive integer $x$ such that $N$ is a non-$\bot$ node and is bivalent in $\mathcal{R}^{G_x}$.
\end{lemma}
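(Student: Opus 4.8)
The plan is to exploit the fact that bivalence of $N$ in $\mathcal{R}^{G}$ is witnessed by two \emph{finite} portions of the tree, and that any finite part of $G$ already lives inside some $G_x$. First I would replace the bivalence witnesses by non-$\bot$ ones. Since $N$ is bivalent in $\mathcal{R}^{G}$, fix descendants $N_0$ and $N_1$ of $N$ whose executions have decision values $0$ and $1$, and apply Lemma \ref{lem:everyDescendantNoBot} to obtain descendants $N_0'$ and $N_1'$ of $N$ for which the paths from $N$ carry no $\bot$ action tags. By Lemma \ref{prop:ancesterPrefix}, $exe(N)$ is a prefix of each of $exe(N_0')$ and $exe(N_1')$, so the identity of suffixes following $exe(N)$ furnished by Lemma \ref{lem:everyDescendantNoBot} gives $exe(N_0') = exe(N_0)$ and $exe(N_1') = exe(N_1)$; hence the decision values are preserved. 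Because $N$ is itself non-$\bot$, the nodes $N_0'$ and $N_1'$ are non-$\bot$ nodes as well.

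Next I would descend to a finite observation. Let $\pi_0$ and $\pi_1$ be the paths from the root to $N_0'$ and $N_1'$; both are finite and, being non-$\bot$, carry no $\bot$ action tags. By the construction of the task tree, the distinct non-$(\bot,0,\bot)$ vertex tags along $\pi_0$ (resp.\ $\pi_1$) form the vertex sequence of a path in $G$, and only finitely many vertices occur in total. Since $G^\infty = G$ is the limit of the prefix-ordered sequence $G_1, G_2, \ldots$, its vertex set is $\bigcup_y V_y$ with the $V_y$ nested; hence some positive integer $x$ has $V_x$ containing every vertex tag appearing along $\pi_0$ and $\pi_1$. I would fix this $x$.

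The key technical step is to argue that these distinct vertex tags not only lie in $V_x$ but form a \emph{path} in $G_x$. This uses that $G_x$ is a prefix of $G$: if $w_1, \ldots, w_m$ is the distinct vertex-tag sequence, then each $(w_j, w_{j+1})$ is an incoming edge of $w_{j+1}$ in $G$, and since $w_{j+1} \in V_x$, the prefix property forces the incoming edge set of $w_{j+1}$ in $G_x$ to equal that in $G$, so the edge persists in $G_x$. Thus $\pi_0$ and $\pi_1$ satisfy the hypotheses of Lemma \ref{lem:superObsYieldsSuperTree} with the prefix observation taken to be $G_x$. Applying that lemma yields corresponding paths in $\mathcal{R}^{G_x}$ of the same length carrying identical labels and identical node and edge tags; their terminal nodes are non-$\bot$ and have the same executions, hence the same decision values $0$ and $1$. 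Since $path(N)$ is a common prefix of $\pi_0$ and $\pi_1$, the corresponding prefix is common to the two new paths, and by the uniqueness of non-$\bot$ nodes (Lemma \ref{lem:LabelsAndVertexAgsDenoteUniqueNode}) its terminal node is exactly the replica of $N$ in $\mathcal{R}^{G_x}$, with the two new terminal nodes as descendants. Therefore $N$ is a non-$\bot$ node that is bivalent in $\mathcal{R}^{G_x}$, as required.

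I expect the main obstacle to be the edge-persistence argument of the third paragraph: one must take care that having all vertices in $V_x$ genuinely reinstates the connecting edges in $G_x$, which hinges precisely on the incoming-edge-preservation clause in the definition of \emph{prefix} rather than on mere subgraph containment. Everything else is routine translation between the two execution trees via the already-established correspondence lemmas.
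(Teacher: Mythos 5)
Your proof is correct and follows essentially the same route as the paper's: fix finite bivalence witnesses below $N$, use convergence of the sequence $G_1, G_2, \ldots$ to find a finite observation $G_x$ containing their vertex tags, and transfer the witnessing paths into $\mathcal{R}^{G_x}$ via Lemma \ref{lem:superObsYieldsSuperTree}. Your version is in fact more careful than the paper's at the two places where the paper is sketchy---you replace the witnesses by non-$\bot$ ones via Lemma \ref{lem:everyDescendantNoBot}, and you justify that the vertex-tag sequences remain \emph{paths} in $G_x$ through the incoming-edge-preservation clause of the prefix definition, a step the paper asserts without argument.
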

\begin{proof}
Fix $N$ as in the hypotheses of the lemma. Since $N$ is bivalent in $\mathcal{R}^{G}$, there exist descendants $N_0$ and $N_1$ of $N$ such that $exe(N_0)$ has a decision value $0$ and $exe(N_1)$ has a decision value $1$ in $\mathcal{R}^{G}$. 

Let $d_0$ be the depth of node $N_{0}$ in $\mathcal{R}^{G}$, and let $d_1$ be the depth of node $N_{1}$ in $\mathcal{R}^{G}$. Let $d$ denote $\max(d_0,d_1)$. Since $G$ is the limit of $G_x$ as $x$ tends to $\infty$, we know that there exists a positive integer $x_1$ such that $G_{x_1}$ contains vertices $v_{N_{0}}$ and $v_{N_{1}}$. Since $i$ is live in $t_D$, there exists a positive integer $x_2$ such that $G_{x_2}$ contains at least $d$ vertices. Let $x$ be $\max(x_1,x_2)$, and therefore,  both $G_{x_1}$ and $G_{x_2}$ are prefixes of $G_x$. Therefore, $G_x$ contains vertices $v_{N_{0}}$ and $v_{N_{1}}$; therefore, $G_x$ contains at least $d$ vertices, and hence, the sequence of distinct non-$(\bot, 0, \bot)$ vertex tags in the paths from the root to $N_0$ and from the root to $N_1$ in $\mathcal{R}^{G}$ is also a path in $G_x$. By Lemma \ref{lem:superObsYieldsSuperTree}, we know that $\mathcal{R}^{G_x}$ contains nodes $N$, $N_{1}$, and $N_{0}$. Furthermore, we conclude that $N$ is bivalent in $\mathcal{R}^{G_x}$.
\end{proof}

\begin{corollary}\label{cor:bivalentNodeFiniteTimeInAllGs}
For each non-$\bot$ bivalent node $N$ in $\mathcal{R}^{G}$, there exists a positive integer $x$ such that for all positive integers $x' \geq x$, node $N$ is non-$\bot$ bivalent in $\mathcal{R}^G_{x'}$.
\end{corollary}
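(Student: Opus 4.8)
The plan is to obtain this corollary as an immediate composition of the two preceding results about the stability of non-$\bot$ bivalent nodes, namely Lemma~\ref{lem:bivalentNodeFiniteTimeInAllGs} and Corollary~\ref{cor:bivalentNodeStableInNextG}. No new machinery is needed; the content of the corollary is simply that ``eventually bivalent'' upgrades to ``eventually-forever bivalent'', and each half of that upgrade has already been established.

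Concretely, first I would invoke Lemma~\ref{lem:bivalentNodeFiniteTimeInAllGs} on the given non-$\bot$ bivalent node $N$ of $\mathcal{R}^{G}$. That lemma yields a single positive integer $x$ for which $N$ is a non-$\bot$ node and is bivalent in $\mathcal{R}^{G_x}$. I would then fix this $x$ and feed it, together with $N$, into Corollary~\ref{cor:bivalentNodeStableInNextG}. Since the hypothesis of that corollary is exactly ``$N$ is a non-$\bot$ node and is bivalent in $\mathcal{R}^{G_x}$'', which we have just secured, its conclusion gives that for every positive integer $x' \geq x$, node $N$ is a non-$\bot$ node and is bivalent in $\mathcal{R}^{G_{x'}}$. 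This is precisely the statement to be proved, with the same witness $x$.

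The only point worth being slightly careful about is that Lemma~\ref{lem:bivalentNodeFiniteTimeInAllGs} and Corollary~\ref{cor:bivalentNodeStableInNextG} must refer to ``the same'' node $N$ across the trees $\mathcal{R}^{G}$, $\mathcal{R}^{G_x}$, and $\mathcal{R}^{G_{x'}}$, in the sense of the \emph{replica} identification established in Section~\ref{subset:PropertiesOfTreesFromPrefixObservations} (via Lemma~\ref{lem:UniqueNodeMappingFromPrefixObsToSuperObs} and the convention of referring to a non-$\bot$ node and its replicas by the common name $N$). Both cited results are stated in exactly this replica language, so the identifications align and no reindexing is required. I therefore do not expect any genuine obstacle: the corollary is a one-line chaining of Lemma~\ref{lem:bivalentNodeFiniteTimeInAllGs} (existence of a starting index) with Corollary~\ref{cor:bivalentNodeStableInNextG} (persistence from that index onward).
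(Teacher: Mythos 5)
Your proposal is correct and is essentially the paper's own proof: both obtain the starting index $x$ by invoking Lemma~\ref{lem:bivalentNodeFiniteTimeInAllGs} on $N$, and then propagate non-$\bot$ bivalence to every $x' \geq x$ using the fact that $G_x$ is a prefix of $G_{x'}$. The only difference is which result is cited for the propagation step—the paper applies Lemma~\ref{lem:prefixObsYieldsSubtree} directly, while you route it through Corollary~\ref{cor:bivalentNodeStableInNextG}; since that corollary is itself derived from Lemma~\ref{lem:prefixObsYieldsSubtree}, the arguments coincide, and your citation is if anything the cleaner one, because the prefix lemma alone only transfers paths and leaves implicit the bookkeeping of the $0$-valent and $1$-valent witness descendants that the corollary already handles.
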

\begin{proof}
Fix $N$ as in the hypothesis of the corollary. From Lemma \ref{lem:bivalentNodeFiniteTimeInAllGs} , we know that there exists a positive integer $x$ such that $N$ is a non-$\bot$ bivalent node in $\mathcal{R}^G_x$. For any $x' \geq x$, we know that 
$G_x$ is a prefix of $G_{x'}$.
Applying Lemma \ref{lem:prefixObsYieldsSubtree}, we conclude that $N$ is a non-$\bot$ bivalent node in $\mathcal{R}^G_{x'}$.
\end{proof}

\begin{lemma}\label{lem:univalentNodeFiniteTimeInAllGs}
If a node $N$ is univalent in $\mathcal{R}^{G}$, then there exists a positive integer $x$ such that for all positive integers $x' \geq x$, node $N$ is univalent in $\mathcal{R}^{G_{x'}}$.
\end{lemma}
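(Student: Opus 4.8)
The plan is to reduce the statement to two independent facts about $N$: first, that a witnessing ``$v$-deciding'' descendant of $N$ survives into all sufficiently large finite trees, and second, that $N$ can never become bivalent in any finite tree. Combining these will force $N$ to be exactly $v$-valent in every $\mathcal{R}^{G_{x'}}$ with $x'$ large enough.

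First I would dispose of the $\bot$-edge issue. If the path from $\top$ to $N$ carries $\bot$-action tags, I replace $N$ by a non-$\bot$ node with the same configuration tag, vertex tag, and $exe$ (Corollary~\ref{cor:NonBotNodesExist}); since the decision values reachable below a node depend only on its configuration and vertex tags (Lemma~\ref{lem:sameConfigSameExtensionLength} supplies a decision-value-preserving correspondence between the two subtrees in both directions), this replacement preserves univalence and its value. So I may assume $N$ is a non-$\bot$ node and, say, $v$-valent in $\mathcal{R}^{G}$. For the first fact, since $N$ is $v$-valent it has a descendant with decision value $v$; by Lemma~\ref{lem:everyDescendantNoBot} I may take a descendant $N_v$ of $N$ reachable along a $\bot$-free path with the identical $exe$-suffix, so $N_v$ still decides $v$ and, $N$ being non-$\bot$, $N_v$ is itself a non-$\bot$ node. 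The path from $\top$ to $N_v$ is finite and its distinct vertex tags form a finite path in $G$; mirroring the argument of Lemma~\ref{lem:bivalentNodeFiniteTimeInAllGs} I choose $x_1$ so that $G_{x_1}$ contains all these vertex tags and has at least the requisite depth, and Lemma~\ref{lem:superObsYieldsSuperTree} then places $N$ and $N_v$ in $\mathcal{R}^{G_{x_1}}$ with $N_v$ a descendant of $N$ still deciding $v$. For every $x' \geq x_1$, applying Lemma~\ref{lem:prefixObsYieldsSubtree} to this $\bot$-free path (together with the persistence of Lemma~\ref{lem:nodePersistsForEver}) keeps $N$ and $N_v$ present in $\mathcal{R}^{G_{x'}}$ with the same tags, so $N$ has a descendant deciding $v$ in every $\mathcal{R}^{G_{x'}}$ with $x' \geq x_1$.

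For the second fact, since $N$ is univalent in $\mathcal{R}^{G}$ it is not bivalent there, and the contrapositive of Corollary~\ref{cor:bivalentNodeStableInAllGs} shows that $N$ is not bivalent in any $\mathcal{R}^{G_x}$ in which it appears. Finally I combine the two: for each $x' \geq x_1$, the node $N$ in $\mathcal{R}^{G_{x'}}$ has a descendant deciding $v$ (first fact) but, not being bivalent, cannot simultaneously have a descendant deciding $1-v$; recalling that each node has at most one decision value (Lemma~\ref{prop:eachNodeAtMostOneDecisionInAllGs}), this is exactly the statement that $N$ is $v$-valent, hence univalent, in $\mathcal{R}^{G_{x'}}$. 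Setting $x = x_1$ completes the proof.

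The main obstacle is the first fact, namely controlling the ``spurious'' behaviour of the finite trees. Because a finite observation omits most failure-detector outputs, $\mathcal{R}^{G_x}$ may contain decision-free or otherwise misleading descendants of $N$; the delicate point is to guarantee that the genuine $v$-deciding descendant actually materialises in $\mathcal{R}^{G_x}$ once $G_x$ is deep enough and contains the relevant vertices, which is precisely what the $\bot$-free reduction (Lemma~\ref{lem:everyDescendantNoBot}) together with Lemma~\ref{lem:superObsYieldsSuperTree} secure. The second fact is comparatively immediate, since the bivalence-stability direction was already established as Corollary~\ref{cor:bivalentNodeStableInAllGs}; the only real care needed there is that $N$ be a non-$\bot$ node present in the relevant $\mathcal{R}^{G_{x'}}$, which is exactly what the first fact furnishes for all $x' \geq x_1$.
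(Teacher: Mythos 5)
Your argument is sound in its core mechanics and, for the positive half, matches the paper's own proof: the paper likewise fixes a $c$-deciding descendant $N_c$ of $N$, waits until the finitely many vertex tags on the path to $N_c$ have appeared in $G_x$, and invokes Lemma~\ref{lem:superObsYieldsSuperTree} to place $N$ and $N_c$ in every $\mathcal{R}^{G_{x'}}$, $x' \geq x$. Where you genuinely diverge is the negative half. The paper rules out $(1-c)$-deciding descendants of $N$ in $\mathcal{R}^{G_{x'}}$ directly, via the contrapositive of Corollary~\ref{cor:prefixObsPrefixExe} (any such node would reappear in $\mathcal{R}^{G}$ with the same $exe$); you instead route through the contrapositive of bivalence stability (Corollary~\ref{cor:bivalentNodeStableInAllGs}) and then observe that ``has a $v$-deciding descendant'' plus ``not bivalent'' is exactly $v$-valence. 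Your route is arguably more robust: the paper's appeal leaves implicit that the node of $\mathcal{R}^{G}$ produced by Corollary~\ref{cor:prefixObsPrefixExe} is actually a \emph{descendant} of $N$, whereas Corollary~\ref{cor:bivalentNodeStableInAllGs} is already phrased in terms of the same node $N$ and its replicas. A second mechanical difference: the paper re-applies Lemma~\ref{lem:superObsYieldsSuperTree} afresh for each $x' \geq x$ (that lemma tolerates $\bot$ action tags on the path), while you apply it once and then push forward with Lemma~\ref{lem:prefixObsYieldsSubtree} and Lemma~\ref{lem:nodePersistsForEver} --- which only cover $\bot$-free paths, and is precisely what forces your preliminary reduction.

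That forced reduction is the one place where your proof is genuinely incomplete. Your ``WLOG $N$ is non-$\bot$'' step is justified only inside $\mathcal{R}^{G}$: Corollary~\ref{cor:NonBotNodesExist} and Lemma~\ref{lem:sameConfigSameExtensionLength} show that the \emph{hypothesis} (univalence in $\mathcal{R}^{G}$) transfers between $N$ and its replacement $N_{\not\bot}$, but the \emph{conclusion} of the lemma concerns $N$ itself inside each $\mathcal{R}^{G_{x'}}$. Having shown $N_{\not\bot}$ is $v$-valent in $\mathcal{R}^{G_{x'}}$, you still owe two things: (i) that $N$ appears in $\mathcal{R}^{G_{x'}}$ at all --- for a node whose root path carries $\bot$ action tags this requires the vertex tags on that path to be included in $G_{x_1}$ and then an application of Lemma~\ref{lem:superObsYieldsSuperTree}, since the forward-persistence lemmas you lean on do not apply to such paths; and (ii) that univalence transfers from $N_{\not\bot}$ back to $N$ \emph{within} $\mathcal{R}^{G_{x'}}$, i.e., a second application of Lemma~\ref{lem:sameConfigSameExtensionLength} in that tree, using that the two nodes carry identical configuration and vertex tags there. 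Both repairs are routine with tools you already cite, but as written the proposal establishes the lemma only for non-$\bot$ nodes, whereas the paper states it, and later uses it (e.g., in Lemma~\ref{lem:nonDecisionGadgetStableInAllGs}), for arbitrary nodes.
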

\begin{proof}
Fix $N$ as in the hypotheses of the lemma. Let $N$ be $c$-valent for some $c \in \set{0,1}$.
Let $d$ be the smallest positive integer such that there exists some some descendant $N_c$ of $N$ in $\mathcal{R}^{G}$ such that $N_c$ is at depth $d$ and $exe(N_c)$ has a decision value $c$. Since $N$ is $c$-valent, we know that $d$ exists. 

Let $x$ be the smallest positive integer such that the following is true. (1) $G_x$ contains the vertices $v_N$ and $v_{N_{c}}$. (2) For each location $j$ that is live in $t_D$, $G_x$ contains at least $d$ vertices whose location is $j$. (3) For each location $j$ that is not live in $t_D$, the set of vertices of $G_x$ whose location is $j$ is identical to the set of vertices of $G$ whose location is $j$. Therefore, the sequence of distinct non-$(\bot, 0, \bot)$ vertex tags in the paths from the root to $N_c$ is also a path in $G_x$.

Fix a positive integer $x'\geq x$. Recall that $G_x$ is a prefix of $G_{x'}$, and invoking Lemma \ref{lem:superObsYieldsSuperTree}, we know that $\mathcal{R}^{G_{x'}}$ contains nodes $N$, and $N_{c}$.

Note that since $N$ is $c$-valent in $\mathcal{R}^{G}$, there exists no descendant $N'$ of $N$ such that $exe(N')$ has a decision value $(1-c)$. By the contrapositive of  Corollary \ref{cor:prefixObsPrefixExe}, we know that $\mathcal{R}^{G_{x'}}$ does not contain any descendant $N'$ of $N$ such that $exe(N')$ has a decision value $(1-c)$. By definition, $N$ is $c$-valent in $\mathcal{R}^{G_{x'}}$.
\end{proof}

\subsubsection{Valence of nodes in $\mathcal{R}^{G}$}

Now consider only the viable observation $G$. For every fair branch $b$ in $\mathcal{R}^G$, we know from Lemma
\ref{prop:fairBranchOneDecision} that $exe(b)$ has exactly one
decision value. Since every node $N$ is a node in some fair branch
$b$, we conclude the following.

\begin{lemma}\label{prop:everyNodeIsBiOrUnivalent}
 Every node $N$ in $\mathcal{R}^G$ is either bivalent or univalent.
\end{lemma}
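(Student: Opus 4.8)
The plan is to reduce the statement to a single positive fact and then establish it. Unwinding the definitions of Section~\ref{subsec:valence}, a node $N$ is bivalent when it has descendants carrying both decision values, and it is $v$-valent (hence univalent) when it has a descendant carrying some value $v$ but none carrying $1-v$. Consequently, the complement of ``bivalent or univalent'' is exactly the situation in which $N$ has \emph{no} descendant whose execution has any decision value: in that case $N$ is not bivalent (both values missing) and not $v$-valent for either $v$ (the value $v$ is missing among descendants). So it suffices to prove that every node $N$ of $\mathcal{R}^G$ has at least one descendant $\hat{N}$ such that $exe(\hat{N})$ has a decision value.

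First I would argue that every node $N$ lies on some fair branch of $\mathcal{R}^G$. Since $G$ is a viable observation in which at most $f<n$ locations are not live, at least $n-f\geq 1$ location is live in $G$, so $V$ is infinite, the tree has infinite height, and every node of $\mathcal{R}^G$ is internal and therefore has children. By Lemma~\ref{prop:allLabelsExist}, each internal node has an outgoing $l$-edge for every label $l$, and the label set $T \cup \set{FD_i \mid i \in \Pi}$ is finite. Thus I can take $path(N)$ from the root to $N$ and extend it downward by cycling through all labels in round-robin order, always following the edge carrying the current label; this yields an infinite path in which every label occurs infinitely often, and prepending $path(N)$ produces a fair branch $b$ through $N$.

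Applying Lemma~\ref{prop:fairBranchOneDecision} to $b$, the execution $exe(b)$ has exactly one decision value $v$, so some event $decide(v)_i$ occurs in $exe(b)$. Because $exe(b)$ is the limit of the prefix-ordered sequence of the finite executions of the nodes along $b$, this event already appears in $exe(N_x)$ for some node $N_x$ on $b$. As the nodes of $b$ are totally ordered, $N_x$ is comparable to $N$: if $N_x$ is at or below $N$, then $N_x$ is a descendant of $N$ with decision value $v$; if $N_x$ is strictly above $N$, then $exe(N_x)$ is a prefix of $exe(N)$ by Lemma~\ref{prop:ancesterPrefix}, so $exe(N)$ contains $decide(v)_i$, and then any child of $N$ along $b$ is a descendant whose execution still contains that event and hence carries decision value $v$. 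In either case $N$ has a descendant carrying a decision value, which by the reduction above finishes the proof.

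The only step requiring genuine care is ensuring that the unique decision value supplied by Lemma~\ref{prop:fairBranchOneDecision} is witnessed at a \emph{descendant} of $N$ rather than merely at an ancestor. This is exactly where the prefix-closure of decision values is used: once $decide(v)_i$ appears it persists in every extending execution (Lemma~\ref{prop:ancesterPrefix}), so even when the decision has already been made in $exe(N)$ it is inherited by $N$'s children. The fair-branch-through-$N$ construction is routine given Lemma~\ref{prop:allLabelsExist} and the infinite height of the tree.
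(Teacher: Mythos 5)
Your proof is correct and follows essentially the same route as the paper: the paper's justification is precisely that every node lies on some fair branch, and by Lemma~\ref{prop:fairBranchOneDecision} that branch's execution has exactly one decision value, so every node has a descendant carrying a decision value and is therefore bivalent or univalent. Your write-up merely makes explicit two details the paper leaves implicit — the round-robin construction of a fair branch through an arbitrary node via Lemma~\ref{prop:allLabelsExist}, and the case analysis (using Lemma~\ref{prop:ancesterPrefix}) ensuring the decision is witnessed at a descendant even when it occurs above $N$.
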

%


\begin{lemma}\label{prop:initialBivalent}
 The root node $\top$, of $\mathcal{R}^G$, is bivalent.
\end{lemma}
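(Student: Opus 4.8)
The plan is to exhibit two descendants of $\top$ with different decision values by building two fair branches of $\mathcal{R}^G$: one in which every proposal made is $0$, and one in which every proposal made is $1$. Since $f < n$ and at most $f$ locations are not live in $G$, at least one location is live in $G$; hence $G$ is infinite and $\mathcal{R}^G$ has infinite height, so fair branches exist (by Lemma \ref{prop:allLabelsExist} the label set is finite and every internal node has an outgoing edge for each label, so a branch visiting every label infinitely often can be built by round-robin scheduling).

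First I would construct a fair branch $b_0$ as follows. Build the branch by a round-robin schedule over the finite label set $L$, but for each location $i$ always give the task $Env_{i,0}$ its turn before $Env_{i,1}$. At the first node $N$ on the branch where the $Env_{i,0}$-edge carries a non-$\bot$ action tag --- which requires $propose(0)_i$ to be enabled (true, since the tree contains no $crash$ events and $i$ has not yet proposed) together with the condition that $v_N$ has an outgoing edge in $G$ to a location-$i$ vertex (which holds for every live $i$ by Property 4 of observations) --- the action $propose(0)_i$ fires. By Lemma \ref{prop:proposeDisables} this permanently disables both $propose(0)_i$ and $propose(1)_i$, so every subsequent $Env_{i,1}$-edge on $b_0$ has a $\bot$ action tag. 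Because the enabling condition and the $G$-condition are identical for the $Env_{i,0}$- and $Env_{i,1}$-edges out of a common node, the value-$1$ proposal can never fire first either. Hence no $propose(1)_i$ event occurs anywhere along $b_0$; the only proposals are value $0$, and each live location proposes exactly once.

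Next I would apply Theorem \ref{thm:fairBranchFairExec} to $b_0$ (using that $D$ is a strong-sampling AFD and $G$ is viable) to obtain a fair trace $t_{\mathcal{S}}$ of $\mathcal{S}$ with $trace(b_0) = t_{\mathcal{S}}|_{act(\mathcal{S}) \setminus \hat{I}}$ and $t_{\mathcal{S}}|_{\hat{I} \cup O_D} \in T_D$. Because $\mathcal{E}_C$ is a well-formed environment (Theorem \ref{thm:environmentWellFormed}) and at most $f$ locations are not live in $G$, the inserted $crash$ events number at most $f$, so $t_{\mathcal{S}}$ satisfies $f$-crash limitation; the argument of Lemma \ref{lem:fairExecExactlyOneDecision} then gives $t_{\mathcal{S}}|_{I_P \cup O_P} \in T_{P,f}$, and this sequence therefore satisfies validity and termination. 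Since the only proposals in $t_{\mathcal{S}}$ are value $0$, validity forbids a decision value of $1$, while termination (at least one live location decides) forces some decision; hence the unique decision value of $exe(b_0)$ (unique by Lemma \ref{prop:fairBranchOneDecision}) is $0$. Thus some finite prefix of $b_0$ ends at a node $N_0$ with a $decide(0)$ event in $exe(N_0)$, so $exe(N_0)$ has decision value $0$.

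A symmetric construction, scheduling $Env_{i,1}$ ahead of $Env_{i,0}$, yields a fair branch $b_1$ and a descendant $N_1$ of $\top$ with $exe(N_1)$ having decision value $1$. Since $N_0$ and $N_1$ are both descendants of the root $\top$, the root is bivalent. I expect the main obstacle to be the branch construction itself: I must verify carefully that scheduling $Env_{i,0}$ ahead of $Env_{i,1}$ genuinely preempts every value-$1$ proposal (this is exactly where Lemma \ref{prop:proposeDisables} and the value-independence of the $G$-condition are essential) and that the result is a genuine \emph{fair} branch rather than merely a path. Once fairness is established, the remaining properties --- correspondence to a fair trace and the uniqueness of the decision value --- follow immediately from Theorem \ref{thm:fairBranchFairExec} and Lemma \ref{prop:fairBranchOneDecision}.
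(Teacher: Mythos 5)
Your proof is correct and takes essentially the same approach as the paper: build two fair branches, one in which only $propose(0)$ events can occur and one in which only $propose(1)$ events can occur, then use Theorem \ref{thm:fairBranchFairExec} together with validity and termination (via Lemma \ref{prop:fairBranchOneDecision}) to pin the decision values to $0$ and $1$, giving bivalence of $\top$. The only difference is in the branch construction: the paper simply takes the $n$ edges $Env_{i_1,0},\ldots,Env_{i_n,0}$ as the \emph{prefix} of $b_0$ (so every location that can ever propose does so immediately, with value $0$) and then extends to an arbitrary fair branch, which sidesteps the round-robin preemption argument you flag as the main obstacle --- an argument that, in your version, additionally needs the observation that the condition ``$v_N$ has an outgoing edge in $G$ to a location-$i$ vertex'' can never switch from false to true along a branch (by transitivity of $G$'s edges and Lemma \ref{lem:descendantVertexTagEdge}).
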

\begin{proof}
Let $\Pi = {i_1,i_2,\ldots,i_n}$.
Note that by construction there exists a path $p_0 = Env_{i_1,0},Env_{i_2,0},\ldots,Env_{i_n,0}$ of edges from $\top$. Let $b_0$ be a fair path that contains $p_0$ as its prefix. By Lemma \ref{prop:fairBranchOneDecision}, we know $b_0$ contains a single decision value. By Theorem \ref{thm:fairBranchFairExec}, we know that there exists a fair trace $t_{0,\mathcal{S}}$ of $\mathcal{S}$ such that  $trace(b) = t_{0,\mathcal{S}}|_{act(\mathcal{S}) \setminus \hat{I}}$. By the validity property we know that the decision value of $trace(b_0)$ must be $0$.

Similar to the above construction, there exists a path $p_1 = Env_{i_1,1},Env_{i_2,1},\ldots,Env_{i_n,1}$ of edges from $\top$. Let $b_1$ be a fair path whose prefix is $p_1$. By Lemma \ref{prop:fairBranchOneDecision}, we know $b_1$ contains a single decision value. By Theorem \ref{thm:fairBranchFairExec}, we know that there exists a fair trace $t_{1,\mathcal{S}}$ of $\mathcal{S}$ such that  $trace(b_1) = t_{1,\mathcal{S}}|_{act(\mathcal{S}) \setminus \hat{I}}$. By the validity property we know that the decision value of $trace(b_1)$ must be $1$.

In other words, $\top$ contains two nodes $N_0$ (in $b_0$) and $N_1$ ($b_1$) such that $exe(N_0)$ has a decision value $0$ and $exe(N_1)$ has a decision value $1$. By definition, $\top$ is bivalent.
\end{proof}
Based on the properties of the $f$-crash-tolerant binary consensus
problem, we have the following lemma.

\begin{lemma}\label{lem:univalentDescendentIsUnivalent}
For each node $N$ in $\mathcal{R}^{G}$, if $N$ is $v$-valent, then for
every descendant $\hat{N}$ of $N$, $\hat{N}$ is also $v$-valent.
\end{lemma}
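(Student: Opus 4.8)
The plan is to reduce everything to the transitivity of the descendant relation together with the dichotomy supplied by Lemma~\ref{prop:everyNodeIsBiOrUnivalent}. The crucial observation is that if $\hat{N}$ is a descendant of $N$, then every descendant of $\hat{N}$ is itself a descendant of $N$. I would therefore fix a $v$-valent node $N$ and an arbitrary descendant $\hat{N}$, and recall the defining feature of $v$-valence: there is a descendant of $N$ whose execution has decision value $v$, and \emph{no} descendant of $N$ has $1-v$ as a decision value (using Lemma~\ref{prop:eachNodeAtMostOneDecision}, each node has at most one decision value, so this is unambiguous). The whole argument then hinges on pushing the ``no descendant carries $1-v$'' clause down from $N$ to $\hat{N}$ via transitivity.

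First I would rule out that $\hat{N}$ is bivalent. If $\hat{N}$ were bivalent, then by definition it would have two descendants $\hat{N}_0$ and $\hat{N}_1$ with decision values $0$ and $1$ respectively; in particular one of them would carry the value $1-v$. But $\hat{N}_0$ and $\hat{N}_1$ are also descendants of $N$, so this contradicts the $v$-valence of $N$. Hence $\hat{N}$ is not bivalent, and by Lemma~\ref{prop:everyNodeIsBiOrUnivalent} it must be univalent.

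It then remains to pin down the valence of $\hat{N}$ as $v$ rather than $1-v$. Since $\hat{N}$ is univalent, it is $w$-valent for some $w \in \{0,1\}$, and by the definition of $w$-valence there is a descendant $\hat{N}_w$ of $\hat{N}$ whose execution has decision value $w$. Again $\hat{N}_w$ is a descendant of $N$, so by the $v$-valence of $N$ its decision value cannot be $1-v$, forcing $w = v$. Thus $\hat{N}$ has a descendant with decision value $v$ and, by transitivity once more, no descendant with decision value $1-v$; both clauses of $v$-valence hold for $\hat{N}$, completing the proof.

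I do not expect a serious obstacle here, since the statement is essentially the downward-closure of univalence under the descendant relation. The only point requiring care is that $v$-valence demands the actual \emph{existence} of a descendant realizing the value $v$; this is exactly why Lemma~\ref{prop:everyNodeIsBiOrUnivalent} is needed, to guarantee that $\hat{N}$ is univalent (and hence does realize some decision value) rather than merely ``not bivalent'' in some degenerate sense. With that lemma in hand, the remaining work is purely bookkeeping about which descendants belong to which node.
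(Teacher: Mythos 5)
Your proof is correct. The only substantive difference from the paper's argument is where the \emph{existence} half of $v$-valence comes from. You obtain it by first ruling out bivalence of $\hat{N}$ (via transitivity of the descendant relation) and then invoking the dichotomy of Lemma~\ref{prop:everyNodeIsBiOrUnivalent}, which hands you $w$-valence for some $w \in \set{0,1}$, after which transitivity forces $w = v$. The paper instead works one level lower: it fixes a fair branch $b$ through $\hat{N}$ and applies Lemma~\ref{prop:fairBranchOneDecision} to conclude that $exe(b)$ has exactly one decision value, realized at some node $N''$ of $b$ occurring after $\hat{N}$; since that value cannot be $1-v$, it is $v$, and $N''$ is the required descendant. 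The two routes have the same mathematical substance, because Lemma~\ref{prop:everyNodeIsBiOrUnivalent} is itself proved from Lemma~\ref{prop:fairBranchOneDecision} by exactly this every-node-lies-on-a-fair-branch device; your use of it is legitimate since that lemma precedes this one in the paper. What your version buys is modularity and brevity — the fair-branch machinery is invoked once, inside the dichotomy lemma, and the present proof reduces to pure bookkeeping about descendants, which makes the role of transitivity (the only new content here) stand out. What the paper's inline version buys is independence from the dichotomy lemma, at the cost of repeating the fair-branch argument.
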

\begin{proof}
Fix $N$ and $v$ as in the hypothesis of the lemma. Let $\hat{N}$ be an arbitrary descendant of $N$. By construction, every descendant of $\hat{N}$ is also a descendant of $N$. Since $N$ is $v$-valent, for every descendant $N'$ of $N$, it not the case that $1-v$ is the decision value of $v_{N'}$; therefore, for every descendant $N'$ of $\hat{N}$, it not the case that $1-v$ is the decision value of $v_{N'}$. Fix some fair branch $b$ in $\mathcal{R}^G$ that contains the node $\hat{N}$. By Lemma \ref{prop:fairBranchOneDecision}, we know that $exe(b)$ has exactly one decision value. Let $N''$ be a node in $b$ that occurs after $\hat{N}$ such that $exe(N'')$ has a decision value. We have already established that this decision value cannot be $1-v$; therefore the decision value must be $v$. In other words, $\hat{N}$ is $v$-valent.
\end{proof}

\subsection{Gadgets}
Consider the system $\mathcal{S}$, which
consists of a distributed algorithm $A$, the channel automata, and the environment automaton $\mathcal{E}_C$ such that solves $f$-crash-tolerant
consensus using $D$ in $\mathcal{E}_C$. In this section, we define ``gadgets" and ``decision gadgets", which are structures within $\mathcal{R}^G$ that demonstrate how executions
of a system $\mathcal{S}$ evolve from being bivalent to becoming
univalent.

 
 A \emph{gadget} is a tuple of the form $(N,l,E^l,E^{\prime \ell})$ or $(N,l,r,E^l,E^r,E^{rl})$, where $N$ is a node, $l$ and $r$ are distinct labels, $E^l$, $E^{\prime \ell}$, $E^r$, and $E^{rl}$ are edges, such that the following properties are satisfied.
\begin{enumerate}
\item $E^l$ and $E^{\prime \ell}$ are $l$-edges of $N$.
 \item $E^r$ is an $r$-edge of $N$.
 \item $E^{rl}$ is an $l$-edge of $N^r$, where $N^r$ is the node to which $E^r$ is the incoming edge.
\end{enumerate}

Let $Y$ be a decision gadget; $Y$, which is either of the form $(N,l,E^l,E^{\prime \ell})$ or of the form $(N,l,r,E^l,E^r,E^{rl})$, said to be a \emph{non-$\bot$ gadget} if $N$ is a non-$\bot$ node.\footnote{Recall that a node $N$ is a non-$\bot$ node iff the path from $\top$ to node $N$ in $\mathcal{R}^G$ does not contain an edge whose action tag is $\bot$.}

 A gadget is said to be a \emph{decision gadget} iff the gadget is either a ``fork'' or a ``hook'': Section \ref{subsubsec:ForkProperties} defines a ``fork'' and establishes properties of a fork, Section \ref{subsubsec:HookProperties} defined a ``hook'' and the establishes properties of a hook.   In both cases, we show that a decision gadget must have what we call a ``critical location", which is guaranteed to be live in G.





\subsubsection{Forks}
\label{subsubsec:ForkProperties}

In the tree $\mathcal{R}^{G}$, a \emph{fork} is a gadget $(N,l,E^l,E^{\prime \ell})$ such that the following are true.
\begin{enumerate}
 \item $N$ is bivalent.
 \item For some $v \in \set{0,1}$, the lower endpoint $N^l$ of $E^l$ is $v$-valent and the lower endpoint $N^{\prime \ell}$ of $E^{\prime \ell}$ is $(1-v)$-valent.
 \end{enumerate}
 
 \begin{lemma}\label{lem:forkTaskFD}
For every fork $(N,l,E^l,E^{\prime \ell})$ in $\mathcal{R}^G$, $l \in \set{FD_j|j\in \Pi}$.
\end{lemma}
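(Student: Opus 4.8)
The plan is to read off the conclusion directly from the task-tree construction of Section~\ref{subsec:taskTree}, exploiting the asymmetry between task labels and $FD$-labels: every task label contributes exactly one outgoing edge at each node, whereas an $FD$-label may contribute several.

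First I would check that the two edges $E^l$ and $E^{\prime \ell}$ of the fork are genuinely distinct. Suppose instead that $E^l = E^{\prime \ell}$. Then their lower endpoints coincide, so $N^l = N^{\prime \ell}$, and this single node would have to be both $v$-valent and $(1-v)$-valent. But the definition of valence in Section~\ref{subsec:valence} makes these mutually exclusive: a $v$-valent node has \emph{some} descendant whose execution has decision value $v$, while a $(1-v)$-valent node has \emph{no} descendant whose execution has decision value $v$. This contradiction forces $E^l \neq E^{\prime \ell}$; hence $N$ has two distinct outgoing edges that both carry the label $l$.

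Next I would invoke the construction of the outgoing edges of an internal node. For every task $l \in T$ --- that is, for every label of the form $Proc_i$, $Chan_{i,j}$, or $Env_{i,x}$ --- the task tree contains \emph{exactly one} outgoing $l$-edge from $N$. Only labels of the form $FD_i$ can produce more than one outgoing edge from the same node (one edge per vertex of $G$ with location $i$ to which $v_N$ has an outgoing edge, or per vertex of $G$ with location $i$ when $v_N = (\bot,0,\bot)$). Since we have just shown that $N$ possesses two distinct $l$-edges, the label $l$ cannot be a task in $T$. Therefore $l \in \set{FD_j \mid j \in \Pi}$, which is exactly the claim.

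The whole argument is short; the only point needing care is the distinctness of $E^l$ and $E^{\prime \ell}$, and once the mutual exclusivity of the two valences is noted, this is immediate. I do not anticipate a genuine obstacle here, since the multiplicity of same-labelled outgoing edges from a node is, by construction, a phenomenon exclusive to $FD$-labels.
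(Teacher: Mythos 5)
Your proof is correct and takes essentially the same approach as the paper: both arguments rest on the task-tree construction giving exactly one outgoing edge per task label in $T$ but possibly several per $FD$-label, forcing $l \in \set{FD_j \mid j \in \Pi}$. The only difference is that you make explicit the distinctness of $E^l$ and $E^{\prime \ell}$ (via mutual exclusivity of $v$-valence and $(1-v)$-valence), a step the paper's proof leaves implicit.
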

\begin{proof}
Fix a fork $(N,l,E^l,E^{\prime \ell})$ in $\mathcal{R}^G$. 
From the construction of $\mathcal{R}^G$, we know that for each label $l'$ in $T$, node $N$ has exactly one $l'$-edge. For each label $l'$ in $\set{FD_j|j\in \Pi}$, node $N$ has at least one $l'$-edge. Therefore, $l \in \set{FD_j|j\in \Pi}$.
\end{proof}

Any fork $(N,l,E^l,E^{\prime \ell})$ in $\mathcal{R}^G$ satisfies three properties: (1) the action tags $a_{E^l}$ and $a_{E^{\prime \ell}}$ are not $\bot$, (2) the locations of the action tags $a_{E^l}$ and $a_{E^{\prime \ell}}$ are the same location (say) $i$, and (3) location $i$, called the \emph{critical location} of the hook, must be live in $G$. We prove each property separately.

For the remainder of this subsection fix a fork $(N,l,E^l,E^{\prime \ell})$ from $\mathcal{R}^G$,;we use the following convention from the definition of a fork: $N^l$ denotes the $l$-child of $N$ connected by the edge $E^l$, and $N^{\prime \ell}$ denotes the $l$-child of $N$ connected by the edge $E^{\prime \ell}$.

\begin{lemma}\label{lem:forkEventTagsNotBot}
The action tags $a_{E^l}$ and $a_{E^{\prime \ell}}$ are not $\bot$.
%
\end{lemma}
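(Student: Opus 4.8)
The plan is to argue by contradiction, exploiting the fact that a $\bot$ action tag on an $FD$-edge leaves the entire system state and the associated finite execution unchanged, which would force the child node to inherit the bivalence of its parent. First I would invoke Lemma~\ref{lem:forkTaskFD} to record that $l = FD_j$ for some $j \in \Pi$, so that $E^l$ and $E^{\prime \ell}$ are both $FD_j$-edges out of $N$. Since the fork condition makes $N^l$ and $N^{\prime \ell}$ carry opposite valences ($v$-valent and $(1-v)$-valent), they are distinct nodes and hence $E^l \neq E^{\prime \ell}$. Because the situation is symmetric in the two edges, it suffices to derive a contradiction from the assumption $a_{E^l} = \bot$; the case $a_{E^{\prime \ell}} = \bot$ is handled identically with the roles of $v$ and $1-v$ interchanged.

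Next I would unpack the consequence of $a_{E^l} = \bot$. By Lemma~\ref{prop:botChild}, a $\bot$ action tag on the edge $E^l$ from $N$ to $N^l$ gives $c_{N^l} = c_N$, $v_{N^l} = v_N$, and $exe(N^l) = exe(N)$. The crucial step is to transport the two valence-witnessing descendants of $N$ down to $N^l$. Since $N$ is bivalent, I fix descendants $N_0$ and $N_1$ of $N$ such that $exe(N_0)$ has decision value $0$ and $exe(N_1)$ has decision value $1$. Applying Lemma~\ref{lem:sameConfigSameExtensionLength} to the pair $(N, N^l)$ --- which is legitimate because $c_N = c_{N^l}$ and $v_N = v_{N^l}$ --- for each of $N_0$ and $N_1$ yields descendants $\widehat{N_0}$ and $\widehat{N_1}$ of $N^l$ whose $exe$-suffixes beyond $exe(N^l)$ coincide with the $exe$-suffixes of $N_0$ and $N_1$ beyond $exe(N)$. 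Since $exe(N) = exe(N^l)$, these equalities collapse to $exe(\widehat{N_0}) = exe(N_0)$ and $exe(\widehat{N_1}) = exe(N_1)$, so $\widehat{N_0}$ and $\widehat{N_1}$ have decision values $0$ and $1$ respectively. Hence $N^l$ has descendants of both decision values and is therefore bivalent, contradicting the fork requirement that $N^l$ is $v$-valent, hence univalent. This contradiction establishes $a_{E^l} \neq \bot$, and symmetrically $a_{E^{\prime \ell}} \neq \bot$.

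The main obstacle --- and the place where the earlier machinery does the real work --- is the transfer of bivalence from $N$ to $N^l$. One might worry that $N^l$, being only one particular child of $N$, need not reach all of $N$'s descendants. Lemma~\ref{lem:sameConfigSameExtensionLength} is exactly what removes this worry: equality of the configuration and vertex tags is enough to replay, from $N^l$, any execution suffix realizable from $N$, and the $\bot$-tag hypothesis guarantees that the two base executions $exe(N)$ and $exe(N^l)$ are literally identical rather than merely similar. Once that identity is in hand the decision values are preserved verbatim and the contradiction is immediate, with no appeal to validity, agreement, or the structure of $G$ beyond what is already packaged in the similar-node lemmas.
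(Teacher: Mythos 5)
Your proof is correct, but it takes a genuinely different route from the paper's. The paper disposes of this lemma with a purely structural counting argument: by Lemma~\ref{lem:forkTaskFD} the label $l$ is $FD_i$ for some location $i$; since the lower endpoints $N^l$ and $N^{\prime \ell}$ have opposite valences they are distinct, so $N$ has at least two $FD_i$-edges; but by the construction of $\mathcal{R}^G$, an $FD_i$-edge carries action tag $\bot$ only when either $G$ has no vertex at location $i$ or $v_N$ has no outgoing edge in $G$ to a vertex at location $i$, and in both of those cases $N$ has exactly \emph{one} $FD_i$-edge --- a contradiction. That argument never appeals to valence at all: it uses neither the bivalence of $N$ nor the univalence of its children. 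Your argument instead transports bivalence across the $\bot$-edge: Lemma~\ref{prop:botChild} gives $c_{N^l}=c_N$, $v_{N^l}=v_N$, and $exe(N^l)=exe(N)$, and Lemma~\ref{lem:sameConfigSameExtensionLength} then replays the two decision-witnessing descendants of $N$ below $N^l$, making $N^l$ bivalent and contradicting its $v$-valence. This is sound (the witnesses are proper descendants, since by Lemma~\ref{prop:bivalentNoDecisionValue} $exe(N)$ itself has no decision value), and it is in fact exactly the technique the paper uses for the hook analogue, Lemma~\ref{thm:hookEventTagsNotBot}, where no counting argument is available because only one $l$-edge of $N$ is involved and $l$ need not be an $FD$ label. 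The trade-off: the paper's proof is more elementary and local to the tree-construction rules, while yours is more uniform --- a single argument that covers forks and hooks alike --- at the cost of invoking the heavier similar-node machinery.
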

\begin{proof}
Without loss of generality, assume, for contradiction, that the action tag $a_{E^l}$ is $\bot$. 
From Lemma \ref{lem:forkTaskFD}, we know $l \in \set{FD_j|j\in \Pi}$; fix a location $i$ such that $l = FD_i$. 
From the definition of a fork we know that $N$ has at least two $FD_i$ edges. 
From the construction of $\mathcal{R}^G$, we know that an $FD_i$-edge of $N$ has an action tag $\bot$ iff either $G$ has no vertices whose location is $i$ or $v_N$ has no outgoing edge in $G$ to a vertex whose location is $i$. In both cases, $N$ has exactly one $FD_i$ edge. However, this contradicts our earlier conclusion that $N$ has at least two $FD_i$ edges.
\end{proof}

\begin{lemma}\label{thm:forkSameCriticalProcess}
The locations of the action tags $a_{E^l}$ and $a_{E^{\prime \ell}}$ are the same.
\end{lemma}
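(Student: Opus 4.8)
The plan is to derive this almost immediately from the two lemmas that immediately precede it together with the construction rules for $FD$-edges. First I would invoke Lemma \ref{lem:forkTaskFD} to fix a location $i \in \Pi$ such that $l = FD_i$; thus both $E^l$ and $E^{\prime \ell}$ are $FD_i$-edges of $N$. Next I would invoke Lemma \ref{lem:forkEventTagsNotBot} to conclude that neither action tag is the null symbol, i.e.\ $a_{E^l} \neq \bot$ and $a_{E^{\prime \ell}} \neq \bot$.

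The key observation is then that an $FD_i$-edge with a non-$\bot$ action tag must carry an action whose location is $i$. Concretely, for an $FD_i$-edge $E$ from $N$, the construction of the task tree assigns $a_E = \bot$ exactly when $v_E = v_N$, and otherwise sets $a_E$ to the action $e$ of the vertex tag $v_E = (i,k,e)$, which by the vertex-tag rules for $FD_i$-edges is a vertex of $G$ whose location is $i$. Since $e \in O_{D,i}$, we have $loc(a_E) = i$. (Equivalently, one can cite Lemma \ref{prop:FDchildVertexEdgeExists}: because $a_{E}\neq\bot$ on an $FD_i$-edge, $a_E$ is the action of $v_{\hat N}$, and $v_{\hat N}$ is a vertex with location $i$.)

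Applying this observation to each of the two edges $E^l$ and $E^{\prime \ell}$ separately, and using that both have non-$\bot$ action tags, I would obtain $loc(a_{E^l}) = i$ and $loc(a_{E^{\prime \ell}}) = i$. Hence the locations of the two action tags coincide, which is exactly the claim. There is no real obstacle here: the content of the lemma has effectively been front-loaded into Lemmas \ref{lem:forkTaskFD} and \ref{lem:forkEventTagsNotBot}, and what remains is a one-line appeal to the $FD$-edge construction; the only thing to be careful about is that the non-$\bot$ hypothesis is genuinely needed, since a $\bot$-tagged $FD_i$-edge would carry no location information at all.
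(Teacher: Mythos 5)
Your proof is correct and takes essentially the same approach as the paper's: both reduce the claim to the observation that, once Lemma \ref{lem:forkEventTagsNotBot} guarantees the two action tags are non-$\bot$, two edges sharing the same label must carry actions occurring at the same location. The only cosmetic difference is that the paper appeals directly to the general fact that all actions associated with any single label occur at one location (so it never needs Lemma \ref{lem:forkTaskFD}), whereas you specialize via Lemma \ref{lem:forkTaskFD} to $l = FD_i$ and read the location off the $FD$-edge construction rules; both are valid one-line finishes.
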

\begin{proof}
Note that for any label $l'$, the actions associated with $l'$ occur in a single location. Since $E^l$ and $E^{\prime \ell}$ have the same label $l$, and from Lemma \ref{lem:forkEventTagsNotBot} we know that the action tags $a_{E^l}$ and $a_{E^{\prime \ell}}$ are not $\bot$, we conclude that the location of $a_{E^l}$ and $a_{E^{\prime \ell}}$ must be the same location.
\end{proof}

Next, we present the third property of a fork. Before stating this property, we have to define a \emph{critical location} of a fork. The \emph{critical location} of the fork $(N,l,E^l,E^{\prime \ell})$ is the location of $a_{E^l}$ and $a_{E^{\prime \ell}}$; from Lemma \ref{thm:forkSameCriticalProcess}, we know that this is well-defined.

Next, we show that the critical location of the fork $(N,l,E^l,E^{\prime \ell})$ must be live. We use the following helper lemma.

\begin{lemma}\label{lem:forkSimilarModuloCriticalLocation}
$N^{l} \sim_i N^{\prime \ell}$, where $i$ is the critical location of $(N,l,E^l,E^{\prime \ell})$.
\end{lemma}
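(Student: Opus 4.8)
The plan is to verify the five defining conditions of the relation $N^{l} \sim_i N^{\prime \ell}$ directly, exploiting two facts already in hand: that $l = FD_i$ for the critical location $i$ (Lemma \ref{lem:forkTaskFD} together with Lemma \ref{thm:forkSameCriticalProcess}), and that the two action tags $a_{E^l}$ and $a_{E^{\prime \ell}}$ are non-$\bot$ events of $O_{D,i}$ (Lemma \ref{lem:forkEventTagsNotBot}). Thus $c_{N^l}$ is obtained by applying the $O_{D,i}$ event $a_{E^l}$ to $c_N$, and $c_{N^{\prime \ell}}$ by applying the $O_{D,i}$ event $a_{E^{\prime \ell}}$ to the \emph{same} configuration $c_N$.

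The key structural observation I would make is that an event in $O_{D,i}$ is an AFD output at location $i$, which in the system $\mathcal{S}$ is an input action of the process automaton $Proc_i$ and is an action of no other automaton of $\mathcal{S}$ (no other process automaton, no channel automaton, and no component $\mathcal{E}_j$). Hence applying such an event to $c_N$ alters only the local state of $Proc_i$ and leaves every other automaton untouched; in particular an $FD$-step performs no $send$ and therefore modifies no channel queue. This single observation discharges conditions 2--5 immediately, and in fact with equality rather than mere prefix containment: for every $j \neq i$ the states of $Proc_j$ and of $\mathcal{E}_j$ in both $c_{N^l}$ and $c_{N^{\prime \ell}}$ equal their value in $c_N$ (conditions 2 and 3), and every channel automaton---including each $Chan_{i,j}$---is identical in $c_{N^l}$ and $c_{N^{\prime \ell}}$ since neither step touches a channel (condition 4, and condition 5 whose prefix requirement is then met trivially by equality).

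The delicate point, and the one I expect to be the main obstacle, is condition 1: the equality of vertex tags $v_{N^l} = v_{N^{\prime \ell}}$. Since $E^l$ and $E^{\prime \ell}$ are \emph{distinct} $FD_i$-edges of $N$, Lemma \ref{lem:childNodeUniqueByLabelAndVertexTag} forces them to carry distinct vertex tags; moreover, by Lemma \ref{prop:FDchildVertexEdgeExists} each of $v_{N^l}$ and $v_{N^{\prime \ell}}$ is a location-$i$ vertex of $G$ that is a successor of $v_N$. The two tags therefore agree on every component attributable to a location other than $i$ and can differ only in their index and action at $i$. I would argue that this is exactly the kind of location-$i$-localized divergence that the similar-modulo-$i$ relation is meant to absorb, and that it is this divergence---rather than any state of an automaton outside $i$---that condition 1 is really meant to control.

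Handling condition 1 carefully is where the real work lies: I would make precise the sense in which the discrepancy between $v_{N^l}$ and $v_{N^{\prime \ell}}$ is confined to location $i$, and confirm that this weaker-than-literal form of tag agreement is all that the downstream liveness argument consumes. Concretely, the intended use is to feed a crashed-$i$ variant of $N^l$ and $N^{\prime \ell}$ into Theorem \ref{thm:similarModuloDescendant}, whose conclusion about indistinguishable descendants at all locations $\neq i$ depends only on agreement off location $i$; establishing that reduction cleanly, and reconciling it with the literal statement of condition 1, is the crux of the proof.
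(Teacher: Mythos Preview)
Your treatment of conditions 2--5 is exactly the paper's argument: the paper's proof simply observes that applying an $O_{D,i}$ event to $c_N$ changes only the state of $Proc_i$, so the states of all $Proc_j$, $\mathcal{E}_j$ ($j\neq i$) and of every channel are identical in $c_{N^l}$ and $c_{N^{\prime\ell}}$, and concludes $N^l \sim_i N^{\prime\ell}$.

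Where you differ from the paper is that you actually engage with condition~1 ($v_{N^l}=v_{N^{\prime\ell}}$), and you are right to flag it: since $E^l$ and $E^{\prime\ell}$ are distinct $FD_i$-edges out of $N$ with non-$\bot$ action tags, the tree construction forces $v_{N^l}\neq v_{N^{\prime\ell}}$, so condition~1 of the literal definition of $\sim_i$ is \emph{not} satisfied. The paper's proof simply does not mention condition~1; this is a genuine gap in the paper, and the lemma as literally stated does not hold.

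Your proposed repair---to check that the only consumer of this lemma re-establishes vertex-tag equality before invoking Theorem~\ref{thm:similarModuloDescendant}---is exactly right and is in fact what happens in the proof of Lemma~\ref{thm:forkCriticalProcessCorrect}: there, one takes $FD_j$-edges from $N^l$ and $N^{\prime\ell}$ to a common vertex $v_1$, obtaining $N^{FD_j}$ and $N^{\prime FD_j}$ with $v_{N^{FD_j}}=v_{N^{\prime FD_j}}=v_1$, and only then is the $\sim_i$ relation (now including condition~1) used via Theorem~\ref{thm:similarModuloDescendant}. So the correct fix is either to weaken the present lemma to assert only conditions~2--5, or to fold its content directly into the proof of Lemma~\ref{thm:forkCriticalProcessCorrect} at the point where $v_1$ has been fixed. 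Your instinct to pursue the latter route is sound; just do not try to argue that condition~1 holds at $N^l,N^{\prime\ell}$ themselves, because it does not.
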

\begin{proof}
By construction, the following is true of states of automata in system $\mathcal{S}$. For each location $x \in \Pi \setminus \set{i}$, the state of the process automaton $A_x$ is the same in states $c_{N^l}$ and $c_{N^{\prime \ell}}$; similarly, the state of the environment automaton $\mathcal{E}_{C,x}$ is the same in states $c_{N^l}$ and $c_{N^{\prime \ell}}$. For every pair of distinct locations $x,y \in \Pi$, the state of the channel automaton $Chan_{x,y}$ is the same in states $c_{N^l}$ and $c_{N^{\prime \ell}}$. Therefore, we conclude that $N^{l} \sim_i N^{\prime \ell}$.
\end{proof}

\begin{lemma}\label{thm:forkCriticalProcessCorrect}
The critical location of $(N,l,E^l,E^{\prime \ell})$ is in $live(G)$.
\end{lemma}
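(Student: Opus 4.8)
The plan is to argue by contradiction in the style of the FLP impossibility proof: assume the critical location $i$ is \emph{not} live in $G$, and show that the $v$-valent child $N^l$ and the $(1-v)$-valent child $N^{\prime \ell}$ of $N$ are forced to agree on a decision value, contradicting their opposite valences. The engine driving this is Theorem \ref{thm:similarModuloDescendant}, which says that from two similar-modulo-$i$ \emph{post-$crash_i$} nodes, every descendant of one has a similar-modulo-$i$ counterpart descending from the other; Lemma \ref{lem:forkSimilarModuloCriticalLocation} already supplies the agreement of all non-$i$ automata between $N^l$ and $N^{\prime \ell}$.

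The first task is to pass from $N^l$ and $N^{\prime \ell}$ to a matching pair of post-$crash_i$ nodes. Since $i \notin live(G)$, $G$ contains only finitely many vertices at location $i$; let $(i,K,e_K)$ be the one of largest index. By Lemma \ref{lem:forkEventTagsNotBot} the tags $v_{N^l}$ and $v_{N^{\prime \ell}}$ are genuine location-$i$ vertices, say $(i,k_1,\cdot)$ and $(i,k_2,\cdot)$. Property 3 of observations supplies, at each index below $K$, an $FD_i$-edge stepping to the next index, so I would descend from $N^l$ along $FD_i$-edges to a node $M^l$ with vertex tag $(i,K,e_K)$, and likewise from $N^{\prime \ell}$ to a node $M^{\prime \ell}$ with the \emph{same} tag (uniqueness of a vertex per index makes the two tags coincide). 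Neither tag has an outgoing $G$-edge to a location-$i$ vertex, so $M^l$ and $M^{\prime \ell}$ are post-$crash_i$ nodes. Descending along $FD_i$-edges changes only the state of the process automaton at $i$ (it neither alters other automata nor the queues $Chan_{i,j}$), so the non-$i$ agreement of Lemma \ref{lem:forkSimilarModuloCriticalLocation} is preserved, and with equal vertex tags we obtain $M^l \sim_i M^{\prime \ell}$. Finally, Lemma \ref{lem:univalentDescendentIsUnivalent} keeps $M^l$ $v$-valent and $M^{\prime \ell}$ $(1-v)$-valent.

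With $M^l \sim_i M^{\prime \ell}$ both post-$crash_i$, I would invoke Theorem \ref{thm:similarModuloDescendant}. Since $M^l$ is $v$-valent it has a descendant $\hat{N}$ with $v$ a decision value of $exe(\hat{N})$, and the theorem yields a descendant $\widehat{N'}$ of $M^{\prime \ell}$ with $\hat{N} \sim_i \widehat{N'}$. The key observation is that this decision cannot be a $decide(\cdot)_i$ event: by Lemma \ref{lem:crashedLocationNoActionsInSubtree} no $Proc_i$ action occurs in the subtree rooted at the post-$crash_i$ node $M^l$, while $exe(M^l)$ itself carries no decision because the path from $N$ down to $M^l$ consists only of $FD_i$-edges and $N$ is bivalent (Lemma \ref{prop:bivalentNoDecisionValue}). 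Hence the decision value $v$ is produced by some $decide(v)_j$ with $j \neq i$. Because the similar-modulo-$i$ correspondence is built (in Lemma \ref{lem:similarModuloChild}, the inductive core of Theorem \ref{thm:similarModuloDescendant}) by copying every non-$i$ action tag, this same $decide(v)_j$ recurs on the path from $M^{\prime \ell}$ to $\widehat{N'}$, so $exe(\widehat{N'})$ has decision value $v$. This exhibits a descendant of the $(1-v)$-valent node $M^{\prime \ell}$ with decision value $v$, the desired contradiction; therefore $i \in live(G)$.

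The main obstacle is the bookkeeping needed for Theorem \ref{thm:similarModuloDescendant} to apply: the two fork edges necessarily carry \emph{distinct} vertex tags, so $N^l$ and $N^{\prime \ell}$ need not be post-$crash_i$ and do not literally satisfy the vertex-tag clause of $\sim_i$. Reconciling this by descending both to the unique maximal-index location-$i$ vertex, so the tags coincide and the nodes become post-$crash_i$, while verifying that the descent preserves valence and all non-$i$ state, is the delicate step. A secondary subtlety is justifying that decision values transfer across the similar-modulo-$i$ correspondence; this rests on reading off from the construction in Lemma \ref{lem:similarModuloChild} that non-$i$ action tags (and hence any $decide(v)_j$ with $j \neq i$) are reproduced exactly, not merely that the final configurations match.
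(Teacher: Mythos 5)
Your proposal is correct, and it reaches the contradiction by a route that differs from the paper's in two technical choices. The paper also argues by contradiction via Theorem~\ref{thm:similarModuloDescendant}, but it manufactures its matching pair of post-$crash_i$, $\sim_i$-related nodes differently: instead of descending along $FD_i$-edges to the maximal-index location-$i$ vertex $(i,K,e_K)$ as you do, it fixes a live location $j$, uses Lemmas~\ref{prop:outgoingEdgesToCrashedVertices} and~\ref{prop:outgoingEdgesToLiveVertices} to find a single vertex $v_1=(j,k',e')$ that both $v_{N^l}$ and $v_{N^{\prime\ell}}$ point to in $G$ and that has no outgoing edges to location-$i$ vertices, and then takes one $FD_j$-step from each fork child to obtain two nodes with common vertex tag $v_1$. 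Both constructions achieve the same thing (equal vertex tags, preserved non-$i$ state agreement, post-$crash_i$ status, preserved valences via Lemma~\ref{lem:univalentDescendentIsUnivalent}); your repair is notable because you correctly spotted that Lemma~\ref{lem:forkSimilarModuloCriticalLocation} as stated cannot literally hold (the two fork children have distinct vertex tags, violating clause~1 of $\sim_i$), a wrinkle the paper glosses over and implicitly repairs by its $FD_j$-step. The second difference is the decision-transfer step. You transfer the $decide(v)_j$ event by observing that the correspondence of Lemma~\ref{lem:similarModuloChild} replicates non-$i$ action tags along the matched paths; as you note, this uses the \emph{proof} of Theorem~\ref{thm:similarModuloDescendant} rather than its statement, which is a genuine (though easily repaired) rigor debt, since the statement only matches configurations and the consensus environment does not record decisions in its state. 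The paper avoids this debt with an enabledness argument: it takes a fair branch through its post-$crash_i$ node (using Lemma~\ref{prop:fairBranchOneDecision} to get a decision), lets $N^v$ be the node immediately preceding the $decide$ edge, applies Theorem~\ref{thm:similarModuloDescendant} to get $N^{(1-v)}$ with $N^v \sim_i N^{(1-v)}$, and then uses equality of the deciding process's state (plus equality of vertex tags, which makes the tree construction assign the same non-$\bot$ action tag) to conclude the same $decide(v)$ action labels an edge out of $N^{(1-v)}$. If you wanted to discharge your flagged subtlety without strengthening the theorem, substituting this enabledness step for your action-tag-copying step—taking $\hat{N}$ to be the node just before the decide edge—would do it within the statements the paper provides.
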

\begin{proof}
 Let $i$ be the critical location of $(N,l,E^l,E^{\prime \ell})$. Applying Lemma \ref{lem:forkTaskFD} we conclude that $l$ is $FD_i$. Since $N^l$ and $N^{\prime \ell}$ are $l$-children of $N$, we note that the states of all automata in system $\mathcal{S}$ in states $c_{N^l}$ and $c_{N^{\prime \ell}}$ are the same, except for the state of the process automaton at $i$. Recall that $v_{N^l}$ and $v_{N^{\prime \ell}}$ are the vertex tags of $N^l$ and $N^{\prime \ell}$, respectively. From Lemma \ref{lem:forkEventTagsNotBot} we know that the action tags $a_{E^l}$ and $a_{E^{\prime \ell}}$ are not $\bot$. Therefore, $v_{N^l}$ and $v_{N^{\prime \ell}}$ are vertices in $G$.
Note that $N^l$ is $v$-valent for some $v\in\{0,1\}$ and $N^{\prime \ell}$ is $(1-v)$-valent. In order to show that $i$ is in $live(G)$, we have to show that $G$ contains infinitely many vertices whose location is $i$.

For contradiction assume that the critical location $i$ of $(N,l,E^l,E^{\prime \ell})$ is not in $live(G)$. Then by definition, $G$ contains only finitely many vertices whose location is $i$. 
Recall that $G$ is a viable observation of $D$ such that at most $f$ locations are not live in $G$. Since $f < n$,
we conclude that at least one location is live in $G$. Fix such a location $j$.

From Lemma \ref{prop:outgoingEdgesToCrashedVertices} we know that there exists a positive integer $k$ such that for every positive integer $k' \geq k$, there is no edge from any vertex of the form $(j,k',*)$ to any vertex whose location is $i$. Fix such a positive integer $k$, and fix the corresponding vertex $(j,k,*)$.

From Lemma \ref{prop:outgoingEdgesToLiveVertices}, we know that there exists a positive integer $k' \geq k$ such that there are outgoing edges from $v_{N^l}$ and $v_{N^{\prime \ell}}$ to a vertex $(j,k',*)$; fix such a vertex $v_1 = (j,k',e')$.

From the construction of $\mathcal{R}^G$, we know that there exist $FD_j$-edges $E^{FD_j}$ and $E^{\prime FD_j}$ from $N^l$ and $N^{\prime \ell}$, respectively, whose action tag is $e'$ and vertex tag is $v_1$. Let $N^{FD_j}$ and $N^{\prime FD_j}$ be the $FD_j$-children of $N^l$ and $N^{\prime \ell}$, respectively, connected to their parent by edges $E^{FD_j}$ and $E^{\prime FD_j}$, respectively. By construction, $v_{N^{FD_j}} = v_{N^{\prime FD_j}} = v_1$. 


By Lemma \ref{lem:forkSimilarModuloCriticalLocation} we know that $N^{l} \sim_i N^{\prime \ell}$. Since the action tags of $E^{FD_j}$ and $E^{\prime FD_j}$ are the same, we conclude that the states of all automata in system $\mathcal{S}$ in states $c_{N^{FD_j}}$ and $c_{N^{\prime FD_j}}$ are the same, except for the state of the process automaton $A_i$. Therefore, $N^{FD_j} \sim_i N^{\prime FD_j}$. We have already established that $v_{N^{FD_j}} = v_{N^{\prime FD_j}} = v_1$, and there are no outgoing edges from $v_1$ to vertices whose location is $i$. Thus, by definition, $N^{FD_j}$ and $N^{\prime FD_j}$ are post-$crash_i$ nodes\footnote{Recall from Section \ref{subsec:PropertiesOfSimilarModuloNodes} that a node $N$ is a \emph{post-$crash_i$} node if the following property is satisfied. If $v_N = (\bot, 0 ,\bot)$, then there are no vertices in $G$ whose location is $i$. Otherwise, there are no outgoing edges in $G$ from $v_N$ to any vertex whose location is $i$.}


Recall that $N^l$ is $v$-valent and $N^{\prime \ell}$ is $(1-v)$-valent. Therefore, applying Lemma \ref{lem:univalentDescendentIsUnivalent}, we know that $N^{FD_j}$ is $v$-valent and $N^{\prime FD_j}$  is $(1-v)$-valent. Let $b$ be a fair branch of $\mathcal{R}^G$ that contains nodes $N$, $N^l$ and $N^{FD_j}$.

Since $N$ is bivalent, from Lemma \ref{prop:bivalentNoDecisionValue}, we know that $exe(N)$ does not have a decision value. Since $l$ is $FD_i$, we know that $exe(N^l)$ and $exe(N^{FD_j})$ do not have decision values. From Lemma \ref{prop:fairBranchOneDecision} we know that $exe(b)$ has exactly one decision value, and since $N^{FD_j}$ is $v$-valent, the decision value is $v$. 
 That is, there exists an edge $E^v$ and a node $N^v$ such that $E^v$ occurs in $b$ after $N^{FD_j}$, $a_{E^v}$ is $decide(v)_j$, and $N^v$ is the node that precedes $E^v$ in $b$. 
 
  Since $N^{FD_j}$ and $N^{\prime FD_J}$ are post-$crash_i$ nodes, $N^{FD_j} \sim_i N^{\prime FD_j}$, and $N^v$ is a descendant of $N^{FD_j}$, we apply Theorem \ref{thm:similarModuloDescendant} to conclude that there exists a descendant $N^{(1-v)}$ of $N^{\prime FD_j}$ such that $N^v \sim_i N^{(1-v)}$. From the definition of $\sim_i$ we know that the state of the process automaton at $j$ is the same in $c_{N^v}$ and $c_{N^{(1-v)}}$. Since the action $a_{E^v} = decide(v)_j$ is enabled at the process automaton at $j$ in state $c_{N^v}$, we know that action $decide(v)_j$ is enabled in state $c_{N^{1-v}}$. Therefore, the $Proc_j$-child $N^{\prime (1-v)}$ of $N^{(1-v)}$ has a decision value $v$. However, since $N^{\prime FD_j}$  is $(1-v)$-valent and $N^{(1-v)}$ is a descendant of $N^{\prime FD_j}$, by Lemma \ref{lem:univalentDescendentIsUnivalent}, we know that $N^{(1-v)}$ is $(1-v)$-valent. Thus, we have contradiction.
\end{proof}

\subsubsection{Hooks}
\label{subsubsec:HookProperties}

In the tree $\mathcal{R}^{G}$, a \emph{hook} is a gadget $(N,l,r,E^l,E^r,E^{rl})$ such that the following is true.
\begin{enumerate}
 \item $N$ is bivalent.
 \item For some $v \in \set{0,1}$, the lower endpoint $N^l$ of $E^l$ is $v$-valent and the lower endpoint $N^{rl}$ of $E^{rl}$ is $(1-v)$-valent.  
 \item $a_{E^r} \neq \bot$.
\end{enumerate}

Any hook $(N,l,r,E^l,E^r,E^{rl})$ in $\mathcal{R}^G$ satisfies three properties. (1) the action tags of $a_{E^l}$ and $a_{E^r}$ cannot be $\bot$, (2) the locations of the action tags $a_{E^l}$ and $a_{E^r}$ must be the same location (say) $i$, and (3) location $i$, called the \emph{critical location} of the hook, must be live in $G$. We prove each property separately.

For the remainder of this subsection, fix a hook $(N,l,r,E^l,E^r,E^{rl})$ in $\mathcal{R}^G$; we use the following convention from the definition of a hook: $N^l$ denotes the $l$-child of $N$ connected by the edge $E^l$, $N^r$ denotes the $r$-child of $N$ connected by the edge $E^r$, and $N^{rl}$ denotes the $l$-child of $N^r$ connected by the edge $E^{rl}$.

\begin{lemma}\label{thm:hookEventTagsNotBot}
The action tags $a_{E^l}$ and $a_{E^r}$ are not $\bot$.
\end{lemma}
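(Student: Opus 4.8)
The plan is to observe that $a_{E^r} \neq \bot$ holds immediately by the third defining property of a hook, so the entire content of the lemma reduces to establishing $a_{E^l} \neq \bot$. I would prove this by contradiction, assuming $a_{E^l} = \bot$ and deriving a valence contradiction from the subtree-similarity machinery of Section \ref{subsec:taggedTreeProps}.

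First I would apply Lemma \ref{prop:botChild} to the $l$-edge $E^l$ from $N$ to its $l$-child $N^l$. Since $a_{E^l} = \bot$, this yields $c_{N^l} = c_N$, $v_{N^l} = v_N$, and $exe(N^l) = exe(N)$; in other words, $N^l$ carries exactly the same configuration tag, vertex tag, and associated execution as $N$. Next I would exploit the fact that $N$ is bivalent (defining property 1 of a hook): there exist descendants $N_0$ and $N_1$ of $N$ such that $exe(N_0)$ has decision value $0$ and $exe(N_1)$ has decision value $1$. Because $c_N = c_{N^l}$ and $v_N = v_{N^l}$, I can invoke Lemma \ref{lem:sameConfigSameExtensionLength} with the pair $(N, N^l)$ applied to each of $N_0$ and $N_1$, obtaining descendants $\widehat{N_0}$ and $\widehat{N_1}$ of $N^l$ whose execution suffixes following $exe(N^l)$ match those of $N_0$ and $N_1$ following $exe(N)$.

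Since $exe(N^l) = exe(N)$, these matching suffixes force $exe(\widehat{N_0}) = exe(N_0)$ and $exe(\widehat{N_1}) = exe(N_1)$, so $\widehat{N_0}$ and $\widehat{N_1}$ have decision values $0$ and $1$ respectively. Hence $N^l$ has descendants with both decision values and is therefore bivalent. This contradicts the second defining property of a hook, which asserts that $N^l$ is $v$-valent and hence univalent: a node cannot simultaneously be bivalent and univalent (a $v$-valent node has no descendant with decision value $1-v$). The contradiction shows $a_{E^l} \neq \bot$, which together with the given $a_{E^r} \neq \bot$ completes the proof.

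The step I expect to be the crux is the transfer of the two decision-bearing descendants from the subtree below $N$ into the subtree below $N^l$ via Lemma \ref{lem:sameConfigSameExtensionLength}. The essential observation making this go through cleanly is that $a_{E^l} = \bot$ gives the exact equality $exe(N^l) = exe(N)$ (not merely similarity), so the transferred descendants have literally identical executions and therefore literally identical decision values; this is what upgrades the generic "similar subtrees" statement into a genuine preservation of valence.
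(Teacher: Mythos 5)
Your proof is correct and follows essentially the same route as the paper's: assume $a_{E^l} = \bot$, use Lemma \ref{prop:botChild} to get $c_{N^l} = c_N$, $v_{N^l} = v_N$, and $exe(N^l) = exe(N)$, transfer a decision-bearing descendant of $N$ into the subtree below $N^l$ via Lemma \ref{lem:sameConfigSameExtensionLength}, and contradict the $v$-valence of $N^l$. The only cosmetic differences are that the paper sources its $(1-v)$-decision descendant from the $(1-v)$-valent node $N^{rl}$ rather than directly from the bivalence of $N$, and it invokes Lemma \ref{prop:bivalentNoDecisionValue} to argue the $decide(1-v)$ event lies in the transferred suffix, whereas your observation that $exe(N^l)=exe(N)$ holds exactly makes that extra step unnecessary.
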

\begin{proof} 
From the definition of a hook, we know that $a_{E^r} \neq \bot$. It remains to show that $a_{E^l} \neq \bot$.

For contradiction, assume $a_{E^l}$ is $\bot$.
Then, by construction, $c_N=c_{N^l}$ and $v_N = v_{N'}$. Recall that $N$ is bivalent and its descendant $N^{rl}$ is $(1-v)$-valent. From the definition of valence, we know there exists a descendant $N_{(1-v)}$ of $N^{rl}$ (and therefore a descendant of $N$) such that the decision value of $exe(N_{(1-v)})$ is $1-v$.

 Applying Lemma \ref{lem:sameConfigSameExtensionLength} to $N$ and $N^l$, we know that there exists a descendant $N^{l}_{(1-v)}$ of $N^l$ such that the suffix of $exe(N^{l}_{(1-v)})$ following $exe(N^l)$ is identical to the suffix of $exe(N_{(1-v)})$ following $exe(N)$. Since  $exe(N)$ is bivalent, by Lemma \ref{prop:bivalentNoDecisionValue} it does not have a decision value; it follows that some event in the suffix of $exe(N_{(1-v)})$ following $exe(N)$ must be of the form $decide(1-v)_i$ (where $i \in \Pi$). Therefore, the decision value of $exe(N^{l}_{(1-v)})$ is $1-v$. But since $N^l$ is $v$-valent, we have a contradiction. 
\end{proof}

\begin{lemma}\label{thm:sameCriticalProcess}
The locations of the action tags $a_{E^l}$ and $a_{E^r}$ are the same.
\end{lemma}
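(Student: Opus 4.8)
The plan is a proof by contradiction built on the classical FLP ``commuting diamond.'' Suppose, toward a contradiction, that $loc(a_{E^l}) = i$ and $loc(a_{E^r}) = j$ with $i \neq j$. By Lemma~\ref{thm:hookEventTagsNotBot} both $a_{E^l}$ and $a_{E^r}$ are non-$\bot$, hence genuine actions of $\mathcal{S}$, and by hypothesis $N^l$ is $v$-valent while $N^{rl}$ is $(1-v)$-valent. The target is to produce a descendant $\bar N$ of $N^l$ with $c_{\bar N} = c_{N^{rl}}$ and $v_{\bar N} = v_{N^{rl}}$. Granting this, Lemma~\ref{lem:univalentDescendentIsUnivalent} makes $\bar N$ $v$-valent (as a descendant of the $v$-valent $N^l$), whereas Lemma~\ref{lem:sameConfigSameExtensionLength} supplies a decision-value-preserving correspondence between the descendants of $\bar N$ and those of $N^{rl}$ (the two nodes share configuration and vertex tags), forcing $\bar N$ to have the same valence as $N^{rl}$, i.e. $(1-v)$-valent. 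This contradiction yields $i = j$.

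The first key step is commutativity of $a_{E^l}$ (at $i$) and $a_{E^r}$ (at $j$). I would show that applying $a_{E^l}$ alters only the state of the automata attached to location $i$ and, in particular, neither disables nor redefines the $r$-task action at $c_{N^l}$; symmetrically, the $l$-task action available at $N^r$ coincides with $a_{E^l}$ when $l$ is an ordinary task. The one subtle subcase is $a_{E^l} = send(m,j)_i$ with $r = Chan_{i,j}$, where both actions touch the same FIFO channel: because $a_{E^r}$ is enabled at $c_N$ the queue is nonempty, and since a $send$ appends at the tail the head is untouched, so the receive remains enabled and unchanged. Combining these observations, applying the two actions to $c_N$ in either order produces the same configuration, which (together with the vertex-tag analysis below) gives the $r$-child $\bar N$ of $N^l$ with $c_{\bar N}=c_{N^{rl}}$.

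The second key step, and the one I expect to be the main obstacle, is matching the vertex tags. When at least one of $l,r$ lies in $T$, that step leaves the vertex tag fixed, so the relevant edges of $\mathcal{R}^G$ are immediately present and $v_{\bar N} = v_{N^{rl}}$ follows directly from the tree construction (Lemma~\ref{lem:sameConfigSameChild}). The hard case is $l = FD_i$ and $r = FD_j$: then $v_{N^l}$ and $v_{N^{rl}}$ are both location-$i$ vertices, and although $v_N \to v_{N^r} \to v_{N^{rl}}$ gives $v_N \to v_{N^{rl}}$ by transitivity (Property~5 of observations), the vertices $v_{N^l}$ and $v_{N^{rl}}$ need not be equal, so the distinct $FD_i$ outputs $a_{E^l}$ and $a_{E^{rl}}$ prevent the single-step diamond from closing. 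I would attack this by exploiting that the location-$i$ vertices form an index-ordered chain (Properties~1 and~3), making $v_{N^l}$ and $v_{N^{rl}}$ comparable, and then chasing a commuting rectangle from $N^l$ that interleaves the $FD_j$ step (emitting $a_{E^r}$) with the appropriate run of $FD_i$ steps, checking at each stage that $G$ contains the required edge and that the two interleavings keep the configurations equal. Reconciling the order in which the $i$-output and $j$-output vertices may legally be visited in $G$ --- since the edge $v_{N^r}\to v_{N^{rl}}$ pins $a_{E^r}$ before the final $i$-output while DAG acyclicity forbids the reverse --- is the delicate point; the task-vs-task and task-vs-$FD$ cases are routine once the commutativity above is in hand.
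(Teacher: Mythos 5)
Your proposal follows the same route as the paper's own proof: assume $loc(a_{E^l}) = i \neq j = loc(a_{E^r})$, close a commuting diamond to obtain a node $\bar N$ below $N^l$ with $c_{\bar N} = c_{N^{rl}}$ and $v_{\bar N} = v_{N^{rl}}$, and then play Lemma~\ref{lem:univalentDescendentIsUnivalent} against Lemma~\ref{lem:sameConfigSameExtensionLength}; your treatment of commutativity (including the $send$/$Chan_{i,j}$ FIFO subcase) is fine and is more explicit than the paper's. However, two gaps remain. The smaller one: Lemma~\ref{lem:sameConfigSameExtensionLength} matches only the \emph{suffixes} of executions after $exe(\bar N)$ and $exe(N^{rl})$, so it transfers a decision value only if the $decide$ event lies in that common suffix. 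Since $exe(N)$ itself has no decision value (Lemma~\ref{prop:bivalentNoDecisionValue}), what must be ruled out is that one of the diamond's own two actions is a $decide$ event; the paper does exactly this in its Claims 1 and 2 (if $a_{E^l}$ were a decide it would have to be $decide(v)$ because $N^l$ is $v$-valent, yet the same action occurs inside $exe(N^{rl})$, which is $(1-v)$-valent). Your phrase ``decision-value-preserving correspondence'' asserts precisely what this missing step is needed to establish.

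The larger gap is the one you flagged yourself and did not close, and your sketched repair cannot work. First, the difficulty is not confined to $l = FD_i, r = FD_j$: it already arises for $l = FD_i$ and $r \in T$, because the hook fixes the edge $E^{rl}$, and its vertex tag (an $i$-vertex with an edge from $v_{N^r}$ in $G$) need not equal $v_{N^l}$; so your claim that this case ``follows directly'' is too quick. Second, in the $FD_i$/$FD_j$ case the ``commuting rectangle'' cannot be built inside $\mathcal{R}^G$. If $v_{N^l} = v_{N^{rl}} = u$, then $G$ contains the edge $(v_{N^r}, u)$ (this is what makes $E^{rl}$ a non-$\bot$ $FD_i$-edge of $N^r$), so by acyclicity $G$ does \emph{not} contain $(u, v_{N^r}$), and therefore no node whose vertex tag is $u$ has any outgoing $FD_j$-edge with vertex tag $v_{N^r}$: the step emitting $a_{E^r}$ that your rectangle must interleave after the $FD_i$ step simply does not exist anywhere in the subtree below $N^l$. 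If instead $v_{N^l} \neq v_{N^{rl}}$, then the two sides of the rectangle deliver different sets of $O_{D,i}$ events to the process automaton at $i$ (one side has consumed the output of $v_{N^l}$, the other has not and can never do so), so the final configurations cannot coincide and Lemma~\ref{lem:sameConfigSameExtensionLength} never becomes applicable. To be fair, you have put your finger on a real soft spot: the paper disposes of these cases by assertion (``Observe that $c_{N^{lr}} = c_{N^{rl}}$ and $v_{N^{lr}} = v_{N^{rl}}$'' after a ``simple case analysis''), which is unjustified exactly where you struggle. But identifying the obstacle is not the same as overcoming it, and as written your proof is incomplete precisely on the case that needs a genuinely new idea (or a strengthening of the hook definition to require matching vertex tags on $E^l$ and $E^{rl}$).
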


\begin{proof}
For the purpose of contradiction, we assume that the location $i$ of the action tag $a_{E^l}$ is different from the location $j$ of the action tag $a_{E^r}$; that is, $i \neq j$.
This assumption implies that $l \in \set{FD_i, Proc_i} \cup \set{Chan_{k,i} | k \in \Pi\setminus \{i\}} \cup \set{Env_{v,i}| v \in \{0,1\}}$ and $r \in \set{FD_j, Proc_j} \cup \set{Chan_{k,j}|k \in \Pi\setminus \{j\}} \cup \set{Env_{v,j} | v \in \{0,1\}}$. From Lemma \ref{thm:hookEventTagsNotBot}, we know that $a_{E^l}$ and $a_{E^r}$ are both enabled actions in state $c_N$. 

A simple case analysis for all possible values of $l$ and $r$ (while noting that $i\neq j$) establishes the following. Extending $exe(N)$ by applying $a_{E^l}$ followed by $a_{E^r}$ will yield the same final state as applying $a_{E^r}$, followed by $a_{E^l}$, to $exe(N)$. Intuitively, the reason is that $a_{E^l}$ and $a_{E^r}$ occur at different locations, and therefore, may be applied in either order to $exe(N)$ and result in the same final state. The above observation implies that $N^l$ has an $r$-edge $E^{lr}$ whose action tag $a_{E^{lr}}$ is the action $a_{E^r}$; let $N^{lr}$ be the $r$-child of $N^l$ connected by $E^{lr}$.
Observe that $c_{N^{lr}} = c_{N^{rl}}$ and $v_{N^{lr}} = v_{N^{rl}}$.

Recall that since $(N,l,r,E^l,E^r,E^{rl})$ is a hook, $N^l$ is $v$-valent and $N^{rl}$ is $(1-v)$-valent for some $v \in \set{0,1}$. Since $N^{lr}$ is a descendant of $N^l$, by Lemma \ref{lem:univalentDescendentIsUnivalent}, $N^{lr}$ is also $v$-valent. Let $N^{lr}_{v}$ be a descendant of $N^{lr}$ such that $exe(N^{lr}_{v})$ has a decision value $v$. 
Applying Lemma \ref{lem:sameConfigSameExtensionLength},  we know that there exists a descendant $N^{rl}_{v}$ of $N^{rl}$ such that $c_{N^{lr}_{v}} = c_{N^{rl}_{v}}$ and the suffix of $exe(N^{lr}_{v})$ following $exe(N^{lr})$ is identical to the suffix of $exe(N^{rl}_{v})$ following $exe(N^{rl})$.

Note that since $N$ is bivalent, by Lemma \ref{prop:bivalentNoDecisionValue}, $exe(N)$ has no decision value.

\emph{Claim 1.} $a_{E^l}$ is not a $decide$ action.
\begin{proof}
For contradiction, assume $a_{E^l}$ is a $decide$ action. Since $exe(N^l)$ contains the event $a_{E^l}$ and $exe(N^l)$ is $v$-valent, it follows that $a_{E^l}$ is a $decide(v)$ action. However, recall that $a_{E^{rl}} = a_{E^l}$, $exe(N^{rl})$ contains the event $a_{E^{rl}}$; therefore, $exe(N^{rl})$ contains a $decide(v)$ event. However, $exe(N^{rl})$ is $(1-v)$-valent. Thus, we have a contradiction.
\end{proof}

\emph{Claim 2.} $a_{E^r}$ is not a decide action.
\begin{proof}
Similar to the proof of Claim 1.
\end{proof}

 From Claims 1 and 2, we know that for each of  $N$'s $l$-edge, $N$'s $r$-edge, $N^l$'s $r$-edge, and $N^r$'s $l$-edge, their action tags cannot be a $decide$. Therefore, since $exe(N^{lr}_{v})$ has a decision value $v$, the suffix of $exe(N^{lr}_{v})$ following $exe(N^{lr})$ contains an event of the form $decide(v)$. In other words, the suffix of $exe(N^{rl}_{v})$ following $exe(N^{rl})$ contains an event of the form $decide(v)$. However, this is impossible because $N^{rl}$ is $(1-v)$-valent.
\end{proof}

Next, we present the third property of a hook. Before stating this property, we have to define a \emph{critical location} of a hook. Given the hook $(N,l,r,E^l,E^r,E^{rl})$, the \emph{critical location} of the hook is the location of $a_{E^l}$ and $a_{E^r}$; from Lemma \ref{thm:sameCriticalProcess}, we know that this is well-defined. 

\begin{lemma}\label{thm:criticalProcessCorrect}
The critical location of $(N,l,r,E^l,E^r,E^{rl})$ is in $live(G)$.
\end{lemma}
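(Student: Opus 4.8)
The plan is to argue by contradiction and to reuse, almost verbatim, the machinery developed for forks in Lemma~\ref{thm:forkCriticalProcessCorrect}. Assume the critical location $i$ of the hook $(N,l,r,E^l,E^r,E^{rl})$ is not in $live(G)$, so that $G$ contains only finitely many vertices whose location is $i$. Since $G$ is viable with at most $f<n$ non-live locations, at least one location $j$ is live in $G$; fix it. By Lemma~\ref{thm:hookEventTagsNotBot} the action tags $a_{E^l}$ and $a_{E^r}$ are non-$\bot$, and by Lemma~\ref{thm:sameCriticalProcess} both occur at $i$; consequently every action tag appearing on the gadget edges is a location-$i$ action, and $N^l$, $N^r$, $N^{rl}$ all differ from $N$ only through location-$i$ activity.

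First I would prove the hook analogue of Lemma~\ref{lem:forkSimilarModuloCriticalLocation}: that $N^l$ is similar-modulo-$i$ to $N^{rl}$. The point is that $N^l$ is obtained from $N$ by the single location-$i$ action $a_{E^l}$, whereas $N^{rl}$ is obtained from $N$ by the two location-$i$ actions $a_{E^r}$ then $a_{E^{rl}}$. Because every applied action is local to $i$, the states of $Proc_{j'}$ and $\mathcal{E}_{j'}$ for $j'\neq i$, and of every channel $Chan_{j',k'}$ with $j',k'\neq i$, are left untouched and hence coincide in $c_{N^l}$ and $c_{N^{rl}}$; and $v_{N^l}=v_{N^{rl}}$ can be read off from the construction once the non-$\bot$ tags are matched to their $G$-vertices. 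This gives conditions 1--4 of the $\sim_i$ relation.

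The hard part, and the place where the hook genuinely differs from the fork, is condition~5 of $\sim_i$: the messages that $i$ deposits into each outgoing channel $Chan_{i,k'}$ along the $E^l$ path must form a prefix of those deposited along the $E^r,E^{rl}$ path. Since the only location-$i$ actions that enqueue messages are $Proc_i$ send actions, I expect to carry out a case analysis on the labels $l$ and $r$ (both location-$i$ labels), tracking the FIFO contents out of $i$ and orienting the asymmetric $\sim_i$ relation in the correct direction so that $N^l$ (the $v$-valent side) sits on the left. This is where I anticipate the main obstacle: in the fork case both critical children are reached by $FD_i$ events, which send nothing, so the outgoing queues coincide trivially, whereas here the deciding bursts of sends along the two paths must be shown to be prefix-compatible — the same delicacy that forced Lemma~\ref{thm:sameCriticalProcess} to be isolated and proved on its own.

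Once $N^l \sim_i N^{rl}$ is established, the remainder follows the fork template. Using Lemma~\ref{prop:outgoingEdgesToCrashedVertices} I pick an index past which no vertex $(j,k',*)$ has an outgoing edge in $G$ to any $i$-vertex, and using Lemma~\ref{prop:outgoingEdgesToLiveVertices} (together with $v_{N^l}=v_{N^{rl}}$) I find a common such vertex to which both $v_{N^l}$ and $v_{N^{rl}}$ have edges. Driving $N^l$ and $N^{rl}$ forward by the corresponding $FD_j$ edges yields post-$crash_i$ nodes $M$ and $M'$; since the appended $FD_j$ action is identical on both sides, $\sim_i$ is preserved and $M \sim_i M'$. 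By Lemma~\ref{lem:univalentDescendentIsUnivalent}, $M$ is $v$-valent and $M'$ is $(1-v)$-valent. Taking a fair branch through $M$, its unique decision value is $v$ (Lemmas~\ref{prop:fairBranchOneDecision} and~\ref{prop:bivalentNoDecisionValue}), so some descendant of $M$ carries a $decide(v)_j$ event; applying Theorem~\ref{thm:similarModuloDescendant} transports this descendant to a descendant of $M'$ with the same $Proc_j$ state, at which $decide(v)_j$ is enabled, giving $M'$ a descendant with decision value $v$. This contradicts $M'$ being $(1-v)$-valent, and the lemma follows.
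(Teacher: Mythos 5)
Your overall skeleton (contradiction, pushing forward by $FD_j$ edges to post-$crash_i$ nodes, transporting a $decide$ event via Theorem~\ref{thm:similarModuloDescendant}) matches the paper, but the intermediate claim you build it on --- that $N^l \sim_i N^{rl}$ holds in some orientation --- is false in general, and no case analysis on $l$ and $r$ can rescue it. Concretely, take $l = Proc_i$ and $r = Chan_{j',i}$ for some $j' \neq i$. In $c_N$ the enabled $Proc_i$ action may be $send(m_1,x)_i$, so $a_{E^l} = send(m_1,x)_i$ and in $c_{N^l}$ the queue of $Chan_{i,x}$ is $Q\cdot m_1$. Along the other path, $a_{E^r} = receive(m',j')_i$ changes the state of $Proc_i$, so the task-$l$ action enabled in $c_{N^r}$ may be a \emph{different} send, $a_{E^{rl}} = send(m_2,x)_i$ with $m_2 \neq m_1$ (nothing in the hook definition, or in Lemma~\ref{thm:sameCriticalProcess}, forces $a_{E^{rl}} = a_{E^l}$ once both labels live at the same location $i$). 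Then the queue of $Chan_{i,x}$ in $c_{N^{rl}}$ is $Q\cdot m_2$, and neither of $Q\cdot m_1$, $Q\cdot m_2$ is a prefix of the other, so condition~5 of $\sim_i$ fails in \emph{both} orientations: $N^l \not\sim_i N^{rl}$ and $N^{rl} \not\sim_i N^l$. (A variant where the two paths send to different destinations $x_1 \neq x_2$ also kills both orientations at once.) So the ``hook analogue of Lemma~\ref{lem:forkSimilarModuloCriticalLocation}'' you plan to prove does not exist; this is exactly why the hook case cannot reuse the fork argument almost verbatim.

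The paper's proof avoids ever comparing the two univalent nodes to each other. It anchors the similarity relation at the common bivalent ancestor: it extends $N$, $N^l$, and $N^{rl}$ by $FD_j$-edges with the \emph{same} action tag and vertex tag $v_1$, obtaining post-$crash_i$ nodes $N^{FD_j}$, $N^{l\cdot FD_j}$, $N^{rl\cdot FD_j}$, and proves the two relations $N^{FD_j} \sim_i N^{l\cdot FD_j}$ and $N^{FD_j} \sim_i N^{rl\cdot FD_j}$. These do hold, because the queues out of $i$ in $c_N$ are prefixes of those in both $c_{N^l}$ and $c_{N^{rl}}$ --- the prefix asymmetry in condition~5 is tailored precisely to comparing an ancestor-side node with its descendant-side nodes, not two incomparable descendants. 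Since the valence of $N^{FD_j}$ is unknown, the contradiction then comes from a case split rather than from your single transported branch: take a fair branch $b$ through $N^{FD_j}$; its unique decision value $v'$ is either $v$ or $1-v$, and in either case Theorem~\ref{thm:similarModuloDescendant} transports the $decide(v')_j$ event to a descendant of whichever of $N^{l\cdot FD_j}$ ($v$-valent) or $N^{rl\cdot FD_j}$ ($(1-v)$-valent) it contradicts. Replacing your single similarity claim and single branch with this two-relation, two-case structure is the repair; the rest of your argument then goes through.
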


\begin{proof}
Note that $N^l$ is $v$-valent for some $v\in\{0,1\}$ and $N^{rl}$ is $(1-v)$-valent. 
Let $i$ be the critical location of the hook $(N,l,r,E^l,E^r,E^{rl})$. 
In order to show that $i$ is in $live(G)$, we have to show that $G$ contains infinitely many vertices whose location is $i$.

For the purpose of contradiction, we assume that $G$ contains only finitely many vertices whose location is $i$. Recall that $G$ is a viable observation of $D$ such that at most $f$ locations are not live in $G$. Since $f<n$, we conclude that least one location is live in $G$. Fix such a location $j$.


From Lemma \ref{prop:outgoingEdgesToCrashedVertices} we know that there exists a positive integer $k$ such that for each positive integers $k' \geq k$, there is no edge from any vertex of the form $(j,k',*)$ to any vertex whose location is $i$. Fix such a positive integer $k$, and fix the corresponding vertex $(j,k,*)$.

Next we fix a vertex $v_1$ in $G$ such that, roughly speaking, the event $e''$ of $v_1$ is an event at $j$ and ``occurs'' after the events of $v_N$, $v_{N^l}$, $v_{N^{rl}}$ and after location $i$ is ``crashed''; precisely, $v_1$ is fixed as follows. Let $V'$ be $V \cap \set{v_N, v_{N^l}, v_{N^{rl}}}$; that is, $V'$ is the maximal subset of $\set{v_N, v_{N^l}, v_{N^{rl}}}$ such that each vertex in $V'$ is a vertex  of $G$. If $V'$ is non-empty, then from Lemma \ref{prop:outgoingEdgesToLiveVertices}, we know that there exists a positive integer $k' \geq k$ such that there are outgoing edges from each vertex in $V'$ to a vertex $(j,k',*)$; fix $v_1$ to be such a vertex $(j,k',e')$. If $V'$ is empty, then fix $v_1$ to be any vertex in $V$ of the form $(j,k',e')$, where $k' \geq k$.

From the construction of $\mathcal{R}^G$, we know that there exist $FD_j$-edges $E^{FD_j}$, $E^{l\cdot FD_j}$, and $E^{rl \cdot FD_j}$ from $N$, $N^l$ and $N^{rl}$, respectively, whose action tag is $e'$ and vertex tag is $v_1$. Let $N^{FD_j}$, $N^{l\cdot FD_j}$, and $N^{rl\cdot FD_j}$ be the $FD_j$-children of $N$, $N^l$ and $N^{rl}$, respectively, connected to their parent by edges $E^{FD_j}$, $E^{l \cdot FD_j}$ and $E^{rl \cdot FD_j}$, respectively. By construction, $v_{N^{FD_j}} = v_{N^{l \cdot FD_j}} = v_{N^{rl \cdot FD_j}} = v_1$. See Fig. \ref{fig:extendPathsByFDEdges} for reference. 

Also recall that in $G$ there is no edge from the vertex of the form $(j,k,*)$ to any vertex whose location is $i$, and since $k' \geq k$, we know that is no edge from $v_1$ to any vertex whose location is $i$. Therefore, $N^{FD_j}$, $N^{l\cdot FD_j}$, and $N^{rl\cdot FD_j}$ are \emph{post-$crash_i$} nodes\footnote{Recall from Section \ref{subsec:PropertiesOfSimilarModuloNodes} that a node $N$ is a \emph{post-$crash_i$} node if the following property is satisfied. If $v_N = (\bot, 0 ,\bot)$, then there are no vertices in $G$ whose location is $i$. Otherwise, there are no outgoing edges in $G$ from $v_N$ to any vertex whose location is $i$.}.

\begin{figure}[hptb]
 \centering
\includegraphics[scale=0.4,page=4]{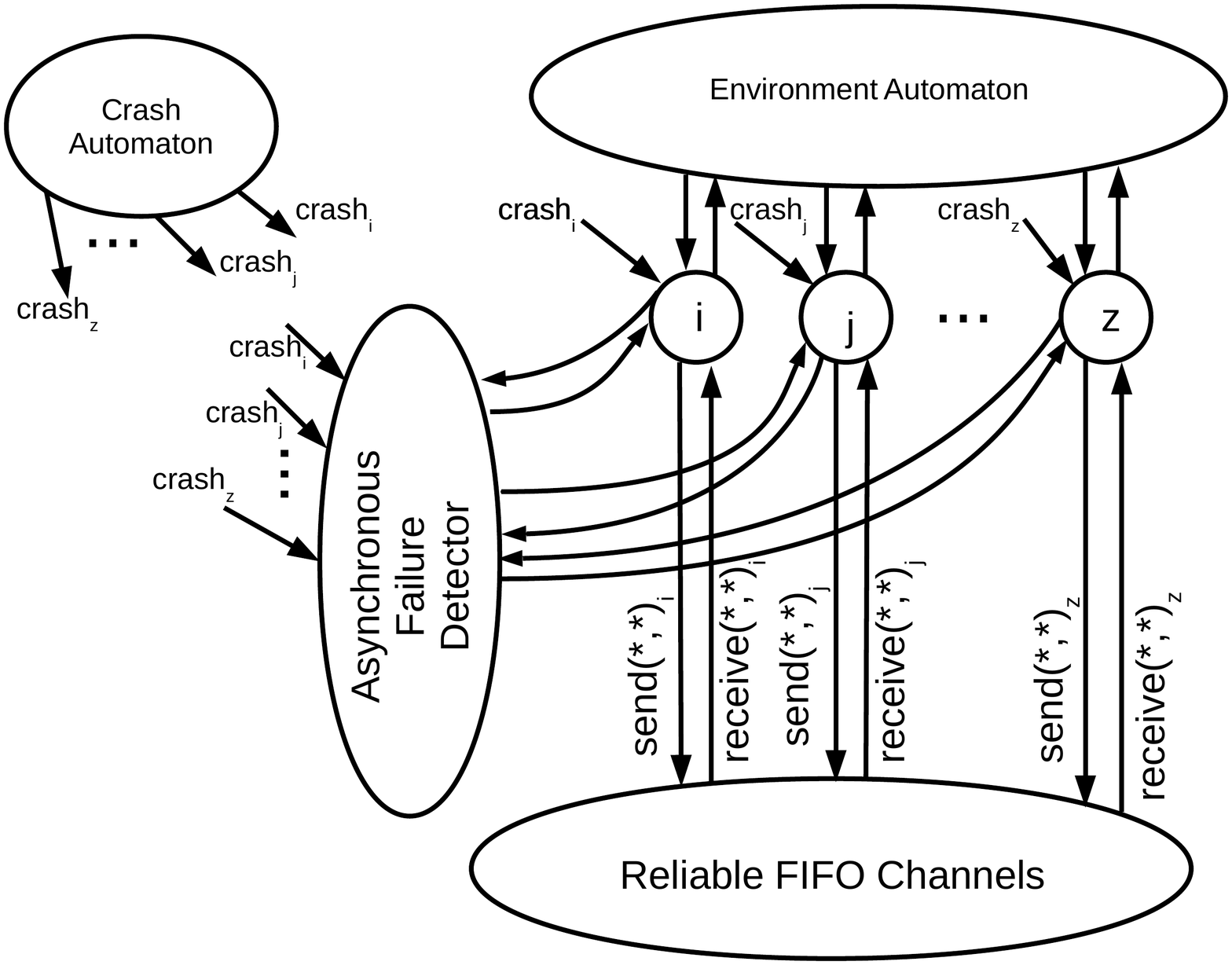}
 \caption{This figure shows how the nodes $N^{FD_j}$, $N^{l\cdot FD_j}$, and $N^{rl\cdot FD_j}$ are determined in the proof of Lemma \ref{thm:criticalProcessCorrect}.}
 \label{fig:extendPathsByFDEdges}
\end{figure}

Note that by construction, the following is true of states of automata in system $\mathcal{S}$. For each location $x \in \Pi \setminus \set{i}$, the state of the process automaton $A_x$ is the same in states $c_{N^{FD_j}}$, $c_{N^{l \cdot FD_j}}$, and $c_{N^{rl \cdot FD_j}}$; similarly, the state of the environment automaton $\mathcal{E}_{C,x}$ is the same in states $c_{N^{FD_j}}$, $c_{N^{l \cdot FD_j}}$, and $c_{N^{rl \cdot FD_j}}$. For every pair of distinct locations $x,y \in \Pi \setminus \set{i}$, the state of the channel automaton $Chan_{x,y}$ is the same in states $c_{N^{FD_j}}$, $c_{N^{l \cdot FD_j}}$, and $c_{N^{rl \cdot FD_j}}$. Finally, for every location $x \in \Pi \setminus \set{i}$, the messages in transit in the channel automaton $Chan_{i,x}$ from $i$ to $x$ in state $c_{N^{FD_j}}$ is a prefix of the messages in transit in $Chan_{i,x}$ in state $c_{N^{l \cdot FD_j}}$ and in state $c_{N^{rl \cdot FD_j}}$. Therefore, we conclude that $N^{FD_j} \sim_i N^{l \cdot FD_j}$ and $N^{FD_j} \sim_i N^{rl \cdot FD_j}$.

Recall that $N^l$ is $v$-valent and $N^{rl}$ is $(1-v)$-valent. Therefore, applying Lemma \ref{lem:univalentDescendentIsUnivalent}, we know that $N^{l\cdot FD_j}$ is $v$-valent and $N^{rl \cdot FD_j}$  is $(1-v)$-valent. Also recall that $N$ is bivalent. 

Let $b$ be a fair branch of $\mathcal{R}^G$ that contains nodes $N$ and $N^{FD_j}$. 
By Lemma \ref{prop:fairBranchOneDecision}, we know that $exe(b)$ has exactly one decision value (say) $v'$; note that either $v'=v$ or $v' = 1-v$. We consider each case.

\emph{Case 1.} $v'=v$. There exists an edge $E_v$ in $b$ such that, the action tag of $E_v$ is $decide(v)_j$. Let $N_v$ be the node preceding $E_v$ in $b$. Note that $N_v$ is descendant of $N^{FD_j}$. 
Recall that $N^{FD_j}$ and $N^{rl\cdot FD_j}$ are \emph{post-$crash_i$} nodes.
By Theorem \ref{thm:similarModuloDescendant}, we know that there exists a descendant $N^{rl}_{v}$ of $N^{rl\cdot FD_j}$ such that $N^v \sim_i N^{rl_v}$.

From the definition of $\sim_i$ we know that the state of the process automaton at $j$ is the same in $c_{N^v}$ and $c_{N^{rl}_{v}}$. Since the action $a_{E_v} = decide(v)_j$ is enabled at the process automaton at $j$ in state $c_{N_v}$, we know that action $decide(v)_j$ is enabled in state $c_{N^{rl}_{v}}$. Therefore, the $Proc_j$-child $N^{\prime rl}_{v}$ of $N^{rl}_{v}$ has a decision value $v$. However, since $N^{rl}$  is $(1-v)$-valent and $N^{rl}_{v}$ is a descendant of $N^{rl}$, by Lemma \ref{lem:univalentDescendentIsUnivalent}, we know that $N^{rl}_{v}$ is $(1-v)$-valent. Thus, we have a contradiction.

\emph{Case 2.} $v' = 1-v$. This is analogous to Case 1 except that we replace  $N^{rl\cdot FD_j}$ with $N^{l\cdot FD_j}$.
\end{proof}

\subsubsection{Decision Gadgets}
Recall that a decision gadget is a gadget that is either a fork or a hook. We have seen that both forks and hooks contain a critical location that must be live in $G$. 
Thus, we have seen that if a tree $\mathcal{R}^G$ contains a decision gadget, then we know that the critical location of that decision gadget must be live in $G$. 

\subsection{Existence of a Decision Gadget} 
\label{subsubsec:existanceOfHook}
The previous subsection demonstrated interesting properties of decision gadgets in $\mathcal{R}^G$. However, it did not demonstrate that $\mathcal{R}^G$, in fact, does contain decision gadgets. We address this here.
Recall that $G$ is viable for $D$, and at most $f$ locations are not live in $G$.  
 

\begin{lemma}\label{lem:bivalentToMonovalent}
There exists a bivalent node $N$ in tree $\mathcal{R}^{G}$ and a label $l$ such that for every descendant $\hat{N}$ of $N$ (including $N$), every $l$-child of $\hat{N}$ is univalent.
\end{lemma}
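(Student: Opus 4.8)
The plan is to argue by contradiction in the classical FLP/CHT style, deriving a ``fair bivalent branch'' whose existence is forbidden by Lemma~\ref{prop:fairBranchOneDecision}. Suppose the statement fails. Negating it, for every bivalent node $N$ and every label $l$ there is a descendant $\hat{N}$ of $N$ (possibly $N$ itself) having an $l$-child that is not univalent, i.e.\ bivalent. The first thing I would record is that such an $\hat{N}$ must itself be bivalent: by Lemma~\ref{lem:univalentDescendentIsUnivalent} a univalent node has only univalent descendants (hence only univalent children), so a node with a bivalent child cannot be univalent, and by Lemma~\ref{prop:everyNodeIsBiOrUnivalent} it is therefore bivalent. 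Thus the negated hypothesis reads: every bivalent node has, among its descendants, a bivalent node possessing a bivalent $l$-child, for each label $l$.

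Next I would build an infinite descending sequence of bivalent nodes $M_0 = \top, M_1, M_2, \ldots$, where $\top$ is bivalent by Lemma~\ref{prop:initialBivalent}. Fixing a round-robin enumeration of the finite label set $L$ of $\mathcal{R}^G$, so that the label $l_k$ scheduled at step $k$ cycles through every element of $L$ infinitely often, I would apply the negated hypothesis at $M_k$ with label $l_k$ to obtain a bivalent descendant $\hat{N}$ of $M_k$ together with a bivalent $l_k$-child, which I take to be $M_{k+1}$. Since $\hat{N}$ is a descendant of $M_k$ and $M_{k+1}$ is one edge below $\hat{N}$, the depths strictly increase, so concatenating the finite paths $M_k \to \cdots \to \hat{N} \to M_{k+1}$ yields a single infinite path $b$ from the root; because $\mathcal{R}^G$ has infinite height ($G$ is viable with fewer than $n$ non-live locations, hence has a live location and is infinite), $b$ is a genuine branch. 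By construction each segment contributes an $l_k$-edge and $l_k$ ranges over all labels infinitely often, so $b$ contains infinitely many $l$-edges for every label $l$; that is, $b$ is a fair branch.

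The crux is then to observe that \emph{every} node on $b$ is bivalent. Each such node lies on some segment $M_k \to \cdots \to M_{k+1}$ and is therefore an ancestor of the bivalent node $M_{k+1}$; by the contrapositive of Lemma~\ref{lem:univalentDescendentIsUnivalent} no ancestor of a bivalent node can be univalent, so every node of $b$ is bivalent. By Lemma~\ref{prop:bivalentNoDecisionValue} no bivalent node has a decision value, so for every node $N$ on $b$ the execution $exe(N)$ contains no $decide$ event; since $exe(b)$ is the limit of these executions, $exe(b)$ has no decision value. This contradicts Lemma~\ref{prop:fairBranchOneDecision}, which guarantees that the fair branch $b$ has exactly one decision value, and the contradiction establishes the lemma. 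The main obstacle, and the step to get exactly right, is the bookkeeping that makes $b$ simultaneously (i) fair --- which forces the round-robin over $L$ and the guarantee that each step advances at least one edge of the scheduled label --- and (ii) entirely bivalent --- which relies on the upward propagation of bivalence through the $M_{k+1}$'s rather than on any direct control of the intermediate nodes.
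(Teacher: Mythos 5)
Your proposal is correct and follows essentially the same route as the paper's proof: a contradiction argument that uses the negated hypothesis and round-robin label selection from the bivalent root (Lemma~\ref{prop:initialBivalent}) to build a fair branch consisting entirely of bivalent nodes, then contradicts the fact that every fair branch has a decision value. The only differences are cosmetic improvements: you invoke Lemma~\ref{prop:fairBranchOneDecision} directly where the paper re-derives it inline from Theorem~\ref{thm:fairBranchFairExec} and Lemma~\ref{lem:fairExecExactlyOneDecision}, and you make explicit (via the contrapositive of Lemma~\ref{lem:univalentDescendentIsUnivalent} together with Lemma~\ref{prop:everyNodeIsBiOrUnivalent}) the upward propagation of bivalence to the intermediate nodes of the branch, a step the paper glosses over.
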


\begin{proof}
 For contradiction, assume that for every bivalent node $N$ in the tree $\mathcal{R}^{G}$, and every label $l \in L$, there exists a descendant $\hat{N}$ of $N$, such that some $l$-child of $\hat{N}$ is bivalent. 
Therefore, from any bivalent node $N$ in the tree $\mathcal{R}^{G}$, we can choose any label $l$ and find a descendant $\hat{N}'$ of $N$ such that (1) $\hat{N}'$ is bivalent, and (2) the path between $N$ and $\hat{N}'$ contains an edge with label $l$. 

Recall that the $\top$ node is bivalent (Lemma \ref{prop:initialBivalent}).  Applying Lemma \ref{prop:allLabelsExist}, we know that each node in $\mathcal{R}^G$ has an $l$-edge for each label $l \in T \cup \set{FD_i | i \in \Pi}$. Thus, by choosing labels in a round-robin fashion, we can construct a fair branch $b$ starting from the $\top$ node such that every node in that branch is bivalent. Fix such a $b$. We will use $b$ to get a contradiction to the fact that the distributed algorithm $A$ solves $f$-crash-tolerant consensus. using $D$.

By Theorem \ref{thm:fairBranchFairExec}, we know that there exists a fair trace $t_{\mathcal{S}}$ of $\mathcal{S}$ such that $trace(b) = t_{\mathcal{S}}|_{act(S) \setminus \hat{I}}$ and $t_{\mathcal{S}}|_{\hat{I} \cup O_D} \in T_D$. Since $trace(b)|_{O_D} = t_{\mathcal{S}}|_{O_D}$ and $t_{\mathcal{S}}|_{\hat{I} \cup O_D} \in T_D$, we know that at most $f$ locations are not live in $t_{\mathcal{S}}$; therefore, $t_{\mathcal{S}}|_{I_P \cup O_P}$ satisfies $f$-crash limitation. Let $\alpha$ be a fair execution of $\mathcal{S}$ whose trace is $t_{\mathcal{S}}$. Since $t_{\mathcal{S}}|_{I_P \cup O_P}$ satisfies $f$-crash limitation, $\alpha|_{I_P \cup O_P}$ also satisfies $f$-crash limitation. Invoking Lemma \ref{lem:fairExecExactlyOneDecision}, we know that $\alpha$ has exactly one decision value. Since $trace(b) = t_{\mathcal{S}}|_{act(S) \setminus \hat{I}}$, and $t_{\mathcal{S}}$ is the trace of $\alpha$, we know that $trace(b)$ has exactly one decision value. In other words, $exe(b)$ has exactly one decision value.  Therefore, there exists a node $N$ in $b$ such that $exe(N)$ has a decision value. However, this contradicts our conclude that every node in $b$ is bivalent.
\end{proof}

\begin{lemma}\label{lem:bivalentTo1ValentAnd0Valent}
 There exists a bivalent node $N$ in tree $\mathcal{R}^{G}$, a descendant $\hat{N}$ of $N$ (possibly $N$ itself), a label $l$, and $v \in \set{0,1}$ such that (1) 
 for every descendant $\hat{N}'$ of $N$, each $l$-child of $\hat{N}'$ is univalent, (2) some $l$-child of $N$ is $v$-valent, and (3) some $l$-child of $\hat{N}$ is $(1-v)$-valent.
\end{lemma}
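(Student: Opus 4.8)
The plan is to build directly on Lemma~\ref{lem:bivalentToMonovalent}, which already supplies a bivalent node $N$ and a label $l$ for which every $l$-child of every descendant of $N$ (including $N$) is univalent; this is exactly requirement~(1). It then remains only to fix $v$ and to exhibit the descendant $\hat{N}$ witnessing requirements~(2) and~(3). First I would record a standing fact used throughout: since $G$ is viable and at most $f < n$ locations are not live in $G$, at least one location is live, so $G$ is infinite, $|V|$ is infinite, and hence no node of $\mathcal{R}^G$ is a leaf. Consequently Lemma~\ref{prop:allLabelsExist} guarantees that \emph{every} node of $\mathcal{R}^G$ has at least one outgoing $l$-edge, so each $l$-child invoked below genuinely exists.

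Next I would fix $v$. Pick any $l$-child of $N$; by requirement~(1) it is univalent, and I let $v$ be the value for which it is $v$-valent. This single choice fixes $v$ and immediately yields requirement~(2): some $l$-child of $N$ is $v$-valent.

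Then I would produce $\hat{N}$. Since $N$ is bivalent, by definition it has a descendant whose execution has decision value $1-v$; call it $M$. By Lemma~\ref{prop:everyNodeIsBiOrUnivalent}, $M$ is bivalent or univalent, and by Lemma~\ref{prop:bivalentNoDecisionValue} a bivalent node has no decision value, so $M$ is univalent; having decision value $1-v$, it is $(1-v)$-valent. Now take any $l$-child $Q$ of $M$ (it exists by the standing fact above). Since $Q$ is a descendant of the $(1-v)$-valent node $M$, Lemma~\ref{lem:univalentDescendentIsUnivalent} gives that $Q$ is $(1-v)$-valent. Setting $\hat{N}=M$, a descendant of $N$, then yields requirement~(3), and combining with requirements~(1) and~(2) already in hand completes the argument.

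The step that needs the most care is the bookkeeping around the label $l$: when $l$ has the form $FD_i$, a node may have several $l$-children, possibly of different valences, so ``the $l$-child'' is not uniquely determined. The argument above is robust to this because $v$ is pinned down once and for all by a single chosen $l$-child of $N$, while the witness $Q$ for $(1-v)$-valence is extracted from $M$ rather than from $N$; I would state these choices explicitly to avoid ambiguity. Beyond this, I expect no genuine obstacle: the heavy lifting---that $l$-children stay univalent, and that such a label $l$ exists at all---is already carried by Lemma~\ref{lem:bivalentToMonovalent}, and the only new ingredient is that bivalence of $N$ forces a $(1-v)$-deciding descendant whose $l$-child inherits $(1-v)$-valence monotonically via Lemma~\ref{lem:univalentDescendentIsUnivalent}. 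Should a more self-contained route be preferred, one could instead assume for contradiction that every $l$-child of every descendant of $N$ is $v$-valent, route a fair branch through both $N$ and $M$, and use Lemma~\ref{prop:fairBranchOneDecision} to contradict the unique decision value $1-v$ of that branch; but the direct argument is shorter and I would favor it.
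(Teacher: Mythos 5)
Your proposal is correct and follows essentially the same route as the paper's proof: invoke Lemma~\ref{lem:bivalentToMonovalent} to obtain $(N,l)$, fix $v$ via an arbitrary $l$-child of $N$, use bivalence of $N$ to obtain a descendant with decision value $1-v$, and apply Lemma~\ref{lem:univalentDescendentIsUnivalent} to its $l$-child. Your added bookkeeping (existence of $l$-children and the explicit classification of $M$ as $(1-v)$-valent via Lemmas~\ref{prop:everyNodeIsBiOrUnivalent} and~\ref{prop:bivalentNoDecisionValue}) only makes explicit what the paper leaves implicit.
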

\begin{proof}
Invoking Lemma \ref{lem:bivalentToMonovalent}, we fix a pair $(N,l)$ of node $N$ and label $l$ such that (1) $N$ is bivalent, and (2) for every descendant $\hat{N}$ of $N$ (including $N$), every $l$-child of $\hat{N}$ is univalent. Let an $l$-child of $N$ be $v$-valent for some $v \in \set{0,1}$. Since $N$ is bivalent, there must exist some descendant $\hat{N}$ of $N$ such that $exe(\hat{N})$ has a decision value $(1-v)$; that is, $\hat{N}$ is $(1-v)$-valent. By Lemma \ref{lem:univalentDescendentIsUnivalent}, it follows that any $l$-child of $\hat{N}$ is $(1-v)$-valent.
\end{proof}

\begin{lemma}\label{lem:hookExists}
 There exists a bivalent node $N$ such that at least one of the following holds true. (1) There exists a label $l$ and a pair of edges $E^l$ and $E^{\prime \ell}$ such that $(N,l,E^l,E^{\prime \ell})$ is a fork. (2) There exist a pair of labels $l,r$ and edges $E^l$,$E^r$, and $E^{rl}$ such that $(N,l,r,E^l,E^r,E^{rl})$ is a hook. 
\end{lemma}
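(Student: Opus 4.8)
The plan is to invoke Lemma~\ref{lem:bivalentTo1ValentAnd0Valent} to obtain a bivalent node $N$, a descendant $\hat{N}$ of $N$, a label $l$, and a value $v \in \set{0,1}$ such that (i) every $l$-child of every descendant of $N$ is univalent, (ii) some $l$-child of $N$ is $v$-valent, and (iii) some $l$-child of $\hat{N}$ is $(1-v)$-valent. I would then walk down the finite path $N = P_0, P_1, \ldots, P_m = \hat{N}$ and locate the place where the valence of the $l$-children ``flips'' from $v$ to $1-v$; this flip is exactly what yields a decision gadget, mirroring the critical-step argument in FLP and in \cite{chan:twfdf}.

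First I would dispose of an easy case: if some node $P_j$ on the path has two $l$-children of opposite valence, then $P_j$ is bivalent (it has descendants deciding $0$ and descendants deciding $1$), and the two corresponding $l$-edges form a fork $(P_j,l,E^l,E^{\prime \ell})$, so the lemma holds. Hence I may assume that at every path node $P_j$ all $l$-children share a single valence $\chi(P_j) \in \set{v,1-v}$; this is well-defined precisely because every such $l$-child is univalent by property (i). By properties (ii) and (iii) we have $\chi(P_0) = v$ and $\chi(P_m) = 1-v$, so there is an adjacent pair $P_j, P_{j+1}$ with $\chi(P_j) = v$ and $\chi(P_{j+1}) = 1-v$. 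Let $E$ be the edge from $P_j$ to $P_{j+1}$, with label $r$ and action tag $a$.

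The core of the argument is to show that $E$ must be a genuine (non-$\bot$) step at a label different from $l$, so that $(P_j,l,r,E^l,E,E^{rl})$ is a hook. I would rule out $a = \bot$ as follows: if $a = \bot$ then by Lemma~\ref{prop:botChild} the nodes $P_j$ and $P_{j+1}$ carry identical config and vertex tags, so by Lemma~\ref{lem:sameConfigSameChild} the $(1-v)$-valent $l$-child of $P_{j+1}$ has a counterpart $l$-child of $P_j$ with the same config and vertex tags; using Lemma~\ref{lem:sameConfigSameExtensionLength} in both directions, equality of config and vertex tags forces equality of the sets of decision values reachable in the two subtrees, hence equal valence, giving $P_j$ a $(1-v)$-valent $l$-child and contradicting $\chi(P_j) = v$. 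I would similarly rule out $r = l$: in that case $P_{j+1}$ is itself an $l$-child of $P_j$ and therefore $v$-valent, whence by Lemma~\ref{lem:univalentDescendentIsUnivalent} every $l$-child of $P_{j+1}$ is $v$-valent, contradicting $\chi(P_{j+1}) = 1-v$. With $a \neq \bot$ and $r \neq l$ established, I would take $E^l$ to be an $l$-edge of $P_j$ reaching a $v$-valent child, $E^r = E$, and $E^{rl}$ an $l$-edge of $P_{j+1}$ reaching a $(1-v)$-valent child; since $P_j$ is bivalent (it has $v$-valent descendants below $E^l$ and $(1-v)$-valent descendants below $E^{rl}$) and $a_{E^r} \neq \bot$, the tuple $(P_j,l,r,E^l,E^r,E^{rl})$ satisfies the definition of a hook.

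I expect the main obstacle to be the $a = \bot$ subcase: handling it rigorously requires the auxiliary observation that two nodes with identical config and vertex tags have identical valence, which is not stated as a standalone lemma and must be extracted from the subtree correspondence of Lemma~\ref{lem:sameConfigSameExtensionLength}. A secondary subtlety is the multiplicity of $l$-children when $l$ is an $FD$ label; isolating the ``two opposite-valence $l$-children at one node'' situation into the fork case up front is what keeps $\chi$ well-defined and lets the remaining case analysis on the flip edge proceed cleanly.
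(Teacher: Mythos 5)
Your proof is correct and is built from the same ingredients as the paper's proof of Lemma~\ref{lem:hookExists}: Lemma~\ref{lem:bivalentTo1ValentAnd0Valent}, a valence-flip argument along a finite path, Lemma~\ref{lem:sameConfigSameChild} (plus the valence-transfer fact extracted from Lemma~\ref{lem:sameConfigSameExtensionLength}) for $\bot$-tagged edges, and Lemma~\ref{lem:univalentDescendentIsUnivalent}. The case analysis is organized differently, though. The paper guarantees that the flip edge has label $r \neq l$ by truncating the extended path (ending at the $(1-v)$-valent $l$-child of $\widehat{\tilde N}$) at its first $l$-edge, and it produces the fork only inside the subcase $a_{E^r} = \bot$. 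You instead keep the original path, peel off the fork case up front (a path node with $l$-children of both valences), which is what makes your function $\chi$ well defined, and then rule out both $r = l$ and $a = \bot$ at the flip edge by contradiction. Your arrangement is slightly more airtight: the paper's assertion that some $r$-edge on the path from $\tilde N$ to $M$ witnesses the flip has nothing to instantiate it in the degenerate case $M = \tilde N$ (which occurs when $\widehat{\tilde N} = \tilde N$, or when the path to $\widehat{\tilde N}$ begins with an $l$-edge); in precisely those situations your up-front fork case fires, since $\tilde N$ then has $l$-children of both valences. What the paper's truncation buys in exchange is that it never needs your $r = l$ subcase. Finally, as you correctly flag, both proofs rest on the unstated fact that nodes with equal configuration and vertex tags have equal valence; in the actual application this is immediate because, by Lemma~\ref{prop:botChild}, the two parent nodes joined by the $\bot$-edge have identical executions, hence so do the matched $l$-children produced by Lemma~\ref{lem:sameConfigSameChild}, and Lemma~\ref{lem:sameConfigSameExtensionLength} applied in both directions then transfers the reachable decision values.
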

\begin{proof}
 Applying Lemma \ref{lem:bivalentTo1ValentAnd0Valent}, we know that there exists some node $\tilde{N}$ in tree $\mathcal{R}^{G}$, a descendant $\widehat{\tilde{N}}$ of $\tilde{N}$, and a label $l$ such that (1) $\tilde{N}$ is bivalent, (2) for every descendant $\tilde{N}'$ of $\tilde{N}$, every $l$-child of $\tilde{N}'$ is univalent, (3) some $l$-child of $\tilde{N}$ (denoted $uni(\tilde{N})$)
 is $v$-valent, where $v \in \{0,1\}$, and (4) some $l$-child of $\widehat{\tilde{N}}$  (denoted $uni(\widehat{\tilde{N}})$)
 is $(1-v)$-valent.

Extend the path from $\tilde{N}$ to $\widehat{\tilde{N}}$ to $uni(\widehat{\tilde{N}})$ yielding a path $w$. 
Let $E$ be the first $l$-edge on $w$, let $M$ be the upper endpoint of $E$, and let $M^l$ be the lower endpoint of $E$. Thus, the path from $\tilde{N}$ to $M$ does not contain any $l$-edge.
Note that following:  (1) $uni(\widehat{\tilde{N}})$ is a descendant of $\widehat{\tilde{N}}$ and is $(1-v)$-valent, (2) $\widehat{\tilde{N}}$ is either $M$ or a descendant of $M^l$, and (3) by Lemma \ref{lem:bivalentTo1ValentAnd0Valent}, $M^l$ is univalent. Thus, we conclude that $M^l$ is $(1-v)$-valent. See Figure \ref{fig:hookExistence} for reference.

\begin{figure}[htpb]
 \centering
\includegraphics[scale=0.4,keepaspectratio=true,page=3]{treefigs}
 \caption{Construction that shows the existence of a ``fork'' or a ``hook'' in the proof for Lemma \ref{lem:hookExists}. 
 }
 \label{fig:hookExistence}
\end{figure}



Note that for each node $N'$ from $\tilde{N}$ to $M$, each $l$-child $N^{\prime l}$ of $N'$ is univalent. 
Recall that $uni(\tilde{N})$, which is an $l$-child of $\tilde{N}$ is $v$-valent and $M^l$, which is an $l$-child of $M$, is $(1-v)$-valent. Therefore, there exists a label $r$ and an $r$-edge $E^r$ from a node $N$ to a node $N^r$ in the path from $\tilde{N}$ to $M$ (inclusive) such that some $l$-child $N^l$ of $N$ is $v$-valent and some $l$-child $N^{rl}$ of $N^r$ is $(1-v)$-valent. Let $E^{rl}$ denote the edge connecting $N^r$ and $N^{rl}$. (See Figure \ref{fig:hookExistence}.)

%
%
%
%


We consider two cases: (1) $a_{E^r} \neq \bot$, and (2)  $a_{E^r} = \bot$.

(1) If $a_{E^r} \neq \bot$, then by definition, $(N,l,r,E^l,E^r,E^{rl})$ is a hook. 

(2) Otherwise, $a_{E^r} = \bot$; therefore, $c_N = c_{N^r}$ and $v_N = v_{N^r}$. Applying Lemma \ref{lem:sameConfigSameChild}, we know that there exists an $l$-child  $N^{\prime \ell}$ of $N$ such that $c_{N^{\prime \ell}} = c_{N^{rl}}$ and $v_{N^{\prime \ell}} = v_{N^{rl}}$. Since $N^{rl}$ is $(1-v)$-valent, $N^{\prime \ell}$ is also $(1-v)$-valent. In other words, $N$ has two $l$-children $N^l$ and $N^{\prime \ell}$, and $N^l$ is $v$-valent and $N^{\prime \ell}$ is $(1-v)$-valent. 
Thus, $(N,l,E^l, E^{\prime \ell})$ is a fork, where $E^{\prime \ell}$ is the edge from $N$ to $N^{\prime \ell}$.
\end{proof}

Thus, we arrive at the main result of this section.

\begin{theorem}\label{thm:ViableGHasDecisionGadgets}
 For every observation $G$ that is viable for $D$ such that $live(G)$ contains at least $n-f$ locations, the directed tree $\mathcal{R}^G$ contains at least one decision gadget. For each decision  gadget in $\mathcal{R}^G$, the critical location of the decision gadget is live in $G$.
\end{theorem}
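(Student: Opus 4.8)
The plan is to assemble this theorem directly from the gadget-existence and critical-location results already established for the fixed viable observation $G$, since the hypothesis that $live(G)$ contains at least $n-f$ locations is precisely the standing assumption (at most $f$ locations are not live in $G$) under which all of the gadget lemmas in Section~\ref{sec:consensusAndAFD} were proved.

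For the existence claim, I would simply invoke Lemma~\ref{lem:hookExists}, which asserts that there is a bivalent node $N$ in $\mathcal{R}^G$ for which either some triple $(N,l,E^l,E^{\prime \ell})$ is a fork or some tuple $(N,l,r,E^l,E^r,E^{rl})$ is a hook. By definition a decision gadget is exactly a gadget that is a fork or a hook, so this immediately yields a decision gadget in $\mathcal{R}^G$. It is worth recalling in the write-up that Lemma~\ref{lem:hookExists} itself rests on the chain Lemma~\ref{prop:initialBivalent} (the root $\top$ is bivalent), Lemma~\ref{lem:bivalentToMonovalent}, and Lemma~\ref{lem:bivalentTo1ValentAnd0Valent}, which together guarantee that some bivalent node transitions to univalence along a single label; this transition is what manifests as a fork or hook.

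For the second claim, I would fix an arbitrary decision gadget in $\mathcal{R}^G$ and argue by cases on its form. If the gadget is a fork $(N,l,E^l,E^{\prime \ell})$, then Lemma~\ref{thm:forkCriticalProcessCorrect} gives that its critical location lies in $live(G)$; if it is a hook $(N,l,r,E^l,E^r,E^{rl})$, then Lemma~\ref{thm:criticalProcessCorrect} gives the same conclusion. Since every decision gadget is by definition a fork or a hook, these two cases are exhaustive, and the critical location is well-defined in each case via Lemmas~\ref{thm:forkSameCriticalProcess} and~\ref{thm:sameCriticalProcess}. This completes both parts.

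I do not expect a genuine obstacle at the level of this theorem: it is a synthesis statement, and all the difficulty has been discharged in the supporting lemmas. The real technical weight sits in Lemma~\ref{lem:hookExists} (an FLP-style bivalence-to-univalence argument building a fair, all-bivalent branch and deriving a contradiction with Lemma~\ref{lem:fairExecExactlyOneDecision}) and in the two critical-location lemmas, whose proofs use the similar-modulo-$i$ machinery (Theorem~\ref{thm:similarModuloDescendant}) to show that a critical location cannot crash without violating agreement. Thus the only care needed here is to state the case split cleanly and to make explicit that the standing hypothesis on $|live(G)|$ is what licenses the invocation of those lemmas.

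\begin{proof}
By Lemma~\ref{lem:hookExists}, there exists a bivalent node $N$ in $\mathcal{R}^G$ such that either some $(N,l,E^l,E^{\prime \ell})$ is a fork or some $(N,l,r,E^l,E^r,E^{rl})$ is a hook. Since a decision gadget is, by definition, a gadget that is a fork or a hook, $\mathcal{R}^G$ contains at least one decision gadget.

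Now fix an arbitrary decision gadget in $\mathcal{R}^G$; it is either a fork or a hook. If it is a fork $(N,l,E^l,E^{\prime \ell})$, then by Lemma~\ref{thm:forkCriticalProcessCorrect} its critical location is in $live(G)$. If it is a hook $(N,l,r,E^l,E^r,E^{rl})$, then by Lemma~\ref{thm:criticalProcessCorrect} its critical location is in $live(G)$. These cases are exhaustive, so the critical location of every decision gadget in $\mathcal{R}^G$ is live in $G$.
\end{proof}
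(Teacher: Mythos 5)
Your proposal is correct and matches the paper's own proof essentially verbatim: both establish existence via Lemma~\ref{lem:hookExists} and then handle the critical-location claim by the fork/hook case split using Lemmas~\ref{thm:forkCriticalProcessCorrect} and~\ref{thm:criticalProcessCorrect}. The additional remarks you make about the supporting chain of lemmas are accurate but not needed for the proof itself.
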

\begin{proof}
 Fix $G$. From Lemma \ref{lem:hookExists}, we know that $\mathcal{R}^G$ has at least one decision gadget. For each decision gadget that is a fork, from Lemma \ref{thm:forkCriticalProcessCorrect} we know that the critical location of that decision gadget is live in $G$, and for each decision gadget that is a hook, from Lemma \ref{thm:criticalProcessCorrect} we know that the critical location of that decision gadget is live in $G$.
\end{proof}

\begin{theorem}\label{thm:ViableGHasNonBotDecisionGadgets}
For every observation $G$ that is viable for $D$ such that $live(G)$ contains at least $n-f$ locations, the directed tree $\mathcal{R}^G$ contains at least one non-$\bot$ decision gadget.
\end{theorem}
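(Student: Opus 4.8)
The plan is to start from the decision gadget guaranteed by Theorem~\ref{thm:ViableGHasDecisionGadgets} and ``slide'' it onto a non-$\bot$ node using the node-similarity machinery of Section~\ref{subsec:taggedTreeProps}. Concretely, Theorem~\ref{thm:ViableGHasDecisionGadgets} gives a decision gadget whose distinguished node $N$ is either the apex of a fork $(N,l,E^l,E^{\prime \ell})$ or of a hook $(N,l,r,E^l,E^r,E^{rl})$, but $N$ need not be a non-$\bot$ node. First I would invoke Corollary~\ref{cor:NonBotNodesExist} to obtain a non-$\bot$ node $N'$ with $exe(N') = exe(N)$ and $v_{N'} = v_N$; since the configuration tag of a node is the final state of its execution, this also yields $c_{N'} = c_N$.

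Next I would reconstruct the gadget edges out of $N'$. Applying Lemma~\ref{lem:sameConfigSameChild} to $N$ and $N'$ (which agree on configuration and vertex tags), every $l$-child and $r$-child of $N$ has a counterpart child of $N'$, reached by an edge with identical label, action tag, and vertex tag, and whose configuration and vertex tags match those of the original child. Because $exe(N)=exe(N')$ and the action tags agree, each counterpart child $\widehat{N'}$ in fact satisfies $exe(\widehat{N'}) = exe(\widehat{N})$, whether or not the connecting action tag is $\bot$. For a hook I would apply Lemma~\ref{lem:sameConfigSameChild} a second time, to the $r$-children $N^r$ and its counterpart, to obtain the $l$-edge matching $E^{rl}$; and I would note that the counterpart of $E^r$ carries the same action tag $a_{E^r}\neq\bot$, so condition~(3) of the hook definition survives. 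Thus the candidate structures rooted at $N'$ are genuine gadgets in the structural sense.

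It remains to verify that the valence conditions (bivalence of $N'$, and the $v$-valence and $(1-v)$-valence of the relevant children) transfer. This is the crux, and I would handle it with Lemma~\ref{lem:sameConfigSameExtensionLength}: since each pair of corresponding nodes $M,M'$ has $c_M = c_{M'}$, $v_M = v_{M'}$, and in fact $exe(M) = exe(M')$, for every descendant of $M$ there is a descendant of $M'$ realizing the identical execution; and, applying the lemma in the other direction using $exe(M') = exe(M)$, conversely. Hence the set of decision values occurring among descendants of $M$ coincides with that for $M'$, so $M'$ has exactly the same valence as $M$. Applying this to $N$ (bivalent), to the $v$-valent child, and to the $(1-v)$-valent child shows that the structure at $N'$ meets all the fork (resp.\ hook) conditions. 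Since $N'$ is non-$\bot$, this is a non-$\bot$ decision gadget, completing the proof.

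The main obstacle I anticipate is the valence-transfer step: valence is a property of the entire subtree below a node, so I must ensure that the correspondence supplied by Lemma~\ref{lem:sameConfigSameExtensionLength} is exhaustive in \emph{both} directions---every decision-bearing descendant of one node is matched by a descendant of the other realizing the same execution---rather than merely one-directional, so that neither side can harbor a ``stray'' descendant carrying the opposite decision value that would spoil univalence.
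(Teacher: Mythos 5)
Your proposal is correct and follows essentially the same route as the paper's own proof: obtain a decision gadget from Theorem~\ref{thm:ViableGHasDecisionGadgets}, replace its apex node $N$ by a non-$\bot$ node with the same execution and vertex tag via Corollary~\ref{cor:NonBotNodesExist}, and then transfer the gadget structure and valences to that node using the similar-node machinery of Section~\ref{subsec:taggedTreeProps}. The only difference is one of detail: the paper compresses the transfer step into a terse citation of Lemma~\ref{lem:everyDescendantNoBot}, whereas you carry it out explicitly with Lemmas~\ref{lem:sameConfigSameChild} and~\ref{lem:sameConfigSameExtensionLength}, including the two-directional descendant correspondence needed to preserve univalence, which is the same idea made precise.
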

\begin{proof}
Fix $G$. From Theorem \ref{thm:ViableGHasDecisionGadgets} we know that $\mathcal{R}^G$ contains at least one decision gadget. Fix $Y$ to be such a decision gadget. Let node $N$ be the first element in the tuple $Y$. Applying Corollary \ref{cor:NonBotNodesExist}, we know that there exists a non-$\bot$ node $N_{\not\bot}$ such that $exe(N) = exe(N')$, $v_N = v_{N'}$. Applying Lemma \ref{lem:everyDescendantNoBot} to the descendants of $N$ and $N_{\not\bot}$, we know that there exists a non-$\bot$ decision gadget $Y'$ whose first element is $N_{\not\bot}$.
\end{proof}

Theorem \ref{thm:ViableGHasNonBotDecisionGadgets} establishes an important property of any strong-sampling AFD that is sufficient to solve consensus. It demonstrates that in any fair execution of a system that solves consensus using an AFD, some prefix of the execution is bivalent whereas eventually, a longer prefix becomes univalent. The transition from a bivalent to a univalent execution must be the consequence of an event at a \emph{correct} location.

\subsection{Decision gadgets for execution trees in a convergent sequence of observations}
Recall that $G$ is a viable observation of $D$ such that at most $f$ locations are not
live in $G$; $t_D \in T_D$ is a a trace that is compatible with $D$. Finally, $G_1, G_2, G_3, \ldots$ is a sequence of observations that converge to $G$. Next we show the ``persistence'' of non-$\bot$ decision gadgets across the sequence of execution trees $\mathcal{R}^{G_1}, \mathcal{R}^{G_2}, \mathcal{R}^{G_3}, \ldots$.

\begin{lemma}\label{lem:decisionGadgetStableInAllGs}
Let $Y$ be a non-$\bot$ decision gadget in $\mathcal{R}^{G}$. There exists a positive integer $x$ such that for all positive integers $x' \geq x$, $Y$ is a non-$\bot$ decision gadget in $\mathcal{R}^{G_{x'}}$.
\end{lemma}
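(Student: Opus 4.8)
The plan is to exploit the fact that a non-$\bot$ decision gadget $Y$ in $\mathcal{R}^G$ is completely determined by a \emph{finite} collection of nodes---the bivalent root $N$ together with the relevant univalent children (and, for a hook, the intermediate node $N^r$)---and then to apply the already-established single-node persistence results to each member of this collection, taking the maximum of the finitely many thresholds so obtained.

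First I would record which nodes must persist, splitting on the form of $Y$. If $Y$ is a fork $(N,l,E^l,E^{\prime \ell})$, the defining data are that $N$ is non-$\bot$ and bivalent, $N^l$ is $v$-valent, and $N^{\prime \ell}$ is $(1-v)$-valent; moreover, by Lemma \ref{lem:forkEventTagsNotBot} both $a_{E^l}$ and $a_{E^{\prime \ell}}$ are non-$\bot$, so $N^l$ and $N^{\prime \ell}$ are themselves non-$\bot$ nodes. If $Y$ is a hook $(N,l,r,E^l,E^r,E^{rl})$, the data are that $N$ is non-$\bot$ bivalent, $N^l$ is $v$-valent, $N^{rl}$ is $(1-v)$-valent, and $a_{E^r}\neq\bot$ (with $a_{E^l}\neq\bot$) by Lemma \ref{thm:hookEventTagsNotBot}; here $N^r$ and $N^{rl}$ lie on the path from $N$ to $N^{rl}$, so tracking $N^{rl}$ will carry $N^r$ and the edges $E^r$, $E^{rl}$ along with it.

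Next I would invoke the persistence machinery. Applying Corollary \ref{cor:bivalentNodeFiniteTimeInAllGs} to $N$ yields a threshold $x_0$ beyond which $N$ is a non-$\bot$ bivalent node of $\mathcal{R}^{G_{x'}}$. Applying Lemma \ref{lem:univalentNodeFiniteTimeInAllGs} to each univalent node in the collection ($N^l$ and $N^{\prime \ell}$ in the fork case; $N^l$ and $N^{rl}$ in the hook case) yields thresholds beyond which each such node exists in $\mathcal{R}^{G_{x'}}$ and retains its valence. Setting $x$ to be the maximum of these finitely many thresholds, I claim that for all $x'\geq x$ the tuple $Y$ is again a non-$\bot$ decision gadget in $\mathcal{R}^{G_{x'}}$. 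The key point is that each relevant node persists as a \emph{replica} (Lemma \ref{lem:UniqueNodeMappingFromPrefixObsToSuperObs}, with existence supplied by Lemma \ref{lem:superObsYieldsSuperTree}), so the entire root-to-node path---and hence every label, vertex tag, action tag, and configuration tag along it---is reproduced verbatim; by Lemma \ref{lem:childNodeUniqueByLabelAndVertexTag} the edges $E^l$, $E^{\prime \ell}$, $E^r$, $E^{rl}$ and their endpoints are then recovered unambiguously. Consequently $N$ is bivalent, the prescribed children carry the prescribed valences, $a_{E^r}\neq\bot$ is preserved, and $N$ remains non-$\bot$, so every defining clause of ``fork'' and ``hook'' holds in $\mathcal{R}^{G_{x'}}$.

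The main obstacle is bookkeeping rather than a new idea: I must be careful that valence is a property witnessed by \emph{descendants}, so that preserving a univalent node genuinely requires both the survival of a witnessing decision-value descendant and the non-appearance of an opposite-valence descendant. This is precisely the content of Lemma \ref{lem:univalentNodeFiniteTimeInAllGs} (whose proof already handles both directions via Lemma \ref{lem:superObsYieldsSuperTree} and the contrapositive of Corollary \ref{cor:prefixObsPrefixExe}), so I would lean on it directly rather than re-deriving it. A secondary point of care is the hook edge $E^{rl}$, whose action tag need not be non-$\bot$; since the hook definition imposes no condition on $a_{E^{rl}}$, and since Lemma \ref{lem:superObsYieldsSuperTree} applies to arbitrary (not necessarily non-$\bot$) paths, its persistence causes no difficulty.
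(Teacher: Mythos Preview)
Your proposal is correct and follows essentially the same approach as the paper: split into the fork and hook cases, invoke Corollary~\ref{cor:bivalentNodeFiniteTimeInAllGs} for the bivalent root $N$ and Lemma~\ref{lem:univalentNodeFiniteTimeInAllGs} for the two univalent nodes, and take the maximum of the resulting thresholds. Your write-up is in fact more careful than the paper's, which simply asserts ``by construction'' that $Y$ is a non-$\bot$ fork/hook in $\mathcal{R}^{G_{x'}}$ once the thresholds are met; your explicit discussion of replica nodes, edge recovery via Lemma~\ref{lem:childNodeUniqueByLabelAndVertexTag}, and the status of $a_{E^{rl}}$ fills in details the paper leaves implicit.
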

\begin{proof}
Fix $Y$ to be a non-$\bot$ decision gadget in $\mathcal{R}^{G}$. We consider two cases: (1) $Y$ is a fork, and (2) $Y$ is a hook.

\emph{Case 1.} Let $Y$ be a fork $(N,\ell,E^\ell,E^{\prime \ell})$ in $\mathcal{R}^{G^\infty}$. Let $N^\ell$ be the $\ell$-child of $N$ whose incoming edge is $E^\ell$, and let $N^{\prime \ell}$ be the $\ell$-child of $N$ whose incoming edge is $E^{\prime \ell}$. Let $N^\ell$ be $c$-valent, and let $N^{\prime \ell}$ be $(1-c)$-valent for some $c \in \set{0,1}$. 

Invoking Corollary \ref{cor:bivalentNodeFiniteTimeInAllGs}, we know that there exists a positive integer $x_b$ such that for all $x' \geq x_b$, $N$ is a non-$\bot$ bivalent node in $\mathcal{R}^{G_{x'}}$.
Invoking Lemma \ref{lem:univalentNodeFiniteTimeInAllGs}, we know that there exists a positive integer $x_u$ such that for all $x' \geq x_u$, $N^\ell$ is $c$-valent and  $N^{\prime \ell}$ is $(1-c)$-valent in $\mathcal{R}^{G_{x'}}$.
Let $x = \max(x_b,x_u)$. By construction, for each $x' \geq x$, $Y$ is a non-$\bot$ fork in $\mathcal{R}^{G_{x'}}$.

\emph{Case 2.} Let $Y$ be a hook $(N,\ell,r,E^\ell,E^r,E^{r \ell})$ in $\mathcal{R}^{G^\infty}$. Let $N^\ell$ be the $\ell$-child of $N$ whose incoming edge is $E^\ell$. Let $N^r$ be the $r$-child of $N$ whose incoming edge is $E^r$. Let $N^{r \ell}$ be the $\ell$-child of $N^r$ whose incoming edge is $E^{r \ell}$. Let $N^\ell$ be $c$-valent, and let $N^{r \ell}$ be $(1-c)$-valent for some $c \in \set{0,1}$. 

Invoking Corollary \ref{cor:bivalentNodeFiniteTimeInAllGs}, we know that there exists a positive integer $x_b$ such that for all $x' \geq x_b$, $N$ isa non-$\bot$  bivalent node in $\mathcal{R}^{G_{x'}}$. Invoking Lemma \ref{lem:univalentNodeFiniteTimeInAllGs}, we know that there exists a positive integer $x_u$ such that for all $x' \geq x_u$, $N^\ell$ is $c$-valent and  $N^{r \ell}$ is $(1-c)$-valent in $\mathcal{R}^{G_{x'}}$.
Let $x = \max(x_b,x_u)$. By construction, for each $x' \geq x$, $Y$ is a non-$\bot$ hook in $\mathcal{R}^{G_{x'}}$.
\end{proof}

\begin{lemma}\label{lem:nonDecisionGadgetStableInAllGs}
For each gadget $Y$ in $\mathcal{R}^{G}$ that is not a non-$\bot$ decision gadget, the following is true.
 There exists a positive integer $x$ such that for all positive integers $x' \geq x$, $Y$ is a gadget in $\mathcal{R}^{G_{x'}}$, but $Y$ is not a non-$\bot$ decision gadget in $\mathcal{R}^{G_{x'}}$.
\end{lemma}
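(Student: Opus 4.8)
The plan is to produce a single threshold $x$ beyond which (a) the tuple $Y$ reappears verbatim as a gadget in $\mathcal{R}^{G_{x'}}$, and (b) it fails to be a non-$\bot$ decision gadget there. I would first split on \emph{why} $Y$ fails to be a non-$\bot$ decision gadget in $\mathcal{R}^{G}$: either (I) the node $N$ of $Y$ is a $\bot$ node, or (II) $N$ is a non-$\bot$ node but $Y$ is neither a fork nor a hook.

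For the structural half (a), observe that $N$ together with the finitely many edges and child nodes named in $Y$ mentions only finitely many vertices of $G$ as vertex tags. Since each $G_{x'}$ is a prefix of $G$ and the sequence converges to $G$, for all sufficiently large $x'$ every such vertex, and every $G$-edge between consecutive distinct vertex tags, already lies in $G_{x'}$ (the same convergence argument used in Lemmas \ref{lem:bivalentNodeFiniteTimeInAllGs} and \ref{lem:univalentNodeFiniteTimeInAllGs}). I would then apply Lemma \ref{lem:superObsYieldsSuperTree} to each root-to-leaf path of the gadget $Y$ in $\mathcal{R}^{G}$; since distinct children of a node are determined by label and vertex tag (Lemmas \ref{lem:childNodeUniqueByLabelAndVertexTag} and \ref{lem:LabelsAndVertexAgsDenoteUniqueNode}), these paths glue into a copy of $Y$ in $\mathcal{R}^{G_{x'}}$ whose node tags, edge labels, vertex tags, and---crucially---action tags are identical to those in $\mathcal{R}^{G}$. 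In particular, the $\bot$/non-$\bot$ status of every action tag of $Y$, and whether $N$ is a $\bot$ node, is preserved.

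Given structural persistence, half (b) is handled by cases. In case (I), the path $\top \to N$ carries a $\bot$ action tag in $\mathcal{R}^{G}$, hence also in $\mathcal{R}^{G_{x'}}$, so $N$ is a $\bot$ node there and $Y$ is not even a non-$\bot$ gadget. In case (II) I would identify the failing fork/hook condition. If $N$ is univalent in $\mathcal{R}^{G}$, then by Lemma \ref{lem:univalentNodeFiniteTimeInAllGs} it is eventually univalent in $\mathcal{R}^{G_{x'}}$, so $Y$ is neither a fork nor a hook (both require $N$ bivalent). If $Y$ has the hook shape but $a_{E^r} = \bot$, this is preserved structurally. Otherwise $N$ is bivalent and the two relevant children (the two $l$-children for a fork, or $N^l$ and $N^{rl}$ for a hook) fail to have complementary valences; I would show each of these children has, eventually, the same valence state (bivalent, $0$-valent, or $1$-valent) in $\mathcal{R}^{G_{x'}}$ as in $\mathcal{R}^{G}$, using Corollary \ref{cor:bivalentNodeFiniteTimeInAllGs} and Lemma \ref{lem:univalentNodeFiniteTimeInAllGs}, so that non-complementarity is preserved. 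Taking $x$ to be the maximum of the finitely many thresholds obtained completes the argument.

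The main obstacle is that the valence-stabilization results apply only to non-$\bot$ nodes, whereas a relevant child may be reached by an edge with a $\bot$ action tag. I would dispatch this by transferring valence to a non-$\bot$ parent: if the edge into a child has a $\bot$ action tag, Lemma \ref{prop:botChild} gives the child the same configuration and vertex tag as its parent, and Lemma \ref{lem:sameConfigSameExtensionLength} then forces the child to have the same valence as that parent \emph{within each individual tree}; the parent in question ($N$ for a fork, and $N$ or $N^r$ for a hook, the latter non-$\bot$ because $a_{E^r} \neq \bot$ in the surviving subcase) is non-$\bot$, so its valence stabilizes and drags the child along. Finally, I would emphasize that the whole argument runs from the limit tree $\mathcal{R}^{G}$ \emph{down} to the finite trees $\mathcal{R}^{G_{x'}}$, which is exactly the direction in which univalence is stable; the reverse direction fails precisely because of the pseudo-gadget phenomenon (a node can look univalent in a finite tree yet be bivalent in $\mathcal{R}^{G}$), and avoiding that direction is what makes the proof go through.
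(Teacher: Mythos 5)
Your proof is correct, and its skeleton---structural persistence of $Y$ into $\mathcal{R}^{G_{x'}}$ with all tags preserved, followed by a case analysis on why $Y$ fails to be a non-$\bot$ decision gadget, discharged by Lemma \ref{lem:superObsYieldsSuperTree}, Lemma \ref{lem:univalentNodeFiniteTimeInAllGs}, and Corollary \ref{cor:bivalentNodeFiniteTimeInAllGs}---is the same as the paper's. The difference is that your execution is strictly more complete, in three ways that matter. First, the paper's enumeration of failure modes ($\bot$ path to $N$; $N$ univalent; some relevant child bivalent) is not exhaustive: it omits the case where both relevant children are univalent with the \emph{same} valence, and, for hook-shaped gadgets, the case $a_{E^r}=\bot$; your proposal covers both, the former because Lemma \ref{lem:univalentNodeFiniteTimeInAllGs} actually preserves the valence \emph{value}, so non-complementarity persists, and the latter by preservation of action tags. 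Second, the paper applies Corollary \ref{cor:bivalentNodeFiniteTimeInAllGs} to the children $N^\ell$, $N^{\prime \ell}$, $N^{rl}$ without verifying that they are non-$\bot$ nodes, which that corollary requires; your transfer of valence from a $\bot$-edge child to its non-$\bot$ parent via Lemma \ref{prop:botChild} and Lemma \ref{lem:sameConfigSameExtensionLength} (with $N^r$ non-$\bot$ precisely because $a_{E^r}\neq\bot$ in the surviving subcase) closes that hole; the transfer is in fact exact, since a $\bot$-tagged edge gives $exe(N^\ell)=exe(N)$, so parent and child have identical sets of decision values among their descendants. Third, the paper's opening claim that $Y$ persists as a gadget in every $\mathcal{R}^{G_{x'}}$ is asserted ``by construction'' and left unproved; your gluing of the root-to-endpoint paths via Lemma \ref{lem:superObsYieldsSuperTree} together with the uniqueness Lemmas \ref{lem:childNodeUniqueByLabelAndVertexTag} and \ref{lem:LabelsAndVertexAgsDenoteUniqueNode} supplies the missing justification. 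In short: same route, but your version repairs genuine gaps in the paper's own argument, and your closing observation about the asymmetry between univalence (stable only from the limit tree down) and bivalence is exactly the reason the lemma must be stated in this direction.
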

\begin{proof}
Fix $Y$ as in the hypotheses of the lemma. Since $Y$ is a gadget in $\mathcal{R}^{G}$, by construction, there exists a positive integer $x_N$ such that for all positive integers $x'_N \geq x_N$, $Y$ is a gadget in $\mathcal{R}^{G_{x'_N}}$. 

We consider two cases: (1) $Y$ is a tuple $(N,\ell,E^\ell,E^{\prime \ell})$, and (2) $Y$ is a tuple $(N,\ell,r,E^\ell,E^r,E^{r\ell})$.

\emph{Case 1.} $Y$ is a tuple $(N,\ell,E^\ell,E^{\prime \ell})$. 
Let $N^\ell$ and $N^{\prime \ell}$ be the nodes to which $E^\ell$ and $N^{\prime \ell}$ are the incoming edges, respectively.
Since $Y$ is not a non-$\bot$ decision gadget, one of the following is true: (1) the path from root to $N$ contains an edge with $\bot$ action tag, (2) $N$ is univalent,  or (3) at least one of $N^\ell$ and $N^{\prime \ell}$ is bivalent in $\mathcal{R}^{G}$. 

If the path from root to $N$ contains an edge with $\bot$ action tag, then by Lemma \ref{lem:superObsYieldsSuperTree}, we know that  exists a positive integer $x_N$ such that for every positive integer $x'_N \geq x_N$, the path from root to $N$ contains an edge with $\bot$ action tag in $\mathcal{R}^{G_{x'_N}}$. Therefore, $Y$ cannot be a non-$\bot$ decision gadget in $\mathcal{R}^{G_{x'_N}}$.

If $N$ is univalent in $\mathcal{R}^{G}$, then by Lemma \ref{lem:univalentNodeFiniteTimeInAllGs}, we know that there exists a positive integer $x_N$ such that for every positive integer $x'_N \geq x_N$, $N$ is univalent in $\mathcal{R}^{G_{x'_N}}$. Therefore, for any positive integer $x'_N \geq x_N$, $Y$ cannot be a decision gadget in $\mathcal{R}^{G_{x'_N}}$.

If $N^\ell$ (or $N^{\prime \ell}$, respectively) is bivalent in $\mathcal{R}^{G^\infty}$, then by Corollary \ref{cor:bivalentNodeFiniteTimeInAllGs}, we know that there a positive integer $x \geq x_N$ such that for all positive integers $x' \geq x$, node $N^\ell$ (or $N^{\prime \ell}$, respectively) is bivalent in $\mathcal{R}^{G_{x'}}$, and consequently, $Y$ is not a decision gadget in $\mathcal{R}^{G_{x'}}$. 

Thus, if $Y$ is a tuple $(N,\ell,E^\ell,E^{\prime \ell})$, then there exists a positive integer $x$ such that for all positive integers $x' \geq x$, $Y$ is a gadget in $\mathcal{R}^{G_{x'}}$, but $Y$ is not a non-$\bot$ decision gadget in $\mathcal{R}^{G_{x'}}$.

\emph{Case 2.} $Y$ is a tuple $(N,\ell,r,E^\ell,E^r,E^{r\ell})$. Let $N^\ell$ be the node to which $E^\ell$ is the incoming edge.
Let $N^{r\ell}$ be the node to which $E^{rl}$ is the incoming edge. Since $Y$ is not a decision gadget, one of the following is true: (1) the path from root to $N$ contains an edge with $\bot$ action tag, (2) $N$ is univalent,  or (3) at least one of $N^\ell$ and $N^{\prime \ell}$ is bivalent in $\mathcal{R}^{G}$.

If the path from root to $N$ contains an edge with $\bot$ action tag, then by Lemma \ref{lem:superObsYieldsSuperTree}, we know that  exists a positive integer $x_N$ such that for every positive integer $x'_N \geq x_N$, the path from root to $N$ contains an edge with $\bot$ action tag in $\mathcal{R}^{G_{x'_N}}$. Therefore, $Y$ cannot be a non-$\bot$ decision gadget in $\mathcal{R}^{G_{x'_N}}$.

If $N$ is univalent in $\mathcal{R}^{G}$, then as in Case 1, by Lemma \ref{lem:univalentNodeFiniteTimeInAllGs}, we know that there exists a positive integer $x_N$ such that for every positive integer $x'_N \geq x_N$, $N$ is univalent in $\mathcal{R}^{G_{x'_N}}$. Therefore, for any positive integer $x'_N \geq x_N$, $Y$ cannot be a decision gadget in $\mathcal{R}^{G_{x'_N}}$.

Similarly, if $N^\ell$ (or $N^{r \ell}$, respectively) is bivalent in $\mathcal{R}^{G}$, then as in Case 1,  there exists a positive integer $x \geq x_N$ such that for all positive integers $x' \geq x$, $N^\ell$ (or $N^{r \ell}$, respectively) is bivalent in $\mathcal{R}^{G_{x'}}$, and $Y$ is not a decision gadget in $\mathcal{R}^{G_{x'}}$. 

Thus, if $Y$ is a tuple $(N,\ell,r,E^\ell,E^r,E^{r\ell})$, then there exists a positive integer $x$ such that for all positive integers $x' \geq x$, $Y$ is a gadget in $\mathcal{R}^{G_{x'}}$, but $Y$ is not a non-$\bot$ decision gadget in $\mathcal{R}^{G_{x'}}$.
\end{proof}

\subsection{Ordering the Decision Gadgets.}


In this subsection, we show
that a ``first'' decision gadget exists in $\mathcal{R}^G$. However, to define the ``first'' decision gadget, we first define a metric function 
%
in four steps: (1) We order the elements in each of the following sets: $\Pi \cup \set{\bot}$, and
$T \cup \set{FD_i | i \in \Pi}$. (2) We order the vertices in $G$. (3) We use the aforementioned orders to define a metric function for each node $N$ in $\mathcal{R}^G$ and for each edge outgoing from $N$. (4) Finally, we define the metric function for each gadget.

\paragraph{Ordering the elements in $\Pi \cup \set{\bot}$.}
Recall that the locations in $\Pi$ are totally ordered by the $<_\Pi$ relation. For simplicity, we assume that $\Pi$ is the set of integers in $[1,n]$ and $\bot = 0$. Thus, $\Pi \cup \set{\bot}$ is totally ordered by the $<$ relation.


\paragraph{Ordering the elements in $T \cup \set{FD_i | i \in \Pi}$.}
Informally, we order $T \cup \set{FD_i | i \in \Pi}$ as follows. $Proc_1,Proc_2,\ldots,$ $Proc_n,Env_{1,0},Env_{1,1},\ldots,$ $Env_{n,0},Env_{n,1}, Chan_{1,2},Chan_{1,3},\ldots,$ \\ $Chan_{1,n},Chan_{2,1},Chan_{2,3},\ldots,$ $Chan_{n,n-1},FD_1,FD_2,\ldots,FD_n$.

Formally, we define  $m: T \cup \set{FD_i | i \in \Pi} \rightarrow [1, n^2 + 3n]$ to be a mapping from all the labels in $\mathcal{R}^G$ to the set of integers in $[1, n^2 + 3n]$ as follows. For each element $l$ in $T$ and each element $l'$ in $\set{FD_i | i \in \Pi}$, $m(l) < m(l')$. Note that $T$ consists of $n$ $Proc_*$ tasks, $2n$ $Env_{*,*}$ tasks, and $n(n-1)$ $Chan_{*,*}$ tasks. For each $Proc_*$ task $l$, each $Env_{*,*}$ task $l'$ and each $Chan_{*,*}$ task $l''$, $m(l) < m(l') < m(l'')$.

For each location $i$, recall that we assume $i \in [1,n]$. For a $Proc_i$ task, $m(Proc_i) = i$. For an $Env_{i,0}$ task, $m(Env_{i,0}) = n + 2i - 1$, and for an $Env_{i,1}$ task, $m(Env_{i,0}) = n + 2i$.
For a $Chan_{i,j}$ task, $m(Chan_{i,j}) = 3n + n(i-1) + j$. It is easy to see that $m$ is a bijection from $T$ to $[1, n^2 + 2n]$. 
We define the mapping from $\set{FD_i | i \in \Pi}$ as follows. $m(FD_i) = n^2+2n + i$. Therefore, $m$ is a bijection from $T \cup \set{FD_i | i \in \Pi}$ to $[1, n^2 + 3n]$.
Thus, the tasks in $T \cup \set{FD_i | i \in \Pi}$ are totally ordered by the range of $m$ and the $<$ relation on integers.


Based on ordering the elements in $T$, we can order any pair of distinct sequences of labels by their lexicographic ordering.

\paragraph{Ordering vertices in $G$.}
We order the vertices $(i,k,e)$ in $G$ first by their index $k$, and break the ties among vertices with the same index by their location $i$.
We define a mapping $m : V \cup \set{(\bot,0,\bot)} \rightarrow \mathbb{N}$, where $G = (V,Z)$ as follows. Note that for any vertex $v = (i,k,e)$, there are potentially infinitely many vertices in $G$ with the same location $i$ and at most $n$ vertices in $G$ whose index is $k$. Based on the above observation, we order all the vertices $G$ by defining $m(v) = k\times n + i$, where $v = (i,k,e)$; note that by this definition, $m((\bot, 0, \bot) = 0$ and for any $v \in V$, $m(v) > 0$. Thus, the vertices in $V$ are totally ordered by the range of $m$ and the $<$ relation on integers.

\paragraph{Ordering outgoing edges from each node in $\mathcal{R}^G$.}
Fix any node $N$ in $\mathcal{R}^G$. We define a total order over the set of edges outgoing from $N$ as follows. Note that $N$ has exactly one outgoing edge for each label in $T$, and potentially infinitely many outgoing edges for each label in $\set{FD_i | i \in \Pi}$. Also note that $|\set{FD_i | i \in \Pi}| = n$. By Lemma \ref{lem:childNodeUniqueByLabelAndVertexTag}, we know that for each outgoing $FD_i$-edge, where $i$ is a location, its vertex tag is distinct from the vertex tag of all other  $FD_i$ edges. Therefore, for a given vertex tag, there can be only finitely many outgoing edges from $N$: there is at most one outgoing $FD_i$ edge for each location $i$ with a given vertex tag, and there is at most one $l$-edge outgoing from $N$ for any non-$FD$ label $l$. It follows that there is at most one outgoing edge from $N$ for a given vertex tag and task label.

Thus, we first order all the edges by their vertex tags, and for a given vertex tag, we order all edges with the same vertex tag by their task label. Formally, this ordered is captured by the metric function $m$ for the outgoing edges $E$ from any node $N$: $m(E) = (m(v_E), m(l_E))$. 

Note that the range of $m$ is $\mathbb{N} \times \mathbb{N}$. The lexicographic ordering of the range of $m$ induces a total order on outgoing edges from each node in $\mathcal{R}^G$.



\paragraph{Ordering all the non-$\bot$ nodes in $\mathcal{R}^G$.}
Recall that each non-$\bot$ node $N$ in $\mathcal{R}^G$ can be uniquely identified by the sequence of labels from $\top$ to $N$ and the sequence of distinct vertex tags in the path from $\top$ to $N$. Also, note that nodes that contain a $\bot$ action tag in the path from $\top$ to $N$ cannot be uniquely identified using the above information. However, for our purposes, it is sufficient to order non-$\bot$ nodes.

Fix a non-$\bot$ node $N'$ in $\mathcal{R}^G$. Let $d_{N'}$ denote the depth of the node, and let $k_{N'}$ denote the index of $v_{N'}$; that is, $v_{N'}=(*,k_{N'}, *)$, where $k_{N'} \in \mathbb{N}$. Let $E^0_{N'}, E^1_{N'},\ldots,E^{d_{N'}}_{N'}$ denote the sequence of edges in the path from $\top$ to $N'$. We define the metric function for each node $N$ in $\mathcal{R}^G$ as follows: $m(N) = (d_N + k_N, m(E^0_N), m(E^1_N),\ldots,m(E^{d_N}_N))$.

Thus, given two nodes $N$ and $N'$ in $\mathcal{R}^G$, we say that $N$ is ordered before $N'$ if either of the following is true.
\begin{itemize}
\item $d_N + k_N < d_{N'} + k_{N'}$.
\item Assuming $d_N + k_N = d_{N'} + k_{N'}$, let $x$ be the smallest integer such that at least one of $E^x_{N}$ and $E^x_{N'}$ exists, and if $E^{x}_{N'}$ also exists, then  $E^x_{N} \neq E^x_{N'}$. Then, $m(E^x_{N}) < m(E^x_N)$. Informally, $N$ is ordered before $N'$ if the sequence of edges from $\top$ to $N$ is lexicographically less than the sequence of sequence of edges from $\top$ to $N'$. 
\end{itemize}

Next, we show that the metric function $m$ imposes a total order on the non-$\bot$ nodes in $\mathcal{R}^G$, and there exists a node with the minimum metric value among all the nodes in $\mathcal{R}^G$.
In Lemma \ref{lem:distinctMetricValueDistinctNodes}, we show that distinct non-$\bot$ nodes must have distinct metric values, which implies that the metric function $m$ establishes a total order over all the non-$\bot$ nodes in $\mathcal{R}^G$ (Lemma \ref{lem:totallyOrderedNodesInTree}).
By implication, $m$ establishes a total order over any non-empty subset of non-$\bot$ nodes in $\mathcal{R}^G$ (Corollary \ref{cor:totallyOrderedNodesInSubset}).
In Lemma \ref{lem:finitelyManyNodesWithSmallerMetric}, we show that for any non-$\bot$ node $N$ there are only finitely many nodes whose metric value is lexicographically smaller than the metric value of $N$ (we use Lemma \ref{lem:OnlyFinitelyManyNodesWithDPlusK} as a helper lemma to prove this). 
Corollary \ref{cor:finitelyManyNodesWithSmallerMetricInSubset} immediately follows from Lemma \ref{lem:finitelyManyNodesWithSmallerMetric};  Corollary \ref{cor:finitelyManyNodesWithSmallerMetricInSubset} states that in any non-empty subset of non-$\bot$ nodes in $\mathcal{R}^G$, for each node $N$, there are only finitely many nodes with a smaller metric value. Lemma \ref{lem:totallyOrderedNodesInTree} and Corollary \ref{cor:finitelyManyNodesWithSmallerMetricInSubset} together imply Corollary \ref{cor:NodesWithSmallestMetricInSubset}, which states that any non-empty subset $\mathcal{N}$ of non-$\bot$ nodes in $\mathcal{R}^G$ contains a unique node with the minimum metric value.

\begin{lemma}\label{lem:distinctMetricValueDistinctNodes}
For any pair $N$, $N'$ of distinct non-$\bot$ nodes in $\mathcal{R}^G$, $m(N) \neq m(N')$.
\end{lemma}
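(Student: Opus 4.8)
The plan is to prove the contrapositive: rather than showing directly that distinct nodes yield distinct metrics, I would show that $m(N) = m(N')$ forces $N = N'$, and then appeal to the uniqueness result already established in Lemma~\ref{lem:LabelsAndVertexAgsDenoteUniqueNode}. The guiding observation is that the metric $m(N)$ is essentially a faithful encoding of the pair of sequences (edge labels, edge vertex tags) along $path(N)$, padded with the scalar $d_N + k_N$ at the front. So if I can argue that equal metrics recover equal label-and-vertex-tag sequences, the uniqueness lemma finishes the job immediately.

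First I would fix two non-$\bot$ nodes $N, N'$ and assume $m(N) = m(N')$. Since these are tuples, equality forces equal length; as the length of the edge-metric portion of $m(N)$ is determined by the number of edges along $path(N)$, this yields $d_N = d_{N'}$, and then comparing first components gives $k_N = k_{N'}$. Next, for each position $x$ along the common path I would use the componentwise equality $m(E^x_N) = m(E^x_{N'})$, that is, $(m(v_{E^x_N}), m(l_{E^x_N})) = (m(v_{E^x_{N'}}), m(l_{E^x_{N'}}))$. The crux is then two injectivity facts. The label metric is, by construction, a bijection from $T \cup \set{FD_i \mid i \in \Pi}$ onto $[1, n^2+3n]$, hence injective, so $l_{E^x_N} = l_{E^x_{N'}}$. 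For the vertex metric I would argue that $m(v) = k\cdot n + i$ (with $m((\bot,0,\bot)) = 0$) is injective on the vertices of $G$: since every genuine location satisfies $i \in [1,n]$, the value $kn+i$ uniquely determines both $k$ and $i$, and by Property~1 of observations the pair $(i,k)$ determines a unique vertex of $G$ (so the third component $e$ is pinned down as well); the placeholder $(\bot,0,\bot)$ is separated from all genuine vertices since its metric is $0$ while every real vertex has metric $>0$. Thus $v_{E^x_N} = v_{E^x_{N'}}$ for every $x$.

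Having matched both the label and the vertex tag of each edge along the two paths, I would conclude that the sequence of labels and vertex tags of the edges from $\top$ to $N$ coincides with that from $\top$ to $N'$; Lemma~\ref{lem:LabelsAndVertexAgsDenoteUniqueNode} then gives $N = N'$, which is the contrapositive of the claim. The one step deserving genuine care — and the place where I expect the argument to be most delicate rather than routine — is the injectivity of the vertex metric, because $m$ forgets the action component $e$ outright; it is only the combination of Property~1 of observations (at most one vertex per location–index pair) and the range restriction $i \in [1,n]$ that rescues injectivity, and the $(\bot,0,\bot)$ placeholder must be handled as a separate case so that the encoding remains collision-free across all edge positions.
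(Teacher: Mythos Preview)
Your proposal is correct and follows essentially the same approach as the paper: assume $m(N)=m(N')$, recover equal label and vertex-tag sequences along the two paths, and invoke the uniqueness lemma for non-$\bot$ nodes. The only differences are cosmetic---you spell out the injectivity of the vertex and label encodings where the paper leaves this implicit, and you cite Lemma~\ref{lem:LabelsAndVertexAgsDenoteUniqueNode} while the paper cites the equivalent Lemma~\ref{lem:nonBotNodeUniqueExe}.
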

 \begin{proof}
Fix $N$ and $N'$ as in the hypothesis of the lemma. For contradiction, assume $m(N) = m(N')$. Therefore, the sequence of labels in the path from $\top$ to $N$ and from $\top$ to $N'$ are identical (consequently, both $N$ and $N'$ are at the same depth), and $v_N = v_{N'}$. Invoking Lemma \ref{lem:nonBotNodeUniqueExe}, we know that $N = N'$. This contradicts the hypothesis that $N$ and $N'$ are distinct.
 \end{proof}

\begin{lemma}\label{lem:totallyOrderedNodesInTree}
The non-$\bot$ nodes in $\mathcal{R}^G$ are totally ordered by their metric function $m$.
\end{lemma}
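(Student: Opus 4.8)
The plan is to reduce the claim to two facts: that the values taken by the metric function $m$ lie in a set totally ordered by the comparison rule given in the definition, and that $m$ is injective on the non-$\bot$ nodes of $\mathcal{R}^G$. Together these suffice, since a strict order obtained by pulling back a total order along an injective map is itself a total order.

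First I would observe that each $m(N)$ is a finite tuple whose leading entry is the natural number $d_N + k_N$ and whose remaining entries are the edge metrics $m(E^0_N), \ldots, m(E^{d_N}_N)$. Each edge metric $m(E^x_N) = (m(v_{E^x_N}), m(l_{E^x_N}))$ lies in $\mathbb{N} \times [1, n^2+3n]$, which is totally ordered by its lexicographic order: this is because $m$ restricted to vertices maps into $\mathbb{N}$ (totally ordered by $<$) and $m$ restricted to labels is a bijection onto $[1, n^2+3n]$ (also totally ordered by $<$). Hence every coordinate of the tuple $m(N)$ ranges over a totally ordered set.

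Next I would argue that the comparison rule in the definition is precisely the dictionary lexicographic order on such tuples: one compares first by the leading coordinate $d_N + k_N$, and, when these agree, scans left to right for the first position where the two edge sequences differ, treating the shorter sequence as preceding whenever it is a proper prefix of the longer. This last convention is exactly what the clause ``the smallest integer $x$ such that at least one of $E^x_{N}$ and $E^x_{N'}$ exists'' encodes. Since the lexicographic order on finite sequences over a totally ordered alphabet is a strict total order --- irreflexive, transitive, and trichotomous --- this endows the range of $m$ with a total order, and transitivity of the node order is inherited from it.

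Finally, I would invoke Lemma \ref{lem:distinctMetricValueDistinctNodes}, which states that distinct non-$\bot$ nodes receive distinct metric values, to conclude that $m$ is injective. Then for any two distinct non-$\bot$ nodes $N$ and $N'$ we have $m(N) \neq m(N')$, so by trichotomy exactly one of $m(N) < m(N')$ or $m(N') < m(N)$ holds, giving both comparability and antisymmetry of the induced node order. The one point requiring genuine care is the handling of tuples of \emph{different} lengths: two non-$\bot$ nodes can share the value $d_N + k_N$ while having different depths $d_N$ (for instance one reaching a low-index vertex tag through many non-$FD$ edges), so their edge sequences need not have equal length. I would verify explicitly that the dictionary convention (shorter prefix first), as captured by the definition's clause, resolves these cases without ambiguity, so that the order remains trichotomous on the full range of $m$.
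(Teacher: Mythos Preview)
Your proposal is correct and follows essentially the same approach as the paper: invoke Lemma~\ref{lem:distinctMetricValueDistinctNodes} for injectivity and note that the lexicographic ordering on the range of $m$ is a total order, then pull back. Your treatment is in fact more careful than the paper's, which simply asserts that the range is totally ordered by lexicographic ordering without addressing the variable-length tuple issue you flag.
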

\begin{proof}
By Lemma \ref{lem:distinctMetricValueDistinctNodes}, we know that each non-$\bot$ node in $\mathcal{R}^G$ has a distinct metric value. By definition the range of the metric function $m$ of nodes in $\mathcal{R}^G$ are totally ordered (by lexicographic ordering).
Therefore, the non-$\bot$ nodes in $\mathcal{R}^G$ are totally ordered by their metric value.
\end{proof}

\begin{corollary}\label{cor:totallyOrderedNodesInSubset}
For any non-empty subset $\mathcal{N}$ of non-$\bot$ nodes in $\mathcal{R}^G$, the nodes in $\mathcal{N}$ are totally ordered by their metric function $m$.
\end{corollary}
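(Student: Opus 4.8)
The plan is to derive this immediately from Lemma \ref{lem:totallyOrderedNodesInTree}, exploiting the elementary fact that the restriction of a total order to any subset is again a total order. Concretely, Lemma \ref{lem:totallyOrderedNodesInTree} asserts that the metric function $m$ totally orders the entire collection of non-$\bot$ nodes in $\mathcal{R}^G$; a total order is characterized by three properties---antisymmetry, transitivity, and comparability (totality)---each of which is preserved when the relation is restricted to a subset. The whole content of the corollary is therefore just this inheritance principle instantiated at the particular subset $\mathcal{N}$.

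First I would fix an arbitrary non-empty subset $\mathcal{N}$ of the non-$\bot$ nodes of $\mathcal{R}^G$. Then I would invoke Lemma \ref{lem:totallyOrderedNodesInTree} to obtain that $m$ induces a total order on the full set of non-$\bot$ nodes of $\mathcal{R}^G$. Next I would observe that $\mathcal{N}$ is, by hypothesis, a subset of that full set, and that the order relation induced by $m$ restricted to pairs of nodes both lying in $\mathcal{N}$ still satisfies comparability (any two nodes $N, N' \in \mathcal{N}$ are comparable because they are comparable as elements of the larger set), antisymmetry, and transitivity (both hold a fortiori, since they held for all pairs in the larger set and in particular for pairs drawn from $\mathcal{N}$). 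From these, I would conclude that $m$ totally orders $\mathcal{N}$.

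I do not expect any genuine obstacle here; the statement is a direct logical consequence of the preceding lemma. The only point worth making explicit in the write-up---and the closest thing to a ``step'' that requires any care---is the verification that \emph{comparability} survives restriction, since antisymmetry and transitivity are universally quantified implications that trivially restrict, whereas totality is the one axiom phrased as an existence/disjunction over pairs, and one should note that its witness for any two elements of $\mathcal{N}$ is supplied by their comparability in the ambient totally ordered set. This makes the proof a one-line appeal to Lemma \ref{lem:totallyOrderedNodesInTree}, which is presumably why the authors state it as a corollary with (at most) a sentence of justification.
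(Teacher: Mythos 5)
Your proposal is correct and matches the paper exactly: the paper's proof is simply ``Follows from Lemma \ref{lem:totallyOrderedNodesInTree},'' and your argument is just the explicit spelling-out of why a total order restricts to a total order on any subset. No gap, no divergence.
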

\begin{proof}
Follows from Lemma \ref{lem:totallyOrderedNodesInTree}.
\end{proof}

\begin{lemma}\label{lem:OnlyFinitelyManyNodesWithDPlusK}
For any non-$\bot$ node $N$ in $\mathcal{R}^G$, there are only finitely many nodes $N'$ such that $d_{N'} + k_{N'} \leq d_{N} + k_{N}$.
\end{lemma}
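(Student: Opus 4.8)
The plan is to fix the constant $C := d_N + k_N$ and show that the set $\mathcal{N}_C$ of non-$\bot$ nodes $N'$ with $d_{N'} + k_{N'} \leq C$ is finite. The strategy is to encode each such node by the sequence of labels and vertex tags along the path from $\top$ to $N'$. By Lemma \ref{lem:LabelsAndVertexAgsDenoteUniqueNode} this encoding is injective on non-$\bot$ nodes, so it suffices to bound both the length of every such path and the number of distinct (label, vertex-tag) pairs that can appear on it; a finite bound on each yields a finite bound on $\mathcal{N}_C$.

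First I would extract the two numerical bounds hidden in the hypothesis. Since $d_{N'} \geq 0$ and $k_{N'} \geq 0$, the inequality $d_{N'} + k_{N'} \leq C$ gives both $d_{N'} \leq C$ (every path has at most $C$ edges) and $k_{N'} \leq C$ (the final vertex tag $v_{N'}$ has index at most $C$). By Property 1 of observations, for each index $k$ there are at most $n$ vertices of the form $(*,k,*)$, so the set $V_C$ consisting of the null vertex $(\bot,0,\bot)$ together with all vertices of $G$ of index at most $C$ is finite, with $|V_C| \leq Cn + 1$, and $v_{N'} \in V_C$ for every $N' \in \mathcal{N}_C$.

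The crucial step is bounding the vertex tags at the \emph{interior} of the path, since a vertex of small index may have ancestors of arbitrarily large index, so the bound on $k_{N'}$ alone does not constrain the intermediate tags. Here I would invoke Lemma \ref{lem:descendantVertexTagEdge}: for any node $M$ on the path from $\top$ to $N'$, the node $M$ is an ancestor of $N'$, so either $v_M = v_{N'}$, or $v_M = (\bot,0,\bot)$, or there is an edge $(v_M, v_{N'})$ in $G$; in the last case $v_M$ is the source of an incoming edge of $v_{N'}$. By Lemma \ref{prop:finiteIncomingEdges}, $v_{N'}$ has only finitely many incoming edges, hence only finitely many such $v_M$. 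Since by construction the vertex tag of each edge equals the vertex tag of its lower endpoint, every vertex tag occurring along the path — at a node or at an edge — lies in the finite set $W := \{(\bot,0,\bot)\} \cup \bigcup_{v \in V_C}\bigl(\{v\} \cup \{u : (u,v) \in Z\}\bigr)$, which is a finite union of finite sets.

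Finally I would count. Each $N' \in \mathcal{N}_C$ is encoded by a path of at most $C$ edges, where each edge carries one of the $n^2 + 3n$ labels in $T \cup \set{FD_i \mid i \in \Pi}$ and a vertex tag drawn from the finite set $W$. The number of such label-and-vertex-tag sequences is at most $\sum_{d=0}^{C} \bigl((n^2+3n)\,|W|\bigr)^d$, which is finite, and by Lemma \ref{lem:LabelsAndVertexAgsDenoteUniqueNode} distinct non-$\bot$ nodes give distinct sequences; hence $\mathcal{N}_C$ is finite. I expect the interior-vertex bound of the third paragraph to be the only genuinely subtle point: it is the combination of Lemma \ref{lem:descendantVertexTagEdge} (every ancestor's tag is the successor-or-predecessor of $v_{N'}$) with Lemma \ref{prop:finiteIncomingEdges} (finiteness of incoming edges) that keeps $W$ finite despite the unbounded indices an ancestor could in principle carry.
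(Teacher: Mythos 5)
Your proof is correct, and it rests on the same two pillars as the paper's own proof: the injective encoding of non-$\bot$ nodes by label/vertex-tag sequences (Lemma \ref{lem:LabelsAndVertexAgsDenoteUniqueNode}) and the finiteness of in-degrees in an observation (Lemma \ref{prop:finiteIncomingEdges}). The organization differs in two ways. First, the paper decomposes by pairs: it proves that for each fixed pair $(d,k)$ there are only finitely many nodes with $d_{N'}=d$ and $k_{N'}=k$, and then sums over the finitely many pairs with $d+k \leq d_N + k_N$; you instead bound everything at once against the single budget $C = d_N + k_N$. Second, and more substantively, the paper controls the vertex tags appearing along a path through an ad hoc claim --- for any vertex $v$ of $G$ there are only finitely many paths in $G$ ending at $v$ --- which it proves from scratch using transitivity of $Z$ together with Lemma \ref{prop:finiteIncomingEdges}; you obtain the same control by citing Lemma \ref{lem:descendantVertexTagEdge} (every ancestor's tag either equals $v_{N'}$, is the null vertex, or is the source of an incoming edge of $v_{N'}$) and then applying Lemma \ref{prop:finiteIncomingEdges} directly, so your interior-tag set $W$ plays the role of the paper's set of vertices lying on paths into $v_{N'}$. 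Your route is slightly more economical, since it reuses an existing lemma rather than re-deriving a path-counting fact, and the explicit bound $\sum_{d=0}^{C}\bigl((n^2+3n)\,|W|\bigr)^d$ is cleaner than the paper's product-of-finite-sets bookkeeping. One shared caveat: like the paper's own argument (whose counting step invokes Lemma \ref{lem:LabelsAndVertexAgsDenoteUniqueNode} and hence applies only to non-$\bot$ nodes), your proof establishes finiteness for the set of \emph{non-$\bot$} nodes with $d_{N'}+k_{N'} \leq C$, which is exactly what the downstream use in Lemma \ref{lem:finitelyManyNodesWithSmallerMetric} requires, so this restriction is harmless.
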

\begin{proof}
We use the following two claims to prove the main lemma.

\emph{Claim 1.} For any vertex $v$ in $G$, there are only finitely many paths in $G$ that end with $v$.
\begin{proof}
Fix a vertex $v = (i,k,e)$ in $G$. 
For contradiction, assume that $G$ contains infinitely many paths ending in $v$.
Therefore, there are infinitely many vertices $v'$ in $G$ such that there is a path from $v'$ to $v$.
By the transitive closure property of $G$, it implies that there are infinitely many vertices $v'$ such that there is an edge in $G$ from $v'$ to $v$.
This contradicts Lemma \ref{prop:finiteIncomingEdges}.
\end{proof}

\emph{Claim 2.} For any pair of positive integers $d$ and $k$, there are only finitely many nodes $N''$ such that $d_{N''} = d$ and $k_{N''} = k$.
\begin{proof}
Fix $d$ and $k$. By construction of $G$, there are at most $n$ vertices $v$ of the form $(*,k,*)$ in $G$; let $\tilde{V}$ be the set of all such vertices. For each $v \in \tilde{V}$, by Claim 1, there are only finitely many paths $p$ in $G$ that end with $v$; let $\tilde{P}$ denote all the paths in $G$ that end with some vertex in $\tilde{V}$. For each $p \in \tilde{P}$, there are only finitely many sequences $p'$ of length $d$ consisting of only the vertices in $p$; let $\tilde{P'}$ denote the set of all sequences over the vertices in some $p \in \tilde{P}$. Note that $\tilde{P'}$ is finite.

Let $\tilde{L}$ be the set of all sequences of length $d$ over $T \cup \set{FD_i | i \in \Pi}$. Note that $\tilde{L}$ is finite.

For each non-$\bot$ node $N''$ in $\mathcal{R}^G$ such that $d_{N''} = d$ and $k_{N''} = k$; let $e_{N''}$ denote the sequence of edges from $\top$ to $N''$. By Lemma \ref{lem:LabelsAndVertexAgsDenoteUniqueNode}, we know that the projection of $e_{N''}$ on the sequence of vertex tags and labels is unique, and by construction, this projection is an element of $\tilde{P'} \times \tilde{L}$. Since $\tilde{P'}$ and $\tilde{L}$ are finite, we conclude that there are only finitely many nodes $N''$ such that $d_{N''} = d$ and $k_{N''} = k$.
\end{proof}

Fix a non-$\bot$ node $N$ in $\mathcal{R}^G$. Let $dk = d_{N} + k_{N}$. We apply Claim 2 for all values of $d$ and $k$, where $d$ is in $[0,dk]$ and $k$ is in $[0, dk-d]$, respectively, to conclude that there are only finitely many nodes $N'$ such that $d_{N'} + k_{N'} \leq d_{N} + k_{N}$.
\end{proof}

\begin{lemma}\label{lem:finitelyManyNodesWithSmallerMetric}
For any non-$\bot$ node $N$ in $\mathcal{R}^G$, there are only finitely many non-$\bot$ nodes $N'$ such that $m(N') \leq m(N)$.
\end{lemma}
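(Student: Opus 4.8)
The plan is to exploit the fact that the metric value $m(N)$ is a tuple whose \emph{first} coordinate is exactly $d_N + k_N$, and that the ordering on metric values is lexicographic. The heavy lifting has already been done in Lemma~\ref{lem:OnlyFinitelyManyNodesWithDPlusK}, which bounds the number of nodes whose first metric coordinate does not exceed that of $N$; the present statement should follow as a short consequence of that result together with the structure of the lexicographic order.

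First I would make the key observation that for any two non-$\bot$ nodes $N$ and $N'$, the inequality $m(N') \leq m(N)$ (in the lexicographic sense defined just before the statement) forces $d_{N'} + k_{N'} \leq d_N + k_N$. This is because the first component of $m(N')$ is $d_{N'} + k_{N'}$ and the first component of $m(N)$ is $d_N + k_N$. If we had $d_{N'} + k_{N'} > d_N + k_N$, then by the definition of the lexicographic ordering (the first bullet in the node-ordering definition) we would immediately get $m(N') > m(N)$, regardless of the subsequent edge-metric coordinates, contradicting $m(N') \leq m(N)$.

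From this observation I would conclude the containment of sets
\[
\set{N' \text{ non-}\bot : m(N') \leq m(N)} \subseteq \set{N' : d_{N'} + k_{N'} \leq d_N + k_N}.
\]
By Lemma~\ref{lem:OnlyFinitelyManyNodesWithDPlusK}, the right-hand set is finite. Hence its subset, the left-hand set, is also finite, which is exactly the claim. The argument does not even need to refer to the edge-metric coordinates $m(E^x_N)$ beyond noting that they play no role once the first coordinate is strictly controlled; they only matter for breaking ties \emph{within} a fixed value of $d + k$, and such ties already live inside the finite set supplied by the preceding lemma.

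I do not anticipate a genuine obstacle here: the substantive finiteness combinatorics (bounding, for each fixed $d$ and $k$, the number of nodes via the finiteness of incoming edges in $G$ and the uniqueness of non-$\bot$ nodes from Lemma~\ref{lem:LabelsAndVertexAgsDenoteUniqueNode}) was all discharged in Lemma~\ref{lem:OnlyFinitelyManyNodesWithDPlusK}. The only point requiring care is to state precisely why lexicographic dominance in the tuple reduces to a comparison of first coordinates, so that the set inclusion above is justified; once that is made explicit, the proof is immediate.
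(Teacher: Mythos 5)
Your proposal is correct and matches the paper's own proof essentially verbatim: both reduce the lexicographic comparison $m(N') \leq m(N)$ to the first-coordinate inequality $d_{N'} + k_{N'} \leq d_N + k_N$ and then invoke Lemma~\ref{lem:OnlyFinitelyManyNodesWithDPlusK} to conclude finiteness. Your write-up is, if anything, slightly more explicit than the paper's about why lexicographic dominance forces the first-coordinate inequality, but the argument is the same.
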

\begin{proof}
Fix $N$ as in the hypothesis of the lemma. 
Recall that the first element in $m(N')$ of any node $N'$ is $d_{N'} + k_{N'}$. Therefore, for any non-$\bot$ node $N'$ such that $m(N') < m(N)$, $d_{N'} + k_{N'} \leq d_{N} + k_{N}$.
Invoking Lemma \ref{lem:OnlyFinitelyManyNodesWithDPlusK}, we know that there are only finitely many nodes $N'$ such that $d_{N'} + k_{N'} \leq d_{N} + k_{N}$. Therefore, there are only finitely many non-$\bot$ nodes $N'$ such that $m(N') \leq m(N)$.
\end{proof}

\begin{corollary}\label{cor:finitelyManyNodesWithSmallerMetricInSubset}
For any non-empty subset $\mathcal{N}$ of non-$\bot$ nodes in $\mathcal{R}^G$, for any non-$\bot$ node $N$ in $\mathcal{N}$, there are only finitely many non-$\bot$ nodes $N' \in \mathcal{N}$ such that $m(N') \leq m(N)$.
\end{corollary}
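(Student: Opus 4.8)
The plan is to derive this corollary immediately from Lemma \ref{lem:finitelyManyNodesWithSmallerMetric}, which already establishes the finiteness claim over the \emph{entire} collection of non-$\bot$ nodes in $\mathcal{R}^G$. The only difference between the lemma and the corollary is that the corollary restricts attention to nodes lying in an arbitrary non-empty subset $\mathcal{N}$. Since a subset of a finite set is itself finite, the restriction can only shrink the collection of qualifying nodes, so finiteness is preserved.

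Concretely, first I would fix $\mathcal{N}$ and a node $N \in \mathcal{N}$ as in the hypothesis. Then I would observe the set inclusion
\[
\set{N' \in \mathcal{N} : m(N') \leq m(N)} \subseteq \set{N' \in \mathcal{R}^G \text{ a non-}\bot\text{ node} : m(N') \leq m(N)}.
\]
The right-hand set is finite by Lemma \ref{lem:finitelyManyNodesWithSmallerMetric} (applied to the very node $N$, which is a non-$\bot$ node of $\mathcal{R}^G$ since $N \in \mathcal{N}$ and all elements of $\mathcal{N}$ are non-$\bot$ nodes). Because any subset of a finite set is finite, the left-hand set is finite as well, which is exactly the assertion of the corollary.

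There is essentially no obstacle here: all of the genuine combinatorial work — bounding the number of paths ending at a given vertex via the finite-in-degree property (Lemma \ref{prop:finiteIncomingEdges}), controlling the finitely many nodes with a fixed $(d,k)$, and summing over the finitely many admissible $(d,k)$ pairs below the threshold $d_N + k_N$ — has already been carried out inside Lemma \ref{lem:OnlyFinitelyManyNodesWithDPlusK} and Lemma \ref{lem:finitelyManyNodesWithSmallerMetric}. Consequently the proof I would write is a one-line subset argument, and I expect it to read simply as ``Follows from Lemma \ref{lem:finitelyManyNodesWithSmallerMetric}, since $\mathcal{N}$ is a subset of the non-$\bot$ nodes of $\mathcal{R}^G$ and any subset of a finite set is finite.''
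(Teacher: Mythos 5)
Your proposal is correct and matches the paper exactly: the paper's proof is the one-liner ``Follows from Lemma \ref{lem:finitelyManyNodesWithSmallerMetric},'' which is precisely your subset-of-a-finite-set argument, merely left implicit. Your write-up just makes that implicit inclusion explicit, which is fine.
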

\begin{proof}
Follows from Lemma \ref{lem:finitelyManyNodesWithSmallerMetric}.
\end{proof}

\begin{corollary}\label{cor:NodesWithSmallestMetricInSubset}
For any non-empty subset $\mathcal{N}$ of non-$\bot$ nodes in $\mathcal{R}^G$, there exists a unique non-$\bot$ node $N \in \mathcal{N}$ such that for all $N' \in \mathcal{N}\setminus \set{N}$, $m(N) < m(N')$.
\end{corollary}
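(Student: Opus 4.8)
The plan is to read this off as the standard ``well-ordering'' argument, combining the two corollaries that immediately precede it: Corollary \ref{cor:totallyOrderedNodesInSubset}, which says that the nodes of $\mathcal{N}$ are totally ordered by $m$, and Corollary \ref{cor:finitelyManyNodesWithSmallerMetricInSubset}, which says that each node of $\mathcal{N}$ has only finitely many $m$-predecessors inside $\mathcal{N}$. Together these show that $(\mathcal{N}, m)$ is not merely totally ordered but well-ordered of order type at most $\omega$, and hence has a least element. The one conceptual point worth flagging is that the total order by itself is \emph{not} enough (an order of type $\omega^\ast$ or $\mathbb{Z}$ has no least element); the finiteness of initial segments supplied by Corollary \ref{cor:finitelyManyNodesWithSmallerMetricInSubset} is exactly what rules this out, so both corollaries are genuinely used.

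Concretely, first I would fix an arbitrary node $N_0 \in \mathcal{N}$, which exists because $\mathcal{N}$ is non-empty, and restrict attention to the initial segment $\mathcal{N}_0 = \set{N' \in \mathcal{N} \mid m(N') \leq m(N_0)}$. By Corollary \ref{cor:finitelyManyNodesWithSmallerMetricInSubset} the set $\mathcal{N}_0$ is finite, and it is non-empty since $N_0 \in \mathcal{N}_0$. A finite non-empty set that is totally ordered by $m$ (Corollary \ref{cor:totallyOrderedNodesInSubset}) has a least element; let $N$ denote the least element of $\mathcal{N}_0$ under $m$.

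It then remains to promote $N$ from ``least in $\mathcal{N}_0$'' to ``least in $\mathcal{N}$''. I would fix any $N' \in \mathcal{N} \setminus \set{N}$ and split into two cases. If $m(N') \leq m(N_0)$, then $N' \in \mathcal{N}_0$, so minimality of $N$ in $\mathcal{N}_0$ gives $m(N) \leq m(N')$; since $N \neq N'$, Lemma \ref{lem:distinctMetricValueDistinctNodes} forces $m(N) \neq m(N')$, whence $m(N) < m(N')$. Otherwise $m(N') > m(N_0) \geq m(N)$, and again $m(N) < m(N')$. Thus $m(N) < m(N')$ for every $N' \in \mathcal{N} \setminus \set{N}$, as required. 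Uniqueness is immediate: any two nodes with this strict-minimum property would each have strictly smaller metric value than the other, which is impossible; alternatively, uniqueness follows directly from the total order of Corollary \ref{cor:totallyOrderedNodesInSubset}. I do not expect any real obstacle here, since the substantive work was already carried out in establishing the total order and the finiteness of metric-predecessor sets.
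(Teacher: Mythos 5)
Your proof is correct and relies on exactly the same two ingredients as the paper's own proof, namely Corollary \ref{cor:totallyOrderedNodesInSubset} (total order on $\mathcal{N}$) and Corollary \ref{cor:finitelyManyNodesWithSmallerMetricInSubset} (only finitely many nodes of $\mathcal{N}$ with smaller or equal metric value). The only difference is presentational: the paper argues by contradiction (the absence of a least element would force infinitely many nodes of $\mathcal{N}$ below some fixed node, contradicting the finiteness corollary), whereas you construct the least element directly as the minimum of the finite, non-empty initial segment $\set{N' \in \mathcal{N} \mid m(N') \leq m(N_0)}$ and then promote it to a strict minimum of all of $\mathcal{N}$; both are instances of the same well-ordering argument.
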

\begin{proof}
Fix $\mathcal{N}$ as in the hypothesis of the corollary. For contradiction, assume that for every node $N \in \mathcal{N}$, there exists a node $N' \in \mathcal{N}$ such that $m(N') < m(N)$. By Corollary \ref{cor:totallyOrderedNodesInSubset}, we know that the nodes in $\mathcal{N}$ are totally ordered by their metric value. Therefore, for any node $N \in \mathcal{N}$, there must exist an infinite number of nodes $N'\in \mathcal{N}$ such that $m(N') < m(N)$. However, this contradicts  Corollary \ref{cor:finitelyManyNodesWithSmallerMetricInSubset}.
\end{proof}

\paragraph{Ranking non-$\bot$ nodes in $\mathcal{R}^G$.}
From Lemma \ref{lem:totallyOrderedNodesInTree}, we know that the metric function $m$ for non-$\bot$ nodes establishes a total order over the set of non-$\bot$ nodes in $\mathcal{R}^G$. By Corollaries \ref{cor:finitelyManyNodesWithSmallerMetricInSubset} and  \ref{cor:NodesWithSmallestMetricInSubset}, we map the non-$\bot$ nodes to the set of natural numbers by a function $rank$ defined as follows.

Let $\mathcal{N}^G$ be the set of all non-$\bot$ nodes in $\mathcal{R}^G$.
For any non-negative integer $x$, if $N_x$ is node with the $x$-th smallest metric value among the nodes in $\mathcal{N}^G$, then $rank(N_x) = x$. 

This notion of ``rank'' is used to define the metric value of non-$\bot$ gadgets.

\paragraph{Metric value of non-$\bot$ gadgets.} 
Given a non-$\bot$ gadget of the form $(N,l,r,E^l,E^r,E^{rl})$, it can be uniquely identified by $N$, $N^l$ and $N^{rl}$, where $N^l$ is the lower endpoint of $E^{l}$ and $N^{rl}$ is the lower endpoint of $E^{rl}$. Similarly, given a non-$\bot$ gadget of the from $(N,l,E^l,E^{\prime \ell})$, it can be uniquely identified by $N$, $N^l$ and $N^{\prime \ell}$, where $N^l$ is the lower endpoint of $E^l$ and $N^{\prime \ell}$ is the lower endpoint of $E^{\prime \ell}$.

For a non-$\bot$ decision gadget $(N,l,r,E^l,E^r,E^{rl})$, the metric value of the gadget is defined as $m((N,l,r,E^l,E^r,E^{rl})) = \langle rank(N), \langle rank(N^l),  rank(N^{rl}) \rangle \rangle$, where $\langle \cdot, \cdot \rangle$ is the Cantor pairing function \cite{cantor:ueed}\footnote{Recall that Cantor pairing function $\pi$ is a bijection from $\mathbb{N} \times \mathbb{N}$ to $\mathbb{N}$ and is defined by $\pi(n_1,n_2) = \frac{1}{2} (n_1 + n_2)(n_1 + n_2 + 1) + n_2$.}.
Similarly, for a non-$\bot$ decision gadget $(N,l,E^l,E^{\prime \ell})$, the metric value of the gadget is defined as $m((N,l,E^l,E^{\prime \ell})) = \langle rank(N), \langle rank(N^l),  rank(N^{\prime \ell}) \rangle \rangle$.


\begin{lemma}\label{lem:distinctGadgetsDistinctMetricValue}
For any pair of distinct non-$\bot$ gadgets $Y_1$ and $Y_2$, $m(Y_1) \neq m(Y_2)$.
\end{lemma}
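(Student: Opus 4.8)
The plan is to read off, from the metric value, the ordered triple of nodes that identifies a gadget, and then to argue that this triple determines the gadget completely. First I would recall that for a non-$\bot$ gadget $Y$ the metric value has the shape $m(Y) = \langle rank(N), \langle rank(A), rank(B)\rangle\rangle$, where $A$ and $B$ are the two distinguished descendants of $N$: namely $A = N^l$ and $B = N^{\prime \ell}$ when $Y$ is a fork, and $A = N^l$, $B = N^{rl}$ when $Y$ is a hook (so in both forms $A$ is the first inner component). Suppose $m(Y_1) = m(Y_2)$. Since the Cantor pairing function $\langle\cdot,\cdot\rangle$ is a bijection on $\mathbb{N}\times\mathbb{N}$, applying injectivity of the pairing twice gives $rank(N_1) = rank(N_2)$, $rank(A_1) = rank(A_2)$, and $rank(B_1) = rank(B_2)$. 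By the definition of $rank$ together with Lemma \ref{lem:distinctMetricValueDistinctNodes} (distinct non-$\bot$ nodes have distinct metric values, hence distinct ranks), the map $rank$ is injective on non-$\bot$ nodes, so $N_1 = N_2$, $A_1 = A_2$, and $B_1 = B_2$; write $N, A, B$ for these common nodes.

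Next I would show that the triple $(N, A, B)$ determines the gadget, including whether it is a fork or a hook. The key observation is that in a rooted tree every node has a unique parent, so for any ancestor–descendant pair the connecting path, together with the labels and action tags of its edges, is uniquely determined. In both gadget forms $A$ is a child of $N$, so the path from $N$ to $A$ has length $1$ and pins down the edge $E^l$ and its label $l$. The third node $B$ behaves differently in the two cases: in a fork $B = N^{\prime \ell}$ is again a child of $N$, at distance $1$, whereas in a hook $B = N^{rl}$ is a child of $N^r$, which is itself a child of $N$, so $B$ lies at distance $2$ from $N$. Since the distance from $N$ to $B$ is an invariant of the tree, it cannot be both $1$ and $2$; hence the gadget type is forced by $(N,B)$ alone.

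Finally I would conclude by cases. If $Y_1$ and $Y_2$ had different types, then $B$ would have to sit at distance $1$ from $N$ (fork side) and at distance $2$ from $N$ (hook side) simultaneously, which is impossible, so this cross-type case cannot arise. If both are forks, then $N$, $A = N^l$, $B = N^{\prime \ell}$ agree; the edges $E^l, E^{\prime \ell}$ are the unique tree edges from $N$ to $A$ and from $N$ to $B$, and $l$ is their common label, so $Y_1 = Y_2$. If both are hooks, then $N^r$ is the unique parent of $B$, the edges $E^l, E^r, E^{rl}$ are the unique tree edges along $N\to A$, $N\to N^r$, and $N^r\to B$, and $l, r$ are their labels, so again $Y_1 = Y_2$. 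In every admissible case we contradict the assumption that $Y_1$ and $Y_2$ are distinct, which proves $m(Y_1)\neq m(Y_2)$.

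The step I expect to require the most care is the reconstruction of the gadget from its node triple, and in particular the cross-type separation: I must make sure that ``fork'' and ``hook'' are genuinely distinguished by the distance of $B$ from $N$, and that no degenerate configuration (for instance $A$ or $B$ coinciding with $N$, or the two $l$-children of a fork coinciding) can blur the two shapes. These potential degeneracies are ruled out because the tree is rooted with unique parents and because $E^l$ and $E^{\prime \ell}$ are by definition distinct $l$-edges, forcing $A \neq B$ in a fork as well; but each of these points deserves an explicit sentence in the final write-up.
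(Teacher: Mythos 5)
Your proof is correct and follows essentially the same route as the paper, whose own proof is the one-line remark that the claim ``follows from the properties of the Cantor pairing function.'' What you add—injectivity of $rank$ on non-$\bot$ nodes via Lemma~\ref{lem:distinctMetricValueDistinctNodes}, and the tree-distance argument showing the triple $(N, N^l, N^{\prime\ell}/N^{rl})$ reconstructs the gadget and even its type—is precisely the unique-identification claim the paper asserts without proof in the paragraph defining the gadget metric, so your write-up is a legitimate (and more complete) elaboration of the intended argument.
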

\begin{proof}
Follows from the properties of the Cantor pairing function.
\end{proof}

The \emph{first non-$\bot$ decision gadget} in $\mathcal{R}^G$ is the non-$\bot$ decision gadget with the smallest metric value among all non-$\bot$ decision gadgets in $\mathcal{R}^G$. Next, we show that such a decision gadget exists.

\begin{lemma}\label{lem:finitelyManySmallerGadgets}
For any any non-$\bot$ gadget $Y$ in $\mathcal{R}^G$, there are only finitely many non-$\bot$ gadgets $Y'$ in $\mathcal{R}^G$ such that $m(Y) > m(Y')$.
\end{lemma}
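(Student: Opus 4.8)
The plan is to reduce the statement to a trivial pigeonhole fact about the natural numbers, leveraging that the metric of a non-$\bot$ gadget is a single natural number and that the metric function is injective. First I would observe that $m(Y)$ lies in $\mathbb{N}$ for every non-$\bot$ gadget $Y$. Indeed, the $rank$ function maps non-$\bot$ nodes of $\mathcal{R}^G$ into $\mathbb{N}$, and the metric of a gadget is obtained by applying the Cantor pairing function $\langle\cdot,\cdot\rangle$ twice, once to $\langle rank(N^l),rank(N^{rl})\rangle$ (respectively $\langle rank(N^l),rank(N^{\prime\ell})\rangle$) and once more pairing the result with $rank(N)$. Since $\langle\cdot,\cdot\rangle$ is a bijection from $\mathbb{N}\times\mathbb{N}$ to $\mathbb{N}$, each such nested pairing is again an element of $\mathbb{N}$, so $m(Y)\in\mathbb{N}$.

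Next I would fix a non-$\bot$ gadget $Y$ and set $M=m(Y)\in\mathbb{N}$. Any non-$\bot$ gadget $Y'$ with $m(Y)>m(Y')$ satisfies $m(Y')\in\{0,1,\dots,M-1\}$, a finite set of cardinality $M$. By Lemma \ref{lem:distinctGadgetsDistinctMetricValue}, distinct non-$\bot$ gadgets have distinct metric values, i.e.\ the assignment $Y'\mapsto m(Y')$ is injective on the set of non-$\bot$ gadgets. Restricting this injection to the gadgets $Y'$ with $m(Y')<M$ maps them one-to-one into $\{0,1,\dots,M-1\}$, so there are at most $M$ such gadgets. Hence only finitely many non-$\bot$ gadgets $Y'$ satisfy $m(Y)>m(Y')$, which is exactly the claim.

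I do not expect a genuine obstacle in this lemma: the entire argument is just the discreteness of $\mathbb{N}$ below a fixed bound, combined with injectivity of $m$. The only point that deserves a sentence of care is the well-definedness of $m(Y)$, namely that $rank(N)$, $rank(N^l)$, and $rank(N^{rl})$ (or $rank(N^{\prime\ell})$) are all defined, which requires the nodes $N$, $N^l$, $N^{rl}$ (or $N^{\prime\ell}$) to be non-$\bot$ nodes. For $N$ this is the defining property of a non-$\bot$ gadget, and the relevant child nodes inherit the non-$\bot$ property because the edges producing them carry non-$\bot$ action tags (as established for forks and hooks in Lemmas \ref{lem:forkEventTagsNotBot} and \ref{thm:hookEventTagsNotBot}). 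Once well-definedness is granted, the finiteness conclusion is immediate.
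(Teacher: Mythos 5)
Your proof is correct and takes essentially the same approach as the paper: the paper's entire proof is the one-line remark that the lemma ``follows directly from the properties of the Cantor pairing function,'' which is exactly the pigeonhole argument you spell out --- metric values of non-$\bot$ gadgets are natural numbers, distinct gadgets receive distinct values by Lemma~\ref{lem:distinctGadgetsDistinctMetricValue}, hence at most $m(Y)$ gadgets can have a metric value below $m(Y)$. Your write-up merely makes explicit the injectivity and well-definedness details that the paper leaves implicit.
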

\begin{proof}
The lemma follows directly from the properties of the Cantor pairing function.
\end{proof}

Next, we show that $\mathcal{R}^G$ has a first non-$\bot$ decision gadget.
\begin{theorem}\label{thm:smallestGadget}
$\mathcal{R}^G$ contains a non-$\bot$ decision gadget $Y$ such that the metric
value of any other non-$\bot$ decision gadget $Y'$ is strictly greater than the
metric value of $Y$.
\end{theorem}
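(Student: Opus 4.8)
The plan is to derive the statement quickly from the three facts already in hand: that at least one non-$\bot$ decision gadget exists (Theorem \ref{thm:ViableGHasNonBotDecisionGadgets}), that distinct non-$\bot$ gadgets carry distinct metric values (Lemma \ref{lem:distinctGadgetsDistinctMetricValue}), and that every non-$\bot$ gadget is preceded by only finitely many non-$\bot$ gadgets in the metric order (Lemma \ref{lem:finitelyManySmallerGadgets}). The theorem is essentially the assertion that a strictly-minimal element exists in the metric order restricted to non-$\bot$ decision gadgets, so the real work is to package these lemmas correctly; the heavy lifting was done in establishing them.

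First I would invoke Theorem \ref{thm:ViableGHasNonBotDecisionGadgets} to fix a non-$\bot$ decision gadget $Y_0$ in $\mathcal{R}^G$, which shows that the collection $\mathcal{D}$ of all non-$\bot$ decision gadgets in $\mathcal{R}^G$ is non-empty. Recall that the metric value $m(\cdot)$ of each such gadget is a natural number, since the outer $\langle \cdot , \cdot \rangle$ is the Cantor pairing function mapping into $\mathbb{N}$; hence $\set{m(Y') \mid Y' \in \mathcal{D}}$ is a non-empty subset of $\mathbb{N}$. I would then offer two interchangeable finishing moves. The direct one appeals to the well-ordering of $\mathbb{N}$: this set has a least element $m_0$, and I let $Y \in \mathcal{D}$ be a gadget with $m(Y) = m_0$. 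The more constructive alternative, which avoids the well-ordering principle and uses the supplied lemma, observes that the non-$\bot$ decision gadgets with metric value below $m(Y_0)$ form a subset of the finitely many non-$\bot$ gadgets below $m(Y_0)$ guaranteed by Lemma \ref{lem:finitelyManySmallerGadgets}, so adjoining $Y_0$ yields a finite, non-empty subset of $\mathcal{D}$, which has a minimum $Y$ with respect to $m$.

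Finally, strictness follows from Lemma \ref{lem:distinctGadgetsDistinctMetricValue}: since distinct non-$\bot$ gadgets have distinct metric values, $Y$ is the unique gadget attaining $m(Y)$, and any other non-$\bot$ decision gadget $Y' \neq Y$ satisfies $m(Y') \geq m(Y)$ together with $m(Y') \neq m(Y)$, whence $m(Y') > m(Y)$, as required. The only subtlety to watch is bookkeeping of the index set: Lemma \ref{lem:finitelyManySmallerGadgets} is stated for all non-$\bot$ gadgets, so I must intersect its conclusion with $\mathcal{D}$ before taking the minimum, and I must confirm that this intersection inherits both finiteness (below $m(Y_0)$) and the distinctness of metric values. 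This is the main, though still routine, point to verify; no genuine obstacle remains once the preceding lemmas are granted.
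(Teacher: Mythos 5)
Your proposal is correct and takes essentially the same approach as the paper: the paper likewise fixes an arbitrary non-$\bot$ decision gadget in the set $\mathcal{Y}$ of all non-$\bot$ decision gadgets, applies Lemma \ref{lem:finitelyManySmallerGadgets} to obtain a finite non-empty subset of $\mathcal{Y}$ with metric value at most that gadget's, and takes the minimum over it. If anything, you are more explicit than the paper, which leaves implicit both the appeal to Theorem \ref{thm:ViableGHasNonBotDecisionGadgets} for non-emptiness of $\mathcal{Y}$ and the appeal to Lemma \ref{lem:distinctGadgetsDistinctMetricValue} for the strictness of the inequality.
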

\begin{proof}
Let $\mathcal{Y}$ be the set of all non-$\bot$ decision gadgets in $\mathcal{R}^G$.
Fix an arbitrary $Y' \in \mathcal{Y}$. By Lemma \ref{lem:finitelyManySmallerGadgets}, we know that there are only finitely many $Y'' \in \mathcal{Y}$ such that $m(Y'') < m(Y')$. Let $\mathcal{Y}' = \set{ Y'' | Y'' \in \mathcal{Y} \wedge m(Y'') \leq m(Y')}$. Since $\mathcal{Y}'$ is a finite set, let $Y = \argmin_{Y \in \mathcal{Y}'}\set{m(Y)}$. By construction, $Y$ is a non-$\bot$ decision gadget such that the metric
value of any other non-$\bot$ decision gadget $Y'$ is strictly greater than the
metric value of $Y$.
\end{proof}


Given an observation $G$ that is viable for $D$, let
$first(\mathcal{R}^G)$ denote the first non-$\bot$ decision gadget in $\mathcal{R}^G$.

%
%
%
Recall that at most $f$ locations are not
live in $G$; $t_D \in T_D$ is a a trace that is compatible with $D$, and $G_1, G_2, G_3, \ldots$ is a sequence of observations that converge to $G$. Next we show the `persistence' of non-$\bot$ decision gadgets across the sequence of execution trees $\mathcal{R}^{G_1}, \mathcal{R}^{G_2}, \mathcal{R}^{G_3}, \ldots$.

\begin{lemma}
\label{lem:finitelyManySmallerGadgetsInPrefix}
For any $G' \in \set{G, G_1,G_2,\ldots}$, for any any non-$\bot$ gadget $Y$ in $\mathcal{R}^{G'}$, there are only finitely many non-$\bot$ gadgets $Y'$ in $\mathcal{R}^{G'}$ such that $m(Y) > m(Y')$.
\end{lemma}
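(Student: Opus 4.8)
The plan is to observe that this lemma is essentially a restatement of Lemma \ref{lem:finitelyManySmallerGadgets}, but applied to each of the finite observations $G_x$ in addition to the observation $G$. The earlier lemma, Lemma \ref{lem:finitelyManySmallerGadgets}, was proved for $\mathcal{R}^G$, and its proof followed directly from the properties of the Cantor pairing function together with the node-ordering machinery established in Lemmas \ref{lem:distinctMetricValueDistinctNodes} through \ref{lem:finitelyManyNodesWithSmallerMetric}. The crucial point is that none of the ingredients used there are specific to $G$: the metric function $m$ on nodes and gadgets, the ranking construction, and the finiteness argument all apply verbatim to the execution tree $\mathcal{R}^{G'}$ for \emph{any} observation $G'$. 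Therefore the first thing I would do is to verify that the entire chain of lemmas culminating in Lemma \ref{lem:finitelyManySmallerGadgets} is stated for an arbitrary observation rather than relying on the specific viability or infinitude of $G$.

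The key steps I would carry out are as follows. First, fix an arbitrary $G' \in \set{G, G_1, G_2, \ldots}$ and fix a non-$\bot$ gadget $Y$ in $\mathcal{R}^{G'}$. Second, recall that the metric value of a non-$\bot$ gadget is defined via the $rank$ function applied to its constituent non-$\bot$ nodes, combined through the Cantor pairing function $\langle \cdot, \cdot \rangle$. Since the Cantor pairing function is a bijection from $\mathbb{N} \times \mathbb{N}$ to $\mathbb{N}$, for any fixed natural number $m(Y)$ there are only finitely many natural numbers strictly less than $m(Y)$, and hence only finitely many possible metric values that a gadget $Y'$ with $m(Y') < m(Y)$ could take. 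Third, because distinct non-$\bot$ gadgets have distinct metric values (the analogue of Lemma \ref{lem:distinctGadgetsDistinctMetricValue}, which again holds for any observation), each of these finitely many metric values is attained by at most one gadget, so there are only finitely many non-$\bot$ gadgets $Y'$ in $\mathcal{R}^{G'}$ with $m(Y') < m(Y)$.

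I do not anticipate a genuine obstacle here, since the statement is a direct generalization obtained by quantifying over the observations in the convergent sequence; the substantive content was already discharged in proving Lemma \ref{lem:finitelyManySmallerGadgets}. The only point requiring a small amount of care is the implicit dependence of the node-ordering construction on a vertex ordering of $G$, and hence on whether the metric $m$ is well-defined on $\mathcal{R}^{G'}$ when $G'$ is a prefix of $G$ rather than $G$ itself. Since each $G_x$ is itself a finite observation with its own vertices (all of which are vertices of $G$, as $G_x$ is a prefix of $G$), the same metric functions restrict consistently to $\mathcal{R}^{G_x}$, and Lemma \ref{lem:finitelyManyNodesWithSmallerMetric} applies to $\mathcal{R}^{G'}$ without modification. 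Thus the proof reduces to the following brief argument.

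\begin{proof}
Fix $G'$ and $Y$ as in the hypotheses of the lemma. The construction of the metric functions on nodes and gadgets, and all the lemmas establishing their properties (Lemmas \ref{lem:distinctMetricValueDistinctNodes} through \ref{lem:finitelyManySmallerGadgets}), apply to the execution tree $\mathcal{R}^{G'}$ of any observation $G'$, and in particular to each observation in $\set{G, G_1, G_2, \ldots}$. In the same manner as in the proof of Lemma \ref{lem:finitelyManySmallerGadgets}, the claim follows directly from the properties of the Cantor pairing function: since the metric value $m(Y)$ is a natural number, only finitely many natural numbers are strictly smaller than $m(Y)$, and by the analogue of Lemma \ref{lem:distinctGadgetsDistinctMetricValue} for $\mathcal{R}^{G'}$, distinct non-$\bot$ gadgets have distinct metric values. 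Hence each such smaller value is attained by at most one gadget, and there are only finitely many non-$\bot$ gadgets $Y'$ in $\mathcal{R}^{G'}$ with $m(Y) > m(Y')$.
\end{proof}
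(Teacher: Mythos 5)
Your proposal is correct and takes essentially the same route as the paper, whose proof of this lemma (like that of Lemma \ref{lem:finitelyManySmallerGadgets}) simply appeals to the properties of the Cantor pairing function; you merely make explicit the two facts being invoked, namely that $m(Y)$ is a natural number with only finitely many naturals below it, and that distinct non-$\bot$ gadgets in $\mathcal{R}^{G'}$ have distinct metric values. Your observation that the metric and ranking machinery applies verbatim to each $G' \in \set{G, G_1, G_2, \ldots}$, since each is an observation in its own right, is exactly the (implicit) justification the paper relies on when restating the lemma for the trees in the convergent sequence.
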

\begin{proof}
The lemma follows directly from the properties of the Cantor pairing function.
\end{proof}

Let $Y_{min}$ denote $first(\mathcal{R}^G)$: this first non-$\bot$ decision gadget in $\mathcal{R}^G$.

\begin{lemma}\label{lem:minGadgetFoundInAllGs}
There exists a positive integer $x$ such that for all positive integers $x' \geq x$, $Y_{min}$ is the first non-$\bot$ decision gadget in $\mathcal{R}^{G_{x'}}$.
\end{lemma}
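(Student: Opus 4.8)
The plan is to pin down a single target value $M^{*}=m(Y_{min})$ computed in $\mathcal{R}^G$ (a natural number, since the gadget metric is a nested Cantor pairing), and then to show that for all sufficiently large $x'$ the gadget $Y_{min}$ both (i) remains a non-$\bot$ decision gadget in $\mathcal{R}^{G_{x'}}$ and (ii) is not beaten by any other non-$\bot$ decision gadget there. Part (i) is immediate from Lemma~\ref{lem:decisionGadgetStableInAllGs}. Part (ii) carries the real content, and the difficulty is that a gadget's metric is built from node \emph{ranks}, which are defined relative to the ambient tree; since $\mathcal{R}^{G_{x'}}$ exposes only finitely many non-$\bot$ nodes, ranks there are smaller than in $\mathcal{R}^G$, so a gadget lying ``far out'' in $\mathcal{R}^G$ could temporarily carry a deceptively small metric in a finite $\mathcal{R}^{G_{x'}}$ (the pseudo-gadget phenomenon flagged in Section~\ref{sec:comarisonWithCHT}).

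First I would establish a \emph{rank-stabilization} fact. Every non-$\bot$ node $N$ of $\mathcal{R}^G$ has only finitely many non-$\bot$ nodes below it in the metric order (Lemma~\ref{lem:finitelyManyNodesWithSmallerMetric}), and each such smaller node eventually appears in $\mathcal{R}^{G_{x'}}$: because $G_{x'}$ converges to $G$, once $G_{x'}$ contains the finitely many vertices naming the node's path, Lemma~\ref{lem:superObsYieldsSuperTree} places that node in $\mathcal{R}^{G_{x'}}$ (this is exactly the appearance argument used in Lemmas~\ref{lem:bivalentNodeFiniteTimeInAllGs} and~\ref{lem:univalentNodeFiniteTimeInAllGs}). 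Since the non-$\bot$ nodes of $\mathcal{R}^{G_{x'}}$ are always a subset of those of $\mathcal{R}^G$ via the replica correspondence (Lemma~\ref{lem:UniqueNodeMappingFromPrefixObsToSuperObs}), the rank of $N$ in $\mathcal{R}^{G_{x'}}$ increases monotonically to its rank in $\mathcal{R}^G$ and stabilizes once all smaller-metric nodes have appeared. Consequently the metric of any fixed non-$\bot$ gadget increases monotonically to its $\mathcal{R}^G$-value and eventually equals it; in particular the metric of $Y_{min}$ stabilizes at $M^{*}$.

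Next I would \emph{confine the competitors to a finite set}. Because the gadget metric dominates each of its component ranks (a Cantor pairing $\pi(a,b)$ satisfies $\pi(a,b)\ge a$ and $\pi(a,b)\ge b$), any non-$\bot$ decision gadget $Y''$ of $\mathcal{R}^{G_{x'}}$ whose metric is smaller than that of $Y_{min}$ (hence at most $M^{*}$) must have \emph{all} of its constituent nodes of rank $<M^{*}$ in $\mathcal{R}^{G_{x'}}$. By rank-stabilization, the set of nodes of rank $<M^{*}$ in $\mathcal{R}^{G_{x'}}$ is eventually exactly the fixed finite collection $\mathcal{N}_{<M^{*}}$ of the $M^{*}$ smallest-metric non-$\bot$ nodes of $\mathcal{R}^G$. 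Hence, for large $x'$, every potential competitor is one of the finitely many gadgets assembled from $\mathcal{N}_{<M^{*}}$; call this finite set $\mathcal{G}$, noting $Y_{min}\in\mathcal{G}$.

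Finally I would dispatch $\mathcal{G}$ gadget by gadget using the two stability lemmas. For each $Y''\in\mathcal{G}\setminus\{Y_{min}\}$ that is a non-$\bot$ decision gadget of $\mathcal{R}^G$, minimality of $Y_{min}$ (Theorem~\ref{thm:smallestGadget}) gives $m(Y'')>M^{*}$, and by stabilization its metric in $\mathcal{R}^{G_{x'}}$ eventually equals $m(Y'')>M^{*}\ge m_{G_{x'}}(Y_{min})$, so it cannot beat $Y_{min}$. For each $Y''\in\mathcal{G}\setminus\{Y_{min}\}$ that is \emph{not} a non-$\bot$ decision gadget of $\mathcal{R}^G$, Lemma~\ref{lem:nonDecisionGadgetStableInAllGs} says it is eventually not a non-$\bot$ decision gadget of $\mathcal{R}^{G_{x'}}$, hence not a competitor. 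Taking $x$ to be the maximum of the finitely many thresholds produced (Lemma~\ref{lem:decisionGadgetStableInAllGs} for $Y_{min}$, the stabilization thresholds for $M^{*}$ and for $\mathcal{N}_{<M^{*}}$, and one threshold per element of $\mathcal{G}$) yields, for all $x'\ge x$, that $Y_{min}$ is a non-$\bot$ decision gadget of $\mathcal{R}^{G_{x'}}$ of metric $M^{*}$ while every other non-$\bot$ decision gadget there has strictly larger metric (using distinctness of metrics, Lemma~\ref{lem:distinctGadgetsDistinctMetricValue}); thus $Y_{min}$ is the first non-$\bot$ decision gadget of $\mathcal{R}^{G_{x'}}$. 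The main obstacle throughout is precisely this rank/metric shift in finite trees --- ruling out that an unboundedly deep pseudo-gadget sneaks in with a small finite-tree metric --- which the confinement-to-$\mathcal{N}_{<M^{*}}$ step is designed to resolve.
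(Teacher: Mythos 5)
Your proof is correct, and at the top level it follows the same strategy as the paper's own proof: Lemma~\ref{lem:decisionGadgetStableInAllGs} keeps $Y_{min}$ alive as a non-$\bot$ decision gadget in every sufficiently late $\mathcal{R}^{G_{x'}}$, the potential competitors are confined to a finite set, each competitor is eliminated either by Lemma~\ref{lem:nonDecisionGadgetStableInAllGs} or (for competitors that are genuine non-$\bot$ decision gadgets of $\mathcal{R}^{G}$) by minimality via Theorem~\ref{thm:smallestGadget}, and one takes the maximum of the finitely many thresholds. Where you differ is in how the finite competitor set is justified, and your version is genuinely more careful than the paper's. The paper fixes the single stage $x_Y$ at which $Y_{min}$ has stabilized, takes the competitors to be the finitely many non-$\bot$ gadgets of that one tree $\mathcal{R}^{G_{x_Y}}$ with metric below $m(Y_{min})$ (Lemma~\ref{lem:finitelyManySmallerGadgetsInPrefix}), and asserts ``by construction'' that none of them is a non-$\bot$ decision gadget of $\mathcal{R}^{G}$; since the gadget metric is built from node \emph{ranks}, which are relative to the ambient tree, this silently identifies metrics computed in $\mathcal{R}^{G_{x_Y}}$ with metrics computed in $\mathcal{R}^{G}$, does not address gadgets that first appear in $\mathcal{R}^{G_{x'}}$ for $x' > x_Y$, and does not rule out that metric comparisons between two fixed gadgets flip as ranks drift upward (which can in fact happen for Cantor pairings of ranks). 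Your rank-stabilization step --- each node's rank is nondecreasing in $x'$ and equals its $\mathcal{R}^{G}$-rank once the finitely many smaller-metric nodes guaranteed by Lemma~\ref{lem:finitelyManyNodesWithSmallerMetric} have all appeared, via the replica correspondence of Lemma~\ref{lem:UniqueNodeMappingFromPrefixObsToSuperObs} --- combined with the confinement step --- a competitor's metric is at most $M^{*}$ and the pairing dominates its arguments, so all of a competitor's nodes lie in the eventually-fixed finite set $\mathcal{N}_{<M^{*}}$ --- is exactly what is needed to make ``finitely many competitors'' a statement about the entire tail of the sequence rather than about one snapshot tree. In short, both proofs share a skeleton, but yours supplies the stabilization and confinement arguments that the paper's proof leaves implicit, and these are not cosmetic: without them the chosen threshold does not obviously exclude late-appearing, small-metric pseudo-gadgets, which is precisely the failure mode Section~\ref{sec:comarisonWithCHT} identifies in the original CHT argument.
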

\begin{proof}
Applying Lemma \ref{lem:decisionGadgetStableInAllGs}, we know that there exists a positive integer $x_{Y}$ such that for all positive integers $x'_{Y} \geq x_{Y}$, $Y_{min}$ is a non-$\bot$ decision gadget in $\mathcal{R}^{G_{x'_Y}}$. Fix $x_Y$.

By Lemma \ref{lem:finitelyManySmallerGadgetsInPrefix}, we know that there are only finitely many non-$\bot$ gadgets $Y'$ in $\mathcal{R}^{G_{x_Y}}$ such that $m(Y') < m(Y_{min})$.
Let $\hat{\mathcal{Y}}$ denote the set of all such gadgets $Y'$.
By construction all the gadgets in $\hat{\mathcal{Y}}$ are not non-$\bot$ decision gadgets (that is, they are either $\bot$ decision gadgets, or not decision gadgets at all) in $\mathcal{R}^{G}$. 
By Lemma \ref{lem:nonDecisionGadgetStableInAllGs}, we know that for each $Y' \in \hat{\mathcal{Y}}$  there exists a positive integer $x_{Y'}$ such that for all positive integers $x'_{Y'} \geq x_{Y'}$, $Y'$ is not a non-$\bot$ decision gadget in $\mathcal{R}^{G_{x'_{Y'}}}$; fix an $x_{Y'}$ for each such $Y'$. Let $x$ denote the largest such $x_{Y'}$; since $\hat{Y}$ is a finite set, we know that $x$ is exists.

Thus, for all $x' \geq x$, $Y_{min}$ is the first non-$\bot$ decision gadget in $\mathcal{R}^{G_{x'}}$.
\end{proof}

\section{A Weakest AFD for Consensus}
\label{sec: wfd}

%

In \cite{chan:twfdf}, Chandra et al. showed that $\Omega$ is a weakest
failure detector for solving $(n-1)$-crash-tolerant consensus. We use
similar arguments to show that AFD  $\Omega_f$ (defined in Section
\ref{subset:omegaDef}), which is a generalization of the $\Omega$ AFD,
is a weakest strong-sampling AFD to solve
$f$-crash-tolerant consensus in all well-formed environments. Although
the assumption about strong-sampling AFDs seems to weaken our result
with respect to the result in \cite{chan:twfdf}, in fact, a similar
assumption was implicitly used in \cite{chan:twfdf}.


Recall that $\Omega_f$, where $0 \leq f <n$, denotes the AFD that
behaves exactly like $\Omega$ in traces that have at most $f$ faulty
locations, and in traces that have more than $f$ faulty locations, the outputs by $\Omega$ are unconstrained.
In order to show that $\Omega_f$ is weakest to solve
$f$-crash-tolerant consensus, first we have to show that
$f$-crash-tolerant consensus can be solved using $\Omega_f$ in any
well-formed environment. Since $\Omega_f$ behaves exactly like
$\Omega$ in executions where at most $f$ locations crash, we see that
the algorithm in \cite{chan:ufdfr} can be modified trivially to solve
$f$-crash-tolerant consensus using $\Omega_f$ in any well-formed
environment. It remains to show that, for every strong-sampling AFD
$\FD$, if $\FD$ is sufficient to solve $f$-crash-tolerant consensus in
any well-formed environment,  then $\FD$ is stronger than $\Omega_f$. 

For the remainder of this section, fix $f$ to be a positive integer not exceeding $n$.

In Section \ref{subsec:omegaExtraction}, we present an algorithm that
solves $\Omega_f$ using any arbitrary strong-sampling AFD that solves
$f$-crash-tolerant consensus, and in Section
\ref{subsec:omegaExtractionCorrectness}, we present the proof of
correctness. Since we know that $\Omega_f$ is sufficient to solve
$f$-crash-tolerant consensus, we thus establish that $\Omega_f$ is a
weakest AFD to solve $f$-crash-tolerant consensus.

\subsection{Algorithm for Solving $\Omega_f$}\label{subsec:omegaExtraction}

Let $\FD$ be an AFD sufficient to solve $f$-crash-tolerant consensus,
where $0 \leq f < n$, in the well-formed environment $\mathcal{E}_C$
from Section \ref{subsec:conEnvDef}. By definition, there exists a
distributed algorithm $A_D^P$ that uses $\FD$ to solve
$f$-crash-tolerant consensus in $\mathcal{E}_C$. Using $A_D^P$, we
construct an algorithm $A^\Omega$ that uses $\FD$ to solve
$\Omega_f$.

In $A^\Omega$, each process automaton keeps track of the outputs
provided by AFD $\FD$ and exchanges this information with all other
process automata (at other locations).
Each process uses this information to maintain an observation $G$ (a local variable), and sends this observation to the other process automata. Initially, the observation $G$
at each process automaton is empty, and the local variable $k$, which
counts the number of AFD events that have occurred at that location,
is $0$. Each process also maintains a local variable $fdout$ which may be viewed as the automaton's current estimate of the output of the AFD $\Omega_f$ that it implements; initially, at each process $i$, the value of $fdout$ is $i$. Next, we describe the actions of the process automaton at a location
(say) $i$. 

When an AFD output $d$ occurs at location $i$, the input action
$d$ occurs in $A^\Omega_i$; in this action, the process automaton does 
the following.
It increments $k$ by $1$ (which updates the number of AFD events that
have occurred at $i$) and inserts a new vertex $(i,k,d)$ into its local variable $G$; the insert operation is defined in Section \ref{subsec:OperationsOnObservations}.
A copy of the updated observation $G$ is appended to $sendq$ for every
other location to be sent out to all other locations. The process automaton
constructs the directed tree $\mathcal{R}^G$ for the current value of $G$ (as described in Section \ref{subsec:treeOfExec}). If $\mathcal{R}^G$ contains a non-$\bot$ gadget, then it determines the first non-$\bot$ decision gadget in 
$\mathcal{R}^G$ and
updates $fdout$ to the critical location of that decision gadget. Finally, the automaton adds
$(fdout,i)$ to $sendq$.

If the front element of $sendq$ is a pair consisting of an observation $observe$
and  location $j$, then the output action $send(observe,j)$ is enabled. When this
action occurs, the front element of $sendq$ is deleted (and a message is
send to $j$ that contains the observation $observe$).

When the process automaton at $i$ receives a message from another
location $j$ with the observation $observe$, the input event $receive(observe,j)$
occurs, and the process automaton updates $G$ with the union of $G$
and $observe$; the union operation is defined in Section \ref{subsec:OperationsOnObservations}.

If the front element of $sendq$ is a pair $(j,i)$, where $j$ is a location,
the output action $FD_\Omega(j)$ is enabled. When this action occurs, 
the front element of $sendq$ is deleted.

Note that $sendq$ contains both the observations that are sent to other locations and the value of the $\Omega_f$ AFD output events. This is because we model process automata as having a single task. Alternatively, we could have modeled process automata as having multiple tasks and used separate data structures to store the AFD outputs and the observations to be sent to other locations.

 
 The pseudocode for the algorithm is given in Algorithm \ref{alg:exchangeFD}.

\begin{algorithm}\footnotesize
\caption{Algorithm for solving $\Omega$}
\label{alg:exchangeFD}
The automaton $A^{\Omega}_i$ at each location $i$.


\textbf{Signature:}

\tab input $crash_i$

\tab input $d$: $O_{D,i}$

\tab input $receive(obs: \text{Observation}, j:\Pi\setminus\set{i})_i$

\tab output $send(obs: \text{Observation}, j:\Pi\setminus\set{i})_i$

\tab output $FD_{\Omega}(j: \Pi)$

\tab 

\textbf{Variables:}

\tab $G$: a finite observation, initially empty \hfill \emph{Finite observation maintained at all locations}

\tab $k$: integer, initially $0$ \hfill \emph{Denotes the number of AFD outputs occurred so far}

\tab $sendq$: queue of pairs $(o,j)$, where $o$ is either an observation or a location, and $j$ is a location, initially empty. 

\tab $fdout$: $\Pi$, initially $i$ \hfill \emph{Location ID output by the $\Omega_f$ AFD output actions}

\tab $faulty$: Boolean, initially $false$ \hfill \emph{When true, the process automaton is crashed}

\tab

\textbf{Actions:}

\tab input $crash$

\tab effect

\tab \tab $faulty$ := $true$

\tab

\tab input $d$: $d \in O_{D,i}$

\tab effect

\tab \tab if not faulty, then

\tab \tab \tab $k$ :=  $k+1$


\tab \tab \tab insert vertex $(i,k,d)$ into $G$ \hfill \emph{The insert operation is defined in Section \ref{subsec:OperationsOnObservations}.}

\tab \tab \tab foreach $j \in \Pi \setminus \set{i}$

\tab \tab \tab \tab append $(G,j)$ to $sendq$

\tab \tab \tab if $\mathcal{R}^G$ contains a non-$\bot$ decision gadget, then

\tab \tab \tab \tab $H$ := $first(\mathcal{R}^G)$ \hfill \emph{Recall that $first(\mathcal{R}^G)$ is the first non-$\bot$ decision gadget in $\mathcal{R}^G$}


\tab \tab \tab \tab $fdout$ := critical location of $H$

\tab \tab \tab append $(fdout, i)$ to $sendq$

\tab

\tab input $receive(obs, j)$

\tab effect

\tab \tab if not faulty, then

\tab \tab \tab $G$ := $G \cup obs$\hfill \emph{The union operation is defined in Section \ref{subsec:OperationsOnObservations}.}

\tab

\tab output $send(obs,j)$

\tab precondition

\tab \tab $(\neg faulty \wedge ((obs,j) = head(sendq))) $

\tab effect

\tab \tab delete head of $sendq$

\tab

\tab output $FD_\Omega(j)$

\tab precondition

\tab \tab $(\neg faulty \wedge ((j,i) = head(sendq)))$ 

\tab effect

\tab \tab delete head of $sendq$

\end{algorithm}

\subsection{Correctness}
\label{subsec:omegaExtractionCorrectness}


Fix an arbitrary fair execution $\alpha$ of the system consisting of
$A^\Omega$, the channel automata, the crash automaton, and the
well-formed environment $\mathcal{E}_C$ such that $\alpha|_{\hat{I}
  \cup O_D} \in T_D$ and at most $f$ locations crash in $\alpha$. Let
$\alpha|_{\hat{I} \cup O_D} = t_D$. Recall that AFD $\Omega_f$ behaves
exactly like $\Omega$ if at most $f$ locations crash.
Thus, it remains to show that $\alpha|_{\hat{I} \cup O_{\Omega}} \in T_{\Omega}$.

 The remainder of this section uses the following notation. Recall
 that an execution is a sequence of alternating states and actions. In
 execution $\alpha$, $\alpha_s[k]$ denotes the $k$-th state in
 $\alpha$, and $\alpha_s[k].G_i$ denotes the value of the observation
 $G_i$ in state $\alpha_s[k]$. We assume that the initial state of $\alpha$, denoted $\alpha_s[0]$, is the $0$-th state in $\alpha$.
 
 The proof is divided into three parts. 
In Section~\ref{subsubsec:proofPartOne}, we prove some basic properties of
the graphs $G_i$, where $i$ is a location, that are used in the
remainder of the proof. 
%
%
In Section~\ref{subsubsec:proofGisAreObs}, we show that each
$\alpha[k].G_i$, where $k$ is a positive integer and $i$ is a
location, is a viable observation for $D$.
In Section~\ref{subsubsec:proofLimitObsIsViable}, we show that for all
live locations $i$, the limits $G^\infty_i$ of $\alpha[k].G_i$, as $k$
approaches $\infty$, are identical and a viable observation for
$D$; therefore, we denote all $G^\infty_i$ (for all locations $i$) as $G^\infty$. 
Finally, in Section~\ref{subsubsec:proofSmallestDecisionGadget}, we
identify the ``first'' non-$\bot$ decision gadget $Y$ in $G^\infty$ and show that
for each live location $i$, eventually, $Y$ is also the first
non-$\bot$ decision gadget for $G_i$. 
Since each live process eventually detects the same decision gadget as
the ``first'', each live process eventually and permanently outputs
the same live location as the output of
$\Omega_f$. This completes the proof. 
  
\subsubsection{Properties of the graphs $G_i$ at each location $i$}
\label{subsubsec:proofPartOne}


Here we present some basic properties of the $G_i$ graphs\footnote{Although $G_i$ for each location $i$ is an observation, we have not yet shown this to be the case. Consequently, we refer to them merely as ``graphs''. We prove that the $G_i$s are observations in Theorem \ref{thm:GiIsObs}.}.
Lemma \ref{prop:subgraphNext} states that the value of $G_i$ in any state is a subgraph of its value in any later state.
For a triple $v = (i,\hat{k},e)$ that exists in some $\alpha_s[x'].G_{j'}$, let $x$ be the smallest positive integer such that $\alpha_s[x].G_j$ contains the vertex $v$ for some location $j$; then, vertex $v$ said to ``appear'' in $\alpha$ at index $x$.
Lemma \ref{lem:vertexFirstAtHomeLocation} establishes that when a new vertex $v = (i,\hat{k},e)$ ``appears'' in $\alpha$ at index $x$, $v$ is inserted into $G_i$; that is, $\alpha_s[x].G_i$ contains $v$. 
Lemma \ref{lem:vertexInsertionProp} establishes that  when $v = (i,\hat{k},e)$ first ``appears'' in $\alpha$ at index $x$ (1) $e$ precedes the state $\alpha_s[x]$ in $\alpha$, (2) the value of $k_i$ is $\hat{k} -1$, (3) $e$ is the $\hat{k}$-th $O_{D,i}$ event in $\alpha$, (4) $G_i$ does not contain any other vertex of the form $(i,\hat{k},*)$, and (5) $G_i$ contains vertices of the form $(i,k',*)$ for all $k' < \hat{k}$. Lemma \ref{lem:edgeFirstAtHomeLocation} establish that when a vertex $v$ ``appears'' in $\alpha$, all the incoming edges to $v$ are fixed and do not change thereafter. Lemma \ref{lem:containsVertexContainsSubgraph} establishes that if $v$ ``appears'' in $\alpha$ at index $x$, then for all $x' \geq x$, $\alpha_s[x].G_i$ is a subgraph of $\alpha_s[x'].G_j$. Finally, Lemma \ref{lem:edgeHappensBefore} establishes that if an edge $(v_1,v_2)$ occurs in any graph $G_i$, then the event of $v_1$ precedes the event of $v_2$ in $\alpha$.

\begin{lemma}
\label{prop:subgraphNext}
For each positive integer $x$ and each location $i$, $\alpha_s[x].G_i$ is a subgraph of $\alpha_s[x+1].G_i$.
\end{lemma}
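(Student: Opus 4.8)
The plan is to prove this by analyzing how the observation variable $G_i$ can change across a single step $(\alpha_s[x], a, \alpha_s[x+1])$ in the execution $\alpha$. The key observation is that $G_i$ is a local variable of the process automaton $A^\Omega_i$, and by inspecting Algorithm~\ref{alg:exchangeFD}, only two actions modify $G_i$: the input action $d \in O_{D,i}$ (an AFD output), which performs $insert$ of a new vertex $(i,k,d)$, and the input action $receive(obs,j)$, which replaces $G_i$ by $G_i \cup obs$. All other actions at location $i$, as well as every action at every other location $j \neq i$, leave $G_i$ untouched.

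First I would do a case analysis on the action $a$ taken in the step from $\alpha_s[x]$ to $\alpha_s[x+1]$. In the case where $a$ does not affect $G_i$ (this covers $crash_i$, $send$, $FD_\Omega(j)$, actions at other locations, and the two $G_i$-modifying actions when $faulty = true$), we have $\alpha_s[x+1].G_i = \alpha_s[x].G_i$, so the subgraph relation holds trivially (every graph is a subgraph of itself). In the case $a = d \in O_{D,i}$ with $faulty = false$, the effect is $insert(\alpha_s[x].G_i, (i,k+1,d))$; by the definition of $insert$ in Section~\ref{subsec:OperationsOnObservations}, this adds a single new vertex together with edges from every existing vertex to it, so it only adds vertices and edges and hence $\alpha_s[x].G_i$ is a subgraph of the result. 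In the case $a = receive(obs,j)$ with $faulty = false$, the effect is $\alpha_s[x+1].G_i = \alpha_s[x].G_i \cup obs$; since union is defined as $(V \cup V', Z \cup Z')$, the vertex and edge sets only grow, so again $\alpha_s[x].G_i$ is a subgraph of $\alpha_s[x+1].G_i$.

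Since the subgraph relation holds in each exhaustive case, the lemma follows. I would conclude by noting that this is essentially a direct reading of the automaton's transition relation together with the monotonicity of the $insert$ and $\cup$ operations on the vertex and edge sets. The main (and only real) subtlety to be careful about is ensuring the case analysis is genuinely exhaustive over all actions of the composed system — in particular remembering that actions at locations other than $i$, and locally controlled actions at $i$ that do not touch $G_i$, all fall into the trivial equality case, and that when $faulty$ is true the guards in the effect clauses prevent any modification of $G_i$. No step here is a genuine obstacle; the result is a straightforward structural invariant, and the proof amounts to verifying that every transition either fixes $G_i$ or only enlarges it.
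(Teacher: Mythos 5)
Your proof is correct and follows essentially the same approach as the paper, which simply observes that no action ever deletes a vertex or edge from $G_i$; your case analysis on the actions (AFD output via $insert$, $receive$ via union, and all other actions leaving $G_i$ unchanged) is just a more explicit spelling-out of that same monotonicity argument.
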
  

\begin{proof}
Fix $i$ and $x$ as in the hypotheses of the Lemma. The proof follows from the observation that no vertex and no edge in $\alpha_s[x].G_i$ is deleted in $\alpha_s[x+1].G_i$, by any action.
\end{proof}
\begin{corollary}\label{cor:subgraphFuture}
For each positive integer $x$, each location $i$, for all positive integers $x'>x$, $\alpha_s[x].G_i$ is a subgraph of $\alpha_s[x'].G_i$.
\end{corollary}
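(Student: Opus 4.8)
The plan is to prove Corollary \ref{cor:subgraphFuture} by a straightforward induction on the gap $x' - x$, using Lemma \ref{prop:subgraphNext} as the single-step building block together with the transitivity of the ``subgraph'' relation.

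First I would fix an arbitrary positive integer $x$ and an arbitrary location $i$, and then induct on the quantity $d = x' - x$ for $x' > x$ (so $d$ ranges over the positive integers). For the base case $d = 1$, i.e.\ $x' = x+1$, the claim is exactly Lemma \ref{prop:subgraphNext}: $\alpha_s[x].G_i$ is a subgraph of $\alpha_s[x+1].G_i$. For the inductive step, I would assume that $\alpha_s[x].G_i$ is a subgraph of $\alpha_s[x'].G_i$ for some $x' > x$, and then apply Lemma \ref{prop:subgraphNext} to the index $x'$ to conclude that $\alpha_s[x'].G_i$ is a subgraph of $\alpha_s[x'+1].G_i$. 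Composing these two subgraph inclusions yields that $\alpha_s[x].G_i$ is a subgraph of $\alpha_s[x'+1].G_i$, completing the induction.

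The only ingredient beyond Lemma \ref{prop:subgraphNext} is the transitivity of the subgraph relation, which is immediate: if $V_1 \subseteq V_2 \subseteq V_3$ and $Z_1 \subseteq Z_2 \subseteq Z_3$ for the respective vertex and edge sets, then $V_1 \subseteq V_3$ and $Z_1 \subseteq Z_3$. Since ``subgraph'' here simply means containment of both the vertex set and the edge set, no further structure (such as the observation properties) needs to be invoked.

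There is essentially no main obstacle here — this is a routine ``single-step implies multi-step'' induction of exactly the kind the paper repeatedly uses (cf.\ the proofs of Lemma \ref{lem:nodePersistsForEver} and Corollary \ref{cor:bivalentNodeStableInNextG}). The one point worth stating explicitly, rather than glossing over, is the transitivity of the subgraph relation, since that is what lets the per-step inclusions chain together across an arbitrary gap. Given how the paper handles its other analogous corollaries, I would expect the proof to be stated in one or two sentences, simply noting that it follows from repeated application of Lemma \ref{prop:subgraphNext}.
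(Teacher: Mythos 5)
Your proof is correct and is exactly the argument the paper intends: the paper states Corollary \ref{cor:subgraphFuture} without proof, treating it as an immediate consequence of repeated application of Lemma \ref{prop:subgraphNext}, which is precisely your induction on $x'-x$ together with transitivity of the subgraph relation. Nothing is missing; your explicit mention of transitivity is the only step the paper leaves implicit.
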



\begin{lemma}
\label{lem:vertexFirstAtHomeLocation}
For any vertex $(i,\hat{k},e)$, let $x$ be the smallest integer such that
for some location $j$, $\alpha_s[x].G_j$ contains the vertex
$(i,\hat{k},e)$. Then (1) $j=i$ and (2) event $e$ immediately precedes $\alpha_s[x]$ in $\alpha$.
\end{lemma}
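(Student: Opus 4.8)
The plan is to argue from the minimality of $x$ by a case analysis on the single action, call it $a = \alpha[x]$, whose occurrence carries the system from state $\alpha_s[x-1]$ to state $\alpha_s[x]$. By the choice of $x$, no graph $G_{j'}$ contains the vertex $v = (i,\hat{k},e)$ in state $\alpha_s[x-1]$, yet some $G_j$ contains $v$ in state $\alpha_s[x]$. Hence $a$ must be an action whose effect adds $v$ to a process's observation. Inspecting Algorithm~\ref{alg:exchangeFD}, the only two actions that alter a process's local variable $G$ are the AFD-output action \emph{input}~$d$ (with $d \in O_{D,j}$) at a location $j$, which performs $insert(G,(j,k+1,d))$, and the action $receive(obs,j')$ at a location $j$, which performs $G := G \cup obs$. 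I would treat these two cases separately.

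For the receive case, I would show it cannot produce the first appearance of $v$, contradicting minimality. The argument rests on an invariant about message contents: any observation $obs$ carried by a message is an immutable copy of the sender's graph taken at the instant that copy was appended to $sendq$. Concretely, observations are appended to $sendq$ only inside the \emph{input}~$d$ action (as the pair $(G,j')$ for each destination), at which point the appended value equals the current value of the sender's $G$; thereafter the value is never modified --- it is merely moved from $sendq$ into the channel by a $send$ action and delivered by $receive$. Therefore, if $receive(obs,j')$ occurs as the action $a$ at index $x$ and $v \in obs$, then $obs$ equals $\alpha_s[x''].G_{j'}$ for some index $x'' < x$ (the append happens strictly before delivery, since appending, sending, and receiving are distinct events ordered in time). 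Consequently $v \in \alpha_s[x''].G_{j'}$ with $x'' < x$, contradicting the minimality of $x$. Here I would invoke Corollary~\ref{cor:subgraphFuture} only to record that graphs grow monotonically, which keeps the snapshot view consistent, but the essential point is the immutability of the copied observation.

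For the AFD-output case, the conclusion is immediate. Since $a$ is \emph{input}~$d$ occurring at a location $j$ and its effect inserts the single new vertex $(j,k+1,d)$ with $d \in O_{D,j}$, and since this inserted vertex must be $v = (i,\hat{k},e)$, matching the location, index, and action components forces $j = i$ and $d = e$; this establishes part~(1). Moreover, $a$ is literally the event $e$, and $a = \alpha[x]$ is the action immediately preceding the state $\alpha_s[x]$ in $\alpha$, which establishes part~(2).

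The main obstacle I anticipate is making the message-content invariant fully rigorous, since it requires tracing the life of an observation through the FIFO channel automata and the $sendq$ queue and confirming that no action between the append to $sendq$ and the eventual delivery can mutate the copied observation. This is a routine but slightly tedious invariant argument; once it is in hand, the two-case analysis above closes the proof.
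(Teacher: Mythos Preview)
Your proposal is correct and follows essentially the same approach as the paper: a case analysis on the action preceding $\alpha_s[x]$, eliminating the $receive$ case by tracing the received observation back to an earlier snapshot of the sender's graph (contradicting minimality of $x$), and concluding from the $O_{D,j}$ case that $j=i$ and $a=e$. The paper's version is more terse, appealing directly to reliable FIFO channel behavior rather than spelling out the $sendq$ invariant, but the logical structure is identical.
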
 

\begin{proof}
Fix $(i,\hat{k},e)$, $x$, and $j$ as in the hypotheses of the lemma. Therefore,
$\alpha_s[x-1].G_j$ does not contain the vertex $(i,\hat{k},e)$ and
$\alpha_s[x].G_j$ contains the vertex $(i,\hat{k},e)$. Let $a$ be the action
that occurs between states $\alpha_s[x-1]$ and $\alpha_s[x]$ in
$\alpha$.

First, we prove part 1 of the lemma.
From the pseudocode, we know that $a$ is either an action in $O_{D,j}$
or an action of the form $receive(*,*)_{j}$. In the former case, we
see that $j=i$. We show that the latter case is impossible.

For contradiction, assume that $a$ is an action of the form
$receive(observe,j')_j$. From the pseudocode, we see that $observe$
contains vertex $(i,\hat{k},e)$. However, from the reliable FIFO behavior of
the channel automata, we know the process automaton at $j'$ must have
sent the message $observe$ containing vertex $(i,\hat{k},e)$ before state
$\alpha_s[x-1]$ in $\alpha$. Let this occur in state $\alpha_s[x^-]$,
where $x^- < x$. Therefore, $\alpha_s[x^-].G_{j'}$ contains vertex
$(i,\hat{k},e)$, which contradicts our assumption that $x$ is the smallest
integer such that for some location $j$, $\alpha_s[x].G_j$ contains
the vertex $(i,\hat{k},e)$; this establishes part 1 of the lemma. 

Also, we see 
that $a$ must be an action in $O_{D,j}$, and from the pseudocode,
we conclude that $a = e$; this establishes part 2 of the lemma.
\end{proof}

 \begin{lemma}\label{lem:vertexInsertionProp}
 For any vertex $(i,\hat{k},e)$, let $x$ be the smallest integer such that $\alpha_s[x].G_i$ contains the vertex $(i,\hat{k},e)$.
The following are true.
\begin{enumerate}
\item $\alpha_s[x-1].k_i = \hat{k}-1$.
\item $e = \alpha|_{O_{D,i}}[\hat{k}]$
\item $\alpha_s[x-1].G_i$ does not contain any other vertex of the form $(i,\hat{k},*)$.
\item For each positive integer $k' < \hat{k}$, $\alpha_s[x-1].G_i$ contains one vertex of the form $(i,k',*)$.
\end{enumerate}
\end{lemma}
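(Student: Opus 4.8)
The plan is to reduce everything to the single action that creates the vertex $(i,\hat k,e)$, together with one auxiliary invariant describing how the counter $k_i$ and the location-$i$ vertices of $G_i$ evolve in lockstep. First I would invoke Lemma \ref{lem:vertexFirstAtHomeLocation}: since the earliest state of any $G_j$ containing $(i,\hat k,e)$ occurs at $j=i$, the index $x$ singled out for $G_i$ here coincides with the global first-appearance index, and the action occurring between $\alpha_s[x-1]$ and $\alpha_s[x]$ is exactly the AFD-output event $e\in O_{D,i}$ firing in $A^\Omega_i$. Because the vertex is actually inserted, the non-faulty branch of the ``input $d$'' action runs: it increments the counter, so $\alpha_s[x].k_i=\alpha_s[x-1].k_i+1$, and inserts the single vertex $(i,\alpha_s[x].k_i,e)$. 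As this is the only vertex added between the two states, it must equal $(i,\hat k,e)$, giving $\hat k=\alpha_s[x].k_i$ and hence $\alpha_s[x-1].k_i=\hat k-1$, which is Part 1.

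Next I would establish, by induction on the state index, the auxiliary invariant that in every reachable state of $\alpha$ and for every location $i$: (i) $\alpha_s[y].k_i$ equals the number of $O_{D,i}$ events preceding $\alpha_s[y]$ in $\alpha$, and (ii) the set of vertices of $\alpha_s[y].G_i$ whose location is $i$ is exactly $\{(i,k',\alpha|_{O_{D,i}}[k'])\mid 1\le k'\le \alpha_s[y].k_i\}$, one per index. In the base case $G_i$ is empty and $k_i=0$. For the inductive step from $\alpha_s[y-1]$ to $\alpha_s[y]$ there are three kinds of actions: an $O_{D,i}$ input increments $k_i$ by one and inserts the next-indexed location-$i$ vertex, whose action is by construction the current (hence $(\alpha_s[y].k_i)$-th) $O_{D,i}$ event, preserving (i) and (ii); a $receive(obs,j)_i$ performs $G_i:=G_i\cup obs$; and all remaining actions (including $crash_i$ and actions at other locations) leave $G_i$ and $k_i$ untouched. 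I would also note that by validity of $\alpha|_{\hat I\cup O_D}\in T_D$ no $O_{D,i}$ event follows $crash_i$, so the ``not faulty'' guard never discards a counted event and (i) holds exactly.

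The crux, and the step I expect to be the main obstacle, is the $receive$ case: I must show that $G_i\cup obs$ introduces no fresh location-$i$ vertex, so that (i) and (ii) survive. Here $obs$ is the value $\alpha_s[w].G_j$ of some sender $j$ at an index $w\le y-1$ (the message is sent before it is received). If $obs$ contained a location-$i$ vertex, then by Lemma \ref{lem:vertexFirstAtHomeLocation} that vertex first appeared in its home graph $G_i$ at some index no later than $w$, and by Corollary \ref{cor:subgraphFuture} it therefore already lies in $\alpha_s[y-1].G_i$; hence the union contributes nothing of location $i$, keeping the location-$i$ vertices and $k_i$ fixed.

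Finally I would read off Parts 2--4 from the invariant evaluated at $\alpha_s[x-1]$, where Part 1 supplies $\alpha_s[x-1].k_i=\hat k-1$. Clause (ii) then asserts that the location-$i$ vertices of $\alpha_s[x-1].G_i$ are exactly one vertex $(i,k',*)$ for each $k'$ with $1\le k'\le \hat k-1$ and none of index $\hat k$; this gives Part 4 and, a fortiori, Part 3 (there is not merely no \emph{other} index-$\hat k$ vertex, there is none at all). Clause (i) identifies the action of the index-$\hat k$ vertex being inserted as the $\hat k$-th $O_{D,i}$ event, i.e.\ $e=\alpha|_{O_{D,i}}[\hat k]$, which is Part 2. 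The routine per-action verifications in the inductive step I would leave to a short case analysis mirroring the pseudocode of Algorithm \ref{alg:exchangeFD}.
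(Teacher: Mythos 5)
Your proof is correct, but it takes a genuinely different route from the paper's. The paper proves all four parts simultaneously by induction on the vertex index $\hat{k}$: the base case handles $(i,1,e)$, and the inductive step passes from the first-appearance state of $(i,\hat{k},e)$ to that of $(i,\hat{k}+1,e')$, using Lemma \ref{lem:vertexFirstAtHomeLocation} to identify the inserting event and the inductive hypothesis to carry the four claims forward. You instead prove Part 1 directly (first appearance is at the home location, so the transition into $\alpha_s[x]$ is the event $e$ itself, and the pseudocode forces $\hat{k} = \alpha_s[x-1].k_i + 1$), and then establish a strictly stronger invariant by induction on the position in $\alpha$: in every reachable state, $k_i$ equals the number of $O_{D,i}$ events so far, and the location-$i$ vertices of $G_i$ are exactly $(i,k',\alpha|_{O_{D,i}}[k'])$ for $1 \le k' \le k_i$; Parts 2--4 are then read off at $\alpha_s[x-1]$. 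The main thing your route buys is that it makes explicit a step the paper leaves implicit: that a $receive$ event (the union $G_i \cup obs$) can never introduce a location-$i$ vertex not already present in $G_i$ --- your argument via Lemma \ref{lem:vertexFirstAtHomeLocation} together with Corollary \ref{cor:subgraphFuture} closes exactly the case where the paper's inductive step silently assumes that, between consecutive $O_{D,i}$ events, no location-$i$ vertex enters $G_i$ from outside. You also correctly flag that your invariant (i) needs validity of $\alpha|_{\hat{I} \cup O_D} \in T_D$ (no $O_{D,i}$ event after $crash_i$), so the faulty-guard never skips a counted event --- another point the paper uses tacitly. The cost is that your invariant is global and requires a per-action case analysis over the whole of Algorithm \ref{alg:exchangeFD}, whereas the paper's induction on $\hat{k}$ is shorter and stays closer to the statement being proved.
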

\begin{proof}
Fix $i$, $v=(i,\hat{k},e)$ and $x$ as in the hypotheses of the lemma. We prove the lemma by induction on $\hat{k}$.

\emph{Base Case.} Let $\hat{k}=1$. When the first event $e$ from $O_{D,i}$
occurs in $\alpha$, from the pseudocode, we see that the vertex
$(i,1,e)$ is added to $G_i$. Therefore, for vertex $(i,1,e)$, let $x$
be the smallest integer such that $\alpha_s[x].G_i$ contains the
vertex $(i,1,e)$. From the pseudocode, we see that (1) $\alpha_s[x-1].k_i = 0$.
Since $e$ is the first event from $O_{D,i}$, (2) $e = \alpha|_{O_{D,i}}[1]$. Note that (3) $\alpha_s[x-1].G_i$ does not contain any vertex of the form $(i,1,*)$. Property 4 is satisfied vacuously.

\emph{Inductive Hypothesis.}  For any vertex $(i,\hat{k},e)$, let $x$ be the
smallest integer such that $\alpha_s[x].G_i$ contains the vertex
$(i,\hat{k},e)$. Then the following is true.
\begin{enumerate}
\item $\alpha_s[x-1].k_i = \hat{k}-1$.
\item $e = \alpha|_{O_{D,i}}[\hat{k}]$.
\item $\alpha_s[x-1].G_i$ does not contain any other vertex of the form $(i,\hat{k},*)$.
\item For each positive integer $k' < \hat{k}$, $\alpha_s[x-1].G_i$ contains one vertex of the form $(i,k',*)$.
\end{enumerate}

\emph{Inductive Step.} 
Let $x'$ be the smallest integer such that $\alpha_s[x'].G_i$ contains the vertex
$(i,\hat{k}+1,e')$ for some $e'$. Applying Lemma \ref{lem:vertexFirstAtHomeLocation}, we know that for every other location $j$ and all $x'' \leq x'$, $\alpha_s[x''].G_j$ does not contain the vertex $(i,\hat{k}+1,e')$ and the event preceding $\alpha_s[x']$ is event $e'$.
From the pseudocode, we see that $e' \in O_{D,i}$, and since any action from $O_{D,i}$ increments $k_i$ by $1$, we conclude that (1) $\alpha_s[x'-1].k_i = \hat{k}$. Also, since $k_i$ is updated only when an action from $O_{D,i}$ occurs, $e = \alpha|_{O_{D,i}}[\hat{k}]$, and when $e'$ occurs, vertex $(i,\hat{k}+1,e')$ is inserted to $G_i$, we conclude that (2) $e' = \alpha|_{O_{D,i}}[\hat{k}+1]$. 

From the inductive hypothesis we know that $\alpha_s[x-1].k_i = \hat{k}-1$. Since $e \in O_{D,i}$, and any action from $O_{D,i}$ increments $k_i$, we know that $\alpha_s[x].k_i = \hat{k}$. We have already established that $\alpha_s[x'-1].k_i = \hat{k}$. Therefore, $e'$ is the earliest event from $O_{D,i}$ that follows $e$. That is, (3) $e = \alpha|_{O_{D,i}}[\hat{k}+1]$.

By the inductive hypothesis, we know that each positive integer $k' < \hat{k}$, $\alpha_s[x-1].G_i$ contains one vertex of the form $(i,k',*)$. We have established that $e'$ is the earliest event from $O_{D,i}$ that follows $e$. Therefore, $\alpha_s[x'-1].G_i$ contains exactly one event of the form $(i,\hat{k},*)$, which is $(i,\hat{k},e)$. Therefore, (4) for each positive integer $k' < \hat{k}+1$, $\alpha_s[x'-1].G_i$ contains one vertex of the form $(i,k',*)$.
\end{proof}


\begin{lemma}\label{lem:edgeFirstAtHomeLocation}
For any location $j$, any positive integer $x$, and any pair of
vertices $u$ and $v = (i,k,e)$ such that $\alpha_s[x].G_j$ contains
the edge $(u,v)$, the following is true. Let $x'$ be the smallest
positive integer such that $\alpha_s[x'].G_i$ contains the vertex
$v$. 
Then $\alpha_s[x'].G_i$ contains the edge $(u,v)$.
\end{lemma}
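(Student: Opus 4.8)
The plan is to show that the full set of incoming edges of the vertex $v = (i,k,e)$ is created at the single instant $v$ is inserted into $G_i$, and that no later action ever adds a new incoming edge to $v$ in any graph $G_j$. First I would invoke Lemma \ref{lem:vertexFirstAtHomeLocation} to record two facts about $x'$: it is the smallest index over \emph{all} locations at which $v$ appears, and the action occurring between $\alpha_s[x'-1]$ and $\alpha_s[x']$ is the output action $e \in O_{D,i}$ at location $i$, whose effect is $insert(\alpha_s[x'-1].G_i, v)$. By the definition of the insertion operation in Section \ref{subsec:OperationsOnObservations}, this step adds $v$ together with exactly the edge set $\mathcal{Z}_v = \set{(w,v) \mid w \text{ is a vertex of } \alpha_s[x'-1].G_i}$; since $v$ is absent before $x'$, the set $\mathcal{Z}_v$ is precisely the collection of incoming edges of $v$ in $\alpha_s[x'].G_i$. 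It therefore suffices to prove the invariant that every edge ending at $v$ that appears in any graph $G_j$ at any state lies in $\mathcal{Z}_v$, for then the hypothesized edge $(u,v) \in \alpha_s[x].G_j$ satisfies $(u,v) \in \mathcal{Z}_v \subseteq \alpha_s[x'].G_i$, which is the conclusion.

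I would establish this invariant by a minimal-counterexample argument on the global state index. Suppose, toward a contradiction, that $x$ is the smallest index such that, for some location $j$, the graph $\alpha_s[x].G_j$ contains an edge $(u,v)$ ending at $v$ with $(u,v) \notin \mathcal{Z}_v$. Because an edge is present only when both endpoints are, and $v$ first appears globally at $x'$, we have $x \geq x'$; moreover $x \neq x'$, since at state $x'$ only $G_i$ contains $v$ (only the insertion action modifies a graph in that transition), and its incoming edges of $v$ are exactly $\mathcal{Z}_v$. Hence $x > x'$. By minimality of $x$, the edge $(u,v)$ is absent from $\alpha_s[x-1].G_j$, so the action taken between $\alpha_s[x-1]$ and $\alpha_s[x]$ is the one that introduced $(u,v)$ into $G_j$. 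Only two kinds of actions modify a graph: an insertion and a union (triggered by a $receive$).

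If the action is an insertion of a vertex $w$ into $G_j$, then by the definition of $insert$ the only new edges are those ending at $w$. Since each triple $(i,k,*)$ arises from a unique $k$-th output of $D$ at $i$ (Lemma \ref{lem:vertexInsertionProp}), the vertex $v$ is inserted exactly once, namely at $x' < x$, so $w \neq v$; hence this step adds no edge ending at $v$, contradicting that it introduced $(u,v)$. If instead the action is $receive(obs,j')$ performing $G_j := G_j \cup obs$, then $(u,v) \in obs$. The main obstacle is handled precisely here: I must argue that the received observation was genuinely produced earlier, which follows from the reliable FIFO behavior of the channels, exactly as in the send-before-receive step of Lemma \ref{lem:vertexFirstAtHomeLocation}. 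That argument gives $obs = \alpha_s[\hat{x}].G_{j'}$ for some $\hat{x} < x$, so $\alpha_s[\hat{x}].G_{j'}$ already contains the edge $(u,v) \notin \mathcal{Z}_v$ at a state earlier than $x$, contradicting the minimality of $x$. With both cases excluded, the invariant holds, and the lemma follows immediately.
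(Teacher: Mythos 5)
Your proposal is correct and follows essentially the same route as the paper's proof: both arguments track the earliest global appearance of an incoming edge of $v$, anchor that appearance at $x'$ via Lemma \ref{lem:vertexFirstAtHomeLocation}, and derive a contradiction in the receive case from the FIFO send-before-receive behavior of the channels. The only notable difference is that you make explicit the case the paper leaves implicit, namely that an insertion occurring after $x'$ cannot add an edge ending at $v$, since insertions only create edges into the newly inserted vertex and $v$ is inserted exactly once.
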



\begin{proof}
Fix $j$, $x$, $u$, $v = (i,k,e)$, and $x'$ as in the hypotheses of the
lemma. Let $x_{min}$ be the smallest positive integer such that for
some location $j'$, $\alpha_s[x_{min}].G_{j'}$ contains the edge
$(u,v)$. Applying Lemma \ref{lem:vertexFirstAtHomeLocation}, 
we know that $x_{min} \geq x'$. If $x_{min} > x'$, then note that the edge
$(u,v)$ is added to $G_{j'}$ by an action of the form
$receive(observe,j'')_{j'}$, where $observe$ contains the edge
$(u,v)$. However, this implies that for some $x_{prev} < x_{min}$,
$\alpha_s[x_{prev}].G_{j''}$ contains the edge $(u,v)$, and this
contradicts the definition of $x_{min}$. Therefore, $x_{min} =
x'$. Applying Lemma \ref{lem:vertexFirstAtHomeLocation}, 
we know that
$j'=i$. Therefore, $\alpha_s[x'].G_i$ contains the edge $(u,v)$.
\end{proof}

\begin{lemma}\label{lem:containsVertexContainsSubgraph}
For any vertex $(i,k,e)$, let $x$ be the smallest integer such that
for some location $j$, $\alpha_s[x].G_j$ contains the vertex
$(i,k,e)$. For any location $j'$ and any positive integer $x'$ such
that $\alpha_s[x'].G_{j'}$ contains the vertex $(i,k,e)$,
$\alpha_s[x].G_j$ is a subgraph of $\alpha_s[x'].G_{j'}$.
\end{lemma}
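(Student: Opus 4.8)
The plan is to prove the statement by strong induction on $x'$, exploiting the fact that in $A^\Omega$ observation values travel only as complete, frozen snapshots appended to $sendq$ during AFD-output actions, together with the monotonicity of each local graph (Lemma \ref{prop:subgraphNext}) and the componentwise definition of the union operation from Section \ref{subsec:OperationsOnObservations}. Throughout, $v = (i,k,e)$ and $x$ is the global first index at which $v$ appears in any $G_j$, so in particular $x \geq 1$.

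First I would fix the inductive hypothesis: for every location $j''$ and every positive integer $y < x'$, if $\alpha_s[y].G_{j''}$ contains $v$, then $\alpha_s[x].G_i$ is a subgraph of $\alpha_s[y].G_{j''}$. Given $j'$ and $x'$ with $\alpha_s[x'].G_{j'}$ containing $v$, I split on whether $\alpha_s[x'-1].G_{j'}$ already contains $v$. If it does, then $x'-1 \geq x \geq 1$ (since $x$ is the earliest index at which $v$ appears anywhere), so the inductive hypothesis applies to $(j', x'-1)$ and gives $\alpha_s[x].G_i \subseteq \alpha_s[x'-1].G_{j'}$, whence Lemma \ref{prop:subgraphNext} yields $\alpha_s[x].G_i \subseteq \alpha_s[x'].G_{j'}$.

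The substantive case is when the transition from $\alpha_s[x'-1]$ to $\alpha_s[x']$ is the one that first introduces $v$ into $G_{j'}$. Inspecting the pseudocode of $A^\Omega_i$, only two actions can add a vertex to a local graph: an AFD-output action $d \in O_{D,i}$ (which can occur only at location $i$ and inserts a single vertex), and a $receive(obs,j'')_{j'}$ action (which replaces $G_{j'}$ by $G_{j'} \cup obs$). In the AFD-output sub-case we must have $j' = i$, and since by Lemma \ref{lem:vertexFirstAtHomeLocation} the global first appearance of $v$ is at its home location $i$, this insertion occurs precisely at $x' = x$; then $\alpha_s[x'].G_{j'} = \alpha_s[x].G_i$ and the conclusion is immediate. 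In the receive sub-case $obs$ must contain $v$, because $\alpha_s[x'-1].G_{j'}$ does not and the union is the only change. As observation values are sent only as copies of the sender's graph frozen during an AFD-output action, and the channels are reliable FIFO, we have $obs = \alpha_s[y].G_{j''}$ for some index $y < x'$ at which an AFD-output occurred at $j''$; in particular $\alpha_s[y].G_{j''}$ contains $v$. The inductive hypothesis applied to $(j'', y)$ gives $\alpha_s[x].G_i \subseteq obs$, and then $\alpha_s[x'].G_{j'} = \alpha_s[x'-1].G_{j'} \cup obs \supseteq obs \supseteq \alpha_s[x].G_i$, completing the induction.

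I expect the main obstacle to be the bookkeeping in the receive sub-case: establishing that the received $obs$ is exactly a snapshot $\alpha_s[y].G_{j''}$ with $y < x'$ rather than some partially assembled graph. This requires reading off from the pseudocode that the only writes of observation values to $sendq$ occur inside the AFD-output action and that the appended copy is frozen at that instant, and then invoking the reliable-FIFO behaviour of the channel automata to place the snapshot strictly before the matching receive. Once this provenance is pinned down, the definition of union and Lemma \ref{prop:subgraphNext} finish the argument mechanically; note that neither Lemma \ref{lem:edgeFirstAtHomeLocation} nor Lemma \ref{lem:vertexInsertionProp} is needed, since monotone propagation of whole snapshots already carries all of $\alpha_s[x].G_i$ along with $v$.
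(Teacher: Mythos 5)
Your proof is correct and takes essentially the same approach as the paper: the paper argues by taking a minimal counterexample index $x'$ and deriving a contradiction from the provenance of the received observation as an earlier frozen snapshot $\alpha_s[x''].G_{j''}$, which is exactly the contrapositive of your strong induction (same case split on home-location insertion versus receive, same use of Lemma \ref{lem:vertexFirstAtHomeLocation} and monotonicity). If anything, your explicit Case 1 (vertex already present at $x'-1$) justifies a step the paper leaves implicit when it asserts that $x'$ is the first index at which $G_{j'}$ contains the vertex.
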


\begin{proof}
Fix $(i,k,e)$, $x$, and $j$ as in the hypotheses of the lemma. Applying Lemma \ref{lem:vertexFirstAtHomeLocation} we know that $j=i$.

For contradiction, assume there exists a location $j'$ and a positive integer $x'$ such that $\alpha_s[x'].G_{j'}$ contains the vertex $(i,k,e)$, but $\alpha_s[x].G_j$ is not a subgraph of $\alpha_s[x'].G_{j'}$. Fix the smallest such $x'$ and the corresponding location $j'$ such that $\alpha_s[x'].G_{j'}$ contains the vertex $(i,k,e)$. 

From the definition of $x$ we know that $x' \geq x$. Applying Corollary \ref{cor:subgraphFuture}, we know that $\alpha_s[x].G_{i}$ is a subgraph of $\alpha_s[x'].G_{i}$, and therefore  $j' \neq i$. 

Since $x'$ is the smallest integer such that $\alpha_s[x'].G_{j'}$ contains the vertex $(i,k,e)$ and $j' \neq i$, we conclude that the action preceding $\alpha_s[x'].G_{j'}$ in $\alpha$ is an action of the form $receive(observe, j'')_{j'}$, where $observe$ contains the vertex $(i,k,e)$ and $\alpha_s[x].G_{i}$ is not a subgraph of $observe$. Fix the location $j''$. Therefore, there exists a positive integer $x'' < x'$ such that $\alpha_s[x''].G_{j''}$ contains the vertex $(i,k,e)$ and $\alpha_s[x].G_{i}$ is not a subgraph of $\alpha_s[x''].G_{j''}$. This contradicts the definition of $x'$.
\end{proof}

\begin{lemma}
\label{lem:edgeHappensBefore}
For any edge $(v_1,v_2)$ in $\alpha_s[x].G_i$, the event $e_1$ occurs
before event $e_2$ in $\alpha$, where $v_1= (i_1,k_1,e_1)$ and $v_2 =
(i_2,k_2,e_2)$.
\end{lemma}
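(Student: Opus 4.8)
The plan is to trace the edge $(v_1,v_2)$ back to the moment its head vertex $v_2$ is created, and to argue that at that moment its tail $v_1$ must already be present, so that $v_1$'s generating event necessarily precedes $v_2$'s. First I would set $v_1 = (i_1,k_1,e_1)$ and $v_2 = (i_2,k_2,e_2)$ and let $x_2$ be the appearance index of $v_2$, i.e., the smallest integer such that $\alpha_s[x_2].G_j$ contains $v_2$ for some location $j$. By Lemma \ref{lem:vertexFirstAtHomeLocation}, $j = i_2$ (the vertex first appears at its home location) and the event $e_2$ immediately precedes the state $\alpha_s[x_2]$ in $\alpha$.

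Next, applying Lemma \ref{lem:edgeFirstAtHomeLocation} to the edge $(v_1,v_2)$ in $\alpha_s[x].G_i$, I would conclude that this same edge is already present in $\alpha_s[x_2].G_{i_2}$, since $x_2$ is exactly the least index at which $v_2$ occurs in $G_{i_2}$. The action occurring between $\alpha_s[x_2-1]$ and $\alpha_s[x_2]$ is the AFD output $e_2$ at $i_2$, which by Algorithm \ref{alg:exchangeFD} triggers the insertion of $v_2$ into $G_{i_2}$. The key structural observation is that the insert operation (Section \ref{subsec:OperationsOnObservations}) adds exactly one new vertex, $v_2$, together with edges directed from every previously present vertex into $v_2$; in particular it creates no edge whose tail is $v_2$ and no new vertex other than $v_2$. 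Consequently the tail $v_1$ of the edge $(v_1,v_2)$ must already lie in $\alpha_s[x_2-1].G_{i_2}$.

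Then I would let $x_1$ be the appearance index of $v_1$. Since $v_1 \in \alpha_s[x_2-1].G_{i_2}$, we have $x_1 \leq x_2 - 1 < x_2$, and Lemma \ref{lem:vertexFirstAtHomeLocation} gives that $e_1$ immediately precedes $\alpha_s[x_1]$. Because $x_1 < x_2$, the event $e_1$ (the action between $\alpha_s[x_1-1]$ and $\alpha_s[x_1]$) occurs strictly before the event $e_2$ (the action between $\alpha_s[x_2-1]$ and $\alpha_s[x_2]$) in $\alpha$, which is exactly the claim.

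The main obstacle, and the step deserving the most care, is justifying that the edge $(v_1,v_2)$ is genuinely produced by the insert of $v_2$ rather than being imported by a union with a received observation, and that its tail is therefore an older vertex. This rests on two facts: that $v_2$ first appears in the whole execution precisely at its home location (Lemma \ref{lem:vertexFirstAtHomeLocation}), so no union could have carried in an edge incident to $v_2$ before $x_2$; and that the insert operation is the unique mechanism adding $v_2$ and only ever draws edges into $v_2$ from pre-existing vertices. Once these are pinned down, the index inequality $x_1 < x_2$ and hence the temporal ordering of $e_1$ and $e_2$ follow immediately.
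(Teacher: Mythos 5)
Your proof is correct, and its skeleton matches the paper's: both arguments pin down the first-appearance indices $x_1,x_2$ of $v_1,v_2$ using Lemma \ref{lem:vertexFirstAtHomeLocation} (so that $e_1$ immediately precedes $\alpha_s[x_1]$ and $e_2$ immediately precedes $\alpha_s[x_2]$), both use Lemma \ref{lem:edgeFirstAtHomeLocation} to place the edge $(v_1,v_2)$ in $\alpha_s[x_2].G_{i_2}$, and both conclude by comparing indices. The genuine difference is in how $x_1 < x_2$ is obtained. The paper observes that $\alpha_s[x_2].G_{i_2}$ contains $v_1$, invokes Lemma \ref{lem:containsVertexContainsSubgraph} to get that $\alpha_s[x_1].G_{i_1}$ is a subgraph of $\alpha_s[x_2].G_{i_2}$, and then asserts $x_1 < x_2$ ``from the definition of $x_1$ and $x_2$'' --- which directly yields only $x_1 \le x_2$, with strictness resting on the implicit fact that a single step inserts a single vertex. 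You instead unfold the insert operation at step $x_2$: since the action preceding $\alpha_s[x_2]$ is the AFD output $e_2$, we have $\alpha_s[x_2].G_{i_2} = insert(\alpha_s[x_2-1].G_{i_2}, v_2)$, and by the definition in Section \ref{subsec:OperationsOnObservations} the only edges incident to $v_2$ that this creates have tails among the pre-existing vertices; hence $v_1$ already lies in $\alpha_s[x_2-1].G_{i_2}$ and $x_1 \le x_2-1$ outright. Your route avoids Lemma \ref{lem:containsVertexContainsSubgraph} entirely and delivers the strict inequality cleanly, at the cost of reasoning about the data-structure operation rather than reusing the subgraph machinery the paper needs elsewhere anyway. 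One caveat, shared equally by the paper: both arguments tacitly use that the maintained graphs are well-formed (an edge is present only when both its endpoints are vertices), so that the edge $(v_1,v_2)$ cannot sit in $\alpha_s[x_2-1].G_{i_2}$ while $v_2$ is absent; this invariant is immediate by induction over the insert and union operations, but neither proof states it.
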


\begin{proof}
Fix $v_1= (i_1,k_1,e_1)$ and $v_2 = (i_2,k_2,e_2)$, as in the hypotheses of the lemma.

Applying Lemma \ref{lem:vertexFirstAtHomeLocation}, we know that there
exists a positive integer $x_1$ such that (1) $\alpha_s[x_1].G_{i_1}$
contains vertex $v_1$, (2) for each positive integer $x'_1 < x_1$,
$\alpha_s[x'_1].G_{i_1}$ does not contain $v_1$, and (3) for each
positive integer $x'_1 \leq x_1$ and every other location $j$,
$\alpha_s[x'_1].G_{j}$ does not contain the vertex $v_1$.

Similarly, applying Lemma \ref{lem:vertexFirstAtHomeLocation}, we know
that there exists a positive integer $x_2$ such that (1)
$\alpha_s[x_2].G_{i_2}$ contains vertex $v_2$, (2) for each positive
integer $x'_2 < x_2$, $\alpha_s[x'_2].G_{i_2}$ does not contain the
vertex $v_2$, and (3) for each positive integer $x'_2 \leq x_2$ and
every other location $j$, $\alpha_s[x'_2].G_{j}$ does not contain the
vertex $v_2$. From Lemma \ref{lem:edgeFirstAtHomeLocation} we know
that $\alpha_s[x_2].G_{i_2}$ also contains the edge $(v_1,v_2)$.

Therefore, $\alpha_s[x_1].G_{i_1}$ contains vertex $v_1$ and does not
contain $v_2$, whereas $\alpha_s[x_2].G_{i_2}$ contains vertices $v_1$
and $v_2$. Applying Lemma \ref{lem:containsVertexContainsSubgraph}, we
know that  $\alpha_s[x_1].G_{i_1}$ is a subgraph of
$\alpha_s[x_2].G_{i_2}$. From the definition of $x_1$ and $x_2$, we
know that $x_1 < x_2$. Note that $v_1$ is added to
$\alpha_s[x_1].G_{i_1}$ when event $e_1$ occurs in $\alpha$ after
state $\alpha_s[x_1-1]$, and similarly, $v_2$ is added to
$\alpha_s[x_2].G_{i_1}$ when event $e_2$ occurs in $\alpha$ after
state $\alpha_s[x_2-1]$. Therefore, $e_1$ occurs before $e_2$ in
$\alpha$.
\end{proof}

%


%


\subsubsection{For each location $i$, $G_i$ is an observation}
\label{subsubsec:proofGisAreObs}

In this subsection, we prove in Theorem \ref{thm:GiIsObs} that for each location $i$ and each positive
integer $x$, $\alpha_s[x].G_i$ is an observation for $D$. We use two three lemmas to prove the main result. In Lemma \ref{lem:insertYeildsObservation}, we prove that for any location $i$, if the graph $G_i$ is an observation and an event from $O_{D,i}$ occurs, then in the resulting state, $G_i$ is an observation. In Lemma \ref{lem:sameIndexSameEvent}, we show that for any two graphs $\alpha_s[x].G_j$ and $\alpha_s[x'].G_{j'}$, and for every vertex $v = (i,k,e)$ from $\alpha_s[x].G_j$, either $\alpha_s[x'].G_{j'}$ also contains $v$, or $\alpha_s[x'].G_{j'}$ does not contain any vertex of the form $(i,k,*)$. In Lemma \ref{lem:sameVertexSameIncomingEdges}, we show that for any two graphs $\alpha_s[x].G_j$ and $\alpha_s[x'].G_{j'}$, for any vertex $v$ that is in both $\alpha_s[x].G_j$ and $\alpha_s[x'].G_{j'}$, $v$ has the same set of incoming edges in both $\alpha_s[x].G_j$ and $\alpha_s[x'].G_{j'}$.

\begin{lemma}\label{lem:insertYeildsObservation}
For any location $i$ and a positive integer $x$, if $\alpha_s[x].G_i$ is an observation and the event $e$ between $\alpha_s[x]$ and $\alpha_s[x+1]$ in $\alpha$ is an event from $O_{D,i}$  then $\alpha_s[x+1].G_i$ is an observation.
\end{lemma}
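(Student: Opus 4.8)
The plan is to reduce the claim to Lemma \ref{prop:insertVertexYeildsObservation}, which already asserts that applying the $insert$ operation to a finite observation produces a finite observation. The key point is that when an event $e \in O_{D,i}$ occurs at location $i$, the pseudocode of $A^\Omega_i$ alters $G_i$ in exactly one way: it increments the counter $k_i$ and then performs $insert(G_i,(i,k_i,e))$ with the incremented value of $k_i$. Hence, writing $\kappa = \alpha_s[x].k_i$, we have $\alpha_s[x+1].G_i = insert(\alpha_s[x].G_i,(i,\kappa+1,e))$. (If the process is already crashed, the guarded effect clause does not fire and $\alpha_s[x+1].G_i = \alpha_s[x].G_i$, which is an observation by hypothesis; so the only interesting case is the non-faulty one, which is also the only case consistent with $e$ occurring in the valid trace $t_D$.)

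To invoke Lemma \ref{prop:insertVertexYeildsObservation}, I would verify its two hypotheses. First, $\alpha_s[x].G_i$ must be a finite observation: it is an observation by assumption, and it is finite because it is the result of finitely many $insert$ and $\cup$ operations applied to the initially empty graph, each contributing finitely many vertices and edges. Second, the inserted vertex's index must be exactly one more than the largest index of any location-$i$ vertex already present in $\alpha_s[x].G_i$. For this I would establish the invariant that $\kappa = \alpha_s[x].k_i$ equals the largest index $k^*$ among location-$i$ vertices of $\alpha_s[x].G_i$ (with $k^* = 0$ when none exist). This is precisely where Lemma \ref{lem:vertexInsertionProp} does the work: its part~4 guarantees that the location-$i$ vertices created so far carry exactly the contiguous indices $1,\dots,\kappa$, its part~1 links each such index to the value of $k_i$ immediately after the creating $O_{D,i}$ event, and $k_i$ changes only at $O_{D,i}$ events. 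Consequently the vertex $(i,\kappa+1,e)$ being inserted has index $k^*+1$, matching the definition of $insert$.

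With both hypotheses checked, Lemma \ref{prop:insertVertexYeildsObservation} yields that $insert(\alpha_s[x].G_i,(i,\kappa+1,e)) = \alpha_s[x+1].G_i$ is a finite observation, which completes the proof. The only genuine content lies in the second hypothesis — confirming that the local counter $k_i$ faithfully tracks the largest location-$i$ index in $G_i$ — and I expect this counter-tracking step to be the main obstacle, although Lemma \ref{lem:vertexInsertionProp} reduces it to routine bookkeeping over the pseudocode; everything else follows immediately from the cited lemmas.
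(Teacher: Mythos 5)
Your proposal is correct and follows essentially the same route as the paper's proof: both identify the pseudocode's update as $insert(\alpha_s[x].G_i,(i,\hat{k},e))$, use Lemma \ref{lem:vertexInsertionProp} to confirm that the counter $k_i$ matches the largest location-$i$ index (so the inserted vertex has the right index), and then invoke Lemma \ref{prop:insertVertexYeildsObservation}. Your explicit checks of finiteness and of the already-crashed case are small points of diligence that the paper's proof leaves implicit, but they do not constitute a different argument.
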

\begin{proof}
Fix $i$, $x$, and $e$ from the hypothesis of the lemma.
From the pseudocode, we know that when $e$ occurs, a vertex $v$ of the form $(i,e,\hat{k})$ is added to $G_i$, and for each vertex $u$ in $\alpha_s[x].G_i$, the edge $(u,v)$ is added to $G_i$ as well.
From Lemma \ref{lem:vertexInsertionProp}, we know that $\alpha_s[x].k_i = \hat{k}-1$ and $\alpha_s[x].G_i$ does not contain any vertex of the form $(i,*,\hat{k})$. 
Therefore, $\alpha_s[x+1].G_i = insert(\alpha_s[x].G_i,v)$; invoking Lemma \ref{prop:insertVertexYeildsObservation}, we conclude that $\alpha_s[x+1].G_i$ is an observation.
\end{proof}

\begin{lemma} \label{lem:sameIndexSameEvent}
For any pair of positive integers $x$ and $x'$, and any pair of locations $j$ and $j'$, if $\alpha_s[x].G_j$ contains a vertex $v = (i,k,e)$, then it is not the case that $\alpha_s[x'].G_{j'}$ contains a vertex $v' = (i,k,e')$ where $e \neq e'$.
\end{lemma}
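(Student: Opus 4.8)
The plan is to show that the third component of any vertex $(i,k,\ast)$ is completely determined by the fixed execution $\alpha$, namely it must be the $k$-th output event of $D$ at location $i$, regardless of which graph $G_j$ the vertex happens to reside in. Once that is established, two vertices $(i,k,e)$ and $(i,k,e')$ occurring anywhere in the system must share the same third component, contradicting $e \neq e'$. The argument is a contradiction argument that funnels both vertices back to their common ``home'' location $i$.

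First I would assume, for contradiction, that there exist indices $x, x'$ and locations $j, j'$ with $(i,k,e) \in \alpha_s[x].G_j$ and $(i,k,e') \in \alpha_s[x'].G_{j'}$ where $e \neq e'$. For the vertex $(i,k,e)$, I would invoke Lemma \ref{lem:vertexFirstAtHomeLocation}: letting $x_1$ be the smallest index at which this vertex appears in \emph{any} $G_\ell$, the lemma guarantees that $\ell = i$, so in particular $G_i$ does contain $(i,k,e)$ at index $x_1$ and $x_1 \le x$. This is the step that legitimizes applying the next lemma, which is stated for the first index at which $G_i$ contains the vertex. I would then apply part~2 of Lemma \ref{lem:vertexInsertionProp} to conclude $e = \alpha|_{O_{D,i}}[k]$. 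The symmetric treatment of $(i,k,e')$ yields $e' = \alpha|_{O_{D,i}}[k]$ by the same two lemmas.

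The conclusion is then immediate: since $\alpha$ is fixed, $\alpha|_{O_{D,i}}$ is a single fixed subsequence and $\alpha|_{O_{D,i}}[k]$ is a single fixed event occurrence, so its action is unambiguous. Hence $e = \alpha|_{O_{D,i}}[k] = e'$, contradicting $e \neq e'$. This establishes the lemma.

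I do not expect a genuine obstacle here; the work is almost entirely in correctly chaining the two prior lemmas. The one point that requires care — and which I would make explicit — is that Lemma \ref{lem:vertexInsertionProp} is phrased relative to the \emph{home} location $i$ and the \emph{first} index at which $G_i$ contains the vertex, whereas the hypotheses give us the vertex in arbitrary graphs $G_j$, $G_{j'}$ at arbitrary indices. Lemma \ref{lem:vertexFirstAtHomeLocation} is exactly what bridges this gap, since it forces the minimal-index appearance to occur at $i$; propagation to other locations happens only through the message-driven union operation, which copies vertices verbatim and never alters a third component. I would state this last observation briefly to make clear why the value $\alpha|_{O_{D,i}}[k]$ assigned at creation is preserved in every later graph.
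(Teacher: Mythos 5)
Your proposal is correct and follows essentially the same route as the paper's own proof: both use Lemma \ref{lem:vertexFirstAtHomeLocation} to pin the first appearance of each vertex to its home location $i$, then apply Lemma \ref{lem:vertexInsertionProp} to conclude $e = \alpha|_{O_{D,i}}[k] = e'$. The only cosmetic difference is that you frame it as a contradiction while the paper directly shows $e = e'$.
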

\begin{proof}
Fix a pair of positive integers $x$ and $x'$, and a pair of locations $j$ and $j'$, such that $\alpha_s[x].G_j$ contains a vertex $v = (i,k,e)$, $\alpha_s[x'].G_{j'}$ contains a vertex $v' = (i,k,e')$. We complete the proof by showing that $e = e'$.

Let $x_1$ be the smallest positive integer such that for some location $i_1$, $\alpha_s[x_1].G_{i_1}$ contains the vertex $v$, and let $x_2$ be the smallest positive integer such that for some location $i_2$, $\alpha_s[x_2].G_{i_2}$ contains the vertex $v'$. Invoking Lemma \ref{lem:vertexFirstAtHomeLocation}, we know that $i_1 = i_2 = i$. Invoking Lemma \ref{lem:vertexInsertionProp}, we know that $e = \alpha|_{O_{D,i}}[k]$ and $e'= \alpha|_{O_{D,i}}[k]$; that is, $e=e'$.
\end{proof}

\begin{lemma}\label{lem:sameVertexSameIncomingEdges}
For any pair of positive integers $x$ and $x'$, and any pair of locations $j$ and $j'$, for every vertex $v$ in $\alpha_s[x].G_j$ and $\alpha_s[x'].G_{j'}$, If an edge $(u,v)$ is in $\alpha_s[x].G_j$, then the edge $(u,v)$ exists in $\alpha_s[x'].G_{j'}$.
\end{lemma}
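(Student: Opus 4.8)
The plan is to reduce everything to a single canonical graph --- the value of $G$ at the home location of $v$, taken at the moment $v$ first appears anywhere in $\alpha$ --- and to show that this canonical graph already records every incoming edge of $v$ that any graph $\alpha_s[x].G_j$ containing $v$ can ever have. The two facts I rely on are Lemma \ref{lem:edgeFirstAtHomeLocation}, which says that every incoming edge of $v$ that ever appears in any graph is already present at this canonical moment, and Lemma \ref{lem:containsVertexContainsSubgraph}, which says that any graph containing $v$ contains this canonical graph as a subgraph. Together these pin the incoming-edge set of $v$ to a fixed object, independent of which graph or state we look at.

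Concretely, I would write $v = (i,k,e)$ and let $x_{min}$ be the smallest positive integer such that, for some location, the corresponding $G$ in state $\alpha_s[x_{min}]$ contains $v$. By Lemma \ref{lem:vertexFirstAtHomeLocation}, that location is the home location $i$, so $\alpha_s[x_{min}].G_i$ contains $v$. Crucially, this same $x_{min}$ is simultaneously the index used in Lemma \ref{lem:edgeFirstAtHomeLocation} (the smallest index at which $G_i$ contains $v$) and the index used in Lemma \ref{lem:containsVertexContainsSubgraph} (the smallest index at which any graph contains $v$); Lemma \ref{lem:vertexFirstAtHomeLocation} is exactly what certifies that these two descriptions coincide.

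First I would apply Lemma \ref{lem:edgeFirstAtHomeLocation} to the hypothesized edge $(u,v)$ in $\alpha_s[x].G_j$ to conclude that $(u,v)$ already lies in $\alpha_s[x_{min}].G_i$. Next, since the hypothesis gives that $\alpha_s[x'].G_{j'}$ contains $v$, Lemma \ref{lem:containsVertexContainsSubgraph} yields that $\alpha_s[x_{min}].G_i$ is a subgraph of $\alpha_s[x'].G_{j'}$. Combining the two observations, the edge $(u,v)$ --- being present in $\alpha_s[x_{min}].G_i$ --- is therefore present in $\alpha_s[x'].G_{j'}$, which is exactly the claim. (Note the edge $(u,v) \in \alpha_s[x].G_j$ already forces $v \in \alpha_s[x].G_j$, so the ``$v$ in both graphs'' hypothesis is used only through $v \in \alpha_s[x'].G_{j'}$.)

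The argument is short, and the only real care needed is the bookkeeping of the two slightly different ``smallest index'' definitions in Lemmas \ref{lem:edgeFirstAtHomeLocation} and \ref{lem:containsVertexContainsSubgraph}: the main obstacle is ensuring they genuinely refer to the same $x_{min}$, which is precisely what the ``first appearance is always at the home location'' conclusion of Lemma \ref{lem:vertexFirstAtHomeLocation} guarantees. Once that identification is made, the two cited lemmas snap together with no further computation, and the directional statement immediately gives set-equality of incoming edges by applying it with the roles of $(x,j)$ and $(x',j')$ exchanged.
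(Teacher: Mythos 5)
Your proof is correct and follows essentially the same route as the paper's: both fix the first index $\hat{x}$ at which the home-location graph $G_i$ contains $v$, invoke Lemma \ref{lem:edgeFirstAtHomeLocation} to place the edge $(u,v)$ in $\alpha_s[\hat{x}].G_i$, and then invoke Lemma \ref{lem:containsVertexContainsSubgraph} to embed that graph in $\alpha_s[x'].G_{j'}$. Your explicit verification, via Lemma \ref{lem:vertexFirstAtHomeLocation}, that the location-specific and over-all-locations notions of ``first appearance'' pick out the same index is a detail the paper leaves implicit, but the argument is otherwise identical.
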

\begin{proof}
Fix a pair of positive integers $x$ and $x'$, and a pair of locations $j$ and $j'$. If the set of vertices of $\alpha_s[x].G_j$ and $\alpha_s[x'].G_{j'}$ are disjoint, then the lemma is satisfied vacuously. For the remainder of the proof, assume that there exists at least one vertex in both $\alpha_s[x].G_j$ and $\alpha_s[x'].G_{j'}$. Fix such a vertex $v = (i,k,e)$. Fix $u$ to be any vertex in $\alpha_s[x].G_j$ such that $(u,v)$ is an edge in $\alpha_s[x].G_j$. We show that the edge $(u,v)$ exists in $\alpha_s[x'].G_{j'}$.

Let $\hat{x}$ be the smallest positive integer such that $\alpha_s[\hat{x}].G_i$ contains the vertex $v$. Invoking Lemma \ref{lem:edgeFirstAtHomeLocation}, we know that $\alpha_s[\hat{x}].G_i$ contains the edge $(u,v)$. Invoking Lemma \ref{lem:containsVertexContainsSubgraph}, we know that $\alpha_s[\hat{x}].G_i$ is a subgraph of $\alpha_s[x'].G_{j'}$, and therefore, $\alpha_s[x'].G_{j'}$ contains the edge $(u,v)$.
\end{proof}

\begin{theorem}
\label{thm:GiIsObs}
For each location $i$, for each positive integer $x$, $\alpha_s[x].G_i$ is an observation.
\end{theorem}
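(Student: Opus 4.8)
The plan is to prove the theorem by strong induction on the state index $x$, treating all locations simultaneously. I would take as inductive hypothesis the statement that $\alpha_s[x'].G_{i'}$ is a finite observation for every location $i'$ and every index $x' \leq x$, and then establish that $\alpha_s[x+1].G_i$ is a finite observation for every location $i$. The reason the induction must range over all locations and over all earlier indices (rather than being a simple step induction at a single location) is that the union case below pulls in a received message whose value is a snapshot of another location's graph taken at an earlier state.

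For the base case, in the initial state $\alpha_s[0]$ every $G_i$ is the empty graph, which is trivially finite and acyclic and vacuously satisfies all five defining properties of an observation. For the inductive step, I would fix a location $i$ and consider the action $a$ occurring between $\alpha_s[x]$ and $\alpha_s[x+1]$, splitting into three cases according to how $G_i$ can change per the pseudocode. If $a$ does not modify $G_i$, then $\alpha_s[x+1].G_i = \alpha_s[x].G_i$ and the hypothesis applies directly. If $a$ is an event $e \in O_{D,i}$, then by the pseudocode $\alpha_s[x+1].G_i = insert(\alpha_s[x].G_i, v)$ for the new vertex $v = (i,k,e)$, and since $\alpha_s[x].G_i$ is an observation by hypothesis, Lemma~\ref{lem:insertYeildsObservation} immediately yields that $\alpha_s[x+1].G_i$ is an observation.

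The main case, and the main obstacle, is when $a$ is a $receive(observe, j)_i$ event, so that $\alpha_s[x+1].G_i = \alpha_s[x].G_i \cup observe$. Here I would first argue, using the reliable-FIFO behavior of channel $C_{j,i}$ together with the pseudocode (where a copy of $G_j$ is appended to $sendq$ and subsequently sent), that the received value is a prior snapshot of $G_j$; precisely, $observe = \alpha_s[z].G_j$ for some $z \leq x$, because the append and the matching send both occur strictly before the receive. By the strong inductive hypothesis, both $\alpha_s[x].G_i$ and $observe = \alpha_s[z].G_j$ are finite observations. To conclude that their union is an observation I would invoke Lemma~\ref{prop:obs:union}, whose two hypotheses are supplied exactly by the already-established structural lemmas: Lemma~\ref{lem:sameIndexSameEvent} rules out a pair of vertices $(i,k,e)$ and $(i,k,e')$ with $e \neq e'$ appearing in the two graphs, and Lemma~\ref{lem:sameVertexSameIncomingEdges} guarantees that any vertex common to both graphs has the same set of incoming edges in each. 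Hence $\alpha_s[x+1].G_i$ is a finite observation, completing the induction.

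The delicate point worth stating carefully is precisely the identification of the received message $observe$ with a prior value $\alpha_s[z].G_j$ of some location's graph. This identification does double duty: it is what lets the inductive hypothesis apply to $observe$, and it is also what makes Lemmas~\ref{lem:sameIndexSameEvent} and~\ref{lem:sameVertexSameIncomingEdges}---which are phrased in terms of the graphs $\alpha_s[\cdot].G_\cdot$ rather than in terms of arbitrary messages---applicable to the pair $(\alpha_s[x].G_i,\, observe)$. I would also remark that finiteness is preserved throughout, since the empty graph is finite, $insert$ adds a single vertex, and the union of two finite graphs is finite.
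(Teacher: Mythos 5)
Your proposal is correct and follows essentially the same argument as the paper's proof: strong induction on $x$ with the hypothesis ranging over all locations, the $insert$ case handled by Lemma~\ref{lem:insertYeildsObservation}, and the receive case handled by identifying $observe$ with a prior snapshot $\alpha_s[x'].G_j$ via the FIFO channel property and then applying Lemmas~\ref{lem:sameIndexSameEvent}, \ref{lem:sameVertexSameIncomingEdges}, and \ref{prop:obs:union}. The delicate point you flag---that the identification of $observe$ with an earlier graph value is what makes both the inductive hypothesis and the two structural lemmas applicable---is exactly the step the paper relies on as well.
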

\begin{proof}
We prove the theorem by strong induction on $x$.

\emph{Inductive Hypothesis.} For each location $i$, and each positive integer $x' < x$, $\alpha_s[x'].G_i$ is an observation.

\emph{Inductive Step.} Fix a location $i$. We know that for $x = 0$, $\alpha_s[x].G_i$ is the empty graph, and is therefore an observation. The remainder of the proof assumes $x >1$. We know from Lemma \ref{prop:subgraphNext} that $\alpha_s[x-1].G_i$ is a subgraph of $\alpha_s[x].G_i$. Therefore, either $\alpha_s[x-1].G_i = \alpha_s[x].G_i$, or $\alpha_s[x-1].G_i \neq \alpha_s[x].G_i$. In the former case, we apply the inductive hypothesis to conclude that $\alpha_s[x].G_i$ is an observation. In the latter case, the following argument holds.

From the pseudocode, we know that the event $e$ between $\alpha_s[x-1]$ and $\alpha_s[x]$ in $\alpha$ is either (1) an event from $O_{D,i}$ or (2) an event of the form $receive(observe, j)_i$ for some $j \neq i$.

\emph{Case 1.} Let $e$ be an event from $O_{D,i}$. Recall that by the inductive hypothesis, $\alpha_s[x-1].G_i$ is an observation. Invoking Lemma \ref{lem:insertYeildsObservation}, we conclude that $\alpha_s[x].G_i$ is an observation.

\emph{Case 2.} Let $e$ be an event of the form $receive(observe, j)_i$ for some $j \neq i$. From the FIFO property of the channels, we know that an event $send(observe,i)_j$ occurred in $\alpha$ before event $e$, From the pseudocode, we know that for some $x' < x$, $observe = \alpha_s[x'].G_j$. By the inductive hypothesis, we conclude that $observe$ and $\alpha_s[x-1].G_i$ are observations. Also, from the pseudocode, we know that when even $e$ occurs, $G_i$ is updated to $G_i \cup observe$. Therefore,  $\alpha_s[x].G_i = \alpha_s[x-1].G_i \cup observe$. 
By Lemma \ref{lem:sameIndexSameEvent}, we know that for each vertex $v = (i',k',e')$ in $observe$, it is not the case that $\alpha_s[x-1].G_i$ contains a vertex $v' = (i',k',e'')$ where $e'' \neq e'$, and invoking Lemma \ref{lem:sameVertexSameIncomingEdges}, we know that for every vertex $v$ in both $observe$ and $\alpha_s[x-1].G_i$, $v$ has the same set of incoming edges in both $observe$ and $\alpha_s[x-1].G_i$.
Therefore, we can invoke Lemma \ref{prop:obs:union} to conclude that $\alpha_s[x].G_i$ is an observation. This completes the induction.
\end{proof}

\subsubsection{The limit of the $G_i$'s is a viable observation}
\label{subsubsec:proofLimitObsIsViable}


For each location $i$, we define $G_i^\infty$ to be the limit of $\alpha_s[k].G_i$ as $k$ tends to $\infty$. In this subsection, we show that
for each pair of live locations $i$ and $j$, $G_i^\infty = G_j^\infty$, and this limiting observation is viable for $D$.

Recall that the limit $G^\infty_i = (V^\infty_i,Z^\infty_i)$ is defined as follows. Let $\alpha_s[k].G_i = (V^k_i,Z^k_i)$ for each natural number $k$. Then, $V^\infty_i = \bigcup_{k \in \mathbb{N}} V^k_i$ and $Z^\infty_i = \bigcup_{k \in \mathbb{N}} Z^k_i$.


\begin{lemma}
\label{lem:prefixOfFutureObs}
For each location $i$, for every pair of integers $x$,$x'$, where
$x'>x$,  $\alpha_s[x].G_i$ is a prefix of $\alpha_s[x'].G_i$.
\end{lemma}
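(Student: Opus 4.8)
The plan is to unwind the definition of \emph{prefix} given in Section~\ref{subsec:OperationsOnObservations} and verify its two requirements separately, drawing entirely on results already established in Section~\ref{subsubsec:proofPartOne} and Section~\ref{subsubsec:proofGisAreObs}. Recall that $\alpha_s[x].G_i$ is a prefix of $\alpha_s[x'].G_i$ iff (1) the former is a subgraph of the latter, and (2) every vertex $v$ of the former has exactly the same set of incoming edges in both graphs. So I would fix $i$ and $x' > x$ as in the hypothesis and check these two conditions in turn.

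The subgraph condition is immediate: it is precisely the content of Corollary~\ref{cor:subgraphFuture}, which asserts that for $x' > x$, $\alpha_s[x].G_i$ is a subgraph of $\alpha_s[x'].G_i$. The incoming-edge condition is the substantive part. I would fix an arbitrary vertex $v$ in $\alpha_s[x].G_i$; by the subgraph property just noted, $v$ also occurs in $\alpha_s[x'].G_i$, so $v$ belongs to both graphs and the hypotheses of Lemma~\ref{lem:sameVertexSameIncomingEdges} are met. One containment of incoming-edge sets is free from the subgraph relation: every incoming edge $(u,v)$ of $v$ in $\alpha_s[x].G_i$ lies in $\alpha_s[x'].G_i$. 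For the reverse containment I would invoke Lemma~\ref{lem:sameVertexSameIncomingEdges} with the two graphs taken in the opposite order, i.e.\ with first graph $\alpha_s[x'].G_i$ and second graph $\alpha_s[x].G_i$: this yields that any incoming edge $(u,v)$ present in $\alpha_s[x'].G_i$ is also present in $\alpha_s[x].G_i$. Combining the two containments gives equality of the incoming-edge sets of $v$ in the two graphs, which is exactly condition (2).

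Having verified both conditions, I would conclude by the definition of prefix that $\alpha_s[x].G_i$ is a prefix of $\alpha_s[x'].G_i$, as required. The main (and only mild) subtlety is that Lemma~\ref{lem:sameVertexSameIncomingEdges} is phrased with a fixed directionality — an edge present in the first graph is present in the second — so obtaining genuine \emph{set equality} of incoming edges requires applying that lemma twice, once with the graphs in each order, rather than a single appeal. I would make explicit that the common vertex $v$ satisfies the membership hypothesis of the lemma in both applications, which is guaranteed by Corollary~\ref{cor:subgraphFuture}. No induction or case analysis is needed, since the heavy lifting about how vertices and their incoming edges are fixed once a vertex ``appears'' has already been carried out in the earlier lemmas of this subsection.
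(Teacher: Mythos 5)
Your proposal is correct and follows essentially the same route as the paper: both decompose the prefix condition into the subgraph property, obtained from Corollary~\ref{cor:subgraphFuture}, and equality of incoming-edge sets at each common vertex. The only difference is that the paper's proof (after noting via Theorem~\ref{thm:GiIsObs} that both graphs are observations) cites Lemma~\ref{lem:edgeFirstAtHomeLocation} for the edge condition, leaving the reverse containment somewhat implicit, whereas you invoke Lemma~\ref{lem:sameVertexSameIncomingEdges} with the two graphs in each order --- the same underlying machinery packaged one level up, and if anything more explicit about why no new incoming edges can accrue at a vertex of $\alpha_s[x].G_i$ in the later graph.
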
 

\begin{proof}
Fix $i$, $x$, and $x'$, as in the hypotheses of the lemma. Applying Theorem
\ref{thm:GiIsObs}, we know that $\alpha_s[x].G_i$ and $\alpha_s[x'].G_i$ are
observations.
From Corollary \ref{cor:subgraphFuture} we know that $\alpha_s[x].G_i$ is a
subgraph of $\alpha_s[x'].G_i$. Applying Lemma
\ref{lem:edgeFirstAtHomeLocation}, we conclude that for each vertex in
$\alpha_s[x].G_i$, the set of incoming edges of $v$ is the same in
$\alpha_s[x].G_i$ and $\alpha_s[x'].G_i$. Therefore, $\alpha_s[x].G_i$
is a prefix of $\alpha_s[x'].G_i$.
\end{proof}



\begin{corollary}\label{cor:ObsPrefixOfLimit}
For each location $i$ and each positive integer $x$, $\alpha_s[x].G_i$
is a prefix of $G_i^\infty$.
\end{corollary}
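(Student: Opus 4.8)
The plan is to derive this corollary directly from the pairwise prefix relation established in Lemma~\ref{lem:prefixOfFutureObs} together with the general limit result Lemma~\ref{lem:GxIsPrefixOfGInfty}. First I would fix an arbitrary location $i$ and consider the sequence of graphs $\alpha_s[0].G_i, \alpha_s[1].G_i, \alpha_s[2].G_i, \ldots$ obtained from the successive states of the fixed execution $\alpha$. The goal is to recognize this as precisely the kind of prefix-ordered sequence of finite observations to which Lemma~\ref{lem:GxIsPrefixOfGInfty} applies, since $G_i^\infty$ is by definition the limit of this very sequence.

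To invoke Lemma~\ref{lem:GxIsPrefixOfGInfty}, I would verify its two hypotheses. For \emph{finiteness and observation-hood}: each $\alpha_s[x].G_i$ is an observation by Theorem~\ref{thm:GiIsObs}, and it is finite because $G_i$ starts empty and is only ever modified by single-vertex insertions (in the $O_{D,i}$ action) or by union with another finite observation $observe$ received in a message, so finiteness is preserved inductively. For \emph{the prefix-of-next property}: applying Lemma~\ref{lem:prefixOfFutureObs} with the specific choice $x' = x+1$ gives that $\alpha_s[x].G_i$ is a prefix of $\alpha_s[x+1].G_i$ for every $x$. (The one point to note is that consecutive terms may be equal when the intervening action does not touch $G_i$; this causes no difficulty, since the prefix relation is reflexive---any observation is a subgraph of itself with the identical incoming-edge sets.)

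With both hypotheses in hand, Lemma~\ref{lem:GxIsPrefixOfGInfty} yields that for each $x$, $\alpha_s[x].G_i$ is a prefix of the limit of the sequence, which is exactly $G_i^\infty$. This completes the argument.

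I expect no substantive obstacle here: the statement is an immediate corollary, and the only care required is the bookkeeping of matching the definition of $G_i^\infty$ (given at the start of Section~\ref{subsubsec:proofLimitObsIsViable} as $V_i^\infty = \bigcup_k V_i^k$, $Z_i^\infty = \bigcup_k Z_i^k$) to the general limit construction of Section on limits of sequences of observations, and confirming the finiteness of each $\alpha_s[x].G_i$ so that the hypotheses of Lemma~\ref{lem:GxIsPrefixOfGInfty} are literally satisfied. Indeed, the proof can be stated in a single sentence once Lemma~\ref{lem:prefixOfFutureObs} has delivered the prefix-of-next relation.
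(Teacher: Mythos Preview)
Your proposal is correct and matches the paper's intent: the paper states this as an immediate corollary of Lemma~\ref{lem:prefixOfFutureObs} without giving an explicit proof, and the route you spell out---verifying that $\alpha_s[0].G_i,\alpha_s[1].G_i,\ldots$ is a sequence of finite observations each a prefix of the next (via Theorem~\ref{thm:GiIsObs} and Lemma~\ref{lem:prefixOfFutureObs}) and then invoking Lemma~\ref{lem:GxIsPrefixOfGInfty}---is exactly the intended derivation.
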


Next, Lemma \ref{lem:limitGraphs} shows that for any pair $i,j$ of live locations $G_i^\infty = G_j^\infty$. We use Lemma \ref{lem:subgraphFutureLiveLocation}, which shows at any given point in the execution, the value of $G_i$ is a prefix of the value of $G_j$ at some later point in the execution, as a helper.


\begin{lemma}\label{lem:subgraphFutureLiveLocation}
For each positive integer $k$, every pair of locations $i$ and $j$
that are live in $t_D$, there exists a positive integer $k' \geq k$ such that
$\alpha_s[k].G_i$ is a prefix of $\alpha_s[k'].G_j$.
\end{lemma}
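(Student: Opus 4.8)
The plan is to exploit the fact that a live location $i$ broadcasts a snapshot of its current graph to every other location each time an AFD output occurs at $i$, and that the reliable FIFO channels together with fairness guarantee delivery to the (also live) location $j$. The only real work is converting the resulting \emph{subgraph} relationship into the stronger \emph{prefix} relationship demanded by the statement.

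First I would fix $k$, $i$, and $j$ with $i,j$ live in $t_D = \alpha|_{\hat{I} \cup O_D}$. Since $i$ is live, the validity property of $D$ guarantees that $\alpha$ contains infinitely many $O_{D,i}$ events; in particular there is one occurring strictly after state $\alpha_s[k]$. Let $m>k$ be the index such that the transition from $\alpha_s[m-1]$ to $\alpha_s[m]$ is such an AFD-output event at $i$. Inspecting the effect clause of the $d\in O_{D,i}$ action in Algorithm~\ref{alg:exchangeFD}, this transition appends the pair $(\alpha_s[m].G_i, j)$ to $sendq$ at location $i$ (the snapshot of $G_i$ taken after the vertex insertion). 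By Lemma~\ref{lem:prefixOfFutureObs}, $\alpha_s[k].G_i$ is a prefix of $\alpha_s[m].G_i$.

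Next I would push this snapshot across the channel. Because $i$ is live and $\alpha$ is fair, the single task of $A^\Omega_i$ receives infinitely many turns, so the FIFO queue $sendq$ is drained and the enqueued item $(\alpha_s[m].G_i, j)$ eventually reaches the head and triggers a $send(\alpha_s[m].G_i, j)_i$ event. By the reliable-FIFO property of $C_{i,j}$ and fairness of the channel task, $j$ eventually executes the matching $receive(\alpha_s[m].G_i, i)_j$; let $k'$ be the index of the state immediately following this receive, so $k' > m > k$. The effect of the receive sets $G_j := G_j \cup \alpha_s[m].G_i$, whence $\alpha_s[m].G_i$ is a subgraph of $\alpha_s[k'].G_j$, and therefore so is its prefix $\alpha_s[k].G_i$.

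The main obstacle is upgrading ``subgraph'' to ``prefix'', since the union operation could in principle endow a vertex with incoming edges it did not have in $\alpha_s[k].G_i$. I would rule this out using Lemma~\ref{lem:sameVertexSameIncomingEdges}: every vertex of $\alpha_s[k].G_i$ also lies in $\alpha_s[k'].G_j$ by the subgraph relationship just established, and applying that lemma in both directions shows that each such vertex has exactly the same set of incoming edges in $\alpha_s[k].G_i$ and in $\alpha_s[k'].G_j$. This equality of incoming-edge sets is precisely the extra condition in the definition of prefix, so $\alpha_s[k].G_i$ is a prefix of $\alpha_s[k'].G_j$ with $k' \geq k$, as required. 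The remaining details are purely fairness bookkeeping---arguing that a finite prefix of $sendq$ is drained in finite time and that the message is not lost---and follow routinely from task-fairness and the reliability of the channel automaton.
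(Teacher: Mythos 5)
Your proposal is correct and follows essentially the same route as the paper's own proof: the live location $i$ enqueues and sends a snapshot of $G_i$ taken after state $\alpha_s[k]$, Lemma \ref{lem:prefixOfFutureObs} gives the prefix relation between $\alpha_s[k].G_i$ and that snapshot, reliable FIFO delivery plus fairness forces $j$ to perform the union, and the prefix relation then transfers to $\alpha_s[k'].G_j$. Your explicit bidirectional use of Lemma \ref{lem:sameVertexSameIncomingEdges} to upgrade the subgraph relation to a prefix is, if anything, more careful than the paper's version of this step, which reaches the same conclusion tersely after invoking Theorem \ref{thm:GiIsObs}.
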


\begin{proof}
Fix $k$, $i$, and $j$ as in the hypotheses of the lemma.
Since $i$ is live, there exist a positive integers $k_1 \geq k$ and
$k_2 \geq k$ such that $\alpha_s[k_2].sendq_i$ contains
$(\alpha_s[k_1].G_i,j)$, and therefore, eventually the event
$send(\alpha_s[k_1].G_i,j)_i$ occurs which sends $\alpha_s[k_1].G_i$
to $j$. By Lemma \ref{lem:prefixOfFutureObs},
we know that
$\alpha_s[k].G_i$ is a prefix of $\alpha_s[k_1].G_i$. From the
properties of the channel automata we know that eventually event
$receive(\alpha_s[k_1].G_i,i)_j$ occurs in state (say)
$\alpha_s[k_3]$, where $k_3 > k_2$, and from the pseudocode, we know
that $\alpha_s[k_3+1].G_j$ is $\alpha_s[k_3].G_j \cup
\alpha_s[k-1].G_i$. Invoking Theorem \ref{thm:GiIsObs}, we know that $\alpha_s[k_3+1].G_j$ is an observation. 
Since we have already established that $\alpha_s[k].G_i$ is a prefix
of $\alpha_s[k_1].G_i$, we conclude that $\alpha_s[k].G_i$ is a
prefix of $\alpha_s[k_3+1].G_j$. Thus the lemma is satisfied for
$k' = k_3+1$.
\end{proof}

\begin{lemma}
\label{lem:limitGraphs}
For every pair of locations $i$ and $j$ that are live in $t_D$,
$G_i^\infty = G_j^\infty$.
\end{lemma}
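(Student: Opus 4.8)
The plan is to prove the two set inclusions $G_i^\infty \subseteq G_j^\infty$ and $G_j^\infty \subseteq G_i^\infty$; since the hypotheses on $i$ and $j$ are symmetric, it suffices to establish one inclusion and then obtain the other by exchanging the roles of $i$ and $j$. Recall that $G_i^\infty = (V_i^\infty, Z_i^\infty)$ with $V_i^\infty = \bigcup_x V_i^x$ and $Z_i^\infty = \bigcup_x Z_i^x$, so proving graph equality amounts to showing that the vertex sets agree and the edge sets agree.

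First I would take an arbitrary element of $G_i^\infty$ --- a vertex $v$ or an edge $(u,v)$. By the definition of the limit, there is some positive integer $x$ such that $\alpha_s[x].G_i$ already contains this element. Next I would invoke Lemma \ref{lem:subgraphFutureLiveLocation} with $k = x$ (applicable precisely because both $i$ and $j$ are live in $t_D$) to obtain a positive integer $k' \geq x$ such that $\alpha_s[x].G_i$ is a prefix --- and in particular a subgraph --- of $\alpha_s[k'].G_j$. Hence $\alpha_s[k'].G_j$ contains $v$ (respectively the edge $(u,v)$). Finally, applying Corollary \ref{cor:ObsPrefixOfLimit} to location $j$ and index $k'$, the graph $\alpha_s[k'].G_j$ is a prefix of $G_j^\infty$, so $G_j^\infty$ contains $v$ (respectively $(u,v)$). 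This shows that every vertex and every edge of $G_i^\infty$ lies in $G_j^\infty$, i.e.\ $G_i^\infty \subseteq G_j^\infty$. Swapping $i$ and $j$ throughout yields $G_j^\infty \subseteq G_i^\infty$, and the two inclusions together give $G_i^\infty = G_j^\infty$.

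I expect no serious obstacle: the lemma is essentially a clean packaging of the two helper results. The only points requiring a little care are making explicit the translation from ``$G_i^\infty$ contains an element'' to ``some finite-stage graph $\alpha_s[x].G_i$ contains it,'' and observing that the prefix relation supplied by Lemma \ref{lem:subgraphFutureLiveLocation} is more than enough to transport both vertices and edges --- we only need subgraph containment, not the full incoming-edge condition of a prefix. The liveness hypothesis on \emph{both} locations is exactly what permits the application of Lemma \ref{lem:subgraphFutureLiveLocation}, and that is the sole place it is used.
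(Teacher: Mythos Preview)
Your proof is correct and follows essentially the same approach as the paper: pick an arbitrary vertex or edge of $G_i^\infty$, locate it in some finite-stage $\alpha_s[x].G_i$, use Lemma~\ref{lem:subgraphFutureLiveLocation} to transport it into some $\alpha_s[k'].G_j$, and conclude it lies in $G_j^\infty$; then swap $i$ and $j$. The only cosmetic difference is that the paper invokes Corollary~\ref{cor:subgraphFuture} for the last step while you invoke Corollary~\ref{cor:ObsPrefixOfLimit}, which is equally valid.
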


\begin{proof}
Fix $i$ and $j$ as in the hypotheses of the lemma. Fix $z$ to be
either an edge or a vertex in $G_i^\infty$. By definition, there
exists a positive integer $k$ such that $\alpha_s[k].G_i$ contains
$z$. By Lemma \ref{lem:subgraphFutureLiveLocation}, we know that there
exists a positive integer $k'$ such that $\alpha_s[k'].G_j$ contains
$z$; applying Corollary \ref{cor:subgraphFuture}, we conclude that for
all $k'' \geq k'$, $\alpha_s[k''].G_j$ contains $z$. In other words,
$G_j^\infty$ contains $z$. Therefore, $G_i^\infty$ is a subgraph of 
$G_j^\infty$.

Reversing the roles of $i$ and $j$, we see that $G_j^\infty$ is a subgraph of
$G_k$. Therefore, $G_i^\infty = G_j^\infty$.
\end{proof}

Lemma \ref{lem:limitGraphs} allows us to define $G^\infty$ to be the
graph $G^\infty_i$ for any location $i$ that is live in $t_D$.
 
 
\begin{lemma}
\label{lem:GinftyEdges}
For every location $i$ such that $G^\infty$ contains an infinite number of
vertices whose location is $i$, for each vertex $v$ in $G^\infty$, there exists
a vertex $v'$ in $G^\infty$ whose location is $i$ and the edge $(v,v')$ is in
$G^\infty$.
\end{lemma}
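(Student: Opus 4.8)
The plan is to treat this statement as a re-derivation of Property~4 of observations for the limiting graph $G^\infty$, reading it off directly from the behavior of the $insert$ operation used by the emulation algorithm. First I would establish that $i$ is live in $t_D$: by Lemma~\ref{lem:vertexInsertionProp}, each vertex of the form $(i,k,\ast)$ appearing in any $G_j$ corresponds to the $k$-th event of $\alpha|_{O_{D,i}}$, so infinitely many vertices with location $i$ in $G^\infty$ force infinitely many $O_{D,i}$ events in $\alpha$. Since $\alpha|_{\hat{I} \cup O_D} = t_D \in T_D$ is a valid sequence and validity forbids any $O_{D,i}$ event after a $crash_i$ event, location $i$ cannot crash, i.e.\ $i \in live(t_D)$. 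By Lemma~\ref{lem:limitGraphs} this identifies $G^\infty = G_i^\infty = \bigcup_x V_i^x$ (with the corresponding edge sets), so the arbitrary vertex $v \in G^\infty$ lies in $\alpha_s[x].G_i$ for some positive integer $x$.

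Next I would exploit the infinitely many $O_{D,i}$ events to select one, call it $e'$, that occurs strictly after state $\alpha_s[x]$; say $e'$ occurs between $\alpha_s[y]$ and $\alpha_s[y+1]$ with $y \ge x$. By Corollary~\ref{cor:subgraphFuture}, $\alpha_s[x].G_i$ is a subgraph of $\alpha_s[y].G_i$, so $v$ is already a vertex of $\alpha_s[y].G_i$. From the pseudocode, the action $e'$ performs $insert(\alpha_s[y].G_i, v')$ for the new vertex $v' = (i, \alpha_s[y].k_i + 1, e')$, and by the definition of $insert$ (Section~\ref{subsec:OperationsOnObservations}) this adds an edge from \emph{every} vertex of $\alpha_s[y].G_i$ to $v'$. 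In particular the edge $(v,v')$ appears in $\alpha_s[y+1].G_i$, and hence in $G_i^\infty = G^\infty$. Since $v'$ has location $i$, it is exactly the required vertex, and $(v,v')$ the required edge, which completes the argument.

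The only delicate points are bookkeeping rather than conceptual. The main thing to get right is the ordering of indices: $e'$ must be chosen after $v$ has already been inserted, so that $v$ is present in $G_i$ at the moment the $insert$ adds its incoming edges; the subgraph-monotonicity of Corollary~\ref{cor:subgraphFuture} is precisely what guarantees this, and $v' \neq v$ holds automatically because $v'$ is inserted strictly later, so no self-loop is introduced into the DAG. I would also emphasize that the liveness of $i$ is genuinely used twice: once to identify $G^\infty$ with $G_i^\infty$ (so that $v$ appears in some finite $\alpha_s[x].G_i$), and once to guarantee the existence of a later $O_{D,i}$ event $e'$.
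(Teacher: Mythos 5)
Your proof is correct and follows essentially the same route as the paper's: both identify $v$ inside some finite $\alpha_s[x].G_i$ at the live location $i$, pick a later $O_{D,i}$ event, and use the fact that the $insert$ operation adds edges from every existing vertex of $G_i$ to the newly inserted vertex, so the edge $(v,v')$ persists into $G^\infty$. The extra justifications you supply (Lemma~\ref{lem:vertexInsertionProp} plus validity of $t_D$ to show $i \in live(t_D)$, and Lemma~\ref{lem:limitGraphs} to identify $G^\infty$ with $G_i^\infty$) only make explicit what the paper leaves implicit.
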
 

\begin{proof}
Fix $i$ and $v$ as in the hypotheses of the lemma. 
Since $G^\infty$ contains an infinite number of vertices whose location is $i$, we know that $i$ is live in $\alpha$, and therefore, an infinite number of events from $O_{D,i}$ occur in $\alpha$.

Since $v$ is in $G^\infty$, we know that there exists a positive integer $x_i$ such that $v$ is a vertex $\alpha_s[x_i].G_i$. Fix $e$ to be the first event from $O_{D,i}$ following $\alpha_s[x_i]$ in $\alpha$. Let the state preceding $e$ in $\alpha$ be $\alpha_s[x]$. From the pseudocode, we know that when $e$ occurs, a vertex of the form $(i,*,e)$ is inserted in $G_i$. Let this vertex be $v'$. From the insertion operation, we know that an edge $(v,v')$ is added to $G_i$. Therefore, $\alpha_s[x+1].G_i$ contains the edge $(v,v')$. From Corollary \ref{cor:ObsPrefixOfLimit}, we know that $\alpha_s[x+1].G_i$ is a prefix of $G^\infty$. Therefore, there exists
a vertex $v'$ in $G^\infty$ whose location is $i$ and the edge $(v,v')$ is in
$G^\infty$.
\end{proof}


 
 Finally, in Theorem \ref{thm:GinftyIsObs}, we establish that $G^\infty$ is an observation, and in Theorem \ref{thm:GinftyIsViable}, we establish that it is a viable observation.
 
 \begin{theorem}\label{thm:GinftyIsObs}
 $G^\infty$ is an observation.
 \end{theorem}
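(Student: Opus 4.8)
The plan is to obtain $G^\infty$ as an application of Lemma~\ref{lem:obs:limitObs}, which already packages the sufficient condition for the limit of a prefix-ordered sequence of finite observations to be an observation. So the work reduces to (i) exhibiting the right prefix-chain of finite observations, (ii) checking its limit is indeed $G^\infty$, and (iii) verifying the single extra hypothesis of Lemma~\ref{lem:obs:limitObs}.

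First I would fix a location $i$ that is live in $t_D$, so that by Lemma~\ref{lem:limitGraphs} the graph $G^\infty$ equals $G_i^\infty$, the limit of the sequence $\alpha_s[0].G_i,\ \alpha_s[1].G_i,\ \alpha_s[2].G_i,\ \ldots$. Each term of this sequence is a finite graph, and by Theorem~\ref{thm:GiIsObs} each $\alpha_s[x].G_i$ is a (finite) observation. By Lemma~\ref{lem:prefixOfFutureObs} each term is a prefix of every later term, so this is an infinite prefix-ordered sequence of finite observations in exactly the sense required by the definition of ``limit'' and by Lemma~\ref{lem:obs:limitObs}; and by Corollary~\ref{cor:ObsPrefixOfLimit} its limit is $G_i^\infty = G^\infty$. (The sequence may repeat terms when no $O_{D,i}$ event or merging \emph{receive} event adds anything, but repetition is harmless since a graph is a prefix of itself.)

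It then remains to verify the hypothesis of Lemma~\ref{lem:obs:limitObs}: for every vertex $v$ of $G^\infty$ and every location $j \in live(G^\infty)$, there is a vertex $v'$ of $G^\infty$ with location $j$ such that $(v,v')$ is an edge of $G^\infty$. This is \emph{precisely} the statement of Lemma~\ref{lem:GinftyEdges}, noting that $j \in live(G^\infty)$ means exactly that $G^\infty$ contains infinitely many vertices whose location is $j$, which is the hypothesis under which Lemma~\ref{lem:GinftyEdges} applies. Having matched the hypotheses, I would invoke Lemma~\ref{lem:obs:limitObs} to conclude that $G^\infty$ is an observation.

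I do not expect a genuine obstacle here: the proof is an assembly of prior results, with all four observation-structure properties other than Property~4 following immediately from the prefix-chain structure (they are preserved under nested unions of observations), and Property~4 supplied by Lemma~\ref{lem:GinftyEdges} through Lemma~\ref{lem:obs:limitObs}. The only point demanding care is bookkeeping — making sure that the ``limit'' used to \emph{define} $G^\infty$ (via $G_i^\infty$ for a live $i$) is literally the limit object to which Lemma~\ref{lem:obs:limitObs} is applied, and that ``live in $G^\infty$'' in Lemma~\ref{lem:GinftyEdges} lines up with ``$live(G^\infty)$'' in Lemma~\ref{lem:obs:limitObs}; both identifications are immediate from the definitions.
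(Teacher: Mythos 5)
Your proposal is correct and follows essentially the same route as the paper's own proof: both assemble the prefix-ordered chain of finite observations $\alpha_s[0].G_i, \alpha_s[1].G_i, \ldots$ (via Theorem~\ref{thm:GiIsObs} and Lemma~\ref{lem:prefixOfFutureObs}), identify its limit with $G^\infty$, supply Property~4 via Lemma~\ref{lem:GinftyEdges}, and conclude by invoking Lemma~\ref{lem:obs:limitObs}. Your version is, if anything, slightly more explicit about the bookkeeping (citing Theorem~\ref{thm:GiIsObs}, Lemma~\ref{lem:limitGraphs}, and Corollary~\ref{cor:ObsPrefixOfLimit} by name), but the argument is the same.
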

 \begin{proof}
 For any live location $j$, we know from Lemma \ref{lem:prefixOfFutureObs} that $\alpha_s[0].G_j,\alpha_s[1].G_j,\alpha_s[2].G_i,\ldots$ is an infinite sequence of finite observations, where $\alpha_s[x].G_j$ is a prefix of $\alpha_s[x+1].G_j$ for each natural number $x$. By definition, 
 we know that  $G^\infty_i$ is the limit of the infinite sequence $\alpha_s[0].G_j,\alpha_s[1].G_j,\alpha_s[2].G_i,\ldots$, and we know that $G^\infty = G^\infty_i$.
 
 By Lemma \ref{lem:GinftyEdges}, we know that for every vertex $v \in V$ and any location $i \in live(G^\infty)$,
there exists a vertex $v'$ with location
$i$ and $G^\infty$ contains an edge from $v$ to $v'$.
 Therefore, invoking Lemma \ref{lem:obs:limitObs}, we conclude that $G^\infty$ is an observation.
 \end{proof}


%

 Next we establish that $G^\infty$ is a viable observation. Intuitively, the proof is as follows. Recall that $\alpha|_{O_D \cup \hat{I}} = t_D$. For any live location $i$, $G^\infty$ contains all the AFD output events from $t_D$ that occur at $i$ and in the same order in which they occur at $i$. For any non-live location $i$, $G^\infty$ contains some prefix of all the AFD output events from $t_D$ that occur at $i$ and in the same order in which they occur at $i$. Also, there is an edge from a vertex $v_1 = (i_1,k_1,e_1)$ to another vertex $v_2 = (i_2,k_2,e_2)$ in $G^\infty$ only if $e_1$ occurs before $e_2$ in $t_D$. Therefore, there must exist some sampling $t'_D$ of $t_D$ such that $t_D|_{O_D}$ is a topological sort of $G^\infty$. Invoking closure under sampling, we conclude that $t'_D$ must be in $T_D$, and therefore $G^\infty$ is viable. The formal theorem statement and proof follows.
 
 \begin{theorem}\label{thm:GinftyIsViable}
 $G^\infty$ is a viable observation for $D$.
 \end{theorem}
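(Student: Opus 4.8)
The plan is to construct an explicit trace $t_D' \in T_D$ whose $O_D$-projection is a topological sort of $G^\infty$, thereby exhibiting a trace compatible with $G^\infty$ and establishing viability. The raw material is the trace $t_D = \alpha|_{O_D \cup \hat{I}}$, which by hypothesis is in $T_D$. The overall strategy is to show that $G^\infty$ records exactly the AFD output events of $t_D$ at live locations, and a prefix of them at faulty locations, all consistently ordered with $t_D$, so that a suitable sampling (in the sense of Section~\ref{subsec:FDproblems}) of $t_D$ has $G^\infty$ as its underlying DAG; closure under sampling then delivers membership in $T_D$.

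First I would pin down the vertex content of $G^\infty$ in terms of $t_D$. Using Lemma~\ref{lem:vertexInsertionProp}, every vertex $(i,k,e)$ that appears satisfies $e = \alpha|_{O_{D,i}}[k]$, so the vertices at location $i$ are an index-consistent set of the $O_{D,i}$ events of $\alpha$, and by Property~2 of observations they form a prefix $(i,1,\cdot),(i,2,\cdot),\dots$ of those events. For a live location $i$, I would argue $G^\infty$ contains \emph{all} $O_{D,i}$ events: each such event creates a vertex in $G_i$ (pseudocode), and by Corollary~\ref{cor:ObsPrefixOfLimit} that vertex persists into $G^\infty_i = G^\infty$. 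For a faulty location $i$, $G^\infty$ contains some (possibly empty, possibly all) prefix of the $O_{D,i}$ events. This matches the structure of a sampling: all outputs retained at live locations, a prefix retained at faulty locations.

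Next I would handle the ordering. The key fact is Lemma~\ref{lem:edgeHappensBefore}: every edge $(v_1,v_2)$ of any $G_i$ corresponds to $e_1$ preceding $e_2$ in $\alpha$, hence in $t_D$. Consequently the partial order imposed by $G^\infty$'s edges is a sub-order of the total order in which $O_D$ events occur in $t_D$. Therefore the subsequence of $t_D|_{O_D}$ obtained by deleting exactly the output events that are \emph{not} vertices of $G^\infty$ (i.e. the discarded suffixes at faulty locations) is a linear extension — in fact a topological sort — of $G^\infty$. I would let $t_D'$ be the corresponding sampling of $t_D$: retain every $O_{D,i}$ event at live $i$, retain the first $crash_i$ at each faulty $i$ together with the retained prefix of its outputs, and drop the rest. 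I must check $t_D'$ is a \emph{valid} sequence (no output after the relevant $crash$, infinitely many outputs at each live location — both immediate since $G^\infty$ keeps all live-location outputs and these are unbounded) so that it genuinely qualifies as a sampling.

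The main obstacle I anticipate is the bookkeeping at faulty locations: I must verify that the set of output vertices $G^\infty$ keeps at a faulty $i$ is precisely a \emph{prefix} of $t_D|_{O_{D,i}}$ and that the retained $crash_i$ sits correctly relative to them, so that deleting the dropped tail yields a legitimate sampling rather than an arbitrary subsequence. This requires combining Property~2 of observations with Lemma~\ref{lem:vertexInsertionProp} to rule out ``gaps'' in the retained indices, and using Lemma~\ref{lem:edgeHappensBefore} to confirm no retained output is ordered after a dropped one in a way that breaks the topological sort. Once $t_D'$ is shown to be a sampling of $t_D$ with $t_D'|_{O_D}$ a topological ordering of $G^\infty$, closure under sampling gives $t_D' \in T_D$, and since $t_D'$ is compatible with $G^\infty$ by construction, $G^\infty$ is viable for $D$, completing the proof.
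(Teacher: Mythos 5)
Your proposal is correct and follows essentially the same route as the paper's own proof: both rest on showing that $G^\infty$ retains exactly the $O_{D,i}$ events of $t_D$ at live locations and prefixes at faulty ones (via Lemma~\ref{lem:vertexInsertionProp} and the pseudocode), use Lemma~\ref{lem:edgeHappensBefore} to see that the DAG's edges respect the order of $t_D$, and conclude by exhibiting a sampling of $t_D$ compatible with $G^\infty$ and invoking closure under sampling. The only difference is presentational—you construct the topological sort explicitly as the $t_D$-induced subsequence on the vertices of $G^\infty$, whereas the paper argues existence over the set of all topological sorts—so this is the same argument, not a different one.
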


\begin{proof}
Recall that $\alpha|_{O_D \cup \hat{I}} = t_D \in T_D$. We complete the proof by showing that there exists a trace $t'_D \in T_D$ that is compatible with $G^\infty$; specifically, we show that there exists a topological sort $\nu$ of the vertices of $G^\infty$ and a sampling $t'_D$ of $t_D$ such that $t'_D|_{O_D} = \epsilon|_{O_D}$, where $\epsilon$ is the event-sequence of $\nu$.

Let $\widehat{\nu}$ be the set of all topological sorts of the vertices of $G^\infty$, and let $\widehat{\epsilon}$ be the set of all event-sequences such that each $\widehat{\epsilon} = \set{ \epsilon' | \epsilon' \text{ is the event-sequence of some } \nu' \in \widehat{\nu}}$. From the pseudocode, we see that $\widehat{\epsilon}$ is the set of all $\epsilon'$ such that (1) for each location $j$, $\epsilon'|_{O_{D,j}}$ is a prefix of $\alpha|_{O_{D,j}} = t_D|_{O_{D,j}}$, and (2) for each location $j$ that is live in $t_D$, $\epsilon'|_{O_{D,j}} = \alpha|_{O_{D,j}} = t_D|_{O_{D,j}}$.

For any edge $(v_1,v_2)$ in $G^\infty$ we know that there exists a location $i$ and a positive integer $x$ such that $\alpha_s[x].G_i$ contains the edge $(v_1,v_2)$; applying Lemma \ref{lem:edgeHappensBefore}, the event of $v_1$ occurs before the event of
$v_2$. Therefore, in $G^\infty$, for every edge $(v_1,v_2)$, the event
of $v_1$ occurs before the event of $v_2$. Therefore, (3) for every pair of vertices $v_1,v_2$ of $G^\infty$, it is not the case that the event of $v_1$ occurs before the event of $v_2$ in the event sequence $\epsilon'$ of every topological sort $\nu'$ of the vertices of $G^\infty$, and the event of $v_1$ does not occur before the event of $v_2$ in $t_D$.

From (1), (2), and (3), we conclude that there must exist an event-sequence $\epsilon \in \widehat{\epsilon}$, such that (1) for each location $j$, $\epsilon|_{O_{D,j}}$ is a prefix of $\alpha_{O_{D,j}} = t_D|_{O_{D,j}}$, (2) for each location $j$ that is live in $t_D$, $\epsilon|_{O_{D,j}} = \alpha_{O_{D,j}} = t_D|_{O_{D,j}}$, and (3) for every pair of events $e_1, e_2$ in $\epsilon$, if $e_1$ occurs before $e_2$ in $t_D$, then $e_1$ occurs before $e_2$ in $\epsilon$. Therefore, there exists a sampling $t'_D$ of $t_D$ such that  $t'_D|_{O_D} = \epsilon|_{O_D}$.
By closure under sampling we know that $t'_D \in T_D$. 
Thus, by definition, $G^\infty$ is viable for $D$.
\end{proof}


We have seen so far that in any fair execution $\alpha$ of the system, at
each live location $i$, $G_i$ evolves as an ever growing observation such that
the limit $G^\infty$ of $G_i$ in $\alpha$ is a viable observation for $D$. 

%

\subsubsection{Identifying the smallest decision gadget}\label{subsubsec:proofSmallestDecisionGadget}
Next, we show that $\mathcal{R}^{G^\infty}$ has at least one non-$\bot$ decision gadget. Let $Y$ be the first non-$\bot$ decision gadget in $\mathcal{R}^{G^\infty}$. We show that at
each live location $i$, eventually, $\mathcal{R}^{G_i}$ will contain the decision gadget $Y$, and importantly, eventually forever, $Y$ remains the first non-$\bot$ decision gadget of $\mathcal{R}^{G_i}$. By Theorem \ref{thm:ViableGHasDecisionGadgets}, we know that the critical location of $Y$ is a live location. However, since for all the live processes $i$, the first non-$\bot$ decision gadget of $\mathcal{R}^{G_i}$ converges to $Y$, we know that all the live locations converge to the same live location, which is output of the $A^\Omega$. Thus, $A^\Omega$ solves $\Omega_f$ using AFD $D$.

\begin{corollary}\label{cor:GinftyHasDecisionGadget}
$\mathcal{R}^{G^\infty}$ contains at least one decision gadget.
\end{corollary}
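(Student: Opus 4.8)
The plan is to obtain the corollary as a direct instance of Theorem \ref{thm:ViableGHasDecisionGadgets} applied to the limit observation $G^\infty$. That theorem guarantees a decision gadget in $\mathcal{R}^G$ for any $G$ that is (i) viable for $D$ and (ii) has $live(G)$ containing at least $n-f$ locations. Viability is already in hand: Theorem \ref{thm:GinftyIsViable} establishes that $G^\infty$ is a viable observation for $D$. So the only real work is to verify the second hypothesis, namely that $G^\infty$ has at least $n-f$ live locations.

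To establish that $G^\infty$ has at least $n-f$ live locations, I would first invoke the running assumption on $\alpha$ that at most $f$ locations crash, so that $t_D = \alpha|_{\hat{I} \cup O_D}$ satisfies $|live(t_D)| \geq n-f$. For each location $i \in live(t_D)$, no $crash_i$ event occurs in $t_D$, so by the validity property of the AFD $D$ (every trace in $T_D$ is valid), $t_D$ contains infinitely many events from $O_{D,i}$; equivalently, infinitely many $O_{D,i}$ events occur in $\alpha$. From the pseudocode, each such event at $i$ (while $i$ has not crashed) inserts a fresh vertex of the form $(i,k,d)$ into $G_i$, so $G_i$ accumulates infinitely many vertices whose location is $i$.

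Next I would lift this from $G_i$ to $G^\infty$. By Corollary \ref{cor:ObsPrefixOfLimit}, each $\alpha_s[x].G_i$ is a prefix of $G_i^\infty$, so every vertex ever inserted into $G_i$ survives in $G_i^\infty$; hence $G_i^\infty$ contains infinitely many vertices with location $i$. Since $i$ is live in $t_D$, Lemma \ref{lem:limitGraphs} gives $G^\infty = G_i^\infty$, so $G^\infty$ itself contains infinitely many vertices with location $i$, i.e.\ $i \in live(G^\infty)$. As this holds for every $i \in live(t_D)$, we obtain $live(t_D) \subseteq live(G^\infty)$ and therefore $|live(G^\infty)| \geq n-f$.

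With both hypotheses verified, applying Theorem \ref{thm:ViableGHasDecisionGadgets} to $G^\infty$ immediately yields a decision gadget in $\mathcal{R}^{G^\infty}$. I do not expect a genuine obstacle here, since the statement is essentially a specialization of the earlier existence theorem; the only non-bookkeeping step is the chain ``live in $t_D$ $\Rightarrow$ infinitely many outputs (validity) $\Rightarrow$ infinitely many location-$i$ vertices in $G_i$ (pseudocode) $\Rightarrow$ the same in $G^\infty$ (prefix and limit lemmas).'' The one point to state carefully is that the per-location limits coincide with $G^\infty$ only for locations live in $t_D$, which is precisely the case being used.
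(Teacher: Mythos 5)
Your proof is correct and takes the same route as the paper, whose entire proof is a one-line appeal to Theorems \ref{thm:ViableGHasDecisionGadgets} and \ref{thm:GinftyIsViable}. Your explicit verification that $live(G^\infty)$ contains at least $n-f$ locations (via AFD validity, the pseudocode's vertex insertion, Corollary \ref{cor:ObsPrefixOfLimit}, and Lemma \ref{lem:limitGraphs}) fills in a hypothesis the paper leaves implicit, and that chain is sound.
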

\begin{proof}
Follows from Theorems \ref{thm:ViableGHasDecisionGadgets} and \ref{thm:GinftyIsViable}.
\end{proof}

Applying the above Corollary, we know that $\mathcal{R}^{G^\infty}$ contains a decision gadget. Applying  Theorem \ref{thm:smallestGadget}, let $Y_{min}$ be $first(\mathcal{R}^{G^\infty})$ (the first non-$\bot$ decision gadget in $\mathcal{R}^{G^\infty}$).

\begin{lemma}\label{lem:minGadgetFound}
For each location $i$ that is live in $t_D$, there exists a positive integer $x$ such that for all positive integers $x' \geq x$, $Y_{min}$ is the first non-$\bot$ decision gadget in $\alpha_s[x'].\mathcal{R}^G_i$.
\end{lemma}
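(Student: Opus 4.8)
The plan is to observe that the final statement is an immediate instance of Lemma~\ref{lem:minGadgetFoundInAllGs}, the stability result already proved for convergent sequences of finite observations. The essential point is that, for a fixed location $i$ that is live in $t_D$, the graphs recorded at $i$ over the course of $\alpha$, namely $\alpha_s[0].G_i, \alpha_s[1].G_i, \alpha_s[2].G_i, \ldots$, form exactly such a convergent sequence, its limit is $G^\infty$, and $Y_{min} = first(\mathcal{R}^{G^\infty})$ is precisely the object whose eventual stability we must establish.

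First I would fix a live location $i$ and check that the sequence $\set{\alpha_s[x].G_i}_{x \geq 0}$ meets the hypotheses needed to instantiate Lemma~\ref{lem:minGadgetFoundInAllGs}. Three facts suffice: each $\alpha_s[x].G_i$ is a finite observation (Theorem~\ref{thm:GiIsObs}, together with the fact that the emulation in Algorithm~\ref{alg:exchangeFD} only ever builds finite graphs); each $\alpha_s[x].G_i$ is a prefix of $\alpha_s[x+1].G_i$ (Lemma~\ref{lem:prefixOfFutureObs}); and the sequence converges to an observation, namely $G^\infty$ (Theorem~\ref{thm:GinftyIsObs}, with $G^\infty_i$ identified with $G^\infty$ via Lemma~\ref{lem:limitGraphs}). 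Hence the sequence is convergent in the sense defined in Section~\ref{sec:observations}. I would then invoke Lemma~\ref{lem:minGadgetFoundInAllGs}, instantiating its generic limiting observation by $G^\infty$ and its generic prefix-ordered sequence by $\set{\alpha_s[x].G_i}$; this directly yields a positive integer $x$ such that for all $x' \geq x$, $Y_{min}$ is the first non-$\bot$ decision gadget in $\mathcal{R}^{\alpha_s[x'].G_i}$, which is exactly the tree denoted $\alpha_s[x'].\mathcal{R}^G_i$.

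I do not expect a genuine obstacle here: all the real work has been front-loaded into Lemma~\ref{lem:minGadgetFoundInAllGs} and its supporting stability lemmas (Lemmas~\ref{lem:decisionGadgetStableInAllGs}, \ref{lem:nonDecisionGadgetStableInAllGs}, and \ref{lem:finitelyManySmallerGadgetsInPrefix}), which guarantee that genuine decision gadgets persist while every ``pseudo'' gadget cheaper than $Y_{min}$ is eventually disqualified. The only points that require care are bookkeeping ones: confirming that the results of Section~\ref{sec:consensusAndAFD}, stated there for a generic fixed viable observation and a generic convergent sequence, legitimately apply with the limit taken to be the concrete $G^\infty$ (which is viable by Theorem~\ref{thm:GinftyIsViable} and has at most $f$ non-live locations, so that $Y_{min}$ exists by Corollary~\ref{cor:GinftyHasDecisionGadget} and Theorem~\ref{thm:smallestGadget}); and noting that consecutive entries $\alpha_s[x].G_i$ and $\alpha_s[x+1].G_i$ may coincide when no relevant action occurs between the two states, which is harmless because the prefix relation is reflexive and the convergence framework permits such stationary steps.
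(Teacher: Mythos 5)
Your proposal is correct and follows essentially the same route as the paper's own proof: fix a live location $i$, use Theorem~\ref{thm:GiIsObs} and Lemma~\ref{lem:prefixOfFutureObs} to see that $\alpha_s[0].G_i, \alpha_s[1].G_i, \ldots$ is a prefix-ordered sequence of finite observations converging to the viable observation $G^\infty$, and then invoke Lemma~\ref{lem:minGadgetFoundInAllGs} to conclude. Your added bookkeeping (viability via Theorem~\ref{thm:GinftyIsViable}, existence of $Y_{min}$ via Corollary~\ref{cor:GinftyHasDecisionGadget} and Theorem~\ref{thm:smallestGadget}, and the harmlessness of stationary steps) is the same instantiation the paper relies on, just spelled out more explicitly.
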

\begin{proof}
Fix a location $i$ that is live in $t_D$.
Invoking Theorem \ref{thm:GiIsObs}, we know that for each positive integer $x$, $\alpha_s[x].G_i$ is an observation.
Since $G^\infty = \lim_{x \rightarrow \infty} \alpha_s[x].G_i$ is a viable observation for $D$ and $t_D$ is compatible with $G^\infty$, we invoke Lemma \ref{lem:prefixOfFutureObs} to conclude that $\alpha_s[1].\mathcal{R}^G_i, \alpha_s[2].\mathcal{R}^G_i, \ldots$ is an infinite sequence of finite observations that converge to $G^\infty$.

Thus, the conclusion follows immediately from the application of Lemma \ref{lem:minGadgetFoundInAllGs}.

\end{proof}

\begin{theorem}
The algorithm $A^\Omega$ solves $\Omega_f$ using AFD $D$, where $f < n$.
\end{theorem}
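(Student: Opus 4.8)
The plan is to verify directly that $A^\Omega$ meets the definition of ``solving $\Omega_f$ using $D$'' from Section \ref{subsec:solvingFDproblemsWithAnother}. The signature conditions are immediate from the pseudocode in Algorithm \ref{alg:exchangeFD}, so the substance is the trace condition: for every fair trace $t$ of the composed system with $t|_{\hat{I}\cup O_D}\in T_D$, we must show $t|_{\hat{I}\cup O_\Omega}\in T_{\Omega_f}$. I would fix such a fair trace, let $\alpha$ be a fair execution realizing it, and split on the number of crashes.

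First I would establish that $t|_{\hat{I}\cup O_\Omega}$ is a valid sequence regardless of the crash pattern. No $FD_\Omega(\cdot)_i$ event can follow $crash_i$, because $crash_i$ sets $faulty := true$ and every locally controlled action of $A^\Omega_i$ carries $\neg faulty$ in its precondition; and at every live location $i$ infinitely many $O_{D,i}$ events occur (since $t_D$ is valid), each of which appends a pair $(fdout,i)$ to $sendq_i$, so fairness forces infinitely many $FD_\Omega(\cdot)_i$ outputs. Since $T_{\Omega_f}$ contains every valid sequence with more than $f$ faulty locations, the case where more than $f$ locations crash is discharged immediately by validity alone.

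The interesting case is when at most $f$ locations crash, where I must show $t|_{\hat{I}\cup O_\Omega}\in T_\Omega$, and this is exactly where the earlier machinery is assembled. By Theorems \ref{thm:GinftyIsObs} and \ref{thm:GinftyIsViable}, the common limit $G^\infty$ of the local graphs is a viable observation for $D$, and since at most $f$ locations crash, at most $f$ are not live in $G^\infty$. Corollary \ref{cor:GinftyHasDecisionGadget} then gives a decision gadget, and Theorem \ref{thm:smallestGadget} lets me fix $Y_{min}=first(\mathcal{R}^{G^\infty})$ with critical location $l$; by Theorem \ref{thm:ViableGHasDecisionGadgets}, $l\in live(G^\infty)$, and via Lemma \ref{prop:liveInObsAndTrace} (together with the fact that the live locations of $G^\infty$ are exactly those with infinitely many AFD outputs in $t_D$) this $l$ is live in $\alpha$. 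Lemma \ref{lem:minGadgetFound} supplies the crucial stabilization: for each live location $i$ there is a point after which $Y_{min}$ is the first non-$\bot$ decision gadget in $\mathcal{R}^{G_i}$, so $fdout$ at $i$ equals $l$ from then on. Taking a suffix of $\alpha$ past the (finitely many) convergence points of all live locations and past all crash events, every $O_\Omega$ event is of the form $FD_\Omega(l)_j$ with $j$ live and $l\in live(t)$, which is precisely the defining condition of $T_\Omega$.

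The main obstacle is not any single step but the global stabilization encapsulated in Lemma \ref{lem:minGadgetFound}: all live locations must agree, eventually and permanently, on the \emph{same} ``first'' decision gadget, and hence on the same critical location. This is the gap the paper identifies in \cite{chan:twfdf}, and it rests on the metric/ranking construction guaranteeing that only finitely many pseudo-gadgets can be ranked ahead of $Y_{min}$ in the finite trees $\mathcal{R}^{G_i}$; once this stability is in hand, the remaining steps---validity, liveness of the critical location, and the suffix extraction---are routine.
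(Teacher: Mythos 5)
Your proposal is correct and follows essentially the same route as the paper's proof: the common limit observation $G^\infty$ (Lemma \ref{lem:limitGraphs}), its viability (Theorem \ref{thm:GinftyIsViable}), the first non-$\bot$ decision gadget $Y_{min}$ (Theorem \ref{thm:smallestGadget}), the stabilization result (Lemma \ref{lem:minGadgetFound}), and liveness of the critical location (Theorem \ref{thm:ViableGHasDecisionGadgets}). If anything, your write-up is slightly more complete than the paper's, which fixes an execution with at most $f$ crashes at the outset and thus leaves implicit both the more-than-$f$-crashes case (where validity of the output sequence must still be checked, since $T_{\Omega_f}$ contains only valid sequences) and the explicit verification of the two validity conditions that you derive from the $faulty$ flag and fairness of the $sendq$ task.
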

\begin{proof}
Fix a fair execution $\alpha$ of the system consisting of $A^\Omega$, the channel automata, and the crash automaton such that $\alpha|_{O_D \cup \hat{I}} \in T_D$ and at most $f$ locations crash in $\alpha$. Denote $\alpha|_{O_D \cup \hat{I}}$ as $t_D$. For each location $i$ that is live in $t_D$, let $G_i^\infty$ denote $\lim_{x \rightarrow \infty} \alpha_s[x].G_i$. Applying Lemma \ref{lem:limitGraphs}, we know that for each location $j$ that is live in $t_D$, $G_i^\infty= G_j^\infty = G^\infty$. By Theorem \ref{thm:GinftyIsViable}, we know that $G^\infty$ is a viable observation for $D$. By Corollary \ref{cor:GinftyHasDecisionGadget}, we know that $\mathcal{R}^{G^\infty}$ contains at least one decision gadget. Applying  Theorem \ref{thm:smallestGadget} let $Y_{min}$ be the first non-$\bot$ decision gadget in $\mathcal{R}^{G^\infty}$. Applying Lemma \ref{lem:minGadgetFound}, we know that for each location $i$ that is live in $t_D$, eventually and permanently, $Y_{min}$ is the first non-$\bot$ decision gadget of $\mathcal{R}^G_i$. Thus, for each location $i$ that is live in $t_D$, eventually and permanently, when an event from $O_{D,i}$ occurs in $\alpha$, $(fdout,i)$ is appended to $sendq_i$, where $fdout$ is the critical location of $Y_{min}$. Therefore, for each location $i$ that is live in $t_D$, some suffix of $\alpha|_{FD_{\Omega,i}}$ is the infinite sequence over $FD_\Omega(fdout)_i$. Applying Theorem \ref{thm:ViableGHasDecisionGadgets}, we know that $fdout$ is live in $G^\infty$, and therefore, $fdout$ is live in $\alpha$. In other words, $\alpha|_{O_\Omega \cup \hat{I}} \in T_{\Omega_f}$.
\end{proof}

\section{Conclusion}\label{sec:conclusion}

We have shown that for any strong sampling AFD sufficient to solve consensus, the executions of the system that solves consensus using this AFD must satisfy the following property. For any fair execution, the events responsible for the transition from a bivalent execution to a univalent execution must occur at location that does not crash.
Using the above result, we have shown that $\Omega$ is a weakest strong-sampling AFD to solve consensus. The proof is along the lines similar to the original proof from \cite{chan:twfdf}. However, our proof is much more rigorous and does not make any implicit assumptions or assertions. Furthermore, the notion of observations and tree of executions introduced in Sections \ref{sec:observations}  and \ref{subsec:treeOfExec} and their properties may be of independent interest themselves.

\bibliographystyle{plain}
\bibliography{ref}

\end{document}